\renewcommand{\baselinestretch}{1.4}
 \renewcommand{\theequation}{\thesection.\arabic{equation}}
\newtheorem{definition}{Definition}[section]
\newtheorem{theorem}{Theorem}
\newtheorem{lemma}{Lemma} 
\newtheorem{proposition}{Proposition}
\renewcommand{\hat}{\widehat}
\def\singlespace{\def\baselinestretch{1}\@normalsize}
\def\wh{\widehat}
\def\wt{\widetilde}
\def\askip{\vspace{0.1in}}
\newcommand{\cor}{{\rm Corr}}
\newcommand{\cov}{{\rm Cov}}
\newcommand{\diag}{{\rm diag}}
\newcommand{\var}{{\rm Var}}
\def\la{\lambda}
\newcommand{\ve}{{\varepsilon}}
\newcommand{\bA}{{\mathbf A}}
\newcommand{\bB}{{\mathbf B}}
\newcommand{\bL}{{\mathbf L}}
\newcommand{\bM}{{\mathbf M}}
\newcommand{\bQ}{{\mathbf Q}}
\newcommand{\bU}{{\mathbf U}}
\newcommand{\bW}{{\mathbf W}}
\newcommand{\bX}{{\mathbf X}}
\newcommand{\bY}{{\mathbf Y}}
\newcommand{\bZ}{{\mathbf Z}}
\newcommand{\bc}{{\mathbf c}}
\newcommand{\bz}{{\mathbf z}}
\newcommand{\bOmega}{\boldsymbol{\Omega}}
\newcommand{\bSigma}{\boldsymbol{\Sigma}}
\newcommand{\bgamma}{\boldsymbol{\gamma}}
\newcommand{\bTheta} {\boldsymbol{\Theta}}
\newcommand{\bPhi} {\boldsymbol{\Phi}}
\newcommand{\bPsi} {\boldsymbol{\Psi}}
\newcommand{\bGamma} {\boldsymbol{\Gamma}}
\newcommand{\bLambda} {\boldsymbol{\Lambda}}
\newcommand{\bC}{{\mathbf C}}
\newcommand{\bD}{{\mathbf D}}
\newcommand{\calJ}{{\mathcal J}}
\newcommand{\calL}{{\mathcal L}}
\newcommand{\Z}{\EuScript Z}
\def\6bullets{\bullet\bullet\bullet\bullet\bullet\bullet}
\DeclareMathAlphabet\EuScriptBF{U}{eus}{b}{n}
\newcommand{\eulbE}{\EuScriptBF E}
\newcommand{\Date}[1]{\def\@Date{#1}}
\def\today{\number\day~\ifcase\month\or
 January\or February\or March\or April\or May\or June\or
 July\or August\or September\or October\or November\or December\fi~\number\year}
\def \b1{{\bf 1}}
\def \bY{{\bf Y}}
\def\mM{\mathbb{M}}
\def\b{{\bf b}}
\def\c{{\bf c}}
\def\D{{\bf D}}
\def\L{{\bf L}}
\def\N{{\bf N}}
\def\W{{\bf W}}
\def\bO{{\bf O}}
\def\Y{{\bf Y}}
\def\Z{{\bf Z}}
\def\z{{\bf z}}
\DeclareMathOperator*{\argmax}{arg\,max} % Jan Hlavacek
\def\wh{\widehat}
\newcommand{\ignore}[1]{}{}
\begin{document}
%\ignore{
\title{\bf Autoregressive Networks
}
\author{ 
\normalsize Binyan Jiang\\
\normalsize Department of Applied Mathematics, Hong Kong Polytechnic University\\
\normalsize Hong Kong, by.jiang@polyu.edu.hk
\and 
\normalsize Jialiang Li\\
\normalsize Department of Statistics and Applied Probability, National University of Singapore\\
\normalsize Singapore,
stalj@nus.edu.sg
\and
\normalsize Qiwei Yao\\
\normalsize Department of Statistics,
London School of Economics,
London, WC2A 2AE\\
\normalsize  United Kingdom,
q.yao@lse.ac.uk
}
\maketitle

\begin{abstract}
	We propose a first-order autoregressive (i.e. AR(1)) model for dynamic 
network processes in which edges change over time while nodes remain
unchanged. The model depicts
the dynamic changes explicitly. It also facilitates simple and
efficient statistical inference methods including a permutation test
for diagnostic checking for the fitted network models.
The proposed model can be applied to the network processes with
various underlying structures but with independent edges. As an
illustration, an AR(1)
stochastic block model has been investigated in depth, which characterizes
the latent communities by the transition probabilities over time.
This leads to a new and more effective spectral clustering algorithm for identifying
the latent communities.
We have derived a finite sample condition under which the
perfect recovery of the community structure can be
achieved by the newly defined spectral clustering algorithm.
Furthermore the inference for a change point is incorporated into the
AR(1) stochastic block model to cater for possible structure changes. 
We have derived the
explicit error rates for the maximum likelihood
estimator of the change-point.
Application with three real data sets illustrates both relevance and
usefulness of the proposed AR(1) models and the associate inference methods.
\end{abstract}

\noindent
{\sl Keywords}: 
AR(1) networks;
Change point;
Dynamic stochastic block model;
Hamming distance;
Maximum likelihood estimation;
Spectral clustering algorithm;
Yule-Walker equation.

\section{Introduction}

	Understanding and being able to model the network changes over time are of
immense importance for, e.g., monitoring anomalies in internet traffic
networks, predicting demand and setting prices in electricity supply
networks, managing natural resources in environmental
readings in sensor networks, and understanding how news and opinion
propagates in online social networks.
% Unfortunately most existing statistical inference methods for network
% data are confined to static networks, though a substantial proportion of
% real networks are dynamic in nature.
In spite of the existence of a large body
of literature on dynamic networks,
the development of the foundation for dynamic network models is
still in its infancy
% and the available modelling and inference tools are sparse
\citep{ek17}.
As for dealing with dynamic changes of networks,
early attempts
are based on the evolution analysis of network snapshots over
time 
% without really modelling the changes directly
\citep{as14, dh18}.
%{\color{red} A related work is \cite{friel2016} which applied a latent space model for bipartite networks to study interlocking boards in leading Irish company directories over 11 years.}
% It is not uncommon that the networks observed at different times are
% assumed to be independent \citep{p19, bbm18}.
% or independent conditionally on some some latent processes \citep{dd16, mm17}.
Although this reflects the fact that most networks
change slowly over time,
it provides little insight on the dynamics
underlying the changes and is almost powerless for future prediction.
% for which it is essential to build appropriate stochastic models to depict
% dynamic dependence and dynamic changes.
% Speedily increasing availability of large network data recorded over
% time also calls for more tools to reveal underlying dynamic structures
% more explicitly.
The popular approaches for modelling dynamic changes include,
among others, Markov process models \citep{snijder2005, 
	len18}, the exponential random graph models \citep{hfx10, kh14}, and
latent process based models \citep{friel2016,dd16,mm17}. The estimation 
for those models is compute-intensive, relying on various MCMC or 
EM algorithms.

In this paper we propose a simple first-order autoregressive (i.e. AR(1)) model
for dynamic network processes of which the edges changes over time while
the nodes are unchanged. 
% The autoregressive equation depicts the changes over time explicitly.
Though our setting is a special case of the Markov chain network models
(see \cite{ybm15}, and also \cite{snijder2005} and \cite{len18}), a
simple AR(1) structure makes it possible
to measure explicitly the underlying dynamic properties
such as autocorrelation coefficients, and the Hamming distance.
It facilitates the maximum likelihood estimation (MLE) % for the parameters in the model
in a simple and direct manner with uniform error rates.
%  and the asymptotic normality for the maximum likelihood estimators are
% established with 
% when the number of nodes diverging
% without imposing some high-level 
% regularity conditions which are difficult to check in practice \citep{ybm15}.
% {\color{red} Binyan: DO YOU AGREE? -- QY:   {\it Reply from BY: 
% 	  Assumption 3 in \cite{ybm15} is made to guarantee that each edge process is a mixing sequence, while we specifically establish mixing properties under our settings. From this point of view, I think what you are saying is correct. 
% 		 Another advantage of our formulation/framework is that it is much easier/more flexible to impose different network structures--as you have highlighted in the contribution paragraph on page 4--shall we mention this here too?}. }
Furthermore diagnostic checking for the fitted network models can
be performed in terms of
an easy-to-use permutation test, which is impossible under a merely
Markovian structure.

Our setting can be applied to any network processes with
various underlying structures as long as the edges are independent with 
each other, which we illustrate through an AR(1) stochastic
block model. 
% With an explicitly defined autoregressive structure, 
The latent communities in our setting are characterized by the transition
probabilities over time, instead of the (static) connection probabilities -- the
approach often adopted from static stochastic block models;
% but often used in the existing literature on dynamic stochastic block models;
see \cite{p19} and the references therein. This new structure 
also paves the way for a new spectral clustering algorithm which identifies
the latent communities more effectively -- a phenomenon corroborated
by both the asymptotic theory and the simulation results. To cater for possible
structure changes of underlying processes, we incorporate a change point detection
mechanism in the AR(1) stochastic block modeling. Again the change point is estimated
by the maximum likelihood method. 
The AR(1) continuous time stochastic
block model of \cite{len18} is based on a sophisticated construction. Its
estimation is based on a reversible
jump MCMC, though
a discrete-time version of their model admits the same Markov Chain
representation (\ref{b3}) below.

Theoretical developments for dynamic stochastic block models in the literature
were typically based on the assumption that networks observed at different times
are independent; see \cite{p19,bbm18}
and references therein. The autoregressive structure considered
in this paper brings the extra complexity due to serial dependence.
By establishing the $\alpha$-mixing property with exponentially decaying
coefficients for the AR(1) network processes,
we are able to show that
the proposed spectral clustering algorithm leads to a consistent
recovery of the latent community structure.
On the other hand,   
an extra challenge in detecting a change point in the
dynamic stochastic block
network process is that the estimation for latent community structures
before and after a possible change point is typically not consistent
during the search for the change point. 
% {\color{red}\it Comparing with \cite{ybm15} where the membership structure is not considered, the additional uncertainty brought by spectral clustering makes our problem unconventional, especially in the case where the membership structures before and after the change point are different.}
To overcome this obstacle, we introduce a truncation technique which breaks the 
searching interval into two parts such that the error bounds for 
the estimated change point can be  established. 
% {\color{red}\it In particular, we have established a  more explicit rate of convergence than that in Theorem 22.1 of \cite{ybm15}.---Shall we leave related discussion to Section 3 only? Pls note that I have added some  discussions on pages 17 and 18. --BY}   

The proposed methods in this paper only apply to the dynamic networks observed on
discrete times. Even so the relevant
literature is large, across mathematics,
computer science, engineer, statistics, biology, genetics and social
sciences.
We can only list a small selection of more statistics-oriented papers in
addition to the aforementioned references.
\cite{fsx09} proposed a state space mixed membership stochastic block
model (with a logistic normal prior).
% \cite{hfx10} proposed an exponential random-graph model which specifies
% the conditional distribution of a network given its lagged values as
% a separable exponential function containing an implicit normalized
% constant. The inference for the model was conducted by an MCMC
% method.
% \cite{kh14} further developed some separable exponential models.
% \cite{friel2016} applied a latent space model for modelling
% dynamic structures of bipartite networks.
%to study interlocking boards in leading Irish company directories over 11 years.
%\cite{dd16} assumed that the elements of adjacency matrix at each time
%are conditionally independent Bernoulli random variables given two latent
%processes, and
%the conditional probabilities are the functions of those two processes.
%They further developed two so-called locally adaptive dynamic inference
%methods based on MCMC.
\cite{crane16} studied the limit properties of
Markovian, exchangeable and c\`adl\`ag (i.e. every edge remains in each
state which it visits for a positive amount of time) dynamic network. 
% \cite{mm17} proposed a variational EM-algorithm for a
% dynamic stochastic block network model. 
\cite{p19} studied the theoretical properties (such as the minimax lower
bounds for the risk) of a dynamic stochastic block model, assuming `smooth'
connectivity probabilities.
The literature on change point detection in dynamic networks includes
\cite{yczgj11, wyr18, wsw19, zcl19, bbm18, zlycc19}.
\cite{knn16, zpllw17, zwwh19, cfz20, zhpw20} adopted autoregressive models
for modelling continuous responses observed from the nodes of a network process.
% to the best of our knowledge no attempts have been made on modelling the dynamics of
% adjacency matrices in an autoregressive manner. 
\cite{kgk17} used dynamic network as a tool to model non-stationary
vector autoregressive processes.
For the development on continuous-time dynamic networks, we refer readers to
\cite{snijder2005},
\cite{matias2018}, \cite{len18} and \cite{corn2018}. 
%{\color{red} In this paper we regard time as discrete variable. We acknowledge that there are published works which treat time as continuous variable. See for example \cite{matias2018}, \cite{corn2018}. }

The new contributions of this paper include: (i) We propose
a new and simple AR(1) model for edge dymanics (see (\ref{b1}) below),
which facilitates the easy-to-use inference
methods including a permutation test for model diagnostic checking.
(ii) The AR(1) setting  can be applied to various network processes
with specific underlying structures such as dynamic stochastic block models,
as illustrated in Section \ref{sec3} below, and also dynamic dot product
model, dynamic graphon model, etc.
(iii) The AR(1) structure also makes it possible to develop the theoretical guarantees for the serial
dependent network processes. For example, based on a concentration
inequality, we have derived a finite sample condition, under which the
perfect recovery of the community structure can be achieved by the newly
defined spectral clustering algorithm  (Theorems \ref{DKthm} 
and \ref{clusteringCons}  in Section \ref{sec321} below).    Furthermore, we have
shown that the MLE for the change-point in the
AR(1) stochastic block process is consistent with explicit error rates
(Theorem \ref{Thm_CP} in Section \ref{sec33} below).
Those results are based on some rigorous technical development for
the dependent network processes.
Note that both \cite{p19} and \cite{bbm18} assume that networks observed at
different times are independent with each other in their asymptotic theories
for dynamic stochastic block models.
Illustration with the three real network data sets indicates convincingly
that the proposed AR(1) model and the associated inference methods are
practically relevant and fruitful.

The rest of the paper is organized as follows. 
A general framework of AR(1)
network processes, the probabilistic properties, and the MLE are presented 
in Section 2.
% Theoretical properties of the process
%such as stationarity, Yule-Walker equation and Hamming distance are established.
%The maximum likelihood estimation for the parameters and the associated asymptotic
%properties are developed. Section 2 ends with
It also contains a new and easy-to-use permutation test for
the diagnostic checking for the fitted network models. Section 3 deals
with AR(1) stochastic block models. The asymptotic
theory is developed for the new spectral clustering algorithm based on the transition
probabilities. Further extension of both the inference method and the asymptotic theory
to the setting with a change point is established. Simulation results are
reported in Section 4,
and the illustration with three real dynamic network data sets is
presented in Section 5. All technical proofs are  relegated to the Appendix.

\section{Autoregressive network models}
	\label{sec2}

\subsection{AR(1) models}
Let $\{ \bX_t, t=0, 1, 2, \cdots\}$ be a dynamic network process defined on the
$p$ fixed nodes, denoted by $\{ 1, \cdots, p\}$, 
where $\bX_t \equiv (X_{i,j}^t) $ denotes the $p\times p$ adjacency matrix
at time $t$. % and the $p$ nodes are unchanged over time. 
We also assume that all networks are Erd\"os-Renyi in the
sense that $X_{i,j}^t$, $(i,j) \in \calJ$, are independent and
take values either 1 or 0, where
$\calJ = \{ (i,j): 1\le i \le j \le p\}$ for undirected networks, $\calJ
= \{ (i,j): 1\le i < j \le p\}$ for undirected networks without selfloops,
$\calJ = \{ (i,j): 1\le i, j \le p\}$ for directed networks, and $\calJ 
= \{ (i,j): 1\le i \ne j \le p\}$ for directed networks without selfloops.
Note that an edge from node $i$ to $j$ is indicated by $X_{i,j}=1$, and no edge
is denoted by $X_{i,j}=0$. 
For undirected networks, $X_{i,j}^t = X_{j,i}^t$.

	\askip

\noindent
\begin{definition} \label{def1}
	An AR(1) network process is defined as
	\begin{equation} \label{b1}
	X^t_{i,j} \;=\; X^{t-1}_{i,j}\, I( \ve_{i,j}^t=0) \;+\; I (\ve_{i,j}^t=1), \quad t\ge 1,
	\end{equation}
	where $I(\cdot)$ denotes  the indicator function,   the innovations
	$\ve_{i,j}^t$, $(i,j) \in \calJ$, are independent, and
	\begin{equation} \label{b2}
	P(\ve_{i,j}^t =1) = \alpha_{i,j}^t, \quad
	P(\ve_{i,j}^t = -1) = \beta_{i,j}^t, \quad
	P(\ve_{i,j}^t =0) = 1 - \alpha_{i,j}^t - \beta_{i,j}^t.
	\end{equation}
	In the above expression, $\alpha_{i,j}^t, \beta_{i,j}^t$ are non-negative
	constants, and $\alpha_{i,j}^t + \beta_{i,j}^t \le 1$.
\end{definition}

\askip

Equation (\ref{b1}) is an analogue of the noisy network
model of \cite{cky18}. The innovation (or noise)
$\ve_{i,j}^t$ is `added' via the two indicator functions 
to ensure that $X_{i,j}^t$ is still binary.
Obviously, $\{ \bX_t, t=0, 1, 2, \cdots\} $ is a Markov chain, and
\begin{equation} \label{b3}
P(X_{i,j}^t =1 | X_{i,j}^{t-1}=0) = \alpha_{i,j}^t, \quad
P(X_{i,j}^t =0 | X_{i,j}^{t-1}=1) = \beta_{i,j}^t,
\end{equation}
or collectively,
\begin{align} \label{b4}
& P(\bX_t | \bX_{t-1}, \cdots, \bX_0)  \;=\; P(\bX_t | \bX_{t-1})
\;=\; \prod_{(i,j)\in \calJ} P(X_{i,j}^t |X_{i,j}^{t-1})\\
= \; & \prod_{(i,j)\in \calJ} (\alpha_{i,j}^t)^{X_{i,j}^t(1-X_{i,j}^{t-1})}
(1-\alpha_{i,j}^t)^{(1-X_{i,j}^t)(1-X_{i,j}^{t-1})}
(\beta_{i,j}^t)^{(1-X_{i,j}^t)X_{i,j}^{t-1}}
(1- \beta_{i,j}^t)^{X_{i,j}^t X_{i,j}^{t-1}}.
\nonumber
\end{align}
It is clear that the smaller $\alpha_{i,j}^t$ is, the more likely  the no-edge
status at time $t-1$ (i.e. $X_{i,j}^{t-1}=0$) 
will be retained at time $t$ (i.e. $X_{i,j}^{t}=0$);
and the smaller $\beta_{i,j}^t$ is, the more likely an edge at time $t-1$
(i.e. $X_{i,j}^{t-1}=1$) will be retained at time $t$ (i.e. $X_{i,j}^{t}=1$).
For most slowly changing networks (such as social networks), we expect 
$\alpha_{i,j}^t$ and $\beta_{i,j}^t$ to be small.

It is natural to model dynamic networks by a Markov chain. See, e.g.
\cite{hfx10, kh14, ybm15, friel2016, crane16, mm17, rastelli2017, len18}. For
example, the Markovian transition probabilities under
a discrete version of
the stationary independent arcs network model of Snijders (2005, Section 5)
can be written equivalently as
(2.3) with $\alpha^t_{i,j} \equiv \alpha$ and $\beta^t_{i,j}\equiv\beta$.
% See also \cite{snijder1997}.
In this paper we build the Markovian structure based on
the explicit AR(1) model (\ref{b1}), which enables us to study the
theoretical properties
of the network processes, and to develop simple and efficient inference
methods with appropriate theoretical guarantee.

	\subsection{Stationarity}
Note that $\{\bX_t\}$ is a homogeneous Markov chain if 
\begin{equation} \label{b5}
\alpha_{i,j}^t \equiv \alpha_{i,j} \quad {\rm and} \quad
\beta_{i,j}^t \equiv \beta_{i,j} \;\;\; {\rm for \; all\;} t\ge 1 \; \; {\rm and} \;\; (i,j)\in \calJ.
\end{equation}
Specify the distribution of the initial network $\bX_0 = (X_{i,j}^0)$ as 
follows:
\begin{equation} \label{b6}
P(X_{i,j}^0 = 1) = \pi_{i,j} = 1 - P(X_{i,j}^0=0),
\end{equation}
where $\pi_{i,j} \in (0, 1)$, $(i,j) \in \calJ$, are constants. 
% Proposition \ref{thm1} below identifies the condition under which
% the network process $\{ \bX_t, \, t=0, 1, 2, \cdots\}$ is strictly stationary.

\askip

\begin{proposition} \label{thm1} 
	Let the homogeneity condition (\ref{b5}) hold with
	$\alpha_{i,j} + \beta_{i,j}\in (0, 1]$, and 
	\begin{equation} \label{b7}
	\pi_{i,j} = \alpha_{i,j}/(\alpha_{i,j}+ \beta_{i,j}), \;\; (i,j) \in \calJ.
	\end{equation}
	Then $\{ \bX_t, t=0, 1, 2, \cdots \}$ is a strictly stationary process.
	Furthermore for any $(i,j), (\ell, m) \in \calJ$ and $t, s \ge 0$,
	\begin{equation} \label{b8}
	E(X_{i,j}^t)= {\alpha_{i,j} \over \alpha_{i,j} + \beta_{i,j}}, \qquad
	\var(X_{i,j}^t) = {\alpha_{i,j} \beta_{i,j} \over (\alpha_{i,j} +
		\beta_{i,j})^2}, 
	\end{equation}
	\begin{equation} \label{b9}
	\rho_{i,j}(|t-s|) \equiv
	\cor(X_{i,j}^t, X_{\ell m}^{s}) =
	\Big\{
	\begin{array}{ll}
	(1 - \alpha_{i,j} - \beta_{i,j})^{|t-s|} \quad & {\rm if} \; (i,j)=(\ell, m),\\
	0 & {\rm otherwise}.
	\end{array}
	\end{equation}
\end{proposition}

\askip

The Hamming distance
counts the number of different edges in the two networks, and is 
a measure the closeness of two networks 
\citep{dh18}.

\askip
\begin{definition} \label{def2}
	For any two matrices $\bA = (A_{i,j})$ and $\bB=(B_{i,j})$ of the same size,
	the Hamming distance is defined as
	$
	D_H(\bA, \bB) \; =\; \sum_{i,j} I(A_{i,j} \ne B_{i,j}).
	$
\end{definition}

\begin{proposition} \label{thm2}
	Let $\{\bX_t, t=0, 1, \cdots\}$ be a stationary network process
	satisfying the condition of Proposition \ref{thm1}. Let
	$d_H(|t-s|) = E\{ D_H(\bX_t , \bX_s) \}$ for any $t, s \ge 0$.
	Then $d_H(0) = 0$, and it holds  for any $k\ge 1$  that
	\begin{align} \label{b10}
	d_H(k) & \;=\;
	d_H(k-1) +
	\sum_{(i,j)\in \calJ} {2\alpha_{i,j} \beta_{i,j} \over 
		\alpha_{i,j} + \beta_{i,j}} (1-\alpha_{i,j} - \beta_{i,j})^{k-1}\\ &
	\; = \sum_{(i,j)\in \calJ} {2\alpha_{i,j} \beta_{i,j} \over
		(\alpha_{i,j} + \beta_{i,j})^2} \{ 1 - (1 - \alpha_{i,j} - \beta_{i,j})^k \}.
	\label{b11}
	\end{align}
\end{proposition}

\askip

Proposition \ref{thm2} indicates that the expected Hamming distance
$d_H(d)= E\{D_H(\bX_t, \bX_{t+k})\}$ increases strictly, as $k$
increases, initially from $d_H(1) =\sum {2 \alpha_{i,j}
	\beta_{i,j} \over \alpha_{i,j}+
	\beta_{i,j} }$ towards the limit $d_H(\infty) = \sum {2 \alpha_{i,j}
	\beta_{i,j} \over (\alpha_{i,j}+
	\beta_{i,j})^2 }$ which
% Note that $d_H(\infty)$
is also the expected Hamming distance of the two independent
networks sharing  the same marginal distribution
of $\bX_t$.

Proposition \ref{amixing} below shows that   
$\{\bX_t, \, t=0, 1, \cdots \}$ is $\alpha$-mixing with  exponentially
decaying coefficients.
Note that the conventional
mixing results for ARMA processes do not apply here, as they typically require that
the innovation distribution is continuous; see, e.g., Section 2.6.1 of \cite{fy03}. 
Let ${\mathcal F}_a^b$ be the $\sigma$-algebra generated by $\{X_{i,j}^k, a\leq k \leq b\}$.
The $\alpha$-mixing coefficient of process $\{ X_{i,j}^t, t=0,1, \cdots\}$ is defined as
\[
\alpha^{i,j}(\tau)=\sup_{k\in {\mathbb N}} \sup_{A\in {\mathcal F}_0^k, B\in {\mathcal F}_{k+\tau}^\infty, }|P(A\cap B)-P(A)P(B)|.
\]

\begin{proposition}\label{amixing}
	Let condition (\ref{b5}) hold, $\alpha_{i,j}, \beta_{i,j} >0$, and
	$\alpha_{i,j} + \beta_{i,j} \le 1$. Then
	$\alpha^{i,j}(\tau)\leq \rho_{i,j}(\tau)= (1-\alpha_{i,j}-\beta_{i,j})^\tau$
	for any $\tau \ge 1$.
\end{proposition}

\subsection{Estimation}

	To simplify the notation, we assume the availability of the
observations $\bX_0, \bX_1, \cdots, \bX_n$ from a stationary
network process which satisfies  the condition of Proposition \ref{thm1}.
Without imposing any further structure on the model, the 
parameters $(\alpha_{i,j}, \beta_{i,j})$, for different $(i,j)$,
can be estimated separately.
Condition  on $\bX_0$, the
maximum likelihood estimators are
\begin{equation} \label{b12}
\wh \alpha_{i,j} =
{ \sum_{t=1}^n X^t_{i,j}(1- X_{i,j}^{t-1}) \over
	\sum_{t=1}^n (1- X_{i,j}^{t-1})}, \qquad
\wh \beta_{i,j} =
{ \sum_{t=1}^n (1- X_{i,j}^t)X_{i,j}^{t-1} 
	\over \sum_{t=1}^n X_{i,j}^{t-1}}.
\end{equation}
See (\ref{b4}). For definiteness we shall set $0/0=1$.
To state the asymptotic properties, we list some regularity conditions first.
\begin{itemize}
	\item[C1.]	There exists a constant $l>0$ such that $\alpha_{i,j},\beta_{i,j} \ge
	l$ and $ \alpha_{i,j}+\beta_{i,j}\leq 1$ for all $(i,j)\in {\mathcal J}$.
	
	\item[C2.] $n, p\rightarrow \infty$, and $(\log n)(\log\log n)\sqrt{\frac{\log p}{n}}\rightarrow 0$.
\end{itemize}	
Condition C1 defines the parameter space, and condition C2  indicates
that the number of nodes is allowed to diverge in a smaller order than
$\exp\big\{\frac{n}{(\log n)^2(\log\log n)^2 } \big\}$.

\begin{proposition}\label{uniformCon}
	Let conditions (2.5), C1 and C2 hold.
	Then it holds that
	\[
	\max_{(i,j)\in {\mathcal J}}|\wh{\alpha}_{i,j}-\alpha_{i,j}|
	=O_p\left(\sqrt{\frac{\log p}{n}}\right) ~~ {\rm and}~~  \max_{(i,j)\in
		{\mathcal J}}|\wh{\beta}_{i,j}-\beta_{i,j}| =O_p\left(\sqrt{\frac{\log
			p}{n}}\right).
	\] 
\end{proposition}
Proposition \ref{uniformCon} provides a uniform convergence rate for the MLEs
in \eqref{b12}. 
To state the joint asymptotic normality.
Let ${ J}_1=\{ (i_1, j_1),\ldots, (i_{m_1}, j_{m_1})\}$, ${ J}_2=\{
(k_1, \ell_1),\ldots, (k_{m_2}, \ell_{m_2})\}$ be two arbitrary  subsets
of ${\mathcal J}$ with  $m_1, m_2 \ge 1$ fixed. 
Denote 
$\bTheta_{J_1,J_2}=( \alpha_{i_1, j_1},\ldots,
\alpha_{i_{m_1},j_{m_1}}, $ $\beta_{k_1, \ell_1},\ldots,  \beta_{k_{m_2},
	\ell_{m_2}})^\top,$ and  correspondingly denote the MLEs as 
$\wh{\bTheta}_{J_1,J_2}=( \wh\alpha_{i_1, j_1},\ldots,
\wh\alpha_{i_{m_1},j_{m_1}},$ $ \wh\beta_{k_1, \ell_1},\ldots,  \wh\beta_{k_{m_2}, \ell_{m_2}})^\top$.  

\begin{proposition}\label{CLT1}
	Let  conditions (2.5), C1 and C2 hold. 
	Then
	$
	\sqrt{n}(\wh{\bTheta}_{J_1, J_2}-\bTheta_{J_1,J_2}) \rightarrow N({\bf 0}, \bSigma_{J_1, J_2}),
	$
	where $\bSigma_{J_1,J_2}={\rm diag}(\sigma_{11}, \ldots,\sigma_{m_1+m_2, m_1+m_2} )$ is a diagonal matrix with 
	\begin{eqnarray*}
		&&	\sigma_{rr} = \frac{\alpha_{i_r,j_r}  (1-\alpha_{i_r,j_r}) (\alpha_{i_r,j_r}+\beta_{i_r,j_r}) }{ \beta_{i_r,j_r}},~~1\leq r \leq m_1, \\
		&&	\sigma_{rr} = \frac{\beta_{k_r,\ell_r}  (1-\beta_{k_r,\ell_r}) (\alpha_{k_r,\ell_r}+\beta_{k_r,\ell_r}) }{ \alpha_{k_r,\ell_r}},~~m_1+1\leq r \leq m_1+m_2. \\
	\end{eqnarray*}
\end{proposition}

\subsection{Model diagnostic check}
Based on estimators in (\ref{b12}), we define  `residual' $\wh\ve_{i,j}^t$, resulted
from fitting model (\ref{b1}) to the data, as the estimated value of
$E(\ve_{i,j}^t|X^t_{i,j}, X^{t-1}_{i,j})$,
i.e.
\begin{align*}
\wh\ve_{i,j}^t    = &
{\wh \alpha_{i,j} \over 1 - \wh \beta_{i,j}}
I(X^t_{i,j}=1, X^{t-1}_{i,j}=1) - {\wh \beta_{i,j} \over 1 - \wh \alpha_{i,j}}
I(X^t_{i,j}=0, X^{t-1}_{i,j}=0)\\
+ &
I(X^t_{i,j}=1, X^{t-1}_{i,j}=0) - I(X^t_{i,j}=0, X^{t-1}_{i,j}=1),
\quad (i,j) \in \calJ, \; t=1, \cdots, n.
\end{align*}
One way to check the adequacy of the model is to test for
the independence of $\wh\eulbE_t \equiv (\wh\ve_{i,j}^t)$ for $t=1, \cdots, n$.
Since $\wh\ve_{i,j}^t$, $t=1, \cdots, n$, only take 4 different values for each 
$(i,j) \in \calJ$, we adopt the two-way, or three-way
contingency table to test the independence of $\wh\eulbE_t$ and $\wh\eulbE_{t-1}$,
or $\wh\eulbE_t, \wh\eulbE_{t-1}$ and $\wh\eulbE_{t-2}$. For example the test statistic
for the two-way contingency table is
\begin{equation} \label{b13}
T = {1 \over n \, |\calJ|} \sum_{(i,j) \in \calJ} \sum_{k, \ell=1}^4 \{ n_{i,j}(k, \ell) - n_{i,j}(k,
\cdot) 
n_{i,j}(\cdot, \ell)/(n-1) \}^2 \big/ \{ n_{i,j}(k, \cdot)
n_{i,j}(\cdot, \ell)/(n-1)\},
\end{equation}
where $|\calJ|$ denotes the cardinality of $\calJ$, and
for $1 \le k, \ell  \le 4,$
\begin{align*}
n_{i,j}(k, \ell) & =\sum_{t=2}^n I\{ \wh\ve_{i,j}^t = u_{i,j}(k),\;
\wh\ve_{i,j}^{t-1} =u_{i,j}( \ell)\},\\
n_{i,j}(k, \cdot) & = \sum_{t=2}^n I\{ \wh\ve_{i,j}^t = u_{i,j}(k)\}, \quad
n_{i,j}(\cdot, \ell) = \sum_{t=2}^n I\{ \wh\ve_{i,j}^{t-1}=u_{i,j}(\ell)\}.
\end{align*}
In the above expressions,
$u_{i,j}(1)=-1, u_{i,j}(2)= - {\wh \beta_{i,j} \over 1 - \wh \alpha_{i,j}}, u_{i,j}(3) = {\wh \alpha_{i,j} \over 1 - \wh \beta_{i,j}}$ and $
u_{i,j}(4) = 1$.    
%To avoid the complicated asymptotics with large $p$,
We calculate the $P$-values of the
test $T$ based on the following permutation algorithm:
\begin{enumerate}
	\item
	Permute $\wh\eulbE_1, \cdots, \wh\eulbE_n$ to obtain a 
	new sequence $\eulbE_1^\star, \cdots, \eulbE_n^\star$. Calculate
	the test statistic $T^\star$ in the same manner as $T$ with $\{ \wh\eulbE_t \}$
	replaced by $\{ \eulbE_t^\star\}$. 
	\item
	Repeat 1 above $M$ times, obtaining permutation test statistics $T^\star_j, \;
	j=1, \cdots, M$, where $M>0$ is a large integer. The $P$-value of the test 
	(for rejecting the stationary AR(1) model) is then
	\[
	{1 \over M} \sum_{j=1}^M I(T < T^\star_j).
	\]
\end{enumerate}

	\section{Autoregressive stochastic block models} \label{sec3}

The general setting in Section \ref{sec2} may apply to various
network processes with some specific underlying structures as long as the
edges are independent with each other.
In this section we illustrate the idea with a new dynamic stochastic
block (DSB) model.

\subsection{Models}

The DSB networks are undirected (i.e. $X_{i,j}^t \equiv X_{j,i}^t$) with no self-loops
(i.e. $X_{i,i}^t\equiv 0$).
Most available DSB models assume  that the networks observed
at different times are independent \citep{p19, bbm18} or conditionally independent
\citep{xh14, dd16, mm17} as connection probabilities and node memberships
evolve over time. We take a radically different approach as we impose autoregressive
structure (\ref{b1}) in the network process. Furthermore,
instead of assuming that the members in the same communities share the same
(unconditional) connection probabilities,
% --- a condition directly taken from static stochastic blocks models,
we entertain the idea that the transition probabilities (\ref{b3}) for the members
in the same communities are the same. This reflects more directly the dynamic behavior
of the process, and implies the former assumption on the unconditional
connection probabilities under the stationarity. See (\ref{b7}).
Furthermore, since the information on both $\alpha_{i,j}$ and $ \beta_{i,j}$,
instead of that on $ \pi_{i,j}= \alpha_{i,j}/(\alpha_{i,j} + \beta_{i,j})$ only,
will be used in estimation, we expect that the new approach leads to
more efficient estimation. This is confirmed by both the theory (Theorem
\ref{DKthm} and also Remark 3 below) and the
numerical experiments (Section \ref{sec42} below).

Let $\nu_t$ be the membership function at time $t$, i.e. for any $1\le i \le p$,
$\nu_t(i)$ takes an integer value between 1 and $q$ $(\le p)$; indicating that
node $i$ belongs to the $\nu_t(i)$-th community at time $t$, where $q$ is a
fixed integer. This effectively assumes that the $p$ nodes are divided
into the $q$ communities.
We assume that $q$ is fixed though some communities may
contain no nodes at some times.  

\begin{definition} \label{def3}
	An AR(1) stochastic block network process $\{\bX_t= (X_{i,j}^t), \,
	t=0,1,2, \cdots \}$ is defined
	by (\ref{b1}), where for $1\le i < j \le p $,
	\begin{align} \label{c1}
	P(\ve_{i,j}^t =1) & = \alpha_{i,j}^t = \theta^t_{\nu_t(i), \nu_t(j)}, \quad
	P(\ve_{i,j}^t =-1) =\beta_{i,j}^t= \eta^t_{\nu_t(i), \nu_t(j)}, \\ \nonumber
	P(\ve_{i,j}^t =0)& =1- \alpha_{i,j}^t - \beta_{i,j}^t = 
	1- \theta^t_{\nu_t(i), \nu_t(j)} - \eta^t_{\nu_t(i), \nu_t(j)}.
	\end{align}
	In the above expressions, $\theta^t_{k, \ell}, \eta^t_{k, \ell}$ are non-negative 
	constants, and $\theta^t_{k, \ell}+\eta^t_{k, \ell}\le 1$ for all $1\le k
	\le \ell \le q$.
\end{definition}

The evolution of membership process $\nu_t$ and/or the
connection probabilities was often assumed to be driven by some latent (Markov)
processes. The statistical inference for those models
is carried out using computational Bayesian
methods such as MCMC or EM. See, for example, \cite{yczgj11, xh14, dd16,
	mm17, rastelli2017}.
\cite{bbm18} adopted a change point approach: assuming both the membership and   
the connection probabilities remain constants either before or after a change point.
See also \cite{len18, wsw19}. This reflects the fact that many networks (e.g. social
networks) hardly change, and a sudden change is typically triggered by some
external events.

We adopt a change point approach in this paper. Section \ref{sec32}  
considers the estimation for both the community
membership and transition probabilities when there are no change points
in the process.
This will serve as a building block for the inference % for the processes
with a change point in Section \ref{sec33}.
Note that detecting change points in dynamic networks is a surging research area. 
In addition to the aforementioned references,
more recent developments include 
\cite{wyr18, zlycc19}. Also note that the method of
\cite{zcl19} can be applied to detect multiple change points for any dynamic
networks.

\subsection{Estimation without change points} \label{sec32}

	We first consider a simple scenario of no change points in the observed period, i.e.
\begin{equation} \label{c2}
\nu_t(\cdot) \equiv \nu(\cdot) \quad {\rm and} \quad
(\theta^t_{k, \ell}, \eta^t_{k, \ell}) \equiv (\theta_{k, \ell}, \eta_{k, \ell}),
\quad t=1, \cdots, n, \;\; 1\le k \le \ell \le q.
\end{equation}
Then fitting the DSB model consists of two steps: (i) estimating
$\nu(\cdot)$ to cluster the $p$ nodes into
$q$ communities, and (ii) estimating transition probabilities
$\theta_{k, \ell}$ and $\eta_{k, \ell}$ for $1\le k \le \ell \le q$.
To simplify the presentation, $q$ is assumed to be known, which is the assumption
taken by most papers on change point detection for DSB networks. % \citep{bbm18}.
In practice, one can determine $q$ by, for example, the jittering method of
\cite{cky20}, or a Bayesian information criterion; see an example in
Section \ref{sec52} below.

\subsubsection{Why does it work?} \label{sec320}
We first provide a theoretical underpinning (Proposition \ref{OracleSC}
below) on identifying the latent communities
based on $\alpha_{i,j}$ and $\beta_{i,j}$.
% We first introduce some notations.  
The stochastic block model with $p$ nodes and $q$ communities can be
parameterized by a pair of matrices $(\bZ, \bOmega)$,
where $\bZ = (z_{i,j})\in \{0,1\}^{p\times q}$ is the membership matrix
such that it has exactly one 1 in each row and at least one 1 in each
column, and $\bOmega=(\omega_{k,\ell})_{q\times q}\in [0,1]^{q\times q}$
is a symmetric and full rank connectivity matrix, with
$\omega_{k,\ell}=\frac{\theta_{k ,\ell}}{\theta_{k, \ell}+\eta_{k,
		\ell}}$.  Then 
$z_{i,j}=1$ if and only if the $i$-th node belongs to the $j$-th community.
On the other hand, $\omega_{k,\ell}$ 
is the connection probability between the nodes in community $k$ and the
nodes in community $\ell$, and 
$s_k\equiv \sum_{i=1}^pz_{i,k}$ is the size of community $k\in\{1,\ldots, q\}$. 
Clearly, matrix $\bZ$ and function $\nu(\cdot)$ are the two equivalent
representations for the community membership of the network nodes. 
% They are uniquely determined by each other.

Let $\W=\bZ \bOmega \bZ^\top$. Under  model \eqref{c2}, the marginal edge formation probability is given as $E(\bX_t)=\W-\diag(\W)$.
Define 
\begin{align*}\label{Omega}
&\bOmega_1=(\theta_{k,\ell})_{q\times q},\qquad \bOmega_2=(1-\eta_{k,\ell})_{q\times q}, \\
&\W_{1}=\bZ \bOmega_1 \bZ^\top=(\alpha_{i,j})_{p\times p},\qquad \W_{2}=\bZ \bOmega_2 \bZ^\top=(1-\beta_{i,j})_{p\times p}, 
\end{align*}
where $\alpha_{i,j} = \theta_{\nu(i), \nu(j)}, \;
\beta_{i,j} = \eta_{\nu(i), \nu(j) }$.
Then $\W_{1}-\diag(\W_{1})$ can be viewed as the edge formation
probability matrix of the latent noise process
$\ve_{i,j}^{1,t}\equiv I(\ve_{i,j}^t=1)$.
Furthermore, under 
model \eqref{c2}, the latent network process $\{
(\ve_{i,j}^{1,t})_{1\leq i,j\leq p}, t=0,1,2,\ldots\}$ has the
same membership structures as $\{\bX_t= (X_{i,j}^t), \,
t=0,1,2, \cdots \}$.  
Since $X_{i,j}^t=1$ is implied by $\ve_{i,j}^t=1$
under model \eqref{b1},
the elements in   
$\W_{1}-\diag(\W_{1})$ are thus positively correlated with the
elements in $\W-\diag(\W)$.
Similarly 
$\W_{2}-\diag(\W_{2})$ can be viewed as the edge formation probability
matrix of the latent noise process $\ve_{i,j}^{2,t}\equiv I(\ve_{i,j}^t\neq
-1)$, and $\{ (\ve_{i,j}^{2,t})_{1\leq i,j\leq p}, t=0,1,2,\ldots\}$ has
the same membership structure as $\{\bX_t, \,
t=0,1,2, \cdots \}$.  Since $\ve_{i,j}^t=-1$ implies $X_{i,j}^t=0$, the elements in   
$\W_{2}-\diag(\W_{2})$  are  also positively correlated to those in $\W-\diag(\W)$.  
Let ${\D}_1$ and ${\D}_2$ be two $p\times p$ diagonal matrices with, respectively, $d_{i,1}, d_{i,2}
$ as their $(i,i)$-th elements,
where 
\[
{d}_{i,1} = \sum_{j=1}^p   \alpha_{i,j}, \qquad  
d_{i,2} = \sum_{j=1}^p (1-  \beta_{i,j}).
\]
The normalized Laplacian matrices based on $\W_1$ and $\W_2$ are then defined as: 
\begin{equation}\label{c0}  
\L_1 = \D^{-1/2}_1 \W_1 \D^{-1/2}_1, \qquad
\L_2 = \D^{-1/2}_2 \W_2 \D^{-1/2}_2, \qquad
\L=\L_1+\L_2. 
\end{equation} 
Correspondingly, let ${\bf s}=  ( {s_1},\ldots,  {s_q})^\top$. We denote the degree corrected connectivity matrices as
\begin{equation*}  
\wt\bOmega_1 = 
\wt\D^{-1/2}_1 \bOmega_1 \wt\D^{-1/2}_1, \qquad
\wt\bOmega_2 = \wt\D^{-1/2}_2 \bOmega_2 \wt\D^{-1/2}_2, \qquad
\wt\bOmega=\wt\bOmega_1+\wt\bOmega_2, 
\end{equation*} 
where $\wt\D_1=\diag\{\bOmega_1 {\bf s}\}$ and $\wt\D_2=\diag\{\bOmega_2 {\bf s}\}$ are diagonal matrices with, respectively, the degrees of the nodes in the $i$-th community corresponding to $\bOmega_1$ and $\bOmega_2$ as their $(i,i)$-th elements.
The following lemma shows that  the  block structure in the membership matrix $\bZ$ can be recovered by the leading eigenvectors of $\L$. 

\begin{proposition}\label{OracleSC}
	Suppose $\wt\bOmega$ is full rank, we have, $rank(\L)=q$. 
	Let $\bGamma_q\bLambda\bGamma_q^\top$ be the eigen-decomposition of $\L$,
	where $\bLambda %\in {\cal R}^{q\times q}
	=\diag\{\lambda_1,\ldots,
	\lambda_q\}$ is the diagonal matrix consisting of the nonzero
	eigenvalues of $\L$ arranged in the order $|\lambda_1|\ge \cdots \ge |\lambda_q|>0$.
	There exists a matrix $\bU\in {\cal R}^{q\times q}$ such that 
	$\bGamma_q=\Z\bU$. Furthermore, for any $1\leq i,j\leq p$,
	$\bz_{i}\bU=\bz_{j}\bU$ if and only if $\bz_{i}=\bz_{j}$, where $\bz_{i}$
	denotes the $i$-th row of $\Z$.

\end{proposition}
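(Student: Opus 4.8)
The plan is to reduce the proposition to the single factorization $\L = \bZ\wt\bOmega\bZ^\top$ and then read off all three assertions from elementary linear algebra. The first step is a commutation identity: $\D_1^{-1/2}\bZ = \bZ\wt\D_1^{-1/2}$, and similarly for $\D_2,\wt\D_2$. This is immediate from the definitions, since the $(i,i)$ entry of $\D_1$ is $d_{i,1} = \sum_j\alpha_{i,j} = \sum_{\ell=1}^q s_\ell\,\theta_{\nu(i),\ell} = (\bOmega_1\bs)_{\nu(i)}$, which is exactly the $(\nu(i),\nu(i))$ entry of $\wt\D_1$; because each row of $\bZ$ merely copies the relevant community-level value, left multiplication by $\D_1^{-1/2}$ agrees with right multiplication by $\wt\D_1^{-1/2}$. (I would also note that positivity of the entries of $\bOmega_1$ and of $\bOmega_2 = (1-\eta_{k,\ell})$, guaranteed for instance by condition C1, makes $\D_1,\D_2,\wt\D_1,\wt\D_2$ invertible, so every matrix above is well defined.) Plugging $\W_1 = \bZ\bOmega_1\bZ^\top$ into $\L_1 = \D_1^{-1/2}\W_1\D_1^{-1/2}$ and using the identity twice gives $\L_1 = \bZ\,\wt\D_1^{-1/2}\bOmega_1\wt\D_1^{-1/2}\,\bZ^\top = \bZ\wt\bOmega_1\bZ^\top$, and likewise $\L_2 = \bZ\wt\bOmega_2\bZ^\top$; adding them, $\L = \bZ\wt\bOmega\bZ^\top$.

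Next I would handle the rank and the column space. As $\bZ$ has exactly one $1$ per row and at least one $1$ per column, $\bZ^\top\bZ = \diag(s_1,\ldots,s_q)$ is invertible, so $\bZ$ has full column rank $q$; together with the hypothesis that $\wt\bOmega$ is full rank, this shows $\L = \bZ\wt\bOmega\bZ^\top$ has rank $q$, and that the column space of $\L$, being contained in the column space of $\bZ$ and of the same dimension $q$, equals it. Since $\L$ is symmetric (a sum of the symmetric matrices $\L_1$ and $\L_2$), the columns of $\bGamma_q$ form an orthonormal basis of the range of $\L$, so $\bGamma_q$ and $\bZ$ span the same subspace; because $\bZ$ has full column rank, there is a unique $\bU\in\mathbb{R}^{q\times q}$ with $\bGamma_q = \bZ\bU$, namely $\bU = (\bZ^\top\bZ)^{-1}\bZ^\top\bGamma_q$.

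Finally I would show $\bU$ is invertible and conclude. From $\bGamma_q^\top\bGamma_q = \bI_q$ and $\bGamma_q = \bZ\bU$ one gets $\bU^\top\diag(s_1,\ldots,s_q)\bU = \bI_q$, so $\det(\bU)^2\prod_k s_k = 1$ and $\bU$ is nonsingular. Hence, for any $1\le i,j\le p$, $\bz_i\bU = \bz_j\bU$ if and only if $\bz_i = \bz_j$, which is the stated separation property, and combined with $\bGamma_q = \bZ\bU$ it also shows that the rows of $\bGamma_q$ take exactly $q$ distinct values, one per community. I do not expect a genuine obstacle here; the only point requiring care is the commutation identity in the first step --- checking that the node-level normalization $\D_k$ is consistent with the community-level normalization $\wt\D_k$ and that both are invertible --- after which everything is standard spectral-clustering bookkeeping.
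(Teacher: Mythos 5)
Your proposal is correct and follows essentially the same route as the paper: both rest on the factorization $\L=\bZ\wt\bOmega\bZ^\top$ obtained from the commutation of the node-level and community-level degree normalizations, both use the fact that $\bZ^\top\bZ=\diag(s_1,\ldots,s_q)$ (equivalently, that $\bZ\,\diag(s_1,\ldots,s_q)^{-1/2}$ has orthonormal columns) to get $\mathrm{rank}(\L)=q$, and both conclude via the invertibility of $\bU$. The only cosmetic difference is that the paper exhibits $\bU=\N^{-1}\bQ$ explicitly from the eigen-decomposition of $\N\wt\bOmega\N$, whereas you obtain $\bU$ from the column-space identification and verify its nonsingularity from $\bU^\top\diag(s_1,\ldots,s_q)\bU=\bI_q$; both arguments are sound.
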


\noindent
{\bf Remark 1}.
The $q$ columns of $\bGamma_q$ are the orthonormal eigenvectors  of $\bL$
corresponding to the $q$ non-zero eigenvalues.
Proposition \ref{OracleSC} implies that there are only $q$ distinct rows
in the $p\times q$ matrix $\bGamma_q$, and two nodes belong to a same
community if and only if
the corresponding rows in $\bGamma_q$ are the same. 
Intuitively the discriminant power of $\bGamma_q$
can be understood as follows.
For any unit vector $\bgamma = (\gamma_1, \cdots, \gamma_p)^\top$,
\begin{align} \label{c5}
\bgamma^\top \bL \bgamma = 2  -  \sum_{1\le i <j \le p}
\alpha_{i,j} \Big({\gamma_i\over \sqrt{ d_{i,1}}} - {\gamma_j \over
	\sqrt{ d_{j,1}}} \Big)^2
-
\sum_{1\le i <j \le p} (1 -
\beta_{i,j}) \Big({\gamma_i\over \sqrt{ d_{i,2}}} - {\gamma_j \over \sqrt{d_{j,2}}} \Big)^2.
\end{align}
For $\bgamma$ being an eigenvector corresponding to the positive eigenvalue of $\bL$,
the sum of the 2nd and the 3rd terms on
the RHS (\ref{c5}) is minimized.
Thus $|\gamma_i - \gamma_j|$ is small when $\alpha_{i,j}$ and/or $(1- \beta_{i,j})$
are large; noting that $ d_{i,k}= d_{j, k}$ for $k=1, 2$ when nodes $i$
and $j$ belong to the same community.
The communities in a network are 
often formed in the way that the members within the same community
are more likely to be connected with each other, and the members
belong to different communities are unlikely or less likely to be connected.
Hence when nodes $i$ and $j$ belong to the same community,
$\alpha_{i,j}$ tends to be large and $\beta_{i,j}$ tends to be small
(see (\ref{b3})). The converse 
is true when the two nodes belong to two different communities.
The eigenvectors corresponding to negative eigenvalues are capable to
identify the so-called heterophilic communities, see pp.1892-3 of \cite{rohe2011spectral}.

\subsubsection{Estimating membership $\nu(\cdot)$}
\label{sec321}

	It follows from Proposition \ref{thm1}, (\ref{c1}) and (\ref{c2}) that
\[
P(X_{ij}^t =1) = \theta_{\nu(i), \nu(j)} / (\theta_{\nu(i), \nu(j)}
+ \eta_{\nu(i), \nu(j)}) \equiv \omega_{\nu(i), \nu(j)}, \quad 1 \le i<j \le p,
\]
provided that $X_{ij}^0$ is initiated with the same marginal distribution.
A simple approach adopted in literature is to apply a
community detection method for static stochastic block models using 
the averaged data $\bar \bX = \sum_{1\le t \le n} \bX_t/n$ to detect
the latent communities characterized by the connection probabilities
$\{ \omega_{k, \ell}, 1\le k \le \ell \le q\}$. We take a different approach
based on estimators $\{ (\wh \alpha_{i,j}, \wh \beta_{i,j}), 1 \le i
<j \le p\}$ defined in (\ref{b12})
to identify the clusters determined by the transition probabilities
$\{ (\theta_{k, \ell}, \eta_{k, \ell}), 1\le k \le \ell \le q\}$  instead.
More precisely, we propose a new
spectral clustering algorithm  to estimate
$\bGamma_q$ specified in Proposition \ref{OracleSC} above.

Let $\wh\bW_1, \wh\bW_2$ be two $p\times p$ matrices with, respectively, $\wh\alpha_{i,j},
(1- \wh\beta_{i,j})$ as their $(i,j)$-th elements for $i\ne j$, and 0 on the main
diagonals. Let $\wh\bD_1, \wh\bD_2$ be two $p\times p$
diagonal matrices with, respectively, $\wh d_{i,1},\wh d_{i,2}
$ as their $(i,i)$-th elements,
where 
\[
\wh d_{i,1} = \sum_{j=1}^p \wh \alpha_{i,j}, \qquad  
\wh d_{i,2} = \sum_{j=1}^p (1- \wh \beta_{i,j}).
\]
Define two (normalized) Laplacian matrices
\begin{equation} \label{c3}
\wh \bL_1 =\wh  \bD^{-1/2}_1\wh  \bW_1\wh  \bD^{-1/2}_1, \qquad
\wh \bL_2 =\wh  \bD^{-1/2}_2\wh  \bW_2\wh  \bD^{-1/2}_2.
\end{equation}
% Let $\wh{\bL}=\wh \bL_1 +\wh \bL_2$.
Perform the eigen-decomposition for the sum of $\bL_1 $ and $\bL_2$:
\begin{equation} \label{c4}
\wh\bL \equiv \wh \bL_1 +\wh \bL_2 = \wh\bGamma\, \diag(\wh\la_1, \cdots, \wh\la_p) \wh\bGamma^\top,
\end{equation}
where the eigenvalues are arranged in the order
$\wh\lambda_1^2\geq \ldots \geq \wh\lambda_p^2$, and the columns of the
$p\times p$ orthogonal matrix $\wh \bGamma$ are the corresponding eigenvectors.
We call $\wh\la_1, \ldots, \wh\la_q$ the $q$ leading eigenvalues of $\wh\bL$. 
Denote by $\wh\bGamma_q$ the $p\times q$ matrix consisting of the first $q$ columns of
$\wh\bGamma$, which are called the leading eigenvectors of $\wh\bL$.
The spectral clustering applies the $k$-means clustering algorithm to
the $p$ rows of $\wh\bGamma_q$ to obtain the community assignments for the $p$
nodes $\wh \nu(i) \in \{ 1, \cdots, q\}$ for $i=1, \cdots, p$.

\noindent
{\bf Remark 2}.
Proposition \ref{OracleSC} implies that the true memberships can be recovered
by the $q$ distinct rows of $\bGamma_q$.
Note that 
\[\wh \L=\wh \bL_1+\wh \bL_2 \approx \L_1-\diag(\L_1)+\L_2-\diag(\L_2)=\L-\diag(\L).
\]
We shall see that the effect of the term $\diag(\L)$ on the eigenvectors $\bGamma_q$ is negligible when $p$ is large (see for example \eqref{De_diag} in the proof of Lemma \ref{Frob} in
Appendix A), and hence the rows of $\wh \bGamma_q$ should be slightly perturbed versions of the $q$ distinct rows in $\bGamma_q$.

The following theorem justified the validity of using $\wh\L$ for spectral clustering.
Note that $\|\cdot\|_2$ and $\|\cdot\|_F$ denote, respectively, the $L_2$
and the Frobenius norm of matrices.

\begin{theorem} \label{DKthm}
	Let conditions (2.5), C1 and C2
	hold, and  $\lambda_q^{-2} \left(\sqrt{\frac{\log (pn)}{np} } +\frac{1}{n}+\frac{1}{p}\right) \rightarrow 0$, as  $n, p\rightarrow \infty$.  
	Then it holds that
	\begin{eqnarray}\label{Con_EVal}
	\max_{i=1,\ldots, p} |\lambda_i^2-\wh\lambda_i^2|\leq  \|\wh\L\wh\L-\L\L\|_2\leq \|\wh\L\wh\L-\L\L\|_F=O_p\left(\sqrt{\frac{\log (pn)}{np} } +\frac{1}{n}+\frac{1}{p}\right) . 
	\end{eqnarray}
	Moreover, for any constant $B>0$, there exists a constant $C>0$ such that
	the inequality
	\begin{eqnarray}\label{Con_EVec}
	\|\wh\bGamma_q -\bGamma_q\bO_q\|_F\leq
	4\lambda_{q}^{-2}C\left(\sqrt{\frac{\log (pn)}{np} }
	+\frac{1}{n}+\frac{1}{p}\right)
	\end{eqnarray}
	holds with probability greater than
	$1-16p\left[ (pn)^{-(1+B)}+  \exp\{-B\sqrt{p}\}\right]$, where $\bO_q$
	is a $q\times q$ orthogonal matrix.
\end{theorem}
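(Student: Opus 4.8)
Set $\Delta:=\wh\L-\L$ and $r_{np}:=\sqrt{\log(pn)/(np)}+n^{-1}+p^{-1}$. Since $\wh\L$ and $\L$ are symmetric, $\wh\L\wh\L-\L\L=\L\Delta+\Delta\L+\Delta^{2}$, so using $\|AB\|_F\le\|A\|_2\|B\|_F$ and $\|AB\|_F\le\|A\|_F\|B\|_2$,
\[
\|\wh\L\wh\L-\L\L\|_F\;\le\;2\|\L\|_F\,\|\Delta\|_2+\|\Delta\|_F^{2}.
\]
By Proposition \ref{OracleSC}, $\mathrm{rank}(\L)=q$, and $\|\L\|_2\le\|\L_1\|_2+\|\L_2\|_2\le 2$ because $\L_1,\L_2$ are normalized adjacency operators; hence $\|\L\|_F\le2\sqrt q=O(1)$. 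Weyl's inequality applied to the symmetric matrices $\wh\L\wh\L$ and $\L\L$ gives $\max_i|\lambda_i^{2}-\wh\lambda_i^{2}|\le\|\wh\L\wh\L-\L\L\|_2\le\|\wh\L\wh\L-\L\L\|_F$. Thus \eqref{Con_EVal} reduces to showing $\|\Delta\|_2=O_p(r_{np})$ and $\|\Delta\|_F=O_p(n^{-1/2}+p^{-1/2})$.

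\emph{Bounding $\Delta$.} I would restrict to the event $\mathcal G$ on which $\max_{(i,j)\in\calJ}\big(|\wh\alpha_{i,j}-\alpha_{i,j}|\vee|\wh\beta_{i,j}-\beta_{i,j}|\big)\le C_0\sqrt{\log(pn)/n}$; by the tail bounds behind Theorem \ref{uniformCon} (Bernstein's inequality for the exponentially $\alpha$-mixing sequences of Lemma \ref{amixing}, plus a union bound over $|\calJ|\le p^{2}$ pairs) $\mathcal G$ is a high-probability event of the form claimed in \eqref{Con_EVec}, and on $\mathcal G$ one has $\wh d_{i,k}\asymp d_{i,k}\asymp p$ for $k=1,2$. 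With $\W_k^{\circ}:=\W_k-\diag(\W_k)$ write
\[
\wh\L_k-\L_k=\wh\D_k^{-1/2}(\wh\W_k-\W_k^{\circ})\wh\D_k^{-1/2}
+\big(\wh\D_k^{-1/2}-\D_k^{-1/2}\big)\W_k^{\circ}\wh\D_k^{-1/2}
+\D_k^{-1/2}\W_k^{\circ}\big(\wh\D_k^{-1/2}-\D_k^{-1/2}\big)
-\D_k^{-1/2}\diag(\W_k)\D_k^{-1/2}.
\]
The last term has operator norm $O(p^{-1})$ and Frobenius norm $O(p^{-1/2})$. For the two middle terms, $\wh d_{i,k}-d_{i,k}$ is, up to a bounded term, a sum of independent errors of mean $O(n^{-1})$ and variance $O(n^{-1})$, so $\max_i|\wh d_{i,k}-d_{i,k}|=O_p(\sqrt{p\log p/n}+p/n)$; combined with $\|\W_k^{\circ}\|_2=O(p)$ this shows that both middle terms have operator norm $O_p(\sqrt{\log p/(pn)}+n^{-1})$ and (estimating entrywise) Frobenius norm of the same order. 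For the leading term, decompose the entrywise error $\wh\alpha_{i,j}-\alpha_{i,j}=\xi_{i,j}+b_{i,j}$ into a centred fluctuation ($\xi_{i,j}$ independent over $(i,j)\in\calJ$, $|\xi_{i,j}|\lesssim\sqrt{\log(pn)/n}$ on $\mathcal G$, $\var(\xi_{i,j})=O(n^{-1})$ as in the proof of Theorem \ref{CLT1}) and a bias ($b_{i,j}=O(n^{-1})$, which shares the block pattern of $(\alpha_{i,j})$, so $(b_{i,j})=\bZ\bB^{*}\bZ^{\top}$ up to a diagonal and $\|(b_{i,j})\|_2=O(p/n)$), and use $\wh\W_k-\W_k^{\circ}=(\xi_{i,j})+(b_{i,j})-\diag(\W_k)$. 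A non-commutative Bernstein inequality gives $\|(\xi_{i,j})\|_2=O_p(\sqrt{p\log(pn)/n})$, while $E\|\wh\W_k-\W_k^{\circ}\|_F^{2}=\sum_{i\neq j}\{\var(\wh\alpha_{i,j})+b_{i,j}^{2}\}+\sum_i\alpha_{i,i}^{2}=O(p^{2}/n+p)$. Dividing by $\wh d_{i,k}\asymp p$ yields operator norm $O_p(r_{np})$ and Frobenius norm $O_p(n^{-1/2}+p^{-1/2})$ for the leading term. Summing over $k=1,2$ delivers the two required bounds, hence \eqref{Con_EVal}.

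\emph{Eigenvectors.} Since $\L\L=\bGamma_q\bLambda^{2}\bGamma_q^{\top}$ has eigenvalues $\lambda_1^{2}\ge\cdots\ge\lambda_q^{2}>0=\cdots=0$, the column space of $\bGamma_q$ is exactly the invariant subspace of $\L\L$ for its $q$ largest eigenvalues, separated from the rest of the spectrum by the gap $\lambda_q^{2}$. The hypothesis $\lambda_q^{-2}r_{np}\to0$ forces $\|\wh\L\wh\L-\L\L\|$ eventually below $\lambda_q^{2}/2$, so the Davis--Kahan $\sin\Theta$ theorem produces a $q\times q$ orthogonal $\bO_q$ with $\|\wh\bGamma_q-\bGamma_q\bO_q\|_F\le2\sqrt2\,\lambda_q^{-2}\|\wh\L\wh\L-\L\L\|_F$. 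To obtain the explicit probability in \eqref{Con_EVec}, replace each $O_p$ above by its underlying tail inequality: Bernstein for mixing sequences for $|\wh\alpha_{i,j}-\alpha_{i,j}|$ and for the degree sums (contributing the $(pn)^{-(1+B)}$ terms through a union bound over $\le p^{2}$ pairs and $\le 2p$ degrees), and a matrix-Bernstein/$\varepsilon$-net estimate for $\|(\xi_{i,j})\|_2$ (contributing the $\exp(-B\sqrt p)$ term); this yields $\|\wh\L\wh\L-\L\L\|_F\le C r_{np}$ on an event of probability at least $1-16p[(pn)^{-(1+B)}+\exp(-B\sqrt p)]$, and the Davis--Kahan constant $2\sqrt2$ is absorbed into the factor $4$.

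\emph{Main obstacle.} The crux is the operator-norm bound $\|\Delta\|_2=O_p(\sqrt{\log(pn)/(np)}+n^{-1}+p^{-1})$, which is far sharper than the $p\cdot\sqrt{\log p/n}$ that one gets by multiplying the uniform entrywise rate of Theorem \ref{uniformCon} by the matrix dimension. Obtaining it requires (i) exploiting the two-sided $\D^{-1/2}$ normalization, which contributes the decisive factor $p^{-1}$; (ii) replacing the crude ``largest entry $\times\ p$'' estimate by a non-commutative Bernstein inequality for the centred error matrix $(\xi_{i,j})$ and by the $\sqrt p$-scale concentration of the degree sums; and (iii) carrying out every concentration step through a Bernstein bound valid for the exponentially $\alpha$-mixing sequences $\{X_{i,j}^{t}\}$ (Lemma \ref{amixing}) rather than for i.i.d.\ data, since the $\wh\alpha_{i,j}-\alpha_{i,j}$ are only independent across $(i,j)$ and are built from $t$-dependent observations; the constants produced by these mixing inequalities are what fix the $(pn)^{-(1+B)}$ and $\exp(-B\sqrt p)$ in the probability bound.
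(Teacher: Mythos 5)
Your proposal is correct in outline but takes a genuinely different route from the paper's. The paper never establishes an operator-norm bound on $\wh\L-\L$: Lemma~\ref{Frob} works directly with the \emph{entries} of the four products $\wh\L_u\wh\L_v-\L_u\L_v$, expanding each entry of the product as a sum over the middle index $k$, controlling it by scalar Bernstein inequalities for the degree sums $\wh d_{\ell,1}$ (its Step 1, which is where the $\exp\{-B\sqrt p\}$ term originates) together with the uniform entrywise rate for $\wh\alpha_{i,j}$ and the bias expansion of Lemma~\ref{bias}, and then passing to the Frobenius norm via $\|\cdot\|_F\le p\max_{i,j}|\cdot|$; the four bounds are assembled by the triangle inequality in Lemma~\ref{FrobF}, and the theorem then follows, exactly as in your plan, from Weyl and Davis--Kahan. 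You instead use the perturbation identity $\wh\L\wh\L-\L\L=\L\Delta+\Delta\L+\Delta^{2}$ together with $\mathrm{rank}(\L)=q$ and $\|\L\|_2\le 2$, reducing everything to separate operator- and Frobenius-norm bounds on $\Delta=\wh\L-\L$, the operator bound resting on a matrix Bernstein inequality for the independent-entry fluctuation matrix $(\xi_{i,j})$ --- a tool the paper avoids entirely. Your route is the standard one in the spectral-clustering literature; it cleanly attributes the three terms of the rate ($\sqrt{\log(pn)/(np)}$ to the centred fluctuations, $1/n$ to the bias of Lemma~\ref{bias}, $1/p$ to the removal of the diagonal), yields the intermediate bound $\|\wh\L-\L\|_2=O_p\bigl(\sqrt{\log(pn)/(np)}+n^{-1}+p^{-1}\bigr)$ which is of independent interest, and makes explicit the cancellation across the independent edges indexed by $k$ that is needed to beat the naive ``max entry times dimension'' rate $\sqrt{\log(pn)/n}$ (the paper's display \eqref{2ndterm} bounds that contribution by the maximal entry alone, so your matrix-concentration step is doing genuine work at that point). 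Two details to tighten: for the explicit probability statement in \eqref{Con_EVec} you need a tail bound, not merely the expectation, for $\sum_{i\ne j}\xi_{i,j}^{2}$ (scalar Bernstein for the independent bounded variables $\xi_{i,j}^{2}$ suffices, since otherwise the crude bound $\|\Delta\|_F^2\le p^2\max_{i,j}\xi_{i,j}^2/\min_i\wh d_{i,1}^2\asymp\log(pn)/n$ exceeds the target rate when $p\gg n$); and your union bound will produce a numerical constant in front of $p\bigl[(pn)^{-(1+B)}+\exp\{-B\sqrt p\}\bigr]$ that need not equal the paper's $16$, which is immaterial since $B$ is arbitrary.
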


It follows from \eqref{Con_EVal} that the leading eigenvalues of $\L$
can be consistently recovered by  the
leading eigenvalues of $\wh\L$.   By \eqref{Con_EVec}, the
leading eigenvectors of $\L$ can also be consistently estimated,
subject to a rotation (due to the possible multiplicity of
some leading eigenvalues $\L$).
Proposition \ref{OracleSC} indicates that there are only $q$ distinct
rows in $\bGamma_q$, and, therefore, also $q$ distinct
rows in  $\bGamma_q\bO_q$, corresponding to the $q$ latent communities for
the $p$ nodes. This paves the way for the $k$-means algorithm stated below.
Put
\[
{\cal M}_{p,q}=\{ \bM\in {\cal R}^{p\times q}: \bM  ~{ {\rm has}~ q~ {\rm distinct ~rows}} \}.
\]

\noindent
{\bf
	The $k$-means clustering algorithm}: Let
\[
(\wh \bc_1, \cdots, \wh \bc_p)^\top = \arg \min_{\bM \in  {\cal M}_{p,q}}\|\wh\bGamma_q- \bM\|_F^2.
\]
There are only $q$ distinct vectors among $\wh \bc_1, \cdots, \wh \bc_p$, forming
the $q$ communities. Theorem  \ref{clusteringCons} below shows that they are identical to the
latent communities of the $p$ nodes under (\ref{Con_EVec}) and (\ref{KM_SN}).
The latter holds if
$
\sqrt{ \ {s_{\max} }} \lambda_{q}^{-2}C\left(\sqrt{\frac{\log (pn)}{np} }
+\frac{1}{n}+\frac{1}{p}\right)\rightarrow 0 
$, where $s_{\max}=\max\{s_1, \ldots, s_q\}$ is the size of the largest community.

\begin{theorem}\label{clusteringCons} 
	Let \eqref{Con_EVec} hold and 
	\begin{equation}\label{KM_SN}
	\sqrt{\frac{1}{s_{\max}}} >  4\sqrt{2} \lambda_{q}^{-2}C\left(\sqrt{\frac{\log (pn)}{np} } +\frac{1}{n}+\frac{1}{p}\right)
	. 	\end{equation}
	Then
	% 	$\z_i=\z_j$ if any only if
	$\wh \c_i=\wh \c_j$ if and only if $\nu(i) = \nu(j)$, $1\leq i,j\leq p$. 
\end{theorem}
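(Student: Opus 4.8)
The plan is to read off from Proposition~\ref{OracleSC} that the $q$ distinct rows of the population embedding are mutually orthogonal and hence well separated, then to transfer the eigenvector perturbation bound \eqref{Con_EVec} to the $k$-means output by exploiting minimality of the clustering, and finally to combine the two with \eqref{KM_SN} to rule out \emph{any} misclustered node. For the separation, note that $\bGamma_q=\Z\bU$ forces the rows of $\bGamma_q$, and hence of $\bGamma_q\bO_q$, to take only $q$ distinct values: writing $\bu_k^\top$ for the $k$-th row of $\bU$, the $i$-th row of $\bGamma_q\bO_q$ equals $\gamma_k^\top:=\bu_k^\top\bO_q$ whenever $\nu(i)=k$. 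Since the columns of $\bGamma_q$ are orthonormal while $\Z^\top\Z=\diag(s_1,\dots,s_q)$, we get $\bI_q=\bGamma_q^\top\bGamma_q=\bU^\top\diag(s_1,\dots,s_q)\bU=\sum_{k=1}^q s_k\bu_k\bu_k^\top$, so $\{\sqrt{s_k}\,\bu_k\}_{k=1}^q$ is an orthonormal basis of ${\cal R}^q$. Consequently $\|\gamma_k\|_2=s_k^{-1/2}$, the $\gamma_k$ are pairwise orthogonal, and $\min_{k\ne\ell}\|\gamma_k-\gamma_\ell\|_2^2=\min_{k\ne\ell}(s_k^{-1}+s_\ell^{-1})\ge 2/s_{\max}$.

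Next, put $\wh\bC=(\wh\bc_1,\dots,\wh\bc_p)^\top$ and $\delta=\|\wh\bGamma_q-\bGamma_q\bO_q\|_F$. Because all communities are nonempty, $\bGamma_q\bO_q$ has exactly $q$ distinct rows, so $\bGamma_q\bO_q\in{\cal M}_{p,q}$; minimality of $\wh\bC$ then gives $\|\wh\bGamma_q-\wh\bC\|_F\le\|\wh\bGamma_q-\bGamma_q\bO_q\|_F=\delta$, whence $\|\wh\bC-\bGamma_q\bO_q\|_F\le 2\delta$ by the triangle inequality. Call a node $i$ badly centred if $\|\wh\bc_i-\gamma_{\nu(i)}\|_2\ge\tfrac12\min_{k\ne\ell}\|\gamma_k-\gamma_\ell\|_2$. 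Since $\sum_i\|\wh\bc_i-\gamma_{\nu(i)}\|_2^2=\|\wh\bC-\bGamma_q\bO_q\|_F^2\le 4\delta^2$ and, by the separation above, each badly centred node contributes at least $\tfrac12 s_{\max}^{-1}$ to this sum, the number of badly centred nodes is at most $8 s_{\max}\delta^2$; substituting $\delta\le 4\lambda_q^{-2}C(\sqrt{\log(pn)/(np)}+1/n+1/p)$ from \eqref{Con_EVec} and invoking \eqref{KM_SN} forces this integer-valued count below $1$, hence to $0$.

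With no badly centred node, $\|\wh\bc_i-\gamma_{\nu(i)}\|_2$ is strictly below half the minimum centre separation for every $i$. Two facts follow. First, $\nu(i)\ne\nu(j)$ implies $\wh\bc_i\ne\wh\bc_j$, since otherwise $\|\gamma_{\nu(i)}-\gamma_{\nu(j)}\|_2\le\|\gamma_{\nu(i)}-\wh\bc_i\|_2+\|\wh\bc_j-\gamma_{\nu(j)}\|_2$ would be strictly smaller than $\min_{k\ne\ell}\|\gamma_k-\gamma_\ell\|_2$, a contradiction. Second, because $\wh\bC$ has exactly $q$ distinct rows while the rows attached to different communities are, by the first fact, disjoint and number $q$ in total, the correspondence between communities and distinct rows of $\wh\bC$ is a bijection; hence $\nu(i)=\nu(j)$ implies $\wh\bc_i=\wh\bc_j$. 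Combining the two facts gives $\wh\bc_i=\wh\bc_j$ if and only if $\nu(i)=\nu(j)$, which is the claim.

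The delicate part is the counting in the second paragraph. Because the conclusion is \emph{exact} community recovery rather than merely a vanishing misclustering rate, the bookkeeping must be sharp enough that the number of badly centred nodes is provably below unity; this is precisely what the constant $2\sqrt6$ in \eqref{KM_SN} together with the factor $4$ in \eqref{Con_EVec} are calibrated to deliver, and should the crude global count sketched above not quite suffice one would bound the errors community by community before summing. One must also be careful to keep a single rotation $\bO_q$ fixed throughout, so that the $\bO_q$ realizing the perturbation bound \eqref{Con_EVec} is the same one entering the separation computation. The remaining ingredients — the linear algebra giving the separation and the use of $k$-means optimality — are routine.
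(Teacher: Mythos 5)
Your separation computation and the reduction to $\|\wh\bC-\bGamma_q\bO_q\|_F\le 2\delta$ are fine, and the final logical step (zero badly centred nodes plus pigeonhole on the $q$ distinct rows of $\wh\bC$ gives a bijection) would work \emph{if} you could show the count is zero. But the counting does not close under the stated hypotheses, and this is a genuine gap, not a matter of tightening constants community by community. Concretely, write $A=\lambda_q^{-2}C\bigl(\sqrt{\log(pn)/(np)}+1/n+1/p\bigr)$. Condition \eqref{KM_SN} gives $A<\tfrac{1}{2\sqrt6}s_{\max}^{-1/2}$, and \eqref{Con_EVec} gives $\delta\le 4A$, so $\delta^2<\tfrac{16}{24}s_{\max}^{-1}=\tfrac{2}{3}s_{\max}^{-1}$. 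Your bound on the number of badly centred nodes is $8s_{\max}\delta^2<16/3\approx 5.3$, so the argument only rules out six or more misclustered nodes; it permits up to five, and exact recovery does not follow. To force the count below $1$ you would need $\delta<\tfrac{1}{2\sqrt2}s_{\max}^{-1/2}$, i.e.\ a version of \eqref{KM_SN} with $2\sqrt6\approx 4.9$ replaced by roughly $8\sqrt2\approx 11.3$ — a strictly stronger hypothesis than the theorem assumes. This is the standard limitation of the "count the outliers" device: it naturally yields a vanishing misclassification \emph{fraction}, not exact recovery.

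The paper closes the gap with a different mechanism that you are missing: an exchange argument exploiting $k$-means optimality. First, a two-node computation shows that nodes in different communities cannot share a centre (if they did, the two corresponding rows alone would force $\|\bGamma_q\bO_q-\wh\bC\|_F^2\gtrsim 2/s_{\max}$, contradicting $\|\bGamma_q\bO_q-\wh\bC\|_F^2\le 2\delta^2$). Second — and this is the key step — if two nodes $i,j$ in the \emph{same} community received different centres, then since $\wh\bC$ has exactly $q$ distinct rows there must be a node $k$ in a \emph{different} community with $\wh\c_j=\wh\c_k$; replacing the $j$-th row of $\wh\bC$ by $\wh\c_i$ is shown to strictly decrease $\|\wh\bGamma_q-\cdot\|_F^2$ because the gain $\|\bgamma_i-\bgamma_k\|_2^2\ge 2/s_{\max}$ dominates $3\,(4A)^2=48A^2$, which is exactly what \eqref{KM_SN} guarantees (this is where the constant $2\sqrt6$ comes from). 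That contradicts the minimality of $\wh\bC$. You should replace your global counting step with this swap argument, or else accept a strictly stronger separation condition than the one in the theorem.
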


\noindent
{\bf Remark 3}. By Lemma A.1 of 
\cite{rohe2011spectral}, the error bound for the standard 
spectral clustering algorithm  (with $n=1$)
is $O_p\left({\frac{\log p}{\sqrt{p}} }+\frac{1}{p}
\right)$, where the term $\frac{1}{p}$ reflects the bias caused by the
inconsistent estimation of diagonal terms (see equation (A.5) and
subsequent derivations in \cite{rohe2011spectral}). This bias 
comes directly from the removal of the diagonal elements of $\L$, as
pointed out in Remark 2 above.
Although the algorithm was designed for static networks, it has often
been applied to dynamic networks using
${1\over n} \sum_t \bX_t$ in the place of
a single observed network; see, e.g. \cite{bbm18}.
With some simple modification to the proof
of  Lemma A.1 of \cite{rohe2011spectral}, it can be shown that
the error bound  is then reduced to 
\begin{equation} \label{err1}
O_p\left({\frac{\log (pn)}{\sqrt{np}} }+\frac{1}{p} \right),
\end{equation}
provided that the observed networks are i.i.d. The error would
only increase when the observations are not independent.
On the other hand, our proposed spectral clustering algorithm for
(dependent) dynamic networks
entails the error rate specified in \eqref{Con_EVal} and
\eqref{Con_EVec} which is smaller than (\ref{err1})
as long as $n$ is sufficiently large (i.e. $(p/n)^{1\over 2}/\log(np) \to 0$).
Note that we need $n$ to be large enough in relation to $p$ in
order to capture the dynamic dependence of the networks.

\subsubsection{Estimation for $\theta_{k, \ell}$ and $\eta_{k, \ell}$}
\label{se322}

	For any $1\leq k\leq \ell \leq q$, we define 
\begin{equation} \label{Skl}
S_{k,l}=
\Big\{
\begin{array}{ll}
\{(i,j) : 1\leq i\neq j\leq p, \nu(i)=k, \nu(j)=\ell\}   \quad & {\rm if} \; k\neq l,\\
\{(i,j) : 1\leq i< j\leq p, \nu(i)=k=\nu(j)=\ell\}   \quad & {\rm if} \; k= l,
\end{array}
\end{equation} 
Clearly the cardinality of $S_{k,\ell}$ is $n_{k,\ell}=s_ks_\ell$ when $k\neq \ell$ and $n_{k,\ell}=s_k(s_k-1)/2$ when $k=\ell$.

Based on the procedure presented in Section \ref{sec321},
we obtain an estimated membership function $\wh \nu(\cdot)$.
Consequently, the MLEs for $(\theta_{k, \ell}, \eta_{k, \ell})$, $1\le k \le \ell \le q$, admit the form
\begin{align} \label{c6}
\wh \theta_{k, \ell}& ={
	\sum_{(i,j)\in \wh S_{k,\ell} }
	\sum_{t=1}^n X_{i,j}^t(1 - X_{i,j}^{t-1}) \Big/ \hspace{-3mm}
	\sum_{(i,j)\in \wh S_{k,\ell} }
	\sum_{t=1}^n (1 - X_{i,j}^{t-1})},\\[1ex] \label{c7}
\wh \eta_{k, \ell}& =
\sum_{(i,j)\in \wh S_{k,\ell} }
\sum_{t=1}^n (1- X_{i,j}^t)X_{i,j}^{t-1} \Big/ \hspace{-3mm}
\sum_{(i,j)\in \wh S_{k,\ell} }
\sum_{t=1}^n X_{i,j}^{t-1},
\end{align}
where 
\begin{equation*} 
\wh S_{k,\ell}=
\Big\{
\begin{array}{ll}
\{(i,j) : 1\leq i\neq j\leq p,\; \wh\nu(i)=k,\; \wh\nu(j)=\ell\}   \quad & {\rm if} \; k\neq \ell,\\
\{(i,j) : 1\leq i< j\leq p,\; \wh\nu(i)=\wh\nu(j)=k\}   \quad & {\rm if} \; k= \ell.
\end{array}
\end{equation*} 
See (\ref{b12}) and also  (\ref{c1}).

Theorem \ref{clusteringCons} implies that the memberships of the nodes can be
consistently recovered. Consequently, the consistency and the asymptotic normality
of the MLEs $\wh \theta_{k, \ell}$ and $\wh \eta_{k, \ell}$ can be established
in the same manner as for Propositions
\ref{uniformCon} and \ref{CLT1}. We state the results below.

Let ${\cal K}_1=\{ (i_1, j_1),\ldots, (i_{m_1}, j_{m_1})\}$ and ${ \cal K}_2=\{
(k_1, \ell_1),\ldots, (k_{m_2}, \ell_{m_2})\}$ be two arbitrary subsets
of $\{{(k,\ell): \linebreak 1\leq k\leq  \ell\leq q}\}$ with 
$m_1, m_2 \ge 1$ fixed. Let $$\bPsi_{{\cal K}_1,{\cal K}_2}=( \theta_{i_1,
	j_1},\ldots, \theta_{i_{m_1},j_{m_1}}, \eta_{k_1, \ell_1},\ldots, 
\eta_{k_{m_2}, \ell_{m_2}})',$$ and let  
$\wh{\bPsi}_{{\cal K}_1,{\cal K}_2}$ denote its MLE.
Put
$\N_{{\cal K}_1,{\cal K}_2}=\diag(n_{i_1,j_1},\ldots, n_{i_{m_1},j_{m_1}},
n_{k_1,\ell_1},\ldots, n_{k_{m_2},\ell_{m_2}}  )$ where $n_{k,\ell}$ is the cardinality of $S_{k,\ell}$ defined as in \eqref{Skl}.

\begin{theorem}\label{uniformCon2}
	Let conditions (2.5), C1 and C2 hold,
	and  
	$
	\frac{\sqrt{ \ {s_{\max} }} }{\lambda_{q}^2}\left(\sqrt{\frac{\log (pn)}{np} } +\frac{1}{n}+\frac{1}{p}\right)\rightarrow 0. 
	$
	Then it holds that
	\[
	\max_{1\leq k,\ell\leq q}|\wh{\theta}_{k,\ell}-\theta_{k,\ell}| =O_p\left(\sqrt{\frac{\log q}{ns_{\min}^2}}\right) ~~ {\rm and}~~  \max_{1\leq k,\ell\leq q}|\wh{\eta}_{k,\ell}-\eta_{k,\ell}| =O_p\left(\sqrt{\frac{\log q}{ns_{\min}^2}}\right) ,
	\] 
	where $s_{\min}=\min\{s_1,\ldots, s_q\}$.
\end{theorem}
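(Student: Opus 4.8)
The plan is to reduce the statement to an \emph{oracle} version of the estimators, in which the \emph{true} community partition is used in place of the estimated one, and then to control the oracle estimators by a martingale Bernstein argument combined with the $\alpha$-mixing bound of Lemma~\ref{amixing} — the same strategy as in the proofs of Theorems~\ref{uniformCon} and~\ref{CLT1}, now with $n_{k,\ell}$ independent edge sequences pooled inside each block. The first step is to show that the feasible estimators coincide with their oracle counterparts on an event of probability tending to one. Under conditions (2.5), C1, C2 and $\sqrt{s_{\max}}\,\lambda_q^{-2}C(\sqrt{\log(pn)/(np)}+1/n+1/p)\to0$, the extra hypothesis of Theorem~\ref{DKthm} holds (as $s_{\max}\ge1$), so \eqref{Con_EVec} is valid on an event $\calE$ with $P(\calE)\ge 1-16p[(pn)^{-(1+B)}+\exp\{-B\sqrt p\}]\to1$, while the rate assumption implies \eqref{KM_SN} for all large $n,p$; hence Theorem~\ref{clusteringCons} gives $\wh\c_i=\wh\c_j$ if and only if $\nu(i)=\nu(j)$ on $\calE$. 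Relabelling the estimated communities (which only permutes the indices $k,\ell$), we may take $\wh\nu\equiv\nu$ on $\calE$, so $\wh S_{k,\ell}=S_{k,\ell}$ and $\wh\theta_{k,\ell}=\bar\theta_{k,\ell}$, $\wh\eta_{k,\ell}=\bar\eta_{k,\ell}$, where $\bar\theta_{k,\ell},\bar\eta_{k,\ell}$ are \eqref{c6}--\eqref{c7} with $\wh S_{k,\ell}$ replaced by the true $S_{k,\ell}$ of \eqref{Skl}. Since $P(\calE)\to1$, it suffices to bound $\max_{k,\ell}|\bar\theta_{k,\ell}-\theta_{k,\ell}|$ and $\max_{k,\ell}|\bar\eta_{k,\ell}-\eta_{k,\ell}|$.

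The second step treats the oracle estimator. Fix $1\le k\le\ell\le q$, put $N_{k,\ell}=\sum_{(i,j)\in S_{k,\ell}}\sum_{t=1}^n(1-X^{t-1}_{i,j})$, and write
\[
\bar\theta_{k,\ell}-\theta_{k,\ell}=\frac1{N_{k,\ell}}\sum_{(i,j)\in S_{k,\ell}}\sum_{t=1}^n\xi^t_{i,j},\qquad \xi^t_{i,j}:=(1-X^{t-1}_{i,j})\big(X^t_{i,j}-\theta_{k,\ell}\big).
\]
Since $\alpha_{i,j}=\theta_{\nu(i),\nu(j)}=\theta_{k,\ell}$ for $(i,j)\in S_{k,\ell}$, (\ref{b3}) gives $E(\xi^t_{i,j}\mid\calF_{t-1})=(1-X^{t-1}_{i,j})(\alpha_{i,j}-\theta_{k,\ell})=0$ with $\calF_{t-1}=\sigma\{X^s_{i,j}:s\le t-1,\,(i,j)\in S_{k,\ell}\}$; using in addition that the edges are independent across $(i,j)$, the summands, listed time-by-time, form a bounded martingale-difference sequence with $|\xi^t_{i,j}|\le1$ and predictable quadratic variation $\theta_{k,\ell}(1-\theta_{k,\ell})N_{k,\ell}\le n\,n_{k,\ell}/4$. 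Freedman's martingale inequality then gives $P\big(|\sum\sum\xi^t_{i,j}|>x\big)\le2\exp\big(-\tfrac{x^2/2}{n n_{k,\ell}/4+x/3}\big)$, and with $x\asymp\sqrt{n\,n_{k,\ell}\log q}$ and a union bound over the at most $q^2$ pairs we obtain $\max_{k,\ell}|\sum\sum\xi^t_{i,j}|=O_p\big(\max_{k,\ell}\sqrt{n\,n_{k,\ell}\log q}\big)$. For the denominator, $E(1-X^{t-1}_{i,j})=\eta_{k,\ell}/(\theta_{k,\ell}+\eta_{k,\ell})\ge l$ by C1 and stationarity, so $EN_{k,\ell}\ge l\,n\,n_{k,\ell}$; concentrating $N_{k,\ell}$ about its mean via Lemma~\ref{amixing} (each edge sequence is $\alpha$-mixing, with coefficients bounded by $(1-l)^\tau$ under C1) together with the independence across $(i,j)$ — exactly the mixing-Bernstein device used for Theorems~\ref{uniformCon}--\ref{CLT1} — yields $N_{k,\ell}\ge\tfrac12 l\,n\,n_{k,\ell}$ uniformly in $(k,\ell)$ with probability $\to1$. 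On the intersection of these events,
\[
\max_{1\le k\le\ell\le q}|\bar\theta_{k,\ell}-\theta_{k,\ell}|=O_p\Big(\max_{k,\ell}\frac{\sqrt{n\,n_{k,\ell}\log q}}{n\,n_{k,\ell}}\Big)=O_p\Big(\sqrt{\frac{\log q}{n\,s_{\min}^2}}\Big),
\]
using $n_{k,\ell}=s_ks_\ell\ge s_{\min}^2$ for $k\ne\ell$ and $n_{k,\ell}=s_k(s_k-1)/2\gtrsim s_{\min}^2$ for $k=\ell$.

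The third step is the $\eta$-part, which is identical after swapping the two states: replace $\xi^t_{i,j}$ by $X^{t-1}_{i,j}\big((1-X^t_{i,j})-\eta_{k,\ell}\big)$, still a bounded martingale-difference sequence because $E(1-X^t_{i,j}\mid X^{t-1}_{i,j}=1)=\beta_{i,j}=\eta_{k,\ell}$ for $(i,j)\in S_{k,\ell}$, and note that the corresponding denominator $\sum\sum X^{t-1}_{i,j}$ has mean at least $l\,n\,n_{k,\ell}$ since $\pi_{i,j}=\theta_{k,\ell}/(\theta_{k,\ell}+\eta_{k,\ell})\ge\theta_{k,\ell}\ge l$. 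Together with the first step this gives the theorem. The hard part will be the bookkeeping that aligns the three high-probability events — exact partition recovery from Theorem~\ref{clusteringCons}, the Freedman bounds for the numerators, and the lower bounds for the denominators — into a single $O_p$ statement; the serial dependence is essentially free for the numerators (they are martingales, so no mixing is needed), and for the denominators one only needs the crude lower bound $N_{k,\ell}\gtrsim n\,n_{k,\ell}$, which follows from Lemma~\ref{amixing} by the same argument as in Theorems~\ref{uniformCon}--\ref{CLT1}.
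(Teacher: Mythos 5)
Your proposal is correct, and its overall architecture is exactly the one the paper intends: the paper gives no standalone proof of this theorem but simply notes (before the statement, and again in the proof of Theorem~\ref{CLT2}) that exact membership recovery from Theorem~\ref{clusteringCons} puts us on an event of probability tending to one where $\wh S_{k,\ell}=S_{k,\ell}$, after which the pooled MLEs are handled ``in the same manner'' as Theorems~\ref{uniformCon} and~\ref{CLT1}. Your first step (checking that the rate hypothesis implies both the hypothesis of Theorem~\ref{DKthm} and \eqref{KM_SN}, then relabelling) matches this reduction. Where you genuinely diverge is the concentration device for the numerator: the paper's template concentrates $\sum_t Y^t_{i,j}$ and $\sum_t(1-X^{t-1}_{i,j})$ separately around their means via the $\alpha$-mixing property (Lemma~\ref{amixing}) and the Bernstein inequality of Merlev\`ede et al.\ (Lemma~\ref{concentration}), then sandwiches the ratio; you instead observe that the score $\sum_{(i,j)}\sum_t(1-X^{t-1}_{i,j})(X^t_{i,j}-\theta_{k,\ell})$ is exactly a martingale under the lexicographic filtration, so Freedman's inequality applies with no mixing machinery at all, and only the denominator needs a (crude, one-sided) concentration bound. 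This buys a cleaner and slightly sharper treatment of the numerator — the martingale structure is a genuine feature of the conditional MLE here — at the cost of having to be careful that the within-time increments are martingale differences under the refined ordering (which they are, by edge independence plus the Markov property). Two small bookkeeping points you should make explicit in a write-up: the increment bound in Freedman must be taken per edge ($|\xi^t_{i,j}|\le 1$ under the refined filtration), not per time slice, or else the sub-exponential correction term can dominate when $s_{\min}^2\gg n$; and the statement is only meaningful up to a label permutation of the recovered communities, which both you and the paper implicitly fix.
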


\begin{theorem}\label{CLT2}
	Let the condition of Theorem \ref{uniformCon2} hold. Then
	\[
	\sqrt{n}	\N_{{\cal K}_1,{\cal K}_2}^{\frac{1}{2}}(\wh{\bPsi}_{{\cal K}_1, {\cal K}_2}-\bPsi_{{\cal K}_1,{\cal K}_2}) \rightarrow N({\bf 0}, \wt\bSigma_{{\cal K}_1, {\cal K}_2}),
	\]
	where   $\wt\bSigma_{{\cal K}_1,{\cal K}_2}={\rm diag}(\wt\sigma_{11},
	\ldots,\wt\sigma_{m_1+m_2, m_1+m_2} )$ with 
	\begin{eqnarray*}
		&&	\wt\sigma_{rr} = \frac{\theta_{i_r,j_r}  (1-\theta_{i_r,j_r}) (\theta_{i_r,j_r}+\eta_{i_r,j_r}) }{ \eta_{i_r,j_r}},~~1\leq r \leq m_1, \\
		&&	\wt\sigma_{rr} = \frac{\eta_{k_r,\ell_r}  (1-\eta_{k_r,\ell_r}) (\theta_{k_r,\ell_r}+\eta_{k_r,\ell_r}) }{ \theta_{k_r,\ell_r}},~~m_1+1\leq r \leq m_1+m_2. \\
	\end{eqnarray*}
\end{theorem}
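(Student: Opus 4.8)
The plan is to reduce the theorem to a central limit theorem for the \emph{oracle} pooled maximum likelihood estimators that use the true membership, and then to prove the latter by a multivariate martingale central limit theorem for a double array indexed by the edge pairs within each block and by time. First I would invoke Theorem \ref{clusteringCons}: under the condition of Theorem \ref{uniformCon2}, inequality \eqref{KM_SN} holds for all large $n,p$, while \eqref{Con_EVec} holds with probability tending to one by Theorem \ref{DKthm}; hence, with probability tending to one, the estimated partition $\{\wh\nu(\cdot)=k\}_{k=1}^q$ coincides with the true one up to a relabelling of the $q$ communities. Matching labels, $\wh S_{k,\ell}=S_{k,\ell}$ for all $k\le\ell$ on this event, so $\wh\theta_{k,\ell}$ and $\wh\eta_{k,\ell}$ equal the oracle estimators $\wt\theta_{k,\ell},\wt\eta_{k,\ell}$ obtained by replacing $\wh S_{k,\ell}$ by $S_{k,\ell}$ in \eqref{c6}--\eqref{c7}. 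By Slutsky's theorem it then suffices to prove the stated limit for $\wt\theta_{k,\ell},\wt\eta_{k,\ell}$.

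Next I would linearise. Fix a block $S_{k,\ell}$; since $(\alpha_{i,j},\beta_{i,j})=(\theta_{k,\ell},\eta_{k,\ell})$ for every $(i,j)\in S_{k,\ell}$, set $\xi_{i,j}^t:=(X_{i,j}^t-\theta_{k,\ell})(1-X_{i,j}^{t-1})$ and $\wt\xi_{i,j}^t:=((1-X_{i,j}^t)-\eta_{k,\ell})X_{i,j}^{t-1}$, and write
\[
\sqrt{n\,n_{k,\ell}}\,(\wt\theta_{k,\ell}-\theta_{k,\ell})=\frac{(n\,n_{k,\ell})^{-1/2}\sum_{(i,j)\in S_{k,\ell}}\sum_{t=1}^n \xi_{i,j}^t}{(n\,n_{k,\ell})^{-1}\sum_{(i,j)\in S_{k,\ell}}\sum_{t=1}^n (1-X_{i,j}^{t-1})},
\]
and analogously for $\wt\eta_{k,\ell}$ with $\wt\xi_{i,j}^t$ in the numerator and $\sum_t X_{i,j}^{t-1}$ in the denominator. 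Letting $\mathcal F_{t}$ be the $\sigma$-field generated by all edges up to time $t$, a direct computation from \eqref{b3} gives $E(\xi_{i,j}^t\mid\mathcal F_{t-1})=E(\wt\xi_{i,j}^t\mid\mathcal F_{t-1})=0$. Under condition \eqref{b5} and Theorem \ref{thm1} the process is stationary, so by the law of large numbers for $\alpha$-mixing sequences (using Lemma \ref{amixing} for each $(i,j)$ and independence across $(i,j)$) the denominators converge in probability to $\eta_{k,\ell}/(\theta_{k,\ell}+\eta_{k,\ell})$ and $\theta_{k,\ell}/(\theta_{k,\ell}+\eta_{k,\ell})$ respectively; under C1 both are bounded away from zero with probability tending to one, so the ratios are well defined.

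For the numerators I would apply the Cram\'er--Wold device together with a martingale CLT. An arbitrary linear combination of the components of $\sqrt n\,\N_{{\cal K}_1,{\cal K}_2}^{1/2}(\wh\bPsi-\bPsi)$ equals, up to $o_p(1)$, $\sum_{t=1}^n \zeta_t$ where $\zeta_t:=\sum_{r=1}^{m_1} a_r\, n_{i_r,j_r}^{-1/2}\!\sum_{(i,j)\in S_{i_r,j_r}}\xi_{i,j}^t+\sum_{r=1}^{m_2} b_r\, n_{k_r,\ell_r}^{-1/2}\!\sum_{(i,j)\in S_{k_r,\ell_r}}\wt\xi_{i,j}^t$ is a martingale difference sequence with respect to $\{\mathcal F_t\}$. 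The conditional-variance condition $n^{-1}\sum_{t=1}^n E(\zeta_t^2\mid\mathcal F_{t-1})\pcon \sum_r a_r^2 v_r^{\theta}+\sum_r b_r^2 v_r^{\eta}$ follows from stationarity and the mixing LLN, with $v_r^{\theta}=\theta_{i_r,j_r}(1-\theta_{i_r,j_r})\eta_{i_r,j_r}/(\theta_{i_r,j_r}+\eta_{i_r,j_r})$ and similarly $v_r^{\eta}$; all cross terms drop out because $(1-X_{i,j}^{t-1})X_{i,j}^{t-1}\equiv0$ kills the within-block $\theta$--$\eta$ interaction and edge-independence across $(i,j)$ kills every between-block interaction. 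The conditional Lindeberg condition holds by a fourth-moment bound: each inner sum is a sum of independent bounded mean-zero terms, so $E(\zeta_t^4)=O(1)$ uniformly in $n$, whence $P(|\zeta_t|>\varepsilon\sqrt n)=O(n^{-2})$ and $n^{-1}\sum_t E\{\zeta_t^2 I(|\zeta_t|>\varepsilon\sqrt n)\}=O(n^{-1})\to0$. Dividing each numerator limit by the square of the corresponding denominator limit and combining via Slutsky yields the claimed joint normality; the diagonal entry $\wt\sigma_{rr}$ for $r\le m_1$ comes out as $v_r^{\theta}\big/\big(\eta_{i_r,j_r}/(\theta_{i_r,j_r}+\eta_{i_r,j_r})\big)^2=\theta_{i_r,j_r}(1-\theta_{i_r,j_r})(\theta_{i_r,j_r}+\eta_{i_r,j_r})/\eta_{i_r,j_r}$, and symmetrically for the $\eta$-block.

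The main obstacle is that the CLT must accommodate a double array in which both the block size $n_{k,\ell}\asymp s_k s_\ell$ (polynomial in $p$) and the horizon $n$ diverge; a per-edge martingale CLT would give asymptotic normality only as $n\to\infty$. Reorganising the double sum as a single martingale difference sequence $\{\zeta_t\}_{t\le n}$ in the time index, with each $\zeta_t$ a normalised sum of $n_{k,\ell}$ independent bounded summands, is what makes this work, and the fourth-moment/Lindeberg estimate above is precisely the bound that controls the joint effect of the two growing indices. A secondary, routine care point is the label-permutation ambiguity in $\wh\nu$: one must check that the exact-recovery event of Theorem \ref{clusteringCons} has probability tending to one under the assumed rate condition, which it does since that condition forces \eqref{KM_SN} and \eqref{Con_EVec} to hold eventually.
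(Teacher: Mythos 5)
Your proposal is correct, and while its overall architecture matches the paper's (first reduce to the oracle estimators on the exact-recovery event of Theorem \ref{clusteringCons}, which has probability tending to one, then prove a CLT for the pooled MLEs and patch the two together by a Slutsky/conditioning argument), the engine driving the limit theorem is genuinely different. The paper disposes of the second step by declaring it ``the same arguments as in the proof of Theorem \ref{CLT1}'': it works with the score equations, writes them as linear combinations of $\sum_t Y_{i,j}^t$ and $\sum_t X_{i,j}^t$, invokes classical CLTs for $\alpha$-mixing sequences (via Lemma \ref{amixing}) together with standard MLE asymptotics, and reads the limiting covariance off the Fisher information matrix. You instead linearise the estimator directly as a ratio, observe that the centred transition counts $\xi_{i,j}^t=(X_{i,j}^t-\theta_{k,\ell})(1-X_{i,j}^{t-1})$ and $\wt\xi_{i,j}^t=((1-X_{i,j}^t)-\eta_{k,\ell})X_{i,j}^{t-1}$ are martingale differences in $t$, and run a Cram\'er--Wold martingale CLT with a fourth-moment Lindeberg bound; the vanishing of $(1-X_{i,j}^{t-1})X_{i,j}^{t-1}$ and edge-independence give the diagonal covariance, and the variance algebra reproduces $\wt\sigma_{rr}$ exactly. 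What your route buys is an explicit treatment of the double-array nature of the pooled estimator --- both $n$ and the block size $n_{k,\ell}$ diverge --- which the paper's appeal to the per-edge argument of Theorem \ref{CLT1} leaves implicit (there, one would instead lean on independence across the $n_{k,\ell}$ edges to aggregate the per-edge mixing CLTs); what the paper's route buys is brevity and reuse of the already-established mixing machinery. Both yield the stated result, and your bookkeeping (the martingale-difference property, the conditional-variance limit via the mixing LLN for the denominators, and the $O(n^{-1})$ Lindeberg estimate) checks out.
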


Finally to prepare for the inference in Section \ref{sec33} below, we
introduce some notations.
First we denote $\wh \nu$ by $\wh \nu^{1,n}$, to reflect
the fact that the community clustering was carried out using the data
$\bX_1, \cdots, \bX_n$ (conditionally on $\bX_0$). See Section \ref{sec321} above.
Further we denote the maximum log likelihood by
\begin{equation} \label{c8}
\wh l (1, n; \; \wh \nu^{1,n}) =
l( \{ \wh\theta_{k, \ell}, \wh\eta_{k, \ell}\};\; \wh \nu^{1,n})
\end{equation}
to highlight the fact that both the node clustering and the estimation for
transition probabilities are based on the data $\bX_1, \cdots, \bX_n$.

\subsection{Inference with a change point} \label{sec33}

Now we assume that there is a change point $\tau_0$ at which both
the membership of nodes and the transition probabilities 
$\{\theta_{k, \ell}, \eta_{k, \ell}\}$ change. It is necessary to
assume $n_0 \le \tau_0 \le n - n_0$, where $n_0$ is an integer
and $n_0/n \equiv c_0 >0 $ is a small constant, as we need enough
information before and after the change in order to detect 
$\tau_0$.   
We assume that within the time period $[0,\tau_0]$, the network follows a
stationary model \eqref{c2} with parameters $\{(\theta_{1,k,\ell},
\eta_{1,k,\ell}): 1\leq k, l\leq q\}$ and  a membership map
$\nu^{1,\tau_0}(\cdot)$. %: \{1,\ldots, p\} \mapsto \{1,\ldots, q\}  $. 
Within the time period $[\tau_0+1, n]$ the network follows a stationary
model \eqref{c2}  with parameters $\{(\theta_{2,k,\ell},
\eta_{2,k,\ell}): 1\leq k, l\leq q\}$ and a membership map
$\nu^{\tau_0+1,n}(\cdot)$. %: \{1,\ldots, p\} \mapsto \{1,\ldots, q\} $.
Though we assume that the number of communities is unchanged 
after the change point, our results can be easily extended to the case that the number of communities also changes. 

We estimate the change point $\tau_0$ by the maximum likelihood method:
\begin{equation} \label{c9}
\wh \tau = \arg \max_{n_0 \le \tau \le n - n_0} \{\, \wh l (1, \tau; \; \wh \nu^{1,\tau})
+ \wh l (\tau+1, n; \; \wh \nu^{\tau+1,n})\},
\end{equation}
where $\wh l(\cdot)$ is given in (\ref{c8}).

To measure the difference between the two sets of transition probabilities
before and after the change, we put
\[
\Delta_F^2=
{1 \over p^{2}}\big( \| \bW_{1,1}-\bW_{2,1}\|_F^2+\|
\bW_{1,2}-\bW_{2,2}\|_F^2\big),
\]
where the four $p\times p$ matrices are defined as 
\[
\bW_{1,1}=(\theta_{1,\nu^{1,\tau_0}(i),\nu^{1,\tau_0}(j)}),\quad\;
\bW_{1,2}=(1-\eta_{1,\nu^{1,\tau_0}(i),\nu^{1,\tau_0}(j)}),
\]
\[
\bW_{2,1}=(\theta_{2,\nu^{\tau_0+1,n}(i),\nu^{\tau_0}(j)+1,n}),\quad\;
\bW_{2,2}=(1-\eta_{1,\nu^{\tau_0+1,n}(i),\nu^{\tau_0+1,n}(j)}).
\]
Note that $\Delta_F$ can be viewed as the signal strength for detecting
the change point $\tau_0$. Let $s_{\max}, \; s_{\min}$ denote, respectively,
the largest, and the smallest community size among all
the communities before and after the change.
Similar to \eqref{c0}, we denote the normalized Laplacian matrices corresponding to    $\bW_{i,j}$ as $\bL_{i,j}$ for $i,j=1,2$. Let $|\lambda_{i,1}|\geq |\lambda_{i,2}|\ge \ldots\ge |\lambda_{i,q}|$
be the absolute nonzero eigenvalues of $\bL_{i,1}+\bL_{i,2}$ for $i=1,2$,
and we denote  $\lambda_{\min}=\min\{ |\lambda_{1,q}|, |\lambda_{2,q}|\}$. 
Now some regularity conditions are in order.

\begin{itemize}
	\item[C3.]	For some constant $l>0$, $ \theta_{i,k,\ell},\eta_{i,k,\ell}> l$,
	and $ \theta_{i,k,\ell}+\eta_{i,k,\ell}\leq 1$ for all $i=1,2$ and $1\leq k\leq \ell\leq q$.

	\item[C4.]  $\log(np)/\sqrt{p}\rightarrow 0$, and  $\sqrt{ \ {s_{\max} }}
	\lambda_{\min}^{-2} \big(\sqrt{{\log (pn)}/{np} } +\frac{1}{n}+\frac{1}{p}+
	\frac{     {{  \log (np )}/{n  } } +\sqrt{{  \log (np )}/{(np^2) } }   
	}{\Delta_F^2 }
	\big)\rightarrow 0$.    
	
	\item[C5.]   $\frac{\Delta_F^2 }{     
		{{  \log (np )}/{n  } } +\sqrt{{  \log (np )}/{(np^2) } }   
	}\rightarrow \infty$.

\end{itemize}	
Condition C3 is similar to C1.  
The condition ${\log(np)}/{\sqrt{p}}\rightarrow 0$ in C4 
controls the  misclassification rate of the k-means algorithm. Recall that
there is a bias term $O(p^{-1})$ in spectral clustering caused by  the
removal of the diagonal of the Laplacian matrix (see Remark 2 above). 
Intuitively,  as $p$ increases, % the $O(p^{-1})$ bias term 
the effect of this bias term on the misclassification rate of the
k-means algorithm becomes negligible. On the other hand, note that the
length of the time interval for searching for the change point in \eqref{c9}
is of order $O(n)$; the $\log(n)$ term here in some sense reflects the
effect of the difficulty in detecting the true change point when the
searching interval is extended as $n$ increases. The second condition in
C4 is similar to \eqref{KM_SN}, which ensures that the true communities
can be recovered. 
Condition C5 requires that the average signal strength $\Delta_F^2=
p^{-2}\big[ \| \bW_{1,1}-\bW_{2,1}\|_F^2+\|
\bW_{1,2}-\bW_{2,2}\|_F^2\big] $ is of higher order than
$	{\frac{  \log (np )}{n  } } +\sqrt{\frac{  \log (np )}{np^2 } } 
$ for change point detection. % when $\nu^{1:\tau_0}=\nu^{\tau_0+1,n}$.

\cite{ybm15} deals with
the MLE for a change-point in a network Markov chain 
but without a latent community structure. Hence it does not have the complication
to estimate the community memberships in addition.
Allowing the membership change in our setting leads to an extra challenge: in the
process of searching for the location of the change-point, the estimation for the
latent communities before or after a specified location may not be consistent.
To overcome this obstacle, we introduce a truncation which breaks the searching
interval into two parts such that the error in the estimated change-point
can be bounded. 
\cite{bbm18} also allows the membership change. But it assumes that the
networks observed at different times are independent with each other.

\begin{theorem}\label{Thm_CP}
	Let conditions C2-C5 hold. Then the following assertions hold.
	\begin{itemize}
		\item[(i)]  When $\nu^{1,\tau_0}\equiv \nu^{\tau_0+1,n}$, 
		\begin{eqnarray*}
			\frac{|\tau_0-\wh\tau|}{n} =%O_p\left(   \frac{    {\frac{  \log (np )}{n  } } +\sqrt{\frac{  \log (np )}{np^2 } }  }{\Delta_F^2}    \times \min \Bigg\{1, \frac{\min\Big\{  \sqrt{n},  (np^2\log(np))^{\frac{1}{4}}\Big\}}{ s_{\min}  \Delta_F} \Bigg\}\right)  
			O_p\left(  \frac{    {\frac{  \log (np )}{n  } } +\sqrt{\frac{  \log (np )}{np^2 } }  }{\Delta_F^2}    \times  \min \Bigg\{1, \frac{\min\Big\{  1,  (n^{-1}p^2\log(np))^{\frac{1}{4}}\Big\} }{\Delta_F s_{\min}} \Bigg\}\right) .
		\end{eqnarray*}
		\item[(ii)] When $\nu^{1,\tau_0}\neq \nu^{\tau_0+1,n}$,  
		\begin{eqnarray*}
			\frac{|\tau_0-\wh\tau|}{n} =O_p\Bigg( \frac{    {\frac{  \log (np )}{n  } } +\sqrt{\frac{  \log (np )}{np^2 } }  }{\Delta_F^2}    \times
			\min \left\{1, \frac{\min\Big\{  1,  (n^{-1}p^2\log(np))^{\frac{1}{4}}\Big\}}{  \Delta_Fs_{\min} }   +\frac{1}{\Delta_F^2}
			\Bigg\} \right)  .
		\end{eqnarray*}
	\end{itemize}
\end{theorem}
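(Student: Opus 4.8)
\medskip
\noindent{\bf Proof plan.}
Write $Q(\tau)=\wh l(1,\tau;\wh\nu^{1,\tau})+\wh l(\tau+1,n;\wh\nu^{\tau+1,n})$ for the objective maximised in \eqref{c9}, so that $Q(\wh\tau)\ge Q(\tau_0)$ by construction. I would argue in three stages: (a) a coarse localisation $|\wh\tau-\tau_0|=o_p(n)$; (b) a truncation step showing that, on the resulting shrinking window around $\tau_0$, spectral clustering remains consistent on \emph{both} sub-intervals; and (c) a sharp drift-versus-fluctuation bound on that window, delivering the two stated rates. For stage (a), fix a small constant $\rho>0$: if $|\tau-\tau_0|\ge\rho n$ then one of $[1,\tau]$, $[\tau+1,n]$ contains a non-vanishing fraction of both regimes, so no single stationary model of the form \eqref{c2} can match the two-regime conditional law there. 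Since, under C3, the total per-time-step Kullback--Leibler separation of the two regimes is bounded below by a positive multiple of $p^2\Delta_F^2$ (genuinely positive by C5), one gets $E\{Q(\tau_0)-Q(\tau)\}\gtrsim n p^2\Delta_F^2$ uniformly over such $\tau$. A uniform deviation bound $\sup_\tau|Q(\tau)-EQ(\tau)|=o_p(np^2\Delta_F^2)$ --- assembled from the $\alpha$-mixing property of Lemma \ref{amixing}, a Bernstein/maximal inequality for mixing sequences, and the uniform control of the plug-in quantities from Theorems \ref{DKthm}, \ref{clusteringCons} and \ref{uniformCon2} --- then forces $\wh\tau\notin\{\tau:|\tau-\tau_0|\ge\rho n\}$ with probability tending to one, for every fixed $\rho>0$.

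Stage (b) is the ``truncation technique''. Having restricted to $|\tau-\tau_0|=o(n)$, each of $[1,\tau]$ and $[\tau+1,n]$ still has length $\asymp n$ and is contaminated by at most $|\tau-\tau_0|=o(n)$ time points from the other regime; averaging over $t$, the inputs $\wh\alpha_{i,j},\wh\beta_{i,j}$ to the spectral algorithm on such an interval differ from their clean single-regime counterparts by $O_p\big(|\tau-\tau_0|/n+\sqrt{\log(pn)/n}\big)$. Feeding this perturbation through the Davis--Kahan bound of Theorem \ref{DKthm} and the $k$-means argument of Theorem \ref{clusteringCons} --- whose margin condition holds precisely because of C4 --- shows that $\wh\nu^{1,\tau}$ recovers $\nu^{1,\tau_0}$ and $\wh\nu^{\tau+1,n}$ recovers $\nu^{\tau_0+1,n}$, uniformly over the window, with probability tending to one. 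This isolates the only portion of the search range on which the cluster labels feeding $\wh l(\cdot)$ can be trusted.

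With clustering consistent on both halves, $\wh l(\cdot)$ coincides with the profile log-likelihood based on the true memberships up to $o_p$ remainders, and one may decompose, for $\tau$ in the window,
\[
Q(\tau)-Q(\tau_0)=D(\tau)+N(\tau)+r(\tau),
\]
where $D(\tau)=EQ(\tau)-EQ(\tau_0)\le -c\,\min\{|\tau-\tau_0|,m^\ast\}\,p^2\Delta_F^2$ is the negative drift from the per-point KL deficit on the misassigned block ($m^\ast$ a saturation level; the lower bound uses C3 to control the deficit by a constant multiple of $\Delta_F^2$ per edge per time step); $N(\tau)$ is a centred sum of $\asymp|\tau-\tau_0|p^2$ bounded $\alpha$-mixing increments of per-term variance $O(\Delta_F^2)$, for which a maximal inequality gives $\sup_\tau|N(\tau)|\lesssim\sqrt{|\tau-\tau_0|\,p^2\Delta_F^2\log(np)}$ up to lower-order logarithmic terms; and $r(\tau)$ collects the cost of substituting the MLEs $\wh\theta_{k,\ell},\wh\eta_{k,\ell}$ for the truth --- controlled through Theorem \ref{uniformCon2} and responsible for the $\log(np)/n$ and $\sqrt{\log(np)/(np^2)}$ pieces --- together with, when $\nu^{1,\tau_0}\neq\nu^{\tau_0+1,n}$, the residual contribution of the boundary nodes whose community label flips, which produces the extra additive $1/\Delta_F^2$ in case (ii) (the $(n^{-1}p^2\log(np))^{1/4}$ factor traces to the Frobenius bound \eqref{Con_EVal} fed through a second-moment computation). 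Setting $\tau=\wh\tau$, $m=|\wh\tau-\tau_0|$ in $0\le Q(\wh\tau)-Q(\tau_0)$ and solving the inequality $c\,m\,p^2\Delta_F^2\le C\sqrt{m\,p^2\Delta_F^2\log(np)}+\sup_\tau r(\tau)$ gives $m\lesssim\{\log(np)+\sup_\tau r(\tau)\}/(p^2\Delta_F^2)$; dividing by $n$, inserting the bound on $\sup_\tau r(\tau)$, and invoking C2--C5 to discard lower-order pieces (and to confirm $\wh\tau$ indeed lands in the window) yields assertion (i), while assertion (ii) is the same computation after adjoining the flipped-membership correction. The inner $\min\{1,\cdot\}$ factors record which of the competing bounds on $\sup_\tau r(\tau)$ is binding --- in particular whether the per-block signal $\Delta_F s_{\min}$ is large enough that the clustering remainder, rather than the estimation remainder, dominates.

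The step I expect to be the main obstacle is stage (b) together with the uniform bounds underpinning stages (a) and (c): away from $\tau_0$ the estimators $\wh\nu^{1,\tau},\wh\nu^{\tau+1,n}$ are genuinely inconsistent, so one must (i) quantify precisely, and uniformly in $\tau$, how a contamination of length $o(n)$ distorts the Laplacian eigenstructure, in order to delimit the safe window, and (ii) control, simultaneously over all split points, the fluctuation of a profile log-likelihood whose plug-in ingredients themselves depend on $\tau$. Because the serial dependence rules out the classical mixing results for ARMA models, all these uniform bounds must route through Lemma \ref{amixing}; pushing the $\alpha$-mixing rate through the maximal inequalities while keeping $\sup_\tau r(\tau)$ small enough that C5 is the operative condition is the delicate part.
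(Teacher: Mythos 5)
Your three-stage architecture is essentially the paper's: truncate the search interval, exclude the far region by a drift-versus-noise comparison that must not rely on clustering consistency, then refine on the near window via a Markov-type inequality. Two steps, however, do not go through as you have written them. First, the handoff from stage (a) to stage (b): excluding $\{\tau:|\tau-\tau_0|\ge\rho n\}$ for each \emph{fixed} $\rho>0$ only yields $|\wh\tau-\tau_0|=o_p(n)$ with no rate, yet your stage (b) needs the contamination fraction $|\tau-\tau_0|/n$ to be small enough that the Davis--Kahan/$k$-means margin condition survives the perturbation; a constant fraction $\rho$ gives an $O(\rho)$ perturbation of the Laplacian, which does not vanish and is not covered by C4. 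The paper resolves this by truncating directly at $\tau_{n,p}=\tau_0-\kappa n\Delta_F^{-2}\bigl[\log(np)/n+\sqrt{\log(np)/(np^2)}\bigr]$ --- C4 is calibrated to exactly this window width --- and by establishing the far-region exclusion through an intermediate quantity $\mM_n^*(\tau)$ (population parameters evaluated at the \emph{estimated}, possibly inconsistent, memberships) whose deviation bound replaces $s_{\min}$ by $1$ and is therefore uniform over arbitrary membership estimates. Your stage (a) can in principle be rerun at the finer scale, but as stated the window on which stages (b)--(c) operate is not pinned down.

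Second, and more seriously, the fluctuation bound that produces the second factor in the theorem is not the one you propose. You model $N(\tau)$ as a centred sum of $\asymp|\tau-\tau_0|p^2$ mixing increments of per-term variance $O(\Delta_F^2)$, giving $\sup_\tau|N(\tau)|\lesssim\sqrt{|\tau-\tau_0|\,p^2\Delta_F^2\log(np)}$ and hence $m\lesssim\{\log(np)+\sup_\tau r(\tau)\}/(p^2\Delta_F^2)$, which does not reproduce the stated rate (the variance claim is unsupported, and the $\Delta_F$ cancels in the wrong place). In the paper the dominant contribution to $E\sup_\tau|\mM_n(\tau)-\mM(\tau)-\mM_n(\tau_0)+\mM(\tau_0)|$ comes from the drift of the plug-in MLEs between the two split points, $\sup_{k,\ell}|\wh\theta^{\tau}_{1,k,\ell}-\wh\theta^{\tau_0}_{1,k,\ell}|\lesssim\sqrt{(\tau_0-\tau)/\tau_0}\,\sqrt{\log(np)/(ns_{\min}^2)}$, yielding an envelope of order $p^2\sqrt{(\tau_0-\tau_{n,p})\log(np)}/s_{\min}$; dividing by the drift $C_3\epsilon p^2\Delta_F^2$ is what generates the factor $\min\bigl\{1,(n^{-1}p^2\log(np))^{1/4}\bigr\}/(\Delta_F s_{\min})$, and in case (ii) the additional bias $O((\tau_0-\tau)/n)$ in $\wh\theta^{\tau}_{2,k,\ell}$ caused by mixing two regimes with different memberships contributes the extra additive $1/\Delta_F^2$. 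You park all of this inside $r(\tau)$ and assert the answer, but without the explicit MLE-increment bound the refinement beyond the truncation-window width cannot be extracted from your decomposition.
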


Notice that for $\tau<\tau_0$, the observations in the time
interval $[\tau+1,n]$ are a mixture of the two different 
network processes if $\nu^{1,\tau_0}\neq \nu^{\tau_0+1,n}$.
In the worst case scenario then, all $q$ communities can be changed
after the change point $\tau_0$. This causes the extra estimation error term
$\frac{1}{\Delta_F^2}$ in Theorem \ref{Thm_CP}(ii).

\section{Simulations}

	\subsection{Parameter estimation}
We generate data according to model (\ref{b1}) in which
the parameters $\alpha_{ij}$ and $\beta_{ij}$ are drawn independently
from $U[0.1, 0.5]$, $1\le i , j \le p$.
The initial value $\bX_0$  was simulated according to (\ref{b6})
with $\pi_{ij}=0.5$.  
We calculate the estimates according to  (\ref{b12}).
For each setting (with different $p$ and $n$), we replicate the experiment
500 times.
Furthermore we also calculate the 95\% confidence intervals 
for $\alpha_{ij}$ and $\beta_{ij}$ based on the asymptotically normal
distributions specified in Proposition \ref{CLT1}, and report the relative
frequencies of the intervals covering the true values of the parameters.
The results are summarized in Table \ref{table:sim1}.

 	\begin{singlespace}
\begin{table}[htbp]
	%	\scriptsize
	\small
	\caption{The mean squared errors (MSE) of the estimated parameters in
		AR(1) network model (\ref{b1}) and
		the relative frequencies (coverage rates) of the event that the
		asymptotic 95\% confidence intervals cover the true values in a simulation
		with 500 replications.}
	\label{table:sim1}
	\begin{center}
		\begin{tabular}{|c|c|c|c|c|c|c|c|} %c|c|}
			\hline
			&& \multicolumn{2}{c|}{$\hat{\alpha}_{i,j}$ } & \multicolumn{2}{c|}{$\hat\beta_{i,j}$ }\\ % & \multicolumn{2}{c|}{$\hat\pi_{i,j}$ }\\
			\hline
			n&p& MSE & Coverage (\%) &  MSE  & Coverage (\%) \\ %&  MSE1  & MSE2\\
			\hline
			5&100&.130 & 39.2& .131& 39.3 \\ 
			5 & 200 & .131 & 39.3 & .131 & 39.4\\  
			20 & 100 & .038 & 86.1& .037& 86.0\\  
			20 & 200& .037 & 86.1 & .037 & 86.0\\ 
			50 & 100 & .012& 92.3& .012& 92.2\\ 
			50 & 200 & .011& 92..2& .012 & 92.2\\  
			100& 100 & .005& 93.7 & .005 & 93.8\\ 
			100& 200 & .005 & 93.8 & .005 & 93.9\\  
			200& 100& .002 & 94.5& .002 & 94.5\\  
			200& 200& .002 & 94.6& .002 & 94.5\\ 
			\hline			
		\end{tabular}
	\end{center}
\end{table}
 	\end{singlespace}

The MSE decreases as $n$ increases, showing 
steadily improvement in performance.
The coverage rates of the asymptotic
confidence intervals are very close to the nominal level when $n\ge 50$.
The results hardly change between $p=100$ and 200.

\subsection{Community Detection} \label{sec42}

We now consider model (\ref{c1}) with $q =2$ or 3 clusters, in which 
$\theta_{i,i}=\eta_{i,i}=0.4$ for $i=1,\cdots,q$, and $\theta_{i,j}$
and $\eta_{i,j}$, for $1\le i , j \le q$, are drawn independently from 
$U[0.05, 0.25]$. For each setting, we replicate the experiment 500 times.

We identify the $q$ latent communities using the newly
proposed spectral clustering algorithm
based on matrix $\wh \bL = \wh \bL_1 + \wh \bL_2 $ defined in (\ref{c4}).
For the comparison purpose, we also implement 
the standard spectral clustering method
for static networks (cf. \cite{rohe2011spectral}) but
using the average 
\begin{equation} \label{spOld}
\bar \bX = {1 \over n} \sum_{ t=1}^n \bX_t
\end{equation}
in place of
the single observed adjacency matrix. This idea has been frequently
used in spectral clustering for dynamic networks; see, for example,
\cite{wsw19, zcl19, bbm18}. 
We report the normalized mutual information (NMI) and the adjusted Rand
index (ARI): Both metrics take values between 0 and 1,
and both measure the closeness between the true 
communities and the estimated communities in the sense that
the larger the values of NMI and ARI are, the
closer the two sets of communities are; see \cite{vinh2010}.
The results are summarized in Table \ref{table:cluster}.
The newly proposed algorithm based on $\wh \bL$ always outperforms the
algorithm based on $\bar \bX$, even when $n$ is as small as 5.
The differences between the two methods are substantial in terms of the
scores of both NMI and ARI. For example when $q=2, p=100$ and $n=5$,
NMI and ARI are, respectively, 0.621 and 0.666 for the new method, and they
are merely 0.148 and 0.158 for the standard method based on $\bar \bX$.
This is due to the fact that the new method identifies the latent
communities using the information on both $\alpha_{i,j}$ and $\beta_{i,j}$
while the standard method uses the information on $\pi_{i,j} = {\alpha_{i,j}
	\over \alpha_{i,j}+\beta_{i,j}}$ only.

After the communities were identified, we 
estimate $\theta_{i,j}$ and $\eta_{i,j}$ by
(\ref{c6}) and (\ref{c7}), respectively. The mean
squared errors (MSE) are evaluated for all the parameters. The results
are summarized in Table \ref{table:mse}. 
For the comparison purpose, we also report the estimates 
based on the identified communities by the $\bar \bX$-based clustering.
The MSE values of the estimates based on the communities
identified by the new clustering method are always smaller
than those of based on $\bar \bX$.
Noticeably now the estimates with small $n$ such as $n=5$ are already reasonably
accurate, as  the information from all the nodes within the
same community is pulled together.

 \begin{singlespace}
\begin{table}[htbp]
	\centering
	\caption{Normalized mutual information (NMI) and adjusted Rand
		index (ARI) of the true communities and the estimated communities
		in the simulation with 500 replications.
		The communities are estimated by the spectral clustering algorithm (SCA) based on
		either matrix $\wh \bL$ in (\ref{c4}) or matrix $\bar \bX$ in (\ref{spOld}).}
	\label{table:cluster}
	\begin{tabular}{|c|c|c|c|c|c|c|}
		\hline
		&& & \multicolumn{2}{c|}{SCA based on $\wh \bL$}&
		\multicolumn{2}{c|}{SCA based on $\bar \bX$}\\
		\hline
		q&p&n& NMI & ARI &NMI & ARI\\
		\hline
		2& 100& 5& .621& .666& .148& .158\\
		&&20& .733 & .755 & .395 & .402\\
		&&50& .932 & .938 & .572 & .584\\
		&& 100 & .994 & .995 & .692 & .696\\
		2 & 200&5 & .808& .839& .375& .406\\
		&& 20& .850 & .857 & .569 & .589\\
		&& 50 & .949 & .953 & .712 & .722\\
		&&100& .994&.995&.790&.796\\
		\hline
		3 & 100 & 5& .542& .536& .078 & .057\\
		&&20& .686 & .678 & .351 & .325\\
		&&50 & .931 & .929 & .581 & .562\\
		&& 100 & .988 & .987 & .696 & .670 \\
		3 & 200 & 5& .729& .731& .195& .175\\
		&&20 & .779& .763 & .550 & .542\\
		&&50 & .954 & .952 & .726 & .711\\
		&&100&.994& .994& .822& .802\\
		\hline
	\end{tabular}
\end{table}
 \end{singlespace}

 	 \begin{singlespace}
\begin{table}[htbp]
	\centering
	\caption{The mean squared errors (MSE) of the estimated parameters in 
		AR(1) network models with $q$ communities. The communities are estimated
		by the spectral clustering algorithm (SCA) based on either matrix $\wh
		\bL$ in (\ref{c4}) or matrix $\bar \bX$ in (\ref{spOld}).}
	\label{table:mse}
	\begin{tabular}{|c|c|c|c|c|c|c|}
		\hline
		&& & \multicolumn{2}{c|}{SCA based on $\wh \bL$}&\multicolumn{2}{c|}{
			SCA based on $\bar \bX$}\\
		\hline
		q&p&n& $\hat{\theta}_{i,j}$& $\hat{\eta}_{i,j}$ & $\hat{\theta}_{i,j}$& $\hat{\eta}_{i,j}$ \\
		\hline
		2& 100& 5& .0149&.0170&.0298&.0312\\
		&&20& .0120 &.0141 & .0229  & .0233\\
		&&50& .0075 & .0083 & .0178 & .0177\\
		&&100& .0058& .0061& .0147& .0148\\
		2 & 200& 5& .0099&.0116&.0223&.0248\\
		&&20& .0093 & .0111 & .0219 & .0248 \\
		&& 50 & .0068 & .0073 & .0140 & .0145\\
		&&100&.0061& .0062& .0117& .0118\\
		\hline
		3 & 100 & 5& .0194& .0211& .0318& .0325\\
		&&20& .0156 & .0181 & .0251 & .0255\\
		&&50 & .0093 & .0104 & .0193 & .0193\\
		&&100& .0081& .0085& .0163& .0162\\
		3 & 200 & 5& .0143& .0162& .0287& .0301\\
		&&20 & .0134 & .0156 & .0200 & .0205\\
		&&50 & .0090 & .0093 & .0156 & .0153\\
		&&100& .0079& .0083& .0130& .0131\\
		\hline
	\end{tabular}
\end{table}
 	 \end{singlespace}

\section{Illustration with real data}
We illustrate the proposed methodology through three real data examples in this section.
More real data analysis can be found in Appendix B. %of an online supplementary.
\subsection{RFID sensors data}
Contacts between patients, patients and health care workers (HCW) and
among HCW represent one of the important routes of transmission of
hospital-acquired infections. \cite{van2013} collected records of
contacts among patients and various types of HCW in the
geriatric unit of a hospital in Lyon, France, between 1pm on
Monday 6 December and 2pm on Friday 10 December 2010. Each of the $p=75$
individuals in this study consented to wear Radio-Frequency IDentification
(RFID) sensors on small identification badges during this period, which
made it possible to record when any two of them were in face-to-face
contact with each other (i.e. within 1-1.5 meters) in every
20-second interval during the period. This data set is now available in R
packages {\tt igraphdata} and {\tt sand}.

Following \cite{van2013}, we combine together the recorded information in each
24 hours  to form 5 daily
networks ($n=5$), i.e. an edge between two individuals is equal to 1
if they made at least one contact during the 24 hours, and 0 otherwise.
Those 5 networks  are plotted in
Figure \ref{f1}.
We fit the data with stationary AR(1) model (\ref{b1}) and (\ref{b5}).
Some summary statistics of the estimated parameters, according to the 4
different roles of the individuals,  are presented
in Table \ref{rfidcoef}, together with the direct relatively frequency estimates
$\wt \pi_{i,j} = \bar X_{i,j}= \sum_{t=1}^5 X_{i,j}^t /5$.
We apply the permutation test (\ref{b13}) (with 500 permutations) to the residuals
resulted from the fitted AR(1) model. The $P$-value is $0.45$, indicating no significant
evidence against the stationarity assumption.

Since the original data were recorded for each 20 seconds, they can also be
combined into half-day series with $n=10$. Figure \ref{f10} presents
the 10 half-day networks.
We repeat the above exercise for this new sequence.
Now the $P$-value of the permutation test is $0.008$, indicating the stationary
AR(1) model should be rejected for this sequence of 10 networks. This is intuitively
understandable, as people behave differently at the different times during a day (such as
daytime or night). Those within-day nonstationary behaviour shows up
in the data accumulation over every 12 hours, and it disappears
in the accumulation over 24 hour periods.
Also overall the adjacent two networks in Figure \ref{f10} look more
different from each other than the adjacent pairs in Figure \ref{f1}.

There is no evidence  of the existence of any communities among
the 75 individuals in this data set.  Our analysis confirms this too. For
example the results of the spectral clustering algorithm based on,
respectively, $\wh \bL$ and $\bar \bX$ do not corroborate with each other
at all as the NMI is smaller than 0.1.

 \begin{singlespace}
\begin{table}[htbp]
	\centering
	\caption{Mean estimated coefficients (standard errors) for the four types of individuals in RFID data. Status codes: administrative staff (ADM), medical doctor (MED), paramedical staff, such as nurses or nurses' aides (NUR), and patients (PAT).}
	\label{rfidcoef}
	\begin{tabular}{|c|c|c|c|c|}
		\hline
		& \multicolumn{4}{c|}{$\hat{\alpha}_{ij}$}\\
		\hline
		Status & ADM& NUR& MED & PAT\\
		\hline
		ADM & .1249 (.2212) & .1739 (.2521) & .1666 (.2641) & .1113 (.2021)\\
		NUR & & .2347 (.2927)& .2398 (.3022) & .1922 (.2513)\\
		MED & & & .3594 (.3883)& .1264 (.2175)\\
		PAT &  &  &  & .0089 (.0552)\\
		\hline
		& \multicolumn{4}{c|}{$\hat{\beta}_{ij}$}\\
		\hline
		Status & ADM& NUR& MED & PAT\\
		\hline
		ADM & .1666 (.3660)& .2326 (.3883) & .2925 (.4235) & .2061 (.3798)\\
		NUR & & .3714 (.4470) & .3001 (.4167) & .3656 (.4498)\\
		MED & & & .4187 (.3973)& .2311 (.4066)\\
		PAT &  &  &  & .0198 (.1331)\\
		\hline
		& \multicolumn{4}{c|}{$\hat{\pi}_{ij}=\hat{\alpha}_{ij}/(\hat{\alpha}_{ij}+\hat{\beta}_{ij})$}\\
		\hline
		Status & ADM& NUR& MED & PAT\\
		\hline
		ADM & .2265 (.3900)& .2478 (.3672) & .1893 (.3119) & .1239 (.2490)\\
		NUR & & .2488 (.3244)& .2729 (.3491)& .2088 (.3016)\\
		MED & & & .3310 (.3674)& .1398 (.2660)\\
		PAT &  &  &  & .0124 (.0928)\\
		\hline
		& \multicolumn{4}{c|}{$\wt \pi_{i,j} =\bar{X}_{ij}$}\\
		\hline
		Status & ADM& NUR& MED & PAT\\
		\hline
		ADM & .1250 (.3312)& .1583 (.3652) & .1704 (.3764) & .0887 (.2845)\\
		NUR & & .1854 (.3887)& .1730 (.3784) & .1542 (.3612)\\
		MED & & & .3901 (.4881)& .0927 (.2902)\\
		PAT &  &  &  & .0090 (.0946)\\
		\hline
	\end{tabular}
\end{table}
 \end{singlespace}

\begin{sidewaysfigure}[htbp]
	\centering
	\includegraphics[scale=0.8]{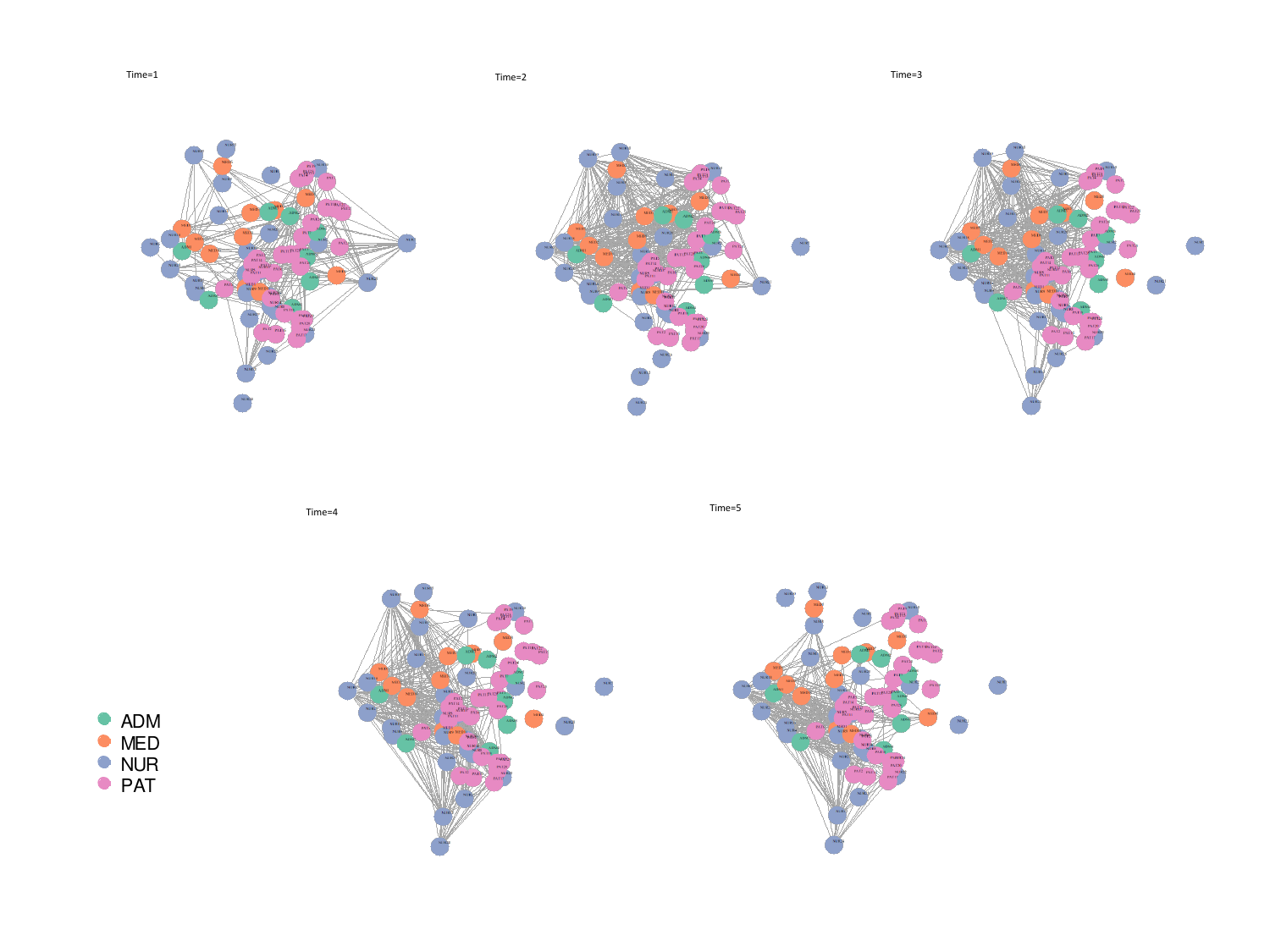}
	\caption{The RFID sensors data: the 5 networks obtained by	combining together the information		within each of the five 24-hou{\tiny {\tiny }}r periods. The four different identities of the individuals are marked in four different colours. }
	\label{f1}
\end{sidewaysfigure}

%\newpage

\begin{sidewaysfigure}[htbp]
	\centering
	\includegraphics[scale=0.8]{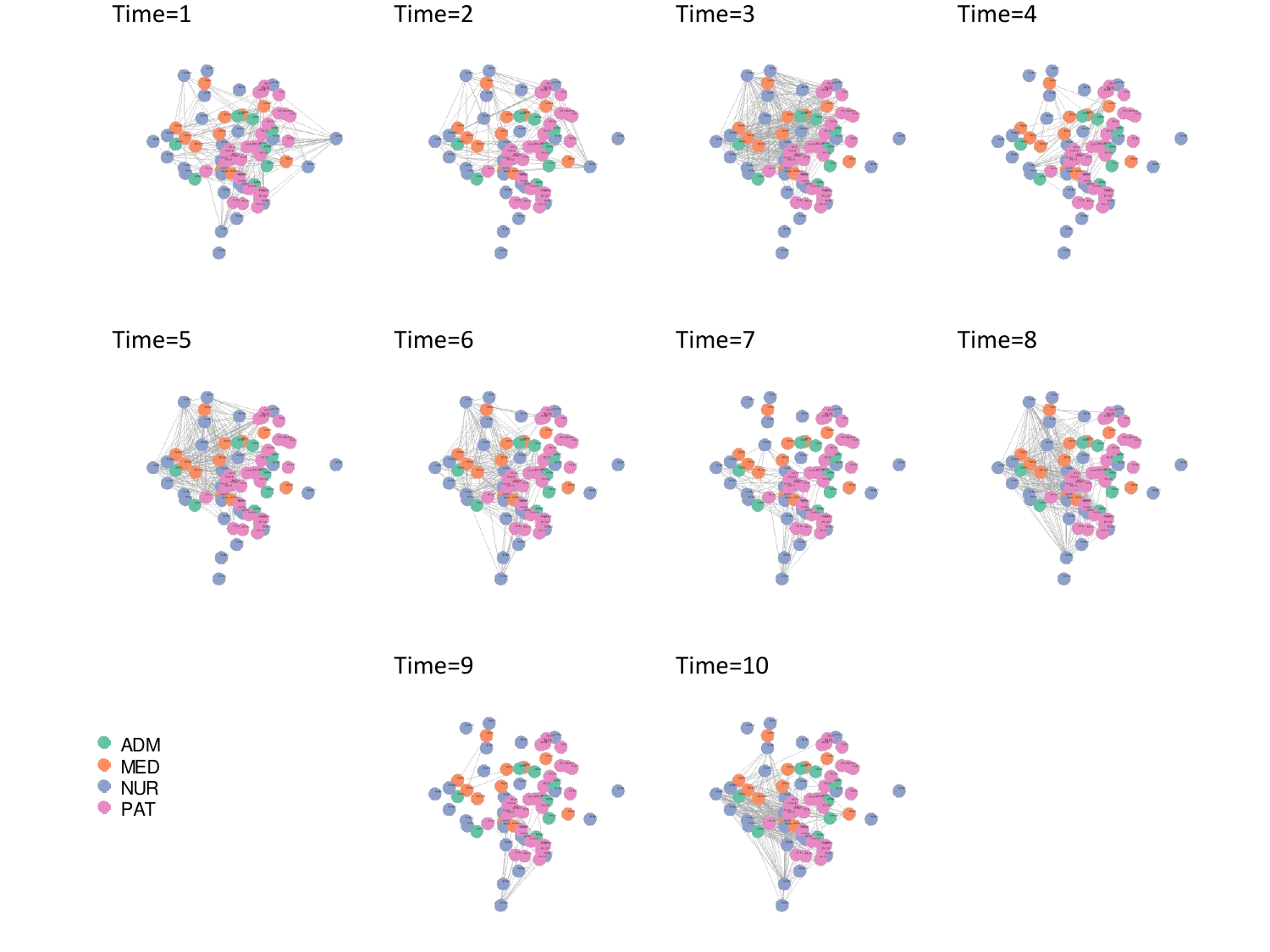}
	\caption{The RFID sensors data: the 10 networks obtained by 
		combining together the information
		within each of the ten 12-hour periods. The four different identities of the individuals 
		are marked in four different colours. }
	\label{f10}
\end{sidewaysfigure}

\subsection{French high school contact data} \label{sec52}

Now we consider a contact network data collected in a high school  in
Marseilles, France \citep{mastrand2015}. The data are the recorded 
face-to-face contacts among the students from 9 classes during $n=5$
days in December 2013, measured by the SocioPatterns infrastructure.
Those are students in the so-called {\sl classes preparatoires} -- a part of the
French post-secondary education system. We label the 3 classes
majored in mathematics and physics as MP1, MP2 and MP3, the 3 classes
majored in biology as BIO1, BIO2 and BIO3, the 2 classes majored in
physics and chemistry as PC1 and PC2, and the class majored in
engineering as EGI. 
The data are available at
\url{www.sociopatterns.org/datasets/high-school-contact-and-friendship-networks/}.
We \linebreak have removed the individuals with missing values, and include the remaining 
$p=327$ students in our clustering analysis based on the AR(1) stochastic
block network model (see Definition~\ref{def3}).

We start the analysis with $q=2$. The detected 2 clusters by the spectral
clustering algorithm (SCA) based on either $\wh \bL$ in (\ref{c4}) or $\bar \bX$ are
reported in Table~\ref{table:classtype}. The two methods lead to almost
identical results: 3 classes majored in biology are in one cluster and the other
6 classes are in the other cluster. The number of `misplaced' students is 2 and 1,
respectively, by the SCA based on $\wh \bL$ and $\bar \bX$.
Figure \ref{french} shows that the identified two clusters are clearly separated
from each other across all the 5 days.
The permutation test (\ref{b13}) on the residuals indicates that the
stationary AR(1) stochastic block network model seems to be appropriate for this
data set, as the $P$-value is 0.676.
We repeat the analysis for $q=3$,
leading to equally plausible results: 3 biology classes are
in one cluster, 3 mathematics and physics classes are in another cluster,
and the 3 remaining classes form the 3rd cluster. See also Figure \ref{frenchq3}
for the graphical illustration with the 3 clusters.

\begin{sidewaysfigure}[htbp]
	\centering
	\includegraphics[scale=0.8]{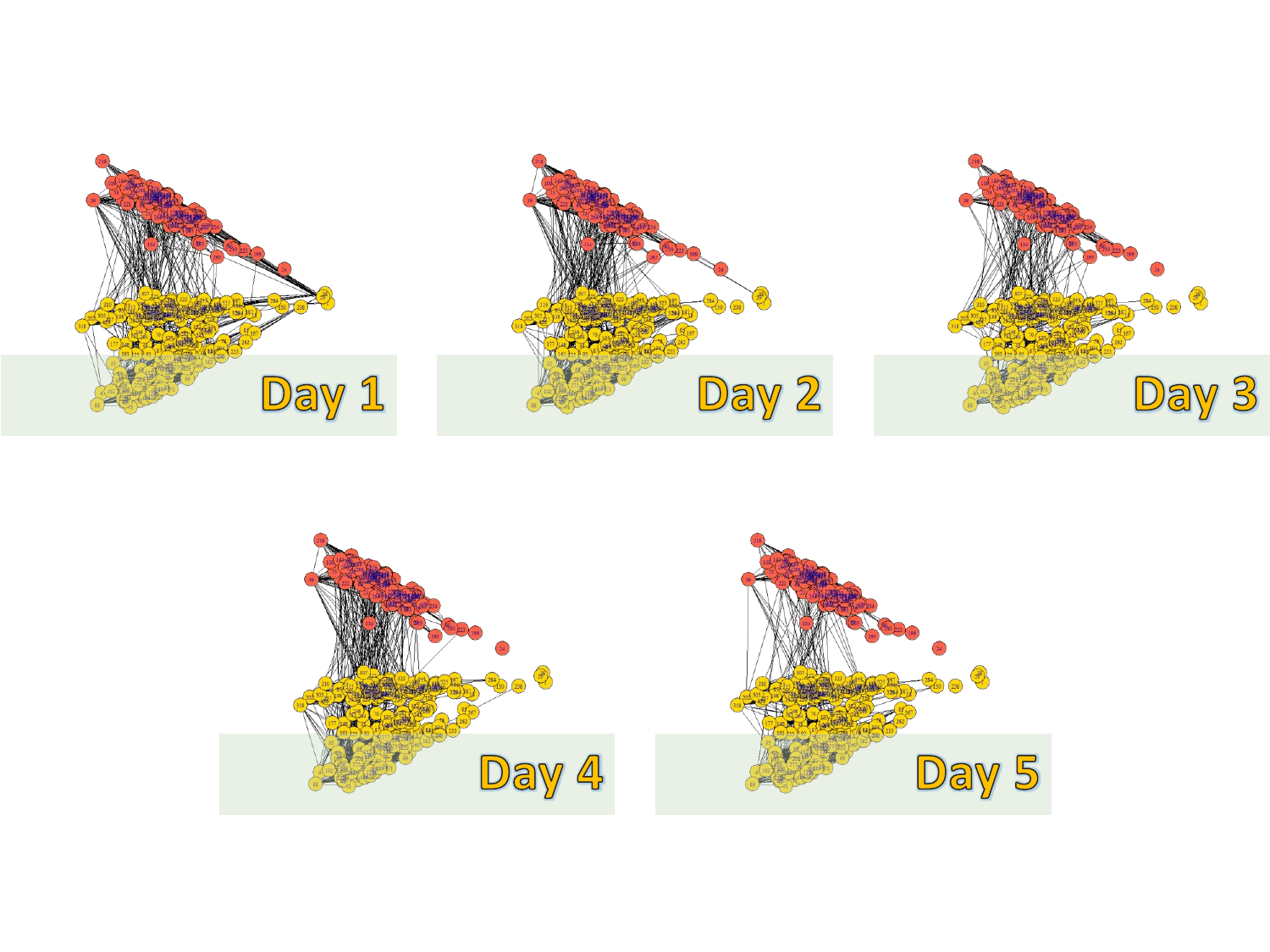}
	\caption{French high school contact networks over 5 days: 
		the nodes marked in two colours represent the $q=2$ clusters determined
		by SCA based on $\wh \bL$ in (\ref{c5}).}
	\label{french}
\end{sidewaysfigure}

 \begin{singlespace}
\begin{table}[htbp]
	\centering
	\caption{French high school contact network data: the detected
		clusters by spectral clustering algorithm (SCA) based on either $\wh \bL$ in (\ref{c5})
		or $\bar \bX$. The number of clusters is set at $q=2$.}
	\label{table:classtype}
	\begin{tabular}{|c|c|c|c|c|}
		\hline
		& \multicolumn{2}{c|}{SCA based on $\wh \bL$}&\multicolumn{2}{c|}{SCA based on $\bar X$}\\
		\hline
		Class & Cluster 1 & Cluster 2 & Cluster 1 & Cluster 2\\
		\hline
		BIO1 & 0&  37&1& 36\\
		BIO2  &1& 32&0& 33\\
		BIO3 & 1 & 39 &0& 40\\
		MP1  &  33 &  0 & 33& 0\\
		MP2 & 29 & 0 &29& 0\\
		MP3  & 38 & 0 &38& 0\\
		PC1 &  44&  0 &44&0\\
		PC2  & 39&   0&39&0\\
		EGI  &34&  0&	34&0\\
		\hline
	\end{tabular}
\end{table}
 \end{singlespace}

To choose the number of clusters $q$ objectively, we define the Bayesian information
criteria (BIC) as follows:
\begin{align*}
{\rm
	BIC}(q)=- 2\max \, \log({\rm likelihood}) +\log\{n(p/q)^2\}q(q+1).
\end{align*}
For each fixed $q$, we effectively build $q(q+1)/2$ models independently and each
model has 2 parameters $\theta_{k,\ell}$ and $\eta_{k,\ell}$, $1\le k \le \ell \le q$.
The number of the available observations for each model is approximately $n(p/q)^2$, 
assuming that the numbers of nodes in  all the $q$ clusters are about the same,
which is then  $p/q$.
Thus the penalty term in the BIC above is $\sum_{1\le k \le \ell \le q} 2
\log\{ n(p/q)^2\} = \log\{n(p/q)^2\}q(q+1)$.

Table \ref{table:aic} lists the values of BIC($q$) for different $q$.
The minimum is obtained at $q=9$,
exactly the number of original classes in the school.
Performing the SCA based on $\wh \bL$ with $q=9$, we obtain
almost perfect classification: all the 9 original classes are identified
as the 9 clusters with only in total 4 students being placed outside their
own classes. Figure \ref{frenchq7} plots the networks with the identified
9 clusters in 9 different colours.
The estimated $\theta_{i,j}$ and $\eta_{i,j}$, together
with their standard errors calculated based on the asymptotic
normality presented in Theorem~\ref{CLT2}, are reported in Table~\ref{table:french7}.
As $\wh \theta_{i,j}$ for $i\ne j$ are very small (i.e. $\le 0.027$),
the students from different classes who have not contacted with each other
are unlikely to contact next day. See (\ref{c1}) and (\ref{b3}).
On the other hand, as $\wh \eta_{i,j}$ for $i\ne j$ are large (i.e. $\ge 0.761$),
the students from different classes who have contacted with each other
are likely to lose the contacts next day. Note that $\wh \theta_{i,i}$ are 
greater than $\wh \theta_{i,j}$ for $i\ne j$ substantially, and $\wh \eta_{i,i}$ are
smaller than $\wh \eta_{i,j}$ for $i\ne j$ substantially. This implies that
the students in the same class are more likely to contact with each other
than those across the different classes.

 \begin{singlespace}
\begin{table}[htbp]
	\centering
	\caption{Fitting AR(1) stochastic block models to
		the French high school data: BIC values for different cluster numbers $q$.}
	\label{table:aic}
	\begin{tabular}{|c|c|c|c|c|c|c|c|c|}
		\hline
		$q$ & 2 & 3 & 5 & 7 & 8&9 & 10 & 11\\
		\hline
		BIC($q$) & 43624 & 40586 & 37726 & 36112& 35224 & 34943 & 35002 & 35120\\
		\hline
	\end{tabular}
\end{table}
 \end{singlespace}

To apply the variational EM algorithm of \cite{mm17} to analyze this data set,
we use the {\tt R} package {\tt dynsbm}. The algorithm is designed to identify
time-varying dynamic stochastic block structure in the sense 
that both the membership of nodes and the transition probabilities may vary
with time. Furthermore it also identifies the nodes not belonging to any clusters.
The number of the clusters selected by the so-called integrated
classification likelihood criterion is also 9. The identified 9 clusters
are always dominated by the 9 original classes in the school, though they
vary from day to day.
The number of the identified students not belonging to any of the 9 clusters was
15, 17, 24, 32 and 28, respectively, in those 5 days.
Furthermore the number of the students who were not put in their own classes was
14, 9, 12, 10 and 12, respectively. % in those 5 days.
The more detailed
results are reported in Appendix B. % of the online supplementary material.
Those findings are less clear-cut than those obtained from
our method above. This is hardly surprising as \cite{mm17} adopts
a general setting without imposing  stationarity.

\begin{sidewaysfigure}[htbp]
	\centering
	\includegraphics[scale=0.8]{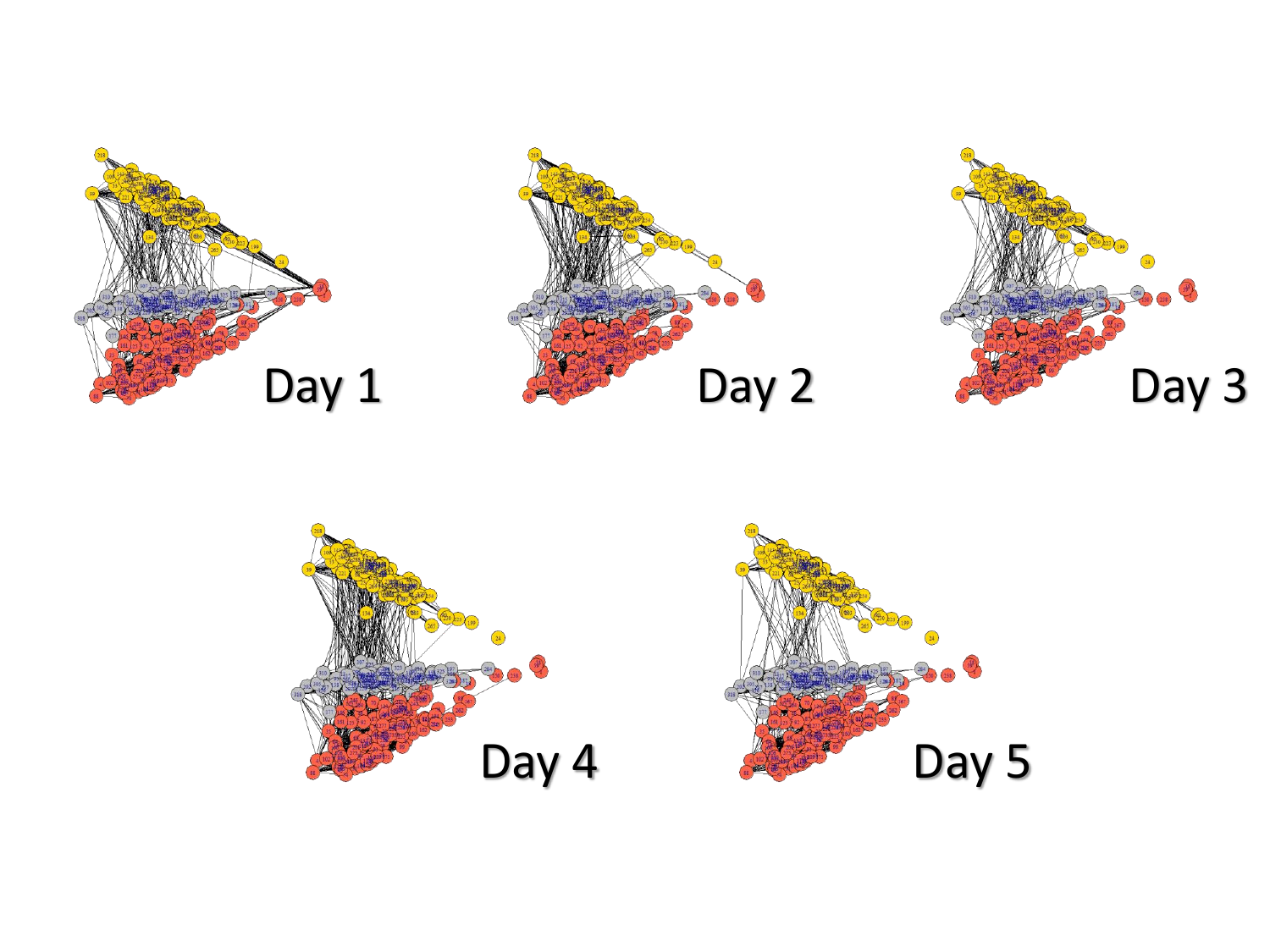}
	\caption{French high school contact networks over 5 days:
		the nodes marked in three colours represent the $q=3$ clusters determined
		by SCA based on $\wh \bL$ in (\ref{c5}).}
	\label{frenchq3}
\end{sidewaysfigure}
\begin{sidewaysfigure}[htbp]
	\centering
	\includegraphics[scale=0.8]{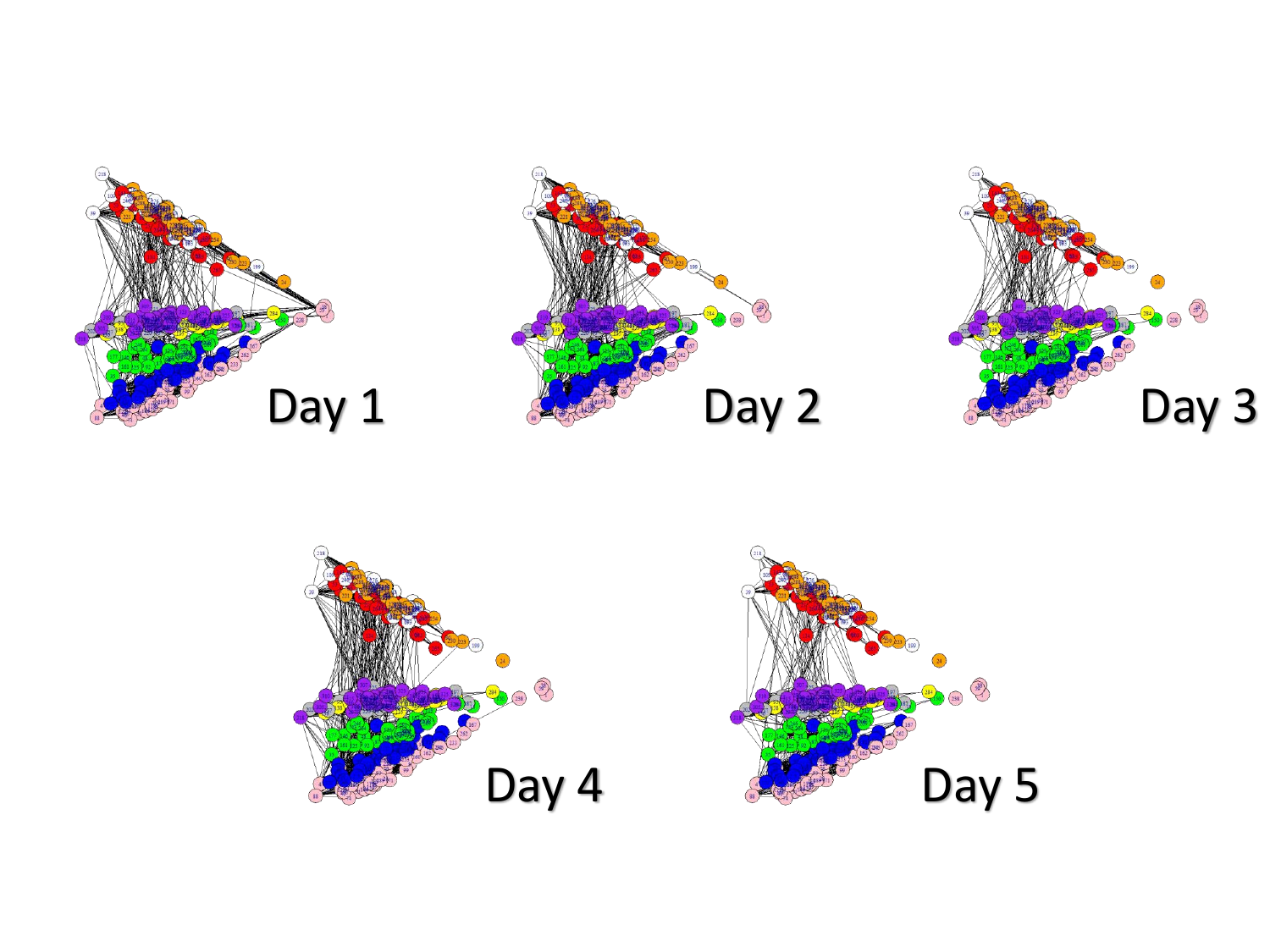}
	\caption{French high school contact networks over 5 days:
		the nodes marked in seven colours represent the $q=9$ clusters determined
		by SCA based on $\wh \bL$ in (\ref{c5}).}
	\label{frenchq7}
\end{sidewaysfigure}

\begin{singlespace}
\begin{table}[htbp]
	{\small
		\caption{Fitting AR(1) stochastic block models with $q=9$
			clusters to the French high school data:
			the estimation parameters and their standard errors (in parentheses).}
		\label{table:french7}
		\begin{center}
			\begin{tabular}{|c|c|c|c|c|c|c|c|c|c|c|}
				\hline
				&	Cluster & 1 & 2 & 3& 4 & 5 & 6 & 7& 8 & 9\\
				\hline
				& 1&   .246& .001& .004&  .006& .001& .009&.003& .024& .003 \\
				& & (.008)& (.001)& (.001)& (.001)& (.001)&  (.001)& (.001)&  (.002)&  (.001)\\
				&2&  & .136& .024& .0018& .001& .007& .001& .001& .027\\
				&&&(.009)& (.002)&(.001)& (.001)&  (.001)& (.000)& (.001)& (.002)\\
				&3& & & .252& .001& .002& .007& .001& .001 & .022\\
				&&&& (.011)&(.001)& (.001)& (.001)& (.001)& (.001)& (.002)\\
				&4& & & & .234&  .020&  .001& .024& .002& .001\\
				&&&&&(.010)& (.002)& (.001)& (.002)& (.001)& (.001)\\
				$\hat{\theta}_{i,j}$&5& & & & & .196& .001& .020& .002& .004\\
				&&&&&&(.008)& (.001)& (.002)& (.000)&(.001)\\
				&6& & & & & & .181& .001& .010& .007\\
				&&&&&&&(.008)& (.001)& (.001)&(.001)\\
				&7& & & & & & & .252& .003& .006\\
				&&&&&&&&(.009)& (.001)&(.001)\\
				&8&&&&&&&&.202& .001\\
				&&&&&&&&&(.006)&(.001)\\
				&9&&&&&&&&&.219\\
				&&&&&&&&&&(.008)\\
				\hline
				%		& \multicolumn{9}{c|}{$\hat{\eta}_{ij}$}\\
				%		\hline
				% &			Cluster & 1 & 2 & 3& 4 & 5 & 6 & 7& 8 & 9\\
				\hline
				& 1&  .563& .999& .959& .976& .999& .867&.870& .792& .909\\
				&&(.015)& (.001)& (.036)&(.098)& (.001)& (.054)& (.001)& (.000)&(.051)\\
				&2&  & .472& .761& .888& .999& .866& .999& .999& .866\\
				&&&(.024)& (.036)&(.097)& (.001)& (.054)& (.001)& (.000)&(.026)\\
				&3& & & .453& .999& .928& .864& .999& .999& .772 \\
				&&&&(.016)& (.000)& (.066)&(.048)& (.000)& (.000)&(.031)\\
				&4& & & &  .509& .868& .999& .784& .956& .999\\
				&&&&&(.017)& (.028)& (.000)& (.029)& (.041)&(.000)\\
				$\hat{\eta}_{i,j}$&5& & & & & .544& .999&.929&.842& .935\\
				&&&&&&(.017)& (.001)& (.021)& (.078)&(.041)\\
				&6& & & & & &  .589& .999&.793& .923\\
				&&&&&&&(.019)& (.001)& (.040)&(.036)\\
				&7& & & & & & & .480& .999& .814\\
				&&&&&&&&(.014)&(.000)&(.051)\\
				&8&&&&&&&&.504& .999\\
				&&&&&&&&&(.127)&(.000)\\
				&9&&&&&&&&&.471\\
				&&&&&&&&&&(.014)\\
				\hline			
			\end{tabular}
		\end{center}
	}
\end{table}
 \end{singlespace}

\subsection{Global trade data}

 Our last example concerns the annual international trades among $p=197$ countries between
1950 and 2014 (i.e. $n=65$). We define an edge between two countries to be 1 as long
as there exist trades between the two countries in that year (regardless the direction),
and 0 otherwise. We take this simplistic approach to illustrate our AR(1) stochastic 
block model with a change point. The data used are a subset of
the openly available trade data for 205 countries in 
1870 -- 2014 \citep{barb2009,barb2016}.
We leave out several countries, e.g. Russia and Yugoslavia, which did not
exist for the whole period concerned.

Setting $q=2$, we fit the data with an AR(1) stochastic block model with two clusters.
The $P$-value of the permutation test  %(with 500 permutation replications)
for the residuals resulted from the fitted model is 0, indicating
overwhelmingly that
the stationarity does not hold for the whole period.
Applying the maximum likelihood estimator (\ref{c9}),
the estimated change point is at year 1991. Before this change point, the identified
Cluster I contains 26 countries, including the
most developed industrial countries such as USA, Canada, UK and 
most European countries.
Cluster II contains 171 countries, including all
African and Latin American countries, and most Asian countries.
After 1991, 41 countries switched from Cluster II to Cluster I, including
Argentina, Brazil, Bulgaria, China, Chile, Columbia, Costa Rica, Cyprus, Hungary, Israel,
Japan, New Zealand, Poland, Saudi Arabia, Singapore,
South Korea, Taiwan, and United Arab Emirates. There was no single switch
from Cluster I to II. 
Note that 1990 may be viewed as the beginning of the globalization.
With the collapse of the Soviet Union in
1989, the fall of Berlin Wall and the end of the Cold War in 1991,
the world became more interconnected. The communist bloc countries in
East Europe, which had been isolated from the capitalist West, began to
integrate into the global market economy. Trade and investment increased,
while barriers to migration and to cultural exchange were lowered.

Figure \ref{heat} presents the average adjacency matrix of the 197
countries before and after the
change point, where the cold blue color indicates small value and the
warm red color indicates large value.
Before 1991, there are
only 26 countries in Cluster 1. The intensive red in the small lower left corner
indicates the intensive trades among those 26 countries. After 1991, the densely
connected lower left corner is enlarged as now there are 67 countries in Cluster 1.
Note some members of Cluster 2 also trade with the  members
of Cluster 1, though not all intensively.

\begin{sidewaysfigure}[htbp]
	\centering
	\includegraphics[scale=0.8]{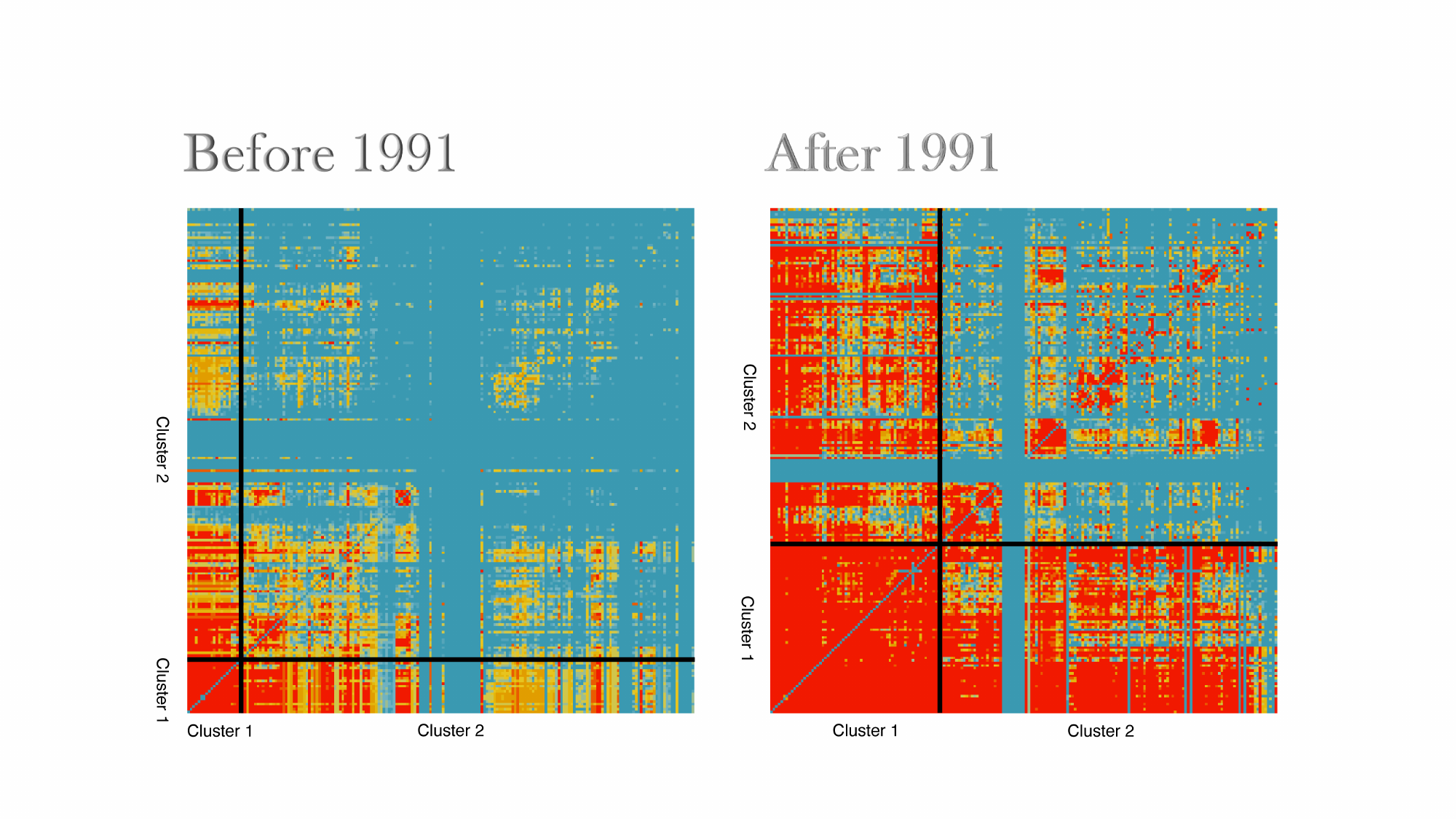}
	\caption{Average adjacency matrix for the trades among
		the 197 countries before and after 1991.  
		The values from 0 to 1 are colour-coded from blue, light blue,
		light red to red. }
	% The lower left corner correspond to Cluster I
	% while the upper right corner correspond to Cluster II. }
	\label{heat}
\end{sidewaysfigure}

The estimated parameters for the fitted AR(1) stochastic block model
with $q=2$ clusters are reported in Table \ref{table:tradech}.
Since estimated values for $\wh\theta_{1,2}, \wh\eta_{1,2}$ before and after
the change point are always small, the trading status between the 
countries across the two clusters are unlikely to change. Nevertheless
$\wh\theta_{1,2}$ is 0.154 after 1991, and 0.053 before 1991;
indicating greater possibility for new trades to happen after 1991.

 	 \begin{singlespace}
\begin{table}[htbp]
	\centering
	\caption{Fitting AR(1) stochastic block model with a change point
		and $q=2$ to the Global trade data: the estimated AR coefficients before and after
		1991.}
	
	\askip
	\label{table:tradech}
	\begin{tabular}{|c|c|c|c|c|}
		\hline
		& \multicolumn{2}{c|}{$t\le 1991$}&\multicolumn{2}{c|}{$t>1991$}\\
		\hline
		Coefficients & Estimates & SE & Estimates & SE\\
		\hline
		$\theta_{1,1}$& .062& .0092 & .046 & .0005\\
		$\theta_{1,2}$& .053& .0008 & .154 & .0013\\
		$\theta_{2,2}$ & .023 & .0002 & .230 & .0109\\
		$\eta_{1,1}$ & .003 & .0005 & .144 & .0016\\
		$\eta_{1,2}$ & .037 & .0008 & .047 & .0007\\
		$\eta_{2,2}$ & .148& .0012 & .006 & .0003\\
		\hline
	\end{tabular}
\end{table}
 	 \end{singlespace}

\bigskip
%\bigskip
\noindent
{\bf \large A final remark}. We proposed in this paper a simple AR(1) setting to represent
the dynamic dependence in network data explicitly. It also facilitates easy inference such
as the maximum likelihood estimation and model diagnostic checking. A new class of dynamic
stochastic block models  illustrates the usefulness of the setting in handling more
complex underlying structures including structure breaks due to change points.

It is conceivable to construct AR($p$) or even ARMA network models following the
similar lines. However a more fertile exploration is perhaps to extend the setting for
the networks with dependent edges, incorporating in the model some stylized features
of network data such as {\sl transitivity, homophily}.
The development in this direction will be reported in a follow-up paper.  
On the other hand, dynamic networks with weighted edges may be treated
as matrix time series for which effective modelling procedures have
been developed based on various tensor decompositions \citep{wlc19, chy20}.

\begin{singlespace}
\bibliographystyle{apalike}
\bibliography{reference}

\begin{thebibliography}{}

\bibitem[Aggarwal and Subbian, 2014]{as14}
Aggarwal, C. and Subbian, K. (2014).
\newblock Evolutionary network analysis: A survey.
\newblock {\em ACM Computing Surveys (CSUR)}, 47(1):1--36.

\bibitem[Barbieri and Keshk, 2016]{barb2016}
Barbieri, K. and Keshk, O. M.~G. (2016).
\newblock {\em Correlates of War Project Trade Data Set Codebook, Version 4.0.}
\newblock Online: http://correlatesofwar.org.

\bibitem[Barbieri et~al., 2009]{barb2009}
Barbieri, K., Keshk, O. M.~G., and Pollins, B. (2009).
\newblock Trading data: Evaluating our assumptions and coding rules.
\newblock {\em Conflict Management and Peace Science}, 26(5):471--491.

\bibitem[Bennett, 1962]{bennett1962probability}
Bennett, G. (1962).
\newblock Probability inequalities for the sum of independent random variables.
\newblock {\em Journal of the American Statistical Association},
  57(297):33--45.

\bibitem[Bhattacharjee et~al., 2020]{bbm18}
Bhattacharjee, M., Banerjee, M., and Michailidis, G. (2020).
\newblock Change point estimation in a dynamic stochastic block model.
\newblock {\em Journal of Machine Learning Research}, 21(107):1--59.

\bibitem[Bradley, 2007]{bradley2007introduction}
Bradley, R.~C. (2007).
\newblock {\em Introduction to strong mixing conditions}.
\newblock Kendrick press.

\bibitem[Chang et~al., 2020a]{chy20}
Chang, J., He, J., and Yao, Q. (2020a).
\newblock Modelling matrix time series via a tensor cp-decomposition.
\newblock {\em Under preparation}.

\bibitem[Chang et~al., 2020b]{cky20}
Chang, J., Kolaczyk, E.~D., and Yao, Q. (2020b).
\newblock Discussion of ``network cross-validation by edge sampling''.
\newblock {\em Biometrika}, 107(2):277--280.

\bibitem[Chang et~al., 2020c]{cky18}
Chang, J., Kolaczyk, E.~D., and Yao, Q. (2020c).
\newblock Estimation of subgraph densities in noisy networks.
\newblock {\em Journal of the American Statistical Association}, (In
  press):1--40.

\bibitem[Chen et~al., 2020]{cfz20}
Chen, E.~Y., Fan, J., and Zhu, X. (2020).
\newblock Community network auto-regression for high-dimensional time series.
\newblock {\em arXiv:2007.05521}.

\bibitem[Corneli et~al., 2018]{corn2018}
Corneli, M., Latouche, P., and Rossi, F. (2018).
\newblock Multiple change points detection and clustering in dynamic networks.
\newblock {\em Statistics and Computing}, 28(5):989--1007.

\bibitem[Crane et~al., 2016]{crane16}
Crane, H. et~al. (2016).
\newblock Dynamic random networks and their graph limits.
\newblock {\em The Annals of Applied Probability}, 26(2):691--721.

\bibitem[Donnat and Holmes, 2018]{dh18}
Donnat, C. and Holmes, S. (2018).
\newblock Tracking network dynamics: A survey of distances and similarity
  metrics.
\newblock {\em The Annals of Applied Statistics}, 12(2):971--1012.

\bibitem[Durante et~al., 2016]{dd16}
Durante, D., Dunson, D.~B., et~al. (2016).
\newblock Locally adaptive dynamic networks.
\newblock {\em The Annals of Applied Statistics}, 10(4):2203--2232.

\bibitem[Durrett, 2019]{durrett2019probability}
Durrett, R. (2019).
\newblock {\em Probability: theory and examples}, volume~49.
\newblock Cambridge university press.

\bibitem[Fan and Yao, 2003]{fy03}
Fan, J. and Yao, Q. (2003).
\newblock {\em Nonlinear Time Series: Nonparametric and Parametric Methods}.
\newblock Springer, New York.

\bibitem[Friel et~al., 2016]{friel2016}
Friel, N., Rastelli, R., Wyse, J., and Raftery, A. (2016).
\newblock Interlocking directorates in irish companies using a latent space
  model for bipartite networks.
\newblock {\em Proceedings of the national academy of sciences},
  113(24):6629--6634.

\bibitem[Fu et~al., 2009]{fsx09}
Fu, W., Song, L., and Xing, E.~P. (2009).
\newblock Dynamic mixed membership blockmodel for evolving networks.
\newblock In {\em Proceedings of the 26th Annual International Conference on
  Machine Learning}, pages 329--336.

\bibitem[Hanneke et~al., 2010]{hfx10}
Hanneke, S., Fu, W., and Xing, E.~P. (2010).
\newblock Discrete temporal models of social networks.
\newblock {\em Electronic Journal of Statistics}, 4:585--605.

\bibitem[Kang et~al., 2017]{kgk17}
Kang, X., Ganguly, A., and Kolaczyk, E.~D. (2017).
\newblock Dynamic networks with multi-scale temporal structure.
\newblock {\em arXiv preprint arXiv:1712.08586}.

\bibitem[Knight et~al., 2016]{knn16}
Knight, M., Nunes, M., and Nason, G. (2016).
\newblock Modelling, detrending and decorrelation of network time series.
\newblock {\em arXiv preprint arXiv:1603.03221}.

\bibitem[Kolaczyk, 2017]{ek17}
Kolaczyk, E.~D. (2017).
\newblock {\em Topics at the Frontier of Statistics and Network Analysis}.
\newblock Cambridge University Press.

\bibitem[Krivitsky and Handcock, 2014]{kh14}
Krivitsky, P.~N. and Handcock, M.~S. (2014).
\newblock A separable model for dynamic networks.
\newblock {\em Journal of the Royal Statistical Society, {\bf B}}, 76(1):29.

\bibitem[Lin and Bai, 2011]{lin2011probability}
Lin, Z. and Bai, Z. (2011).
\newblock {\em Probability inequalities}.
\newblock Springer Science \& Business Media.

\bibitem[Ludkin et~al., 2018]{len18}
Ludkin, M., Eckley, I., and Neal, P. (2018).
\newblock Dynamic stochastic block models: parameter estimation and detection
  of changes in community structure.
\newblock {\em Statistics and Computing}, 28(6):1201--1213.

\bibitem[Mastrandrea et~al., 2015]{mastrand2015}
Mastrandrea, R., Fournet, J., and Barrat, A. (2015).
\newblock Contact patterns in a high school: A comparison between data
  collected using wearable sensors, contact diaries and friendship surveys.
\newblock {\em PLoS ONE}, 10(9):e0136497.

\bibitem[Matias and Miele, 2017]{mm17}
Matias, C. and Miele, V. (2017).
\newblock Statistical clustering of temporal networks through a dynamic
  stochastic block model.
\newblock {\em Journal of the Royal Statistical Society, {\bf B}},
  79(4):1119--1141.

\bibitem[Matias et~al., 2018]{matias2018}
Matias, C., Rebafka, T., and Villers, F. (2018).
\newblock A semiparametric extension of the stochastic block model for
  longitudinal networks.
\newblock {\em Biometrika}, 105(5):989--1007.

\bibitem[Merlev{\`e}de et~al., 2009]{merlevede2009bernstein}
Merlev{\`e}de, F., Peligrad, M., Rio, E., et~al. (2009).
\newblock Bernstein inequality and moderate deviations under strong mixing
  conditions.
\newblock In {\em High dimensional probability V: the Luminy volume}, pages
  273--292. Institute of Mathematical Statistics.

\bibitem[Pensky, 2019]{p19}
Pensky, M. (2019).
\newblock Dynamic network models and graphon estimation.
\newblock {\em Annals of Statistics}, 47(4):2378--2403.

\bibitem[Rastelli et~al., 2017]{rastelli2017}
Rastelli, R., Latouche, P., and Friel, N. (2017).
\newblock Choosing the number of groups in a latent stochastic block model for
  dynamic networks.
\newblock {\em Online: {http://arxiv.org/abs/1702.01418}}.

\bibitem[Rohe et~al., 2011]{rohe2011spectral}
Rohe, K., Chatterjee, S., Yu, B., et~al. (2011).
\newblock Spectral clustering and the high-dimensional stochastic blockmodel.
\newblock {\em The Annals of Statistics}, 39(4):1878--1915.

\bibitem[Snijders, 2005]{snijder2005}
Snijders, T. A.~B. (2005).
\newblock Models for longitudinal network data.
\newblock In Carrington, P., Scott, J., and Wasserman, S.~S., editors, {\em
  Models and Methods in Social Network Analysis}, chapter~11. Cambridge
  University Press, New York.

\bibitem[Vanhems et~al., 2013]{van2013}
Vanhems, P., Barrat, A., Cattuto, C., Pinton, J.-F., Khanafer, N., Regis, C.,
  a.~Kim, B., and B.~Comte, N.~V. (2013).
\newblock Estimating potential infection transmission routes in hospital wards
  using wearable proximity sensors.
\newblock {\em {PloS ONE}}, 8:e73970.

\bibitem[Vinh et~al., 2010]{vinh2010}
Vinh, N.~X., Epps, J., and Bailey, J. (2010).
\newblock Information theoretic measures for clusterings comparison: Variants,
  properties, normalization and correction for chance.
\newblock {\em Journal of Machine Learning Research}, 11:2837--2854.

\bibitem[Wang et~al., 2019]{wlc19}
Wang, D., Liu, X., and Chen, R. (2019).
\newblock Factor models for matrix-valued high-dimensional time series.
\newblock {\em Journal of Econometrics}, 208(1):231--248.

\bibitem[Wang et~al., 2018]{wyr18}
Wang, D., Yu, Y., and Rinaldo, A. (2018).
\newblock Optimal change point detection and localization in sparse dynamic
  networks.
\newblock {\em arXiv preprint arXiv:1809.09602}.

\bibitem[Wilson et~al., 2019]{wsw19}
Wilson, J.~D., Stevens, N.~T., and Woodall, W.~H. (2019).
\newblock Modeling and detecting change in temporal networks via the degree
  corrected stochastic block model.
\newblock {\em Quality and Reliability Engineering International},
  35(5):1363--1378.

\bibitem[Xu and Hero, 2014]{xh14}
Xu, K.~S. and Hero, A.~O. (2014).
\newblock Dynamic stochastic blockmodels for time-evolving social networks.
\newblock {\em IEEE Journal of Selected Topics in Signal Processing},
  8(4):552--562.

\bibitem[Yang et~al., 2011]{yczgj11}
Yang, T., Chi, Y., Zhu, S., Gong, Y., and Jin, R. (2011).
\newblock Detecting communities and their evolutions in dynamic social
  networks?a bayesian approach.
\newblock {\em Machine learning}, 82(2):157--189.

\bibitem[Yu et~al., 2015]{yu2015useful}
Yu, Y., Wang, T., and Samworth, R.~J. (2015).
\newblock A useful variant of the davis--kahan theorem for statisticians.
\newblock {\em Biometrika}, 102(2):315--323.

\bibitem[Yudovina et~al., 2015]{ybm15}
Yudovina, E., Banerjee, M., and Michailidis, G. (2015).
\newblock Changepoint inference for erd\"os-r\'enyi random graphs.
\newblock In {\em Stochastic Models, Statistics and Their Applications}, pages
  197--205. Springer.

\bibitem[Zhao et~al., 2019]{zcl19}
Zhao, Z., Chen, L., and Lin, L. (2019).
\newblock Change-point detection in dynamic networks via graphon estimation.
\newblock {\em arXiv preprint arXiv:1908.01823}.

\bibitem[Zhu et~al., 2020a]{zlycc19}
Zhu, T., Li, P., Yu, L., Chen, K., and Chen, Y. (2020a).
\newblock Change point detection in dynamic networks based on community
  identification.
\newblock {\em IEEE Transactions on Network Science and Engineering}.

\bibitem[Zhu et~al., 2020b]{zhpw20}
Zhu, X., Huang, D., Pan, R., and Wang, H. (2020b).
\newblock Multivariate spatial autoregressive model for large scale social
  networks.
\newblock {\em Journal of Econometrics}, 215(2):591--606.

\bibitem[Zhu et~al., 2017]{zpllw17}
Zhu, X., Pan, R., Li, G., Liu, Y., and Wang, H. (2017).
\newblock Network vector autoregression.
\newblock {\em The Annals of Statistics}, 45(3):1096--1123.

\bibitem[Zhu et~al., 2019]{zwwh19}
Zhu, X., Wang, W., Wang, H., and H\"ardle, W.~K. (2019).
\newblock Network quantile autoregression.
\newblock {\em Journal of Econometrics}, 212(1):345--358.

\end{thebibliography}
\end{singlespace}

% }
% \end{document}
\newpage

\setcounter{page}{1}

\section*{``Autoregressive Networks'' by B. Jiang, J. Li and Q. Yao}

{\bf \Large Appendix: Technical proofs and further real data analysis}

\setcounter{equation}{0}
\renewcommand{\theequation}{A.\arabic{equation}}

\setcounter{subsection}{0}
\renewcommand{\thesubsection}{A.\arabic{subsection}}

\subsection{Proof of Proposition \ref{thm1}} \label{sec51}

Note all $X_{i,j}^t$ take
binary values 0 or 1. Hence
\begin{align*}
& P(X_{i,j}^1 =1)  = P(X_{i,j}^0=1) P(X_{i,j}^1 =1| X_{i,j}^0=1)
+ P(X_{i,j}^0=0) P(X_{i,j}^1 =1| X_{i,j}^0=0) \\
=& \pi_{i,j} (1 - \beta_{i,j}) + (1- \pi_{i,j}) \alpha_{i,j}
= {\alpha_{i,j} \over \alpha_{i,j} + \beta_{i,j}} (1 - \beta_{i,j}) + {\beta_{i,j} \over \alpha_{i,j} + \beta_{i,j}}
\alpha_{i,j} = {\alpha_{i,j} \over \alpha_{i,j} + \beta_{i,j}} = \pi_{i,j}.
\end{align*}
Thus $\calL(X_{i,j}^1) = \calL(X_{i,j}^0)$. Since all $\bX^t$ are
Erd\'os-Renyi, $\calL(\bX^1) = \calL(\bX^0)$.
Condition (\ref{b5}) ensures that $\{ \bX_t \}$ is a homogeneous
Markov chain. 
Hence $\calL(\bX^t) = \calL(\bX^0)$ for any $t\ge 1$.
This implies the required  stationarity.

As $E(X_{i,j}^t) = P(X_{i,j}^t =1)$, and $\var(X_{i,j}^t) = E(X_{i,j}^t) -
\{E(X_{i,j}^t)\}^2$, (\ref{b8}) follows from the stationarity, (\ref{b6})
and (\ref{b7}).

Note that  (\ref{b1}) implies
a Yule-Walker equation
\begin{equation} \label{b14}
\gamma_{i,j}(k) = (1-\alpha_{i,j} - \beta_{i,j}) \gamma_{i,j}(k-1), \quad k=1, 2, \cdots,
\end{equation}
where $\gamma_{i,j}(k) = \cov(X_{i,j}^{t+k}, X_{i,j}^t)$. 

Since the networks are all Erd\"os-Renyi, (\ref{b9}) follows from the
Yule-Walker equation (\ref{b14}) immediately, noting $\rho_{i,j}(k) = \gamma_{i,j}(k)/
\gamma_{i,j}(0)$ and $\rho_{i,j}(0)=1$.
To prove (\ref{b14}), it follows from (\ref{b1}) that for any $k\ge 1$,
\begin{align*}
E(X_{i,j}^{t+k} X_{i,j}^t) &= E(X_{i,j}^{t+k-1} X_{i,j}^t) P(\ve_{i,j}^{t+k}=0)
+ P(\ve_{i,j}^{t+k}=1) EX_{i,j}^t\\
&= (1  - \alpha_{i,j} - \beta_{i,j}) E(X_{i,j}^{t+k-1} X_{i,j}^t) + \alpha_{i,j}^2/(\alpha_{i,j} + \beta_{i,j}). 
\end{align*}
Thus
\begin{align*}
\gamma_{i,j}(k) &= E(X_{i,j}^{t+k} X_{i,j}^t) -(EX_{i,j}^t)^2
= E(X_{i,j}^{t+k} X_{i,j}^t) - {\alpha_{i,j}^2 \over
	(\alpha_{i,j}+\beta_{i,j})^2} \\ &
= (1  - \alpha_{i,j} - \beta_{i,j}) E(X_{i,j}^{t+k-1} X_{i,j}^t) + {\alpha_{i,j}^2\over
	\alpha_{i,j} + \beta_{i,j}} (1 - {1 
	\over \alpha_{i,j} + \beta_{i,j}})\\
& = (1  - \alpha_{i,j} - \beta_{i,j}) \{ E(X_{i,j}^{t+k-1} X_{i,j}^t) -
{\alpha_{i,j}^2 \over
	(\alpha_{i,j}+\beta_{i,j})^2} \}  = (1  - \alpha_{i,j} - \beta_{i,j}) \gamma_{i,j}(k-1).
\end{align*}

This completes the proof.

\subsection{Proof of Proposition \ref{thm2}}

We only prove (\ref{b11}), as (\ref{b10}) follows from (\ref{b11}) 
immediately. To prove (\ref{b11}), we only need to show 
\begin{equation} \label{d1}
d_{i,j}(k) \equiv P(X_{i,j}^t \ne X_{i,j}^{t+k}) = {2 \alpha_{i,j} \beta_{i,j}
	\over (\alpha_{i,j} + \beta_{i,j})^2} \{ 1 - (1 - \alpha_{i,j} -
\beta_{i,j})^k\}, \quad k=1,2, \cdots. 
\end{equation}
We Proceed by induction. It is easy to check that (\ref{d1})
holds for $k=1$. Assuming it also holds for $k\ge 1$, then
\begin{align*}
&d_{i,j}(k+1) =  P(X_{i,j}^t=0, X_{i,j}^{t+k+1}=1)
+ P(X_{i,j}^t=1, X_{i,j}^{t+k+1}=0) \\
=&\,
P(X_{i,j}^t=0, X_{i,j}^{t+k} =1, X_{i,j}^{t+k+1}=1)
+ P(X_{i,j}^t=0, X_{i,j}^{t+k} =0, X_{i,j}^{t+k+1}=1)\\
&+  P(X_{i,j}^t=1, X_{i,j}^{t+k} =0, X_{i,j}^{t+k+1}=0)
+ P(X_{i,j}^t=1, X_{i,j}^{t+k} =1, X_{i,j}^{t+k+1}=0) \\
= &\,
P(X_{i,j}^t=0, X_{i,j}^{t+k} =1) (1-\beta_{i,j})
+ \{ P(X_{i,j}^t=0) - P(X_{i,j}^t=0, X_{i,j}^{t+k} =1) \} \alpha_{i,j} \\
& + P(X_{i,j}^t=1, X_{i,j}^{t+k} =0) (1- \alpha_{i,j}) 
+ \{ P(X_{i,j}^t=1) - P(X_{i,j}^t=1, X_{i,j}^{t+k} =0)\} \beta_{i,j}
\\
= &\,
\{ P(X_{i,j}^t=0, X_{i,j}^{t+k} =1)
+ P(X_{i,j}^t=1, X_{i,j}^{t+k} =0)\}(1- \alpha_{i,j} - \beta_{i,j}) 
+ {2 \alpha_{i,j} \beta_{i,j} \over \alpha_{i,j} + \beta_{i,j}}\\
=& \,
d_{i,j}(k) (1- \alpha_{i,j} - \beta_{i,j}) + {2 \alpha_{i,j} \beta_{i,j} \over
	\alpha_{i,j} + \beta_{i,j}} \, = \, 
{2 \alpha_{i,j} \beta_{i,j}
	\over (\alpha_{i,j} + \beta_{i,j})^2} \{ 1 - (1 - \alpha_{i,j} -
\beta_{i,j})^{k+1}\}.
\end{align*}
Hence (\ref{d1}) also holds for $k+1$. This completes the proof.

% In order to establish the uniform convergence, we first establish concentration inequalities for the $\{X_{i,j}^t, t=0,2\ldots\}$ and $\{Y_{i,j}^t, t=1,2\ldots\}$ processes.
% Denote ${\mathbb N}:= \{0,1,\ldots\}$. For any $ a, b \in {\mathcal N}$ and $a\leq b$, we use  ${\mathcal F}_a^b$ to denote the $\sigma$-algebra generated by $\{X_{i,j}^k, a\leq k \leq b\}$, and for any $\tau\geq 1$, the corresponding $\alpha$-mixing coefficient is defined as: 
% \[
% \alpha_{i,j}(\tau)=\sup_{k\in {\mathbb N}} \sup_{A\in {\mathcal F}_0^k, B\in {\mathcal F}_{k+\tau}^\infty, }|P(A\cap B)-P(A)P(B)|. 
% \]
% Similarly, for the $\{Y_{i,j}^t, t=1,2\ldots\}$ processes, we use  ${\mathcal G}_a^b$ to denote the $\sigma$-algebra generated by $\{Y_{i,j}^k, a\leq k \leq b\}$, and for any $\tau\geq 1$, the corresponding $\alpha$-mixing coefficient is defined as: 
% \[
% \alpha_{Y,{i,j}}(\tau)=\sup_{k\in {\mathbb N}} \sup_{A\in {\mathcal G}_0^k, B\in {\mathcal G}_{k+\tau}^\infty, }|P(A\cap B)-P(A)P(B)|. 
% \]
% The following lemma indicates that $\{X_{i,j}^t, t=0,2\ldots\}$ and $\{Y_{i,j}^t, t=1,2\ldots\}$ are both $\alpha$-mixing with exponentially decaying mixing coefficients. 
% \begin{lemma}\label{amixing}
% 	Under condition C1, we have, there exist a positive constant  $c_1$ such that for any $\tau\geq 1$,  $\alpha_{i,j}(\tau)\leq c_1(1-\alpha_{i,j}-\beta_{i,j})^\tau$, and $\alpha_{Y,{i,j}}(\tau)\leq c_1(1-\alpha_{i,j}-\beta_{i,j})^{\tau-1}$. 
% \end{lemma}
\subsection{Proof of Proposition \ref{amixing}}
\begin{proof}
	Note that for any nonempty elements $A\in  {\mathcal F}_0^{k}, B\in   {\mathcal F}_{k+\tau}^\infty$, there exist $A_{0}\in {\mathcal F}_0^{k-1}$ and $B_0\in   {\mathcal F}_{k+\tau+1}^\infty$ such that $A=A_0\times \{0\}, A_0\times \{1\},$ or $A_0\times \{0,1\}$, and $B=B_0\times \{0\}, B_0\times \{1\},$ or $B_0\times \{0,1\}$.  We first consider the case where $B=B_0\times \{x_{k}\}$ and  $A=A_0\times \{x_{k+\tau}\}$ 
	where  $x_k, x_{k+\tau}= 0$ or  $1$.  Note that
	\begin{eqnarray*}
		&&P(A_0, X^k_{i,j}=x_k, B_0, X_{i,j}^{k+\tau}=x_{k+\tau}) \\ %-P(A_0, X_k=x_k)P(B_0, X_{k+T}=x_{k+T}) \\
		&=& P(B_0|X_{i,j}^{k+\tau}=x_{k+\tau}) P(X_{i,j}^{k+\tau}=x_{k+\tau}, A_0, X_{i,j}^k=x_k) \\
		&=& P(B_0, X_{i,j}^{k+ \tau}=x_{k+\tau} )P(A_0, X^k_{i,j}=x_k) \cdot  \frac{P(X_{i,j}^{k+\tau}=x_{k+\tau}| X_{i,j}^k=x_k) }{P(X_{i,j}^{k+\tau}=x_{k+\tau}) }\\
		&=&P(B_0, X_{i,j}^{k+\tau}=x_{k+\tau} )P(A_0, X^k_{i,j}=x_k) \cdot  \frac{P(X_{i,j}^{k+\tau}=x_{k+\tau}, X_{i,j}^k=x_k) }{P(X_{i,j}^{k+\tau}=x_{k+\tau})P( X_{i,j}^k=x_k) }
	\end{eqnarray*}
	On the other hand, note that
	\begin{eqnarray*}
		P(X_{i,j}^{k+\tau}=1,X_{i,j}^k=1)- P(X_{i,j}^{k+\tau}=1)P(X_{i,j}^k=1)=  \rho_{i,j}(\tau);
	\end{eqnarray*}
	\begin{eqnarray*}		
		&& P(X_{i,j}^{k+\tau}=1,X_{i,j}^k=0)- P(X_{i,j}^{k+\tau}=1)P(X_{i,j}^k=0) \\
		&=&	P(X_{i,j}^{k+\tau}=1)-P(X_{i,j}^{k+\tau}=1,X_{i,j}^k=1)- P(X_{i,j}^{k+\tau}=1)[1-P(X_{i,j}^k=1)]  \\
		&=&-\rho_{i,j}(\tau);
	\end{eqnarray*}
	\begin{eqnarray*}		
		&& P(X_{i,j}^{k+\tau}=0,X_{i,j}^k=1)- P(X_{i,j}^{k+\tau}=0)P(X_{i,j}^k=1) \\
		&=&	P(X_{i,j}^{k}=1)-P(X_{i,j}^{k+\tau}=1,X_{i,j}^k=1)- [1-P(X_{i,j}^{k+\tau}=1)]P(X_{i,j}^k=1)  \\
		&=&-\rho_{i,j}(\tau);
	\end{eqnarray*}
	\begin{eqnarray*}		
		&&P(X_{i,j}^{k+\tau}=0,X_{i,j}^k=0)- P(X_{i,j}^{k+\tau}=0)P(X_{i,j}^k=0) \\
		&=&	P(X_{i,j}^{k+\tau}=0)-P(X_{i,j}^{k+\tau}=0,X_{i,j}^k=1)- P(X_{i,j}^{k+\tau}=0)[1-P(X_{i,j}^k=1)]  \\
		&=&\rho_{i,j}(\tau). 
	\end{eqnarray*}
	Consequently, we have
	\begin{eqnarray*}
		&&|P(A_0, X^k_{i,j}=x_k, B_0, X_{i,j}^{k+\tau}=x_{k+\tau})  -P(A_0, X^k_{i,j}=x_k)P(B_0, X_{i,j}^{k+\tau}=x_{k+\tau}) | \\
		&=& \left| P(A_0, X^k_{i,j}=x_k)P(B_0, X_{i,j}^{k+\tau}=x_{k+\tau})\left[\frac{P(X_{i,j}^{k+\tau}=x_{k+\tau}, X_{i,j}^k=x_k) }{P(X_{i,j}^{k+\tau}=x_{k+\tau})P( X_{i,j}^k=x_k) }-1 \right]\right|\\
		&\leq& \rho_{i,j}(\tau). 
	\end{eqnarray*}
	In the case where	$A=A_0\times \{0,1\}$ and/or $B=B_0\times
	\{0,1\}$, since $A$ and $B$ are nonempty, there exist integers $0<k_1<k$
	and/or $k_2>k+1$, and correspondingly $A_1\in  {\mathcal
		F}_0^{k_1-1}\times \{x_{k_1}\}$ and/or $B\in   {\mathcal
		F}_{k_2+\tau+1}^\infty\times \{x_{k_2+\tau}\}$ with $x_{k_1},
	x_{k_2+\tau}=0~ {\rm or}~ 1$, such that $P(A\cap B)-P(A)P(B)=P(A_1\cap
	B_1)-P(A_1)P(B_1)$. Following similar arguments above we have $P(A\cap
	B)-P(A)P(B)\leq \rho_{i,j}(\tau+k_2-k_1)<\rho_{ij}(\tau)$. We thus proved
	that $\alpha^{i,j}(\tau)\leq\rho_{i,j}(\tau)$.
	The conclusion of Proposition \ref{amixing} follows from Proposition \ref{thm1}.
	%  Together with Theorem 1,
	%we conclude that $\{X_{i,j}^t, t=0,2\ldots\}$ is $\alpha$-mixing with
	% coefficient   $\alpha(\tau)\leq c_1(1-\alpha_{i,j}-\beta_{i,j})^\tau$.
\end{proof}

\subsection{Proof of Proposition \ref{uniformCon}}

We introduce some technical lemmas first. 
%Recall that for any $1\leq i, j\leq n$, $\{X_{i,j}^t, t=0,1,2\ldots\}$ is stationary with $EX_{i,j}^{t}=\frac{\alpha_{i,j}}{\alpha_{i,j}+\beta_{i,j}}$, ${\rm Var}(X_{i,j}^{t})=\frac{\alpha_{i,j}\beta_{i,j}}{(\alpha_{i,j}+\beta_{i,j})^2}$ and ACF $\rho_{i,j}(|t-s|)=(1-\alpha_{i,j}-\beta_{i,j})^{|t-s|}$. 

\begin{lemma}\label{Ystationary}
	For any $(i,j)\in {\mathcal J}$, denote $Y_{i,j}^{t}:=X_{i,j}^t(1-X_{i,j}^{t-1})$, and let $\Y_t=(Y_{i,j}^{t})_{1\leq i,j\leq p}$ be the $p\times p$ matrix at time t. Under the assumptions of Proposition \ref{thm1}, we have $\{\Y_t, t=1,2\ldots\}$ is  stationary such that for any $(i,j), (l,m)\in {\mathcal J}$, and $t, s\geq 1$, $t\neq s$, 
	\begin{eqnarray*}
		EY_{i,j}^t =
		\frac{\alpha_{i,j} \beta_{i,j} }{\alpha_{i,j}+\beta_{i,j}},
		~~~{\rm Var}(Y_{i,j}^{t})=\frac{\alpha_{i,j}\beta_{i,j}(\alpha_{i,j}+\beta_{i,j}-\alpha_{i,j}\beta_{i,j}) }{(\alpha_{i,j}+\beta_{i,j})^2},
	\end{eqnarray*}
	\begin{eqnarray*}
		\rho_{Y_{i,j}}(|t-s|)\equiv {\rm Corr}(Y_{i,j}^{t}, Y_{lm}^{s}) =\begin{cases}
			-\frac{ \alpha_{i,j}\beta_{i,j}(1-\alpha_{i,j}-\beta_{i,j})^{|t-s|-1}   }{\alpha_{i,j}+\beta_{i,j}-\alpha_{i,j}\beta_{i,j}} & {\rm if}~~(i,j)=(l,m), \\
			0 & {\rm otherwise}.  
		\end{cases}
	\end{eqnarray*}
\end{lemma}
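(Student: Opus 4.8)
The plan is to exploit the fact that, for each fixed $(i,j)$, the bivariate process $\{(X_{i,j}^{t-1}, X_{i,j}^t), \, t\ge 1\}$ is itself a (strictly) stationary Markov chain, being a measurable function of the stationary chain $\{X_{i,j}^t\}$ established in Theorem \ref{thm1}. Since $Y_{i,j}^t = X_{i,j}^t(1-X_{i,j}^{t-1})$ is a fixed function of consecutive coordinates, $\{\Y_t\}$ inherits strict stationarity immediately; the independence across $(i,j)\in\calJ$ is inherited from the independence of the innovations and of $\bX_0$. So the only real work is computing the first two moments and the autocovariance.

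For the mean, I would write $EY_{i,j}^t = P(X_{i,j}^t=1, X_{i,j}^{t-1}=0) = P(X_{i,j}^{t-1}=0)\,\alpha_{i,j} = (1-\pi_{i,j})\alpha_{i,j} = \beta_{i,j}\alpha_{i,j}/(\alpha_{i,j}+\beta_{i,j})$ using (\ref{b3}), (\ref{b6}), (\ref{b7}). Since $Y_{i,j}^t$ is Bernoulli-like (it takes values $0$ and $1$ only), $\var(Y_{i,j}^t) = EY_{i,j}^t - (EY_{i,j}^t)^2$, which simplifies to the stated expression after factoring out $\alpha_{i,j}\beta_{i,j}/(\alpha_{i,j}+\beta_{i,j})^2$. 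For the cross-terms with $(i,j)\ne(\ell,m)$ the covariance is zero because the two innovation sequences are independent, hence so are $Y_{i,j}^t$ and $Y_{\ell m}^s$.

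The main computation — and the part I expect to require the most care — is the autocovariance $\cov(Y_{i,j}^t, Y_{i,j}^{t+k})$ for $k\ge 1$. By stationarity assume $t\ge 1$ and $s = t+k$. I would compute $E(Y_{i,j}^t Y_{i,j}^{t+k}) = P(X_{i,j}^{t-1}=0, X_{i,j}^t=1, X_{i,j}^{t+k-1}=0, X_{i,j}^{t+k}=1)$. For $k=1$ this event is empty (it demands $X_{i,j}^t=1$ and $X_{i,j}^t=0$ simultaneously), so $E(Y_{i,j}^t Y_{i,j}^{t+1})=0$ and the covariance is $-(EY_{i,j}^t)^2$, which matches the stated formula at lag $1$ (the bracket $(1-\alpha_{i,j}-\beta_{i,j})^0=1$, and $\alpha_{i,j}\beta_{i,j}/(\alpha_{i,j}+\beta_{i,j}-\alpha_{i,j}\beta_{i,j})$ times $\alpha_{i,j}\beta_{i,j}(\alpha_{i,j}+\beta_{i,j}-\alpha_{i,j}\beta_{i,j})/(\alpha_{i,j}+\beta_{i,j})^2 = (\alpha_{i,j}\beta_{i,j}/(\alpha_{i,j}+\beta_{i,j}))^2$). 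For $k\ge 2$, condition on $X_{i,j}^t=1$ first: then $P(X_{i,j}^{t-1}=0, X_{i,j}^t=1) = \beta_{i,j}\alpha_{i,j}/(\alpha_{i,j}+\beta_{i,j})$, and given $X_{i,j}^t=1$ one needs $P(X_{i,j}^{t+k-1}=0, X_{i,j}^{t+k}=1 \mid X_{i,j}^t=1)$. Using the Markov property and (\ref{b3}), this last probability equals $\beta_{i,j}\cdot P(X_{i,j}^{t+k-1}=1 \mid X_{i,j}^{t}=1)$ — no wait, it equals $P(X_{i,j}^{t+k}=1 \mid X_{i,j}^{t+k-1}=0)\,P(X_{i,j}^{t+k-1}=0\mid X_{i,j}^t=1) = \alpha_{i,j}\cdot\{1 - P(X_{i,j}^{t+k-1}=1\mid X_{i,j}^t=1)\}$, and $P(X_{i,j}^{t+k-1}=1\mid X_{i,j}^t=1) = \pi_{i,j} + (1-\pi_{i,j})(1-\alpha_{i,j}-\beta_{i,j})^{k-1}$ from the two-state chain's $(k-1)$-step transition (derivable from the Yule-Walker recursion (\ref{b14}) or directly). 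Assembling these pieces gives $E(Y_{i,j}^t Y_{i,j}^{t+k}) = \frac{\alpha_{i,j}\beta_{i,j}}{\alpha_{i,j}+\beta_{i,j}}\cdot\alpha_{i,j}\cdot\frac{\beta_{i,j}}{\alpha_{i,j}+\beta_{i,j}}\{1-(1-\alpha_{i,j}-\beta_{i,j})^{k-1}\}$, and subtracting $(EY_{i,j}^t)^2 = \alpha_{i,j}^2\beta_{i,j}^2/(\alpha_{i,j}+\beta_{i,j})^2$ yields $\cov = -\frac{\alpha_{i,j}^2\beta_{i,j}^2}{(\alpha_{i,j}+\beta_{i,j})^2}(1-\alpha_{i,j}-\beta_{i,j})^{k-1}$. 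Dividing by $\var(Y_{i,j}^t)$ gives the claimed ACF. The bookkeeping of conditional probabilities and the $(k-1)$-step transition formula is where an arithmetic slip is most likely, so I would double-check the lag-$1$ and lag-$2$ cases explicitly against the closed form before declaring victory.
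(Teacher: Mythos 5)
Your proposal is correct and follows essentially the same route as the paper's proof: the mean via $P(X_{i,j}^{t-1}=0)\alpha_{i,j}$, the variance from the fact that $Y_{i,j}^t$ is binary, the vanishing of $E(Y_{i,j}^tY_{i,j}^{t+1})$, and for $k\ge 2$ the Markov-property factorization through $P(X_{i,j}^{t+k-1}=1\mid X_{i,j}^t=1)$, which the paper computes as $E(X_{i,j}^{t+k-1}X_{i,j}^t)/EX_{i,j}^t$ and you obtain equivalently from the two-state $(k-1)$-step transition. The resulting covariance $-\alpha_{i,j}^2\beta_{i,j}^2(1-\alpha_{i,j}-\beta_{i,j})^{k-1}/(\alpha_{i,j}+\beta_{i,j})^2$ and the ACF match the paper exactly, and your explicit stationarity argument (a fixed function of consecutive coordinates of a stationary chain) fills in a step the paper leaves implicit.
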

\begin{proof}
	Note that $Y_{i,j}^{t}=X_{i,j}^t(1-X_{i,j}^{t-1})=(1-X_{i,j}^{t-1})I(\ve_{i,j}^t=1)$. 
	We thus have:
	
	$E(Y_{i,j}^t) =P(X_{i,j}^{t-1} =0)\alpha_{i,j}=(1-EX_{i,j}^{t-1})\alpha_{i,j}=			\frac{\alpha_{i,j} \beta_{i,j} }{\alpha_{i,j}+\beta_{i,j}}$.
	
	${\rm Var}(Y_{i,j}^{t})=E(Y_{i,j}^t)[1-E(Y_{i,j}^t)]=\frac{\alpha_{i,j}\beta_{i,j} }{\alpha_{i,j}+\beta_{i,j}}\left(1-\frac{\alpha_{i,j}\beta_{i,j} }{\alpha_{i,j}+\beta_{i,j}}\right)=\frac{\alpha_{i,j}\beta_{i,j}(\alpha_{i,j}+\beta_{i,j}-\alpha_{i,j}\beta_{i,j}) }{(\alpha_{i,j}+\beta_{i,j})^2}$.

	For $k=1$ we have $E(Y_{i,j}^{t}Y_{i,j}^{t+1})=E[ (1-X_{i,j}^{t-1})X_{i,j}^{t}(1-X_{i,j}^{t}) X_{i,j}^{t+1}]=0$.
	For any $k\geq 2$, using the fact that 
	$
	E(X_{ij}^t X_{ij}^{t+k}) = {\alpha_{ij} \over
		(\alpha_{ij} + \beta_{ij})^2} \{ \beta_{ij} (1 - \alpha_{ij}
	- \beta_{ij})^k + \alpha_{ij} \}$,
	we have
	\begin{eqnarray*}
		E(Y_{i,j}^{t}Y_{i,j}^{t+k})&=&E[X_{i,j}^{t}(1-X_{i,j}^{t-1})(1-X_{i,j}^{t+k-1})I(\ve_{i,j}^{t+k}=1)]\\
		&=& \alpha_{i,j}E[X_{i,j}^{t}(1-X_{i,j}^{t-1})(1-X_{i,j}^{t+k-1})]\\
		&=&\alpha_{i,j} P(X_{i,j}^{t+k-1}=0|X_{i,j}^{t}=1) P(X_{i,j}^{t}=1| X_{i,j}^{t-1}=0)P(X_{i,j}^{t-1}=0) \\
		&=&\frac{ \alpha_{i,j}^2\beta_{i,j}  }{\alpha_{i,j}+\beta_{i,j} }[1-P(X_{i,j}^{t+k-1}=1|X_{i,j}^{t}=1)  ] \\
		&=&\frac{ \alpha_{i,j}^2\beta_{i,j}  }{\alpha_{i,j}+\beta_{i,j} }\left[1-\frac{E(X_{i,j}^{t+k-1}X_{i,j}^{t})}{ EX_{i,j}^{t}} \right] \\
		&=&\frac{ \alpha_{i,j}^2\beta_{i,j}  }{\alpha_{i,j}+\beta_{i,j} } \left[1-\frac{\beta_{i,j}(1-\alpha_{i,j}-\beta_{i,j})^{k-1}+\alpha_{i,j}}{\alpha_{i,j}+\beta_{i,j}}\right] \\
		&=&\frac{ \alpha_{i,j}^2\beta_{i,j}^2 [1-   (1- \alpha_{i,j}-\beta_{i,j})^{k-1}] }{(\alpha_{i,j}+\beta_{i,j})^2 }. 
	\end{eqnarray*}
	Therefore we have for any $k\geq 1$, 
	\begin{eqnarray*}
		{\rm Cov}(Y_{i,j}^{t},Y_{i,j}^{t+k})&=&E(Y_{i,j}^{t}Y_{i,j}^{t+k})-EY_{i,j}^tEY_{i,j}^{t+k}\\
		&=&\frac{ \alpha_{i,j}^2\beta_{i,j}^2 [1-    (1- \alpha_{i,j}-\beta_{i,j})^{k-1}] }{(\alpha_{i,j}+\beta_{i,j})^2 }-\frac{\alpha_{i,j}^2\beta_{i,j}^2 }{(\alpha_{i,j}+\beta_{i,j})^2} \\
		&=&- \frac{\alpha_{i,j}^2\beta_{i,j}^2 (1-\alpha_{i,j}-\beta_{i,j})^{k-1}   }{(\alpha_{i,j}+\beta_{i,j})^2}. 
	\end{eqnarray*}
	Consequently,   for any $|t-s|= 1,2, \ldots$, the ACF of the process $\{Y_{i,j}^t, t=1,2\ldots\}$ is given as: 
	\begin{eqnarray*}
		\rho_{Y_{i,j}}(|t-s|)&=&-\frac{\alpha_{i,j}^2\beta_{i,j}^2 (1-\alpha_{i,j}-\beta_{i,j})^{|t-s|-1}   }{(\alpha_{i,j}+\beta_{i,j})^2} \cdot\frac{(\alpha_{i,j}+\beta_{i,j})^2}{\alpha_{i,j}\beta_{i,j}(\alpha_{i,j}+\beta_{i,j}-\alpha_{i,j}\beta_{i,j})} \\
		&=&-\frac{ \alpha_{i,j}\beta_{i,j}(1-\alpha_{i,j}-\beta_{i,j})^{|t-s|-1}   }{\alpha_{i,j}+\beta_{i,j}-\alpha_{i,j}\beta_{i,j}}. 
	\end{eqnarray*}
\end{proof}

%	For the second statement,    
%similar to the definition of   ${\mathcal F}_a^b$,  we use  ${\mathcal G}_a^b$ to denote the $\sigma$-algebra generated by $\{Y_{i,j}^k, a\leq k \leq b\}$, and denote the $\rho$-mixing coefficient of  the $\{Y_{i,j}^k \}$ process as $\rho(T)$. 
%	notice that $Y_{i,j}^k=X_{i,j}^k(1-X_{i,j}^{k-1})$ is a binary and ${\mathcal F}_0^k$-measurable random variable,  and $Y_{i,j}^{k+\tau}=X_{i,j}^{k+\tau}(1-X_{i,j}^{k+\tau-1})$ is a binary and ${\mathcal F}_{k+\tau-1}^\infty$-measurable, %we have the $\rho$-coefficient 
%	we immediately have $\alpha_{Y,{i,j}}(\tau)\leq \alpha_{{i,j}}(\tau-1)<c_1(1-\alpha_{i,j}-\beta_{i,j})^{\tau-1}$. 
Since the mixing property is hereditary, $Y_{i,j}^t$ is also
$\alpha$-mixing. 
From Proposition \ref{amixing} and Theorem 1 of \cite{merlevede2009bernstein}, we obtain the following concentration inequalities:
\begin{lemma}\label{concentration}
	Let conditions (2.5) and C1 hold. There exist positive constants $C_1$ and $C_2$ such that  for all $n\geq 4$ and $\ve<\frac{1}{(\log n)(\log \log n)}$, 
	\begin{eqnarray*}
		P\left( \left|n^{-1}\sum_{t=1}^nX_{i,j}^t-EX_{i,j}^t\right|>\ve\right)\leq \exp\{-C_1n\ve^2\}, \\
		P\left( \left|n^{-1}\sum_{t=1}^nY_{i,j}^t-EY_{i,j}^t\right|>\ve\right)\leq \exp\{-C_2n\ve^2\}.
	\end{eqnarray*}
\end{lemma}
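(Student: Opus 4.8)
The plan is to derive both bounds from a Bernstein-type inequality for $\alpha$-mixing sequences, namely Theorem 1 of \cite{merlevede2009bernstein}, applied to the centered stationary sequences $\{X_{i,j}^t - EX_{i,j}^t\}$ and $\{Y_{i,j}^t - EY_{i,j}^t\}$.

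First I would assemble the hypotheses of that inequality. By Theorem \ref{thm1} the process $\{X_{i,j}^t\}$ is strictly stationary, and by Lemma \ref{Ystationary} so is $\{Y_{i,j}^t\}$; since $X_{i,j}^t, Y_{i,j}^t \in [0,1]$, the centered versions are uniformly bounded by $1$. By Lemma \ref{amixing} together with Condition C1, the $\alpha$-mixing coefficients of $\{X_{i,j}^t\}$ satisfy $\alpha^{i,j}(\tau) \le (1-l)^\tau = e^{-c\tau}$ for all $\tau \ge 1$, where $c = -\log(1-l) > 0$ depends only on $l$; in particular the mixing rate is geometric with a rate constant uniform over $(i,j) \in \calJ$. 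Since $Y_{i,j}^t = X_{i,j}^t(1 - X_{i,j}^{t-1})$ is a measurable function of the two consecutive coordinates $X_{i,j}^{t-1}, X_{i,j}^t$ and $\alpha$-mixing is hereditary under such maps, $\{Y_{i,j}^t\}$ is also $\alpha$-mixing with $\alpha_Y(\tau) \le \alpha^{i,j}(\tau-1)$, again geometrically decaying with a rate controlled only by $l$. Finally, the variance proxy $V^2 := \var(Z_1) + 2\sum_{k\ge 1}|\cov(Z_1, Z_{1+k})|$ is bounded by a constant $V_0$ depending only on $l$: for $Z_t = X_{i,j}^t - EX_{i,j}^t$ this follows from $\var(X_{i,j}^t) \le \tfrac{1}{4}$ and $|\cov(X_{i,j}^t, X_{i,j}^s)| = \var(X_{i,j}^t)\,\rho_{i,j}(|t-s|) \le \tfrac{1}{4}(1-l)^{|t-s|}$ by (\ref{b9}); for $Z_t = Y_{i,j}^t - EY_{i,j}^t$ it follows from the covariance expression in Lemma \ref{Ystationary}, which likewise decays geometrically with a rate controlled by $l$.

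Next I would invoke Theorem 1 of \cite{merlevede2009bernstein}: for a stationary, centered sequence bounded by $M$ with geometrically decaying $\alpha$-mixing coefficients there is a constant $C>0$, depending only on the mixing-rate constant (hence only on $l$), such that for all $n \ge 4$ and $\lambda > 0$,
\[
P\left( \left| \sum_{t=1}^n Z_t \right| \ge \lambda \right) \le \exp\left( -\frac{C\lambda^2}{nV^2 + M^2 + \lambda M (\log n)(\log\log n)} \right).
\]
Taking $M = 1$, $\lambda = n\varepsilon$ and using $V^2 \le V_0$, the denominator is at most $nV_0 + 1 + n\varepsilon(\log n)(\log\log n)$; the hypothesis $\varepsilon < \frac{1}{(\log n)(\log\log n)}$ (with $n \ge 4$, so $\log\log n$ is defined and positive) makes the last summand $< n$, hence the denominator is $\le (V_0+2)n$, and therefore $P(|n^{-1}\sum_t Z_t| > \varepsilon) \le \exp\{-C(V_0+2)^{-1} n\varepsilon^2\}$. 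Applying this with $Z_t = X_{i,j}^t - EX_{i,j}^t$ gives the first inequality with $C_1 = C/(V_0+2)$, and with $Z_t = Y_{i,j}^t - EY_{i,j}^t$ the second with an analogous $C_2$.

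The one point requiring care --- and the one I would emphasize --- is uniformity of $C_1$ and $C_2$ over $(i,j) \in \calJ$, since this is precisely what the union bounds used later in Theorems \ref{uniformCon} and \ref{CLT1} need. This is exactly what Condition C1 provides: it makes the geometric mixing rate $e^{-c\tau}$, the variance proxy $V_0$, and the uniform bound $M = 1$ all depend on the model only through $l$, so the constant supplied by \cite{merlevede2009bernstein} is the same for every edge. The remaining points --- the hereditary property of mixing for the $Y$-sequence, and checking that the assumed range of $\varepsilon$ is precisely what absorbs the logarithmic correction term of the cited bound into the leading $nV^2$ term --- are routine.
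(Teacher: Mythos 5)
Your proof is correct and follows exactly the route the paper takes: the paper derives Lemma \ref{concentration} by combining Lemma \ref{amixing} (geometric $\alpha$-mixing, uniform over $(i,j)$ under C1) with Theorem 1 of \cite{merlevede2009bernstein}, which is precisely the Bernstein-type inequality you invoke. Your elaboration of the variance proxy, the hereditary mixing of $\{Y_{i,j}^t\}$, and the role of the restriction $\ve<1/\{(\log n)(\log\log n)\}$ in absorbing the logarithmic correction term supplies the details the paper leaves implicit, and the uniformity of $C_1,C_2$ in $(i,j)$ is handled correctly.
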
	

Now we are ready to prove Proposition \ref{uniformCon}.

\noindent
{\bf Proof of Proposition \ref{uniformCon}:}

\noindent
Let $\ve = C\sqrt{\frac{\log p}{n}}$ with $C^2C_1>2$ and $C^2C_2>2$.  Note that under condition (C2) we have $\ve=o\big(\frac{1}{(\log n)(\log \log n)}\big)$. Consequently by Lemma \ref{concentration}, Proposition \ref{thm1} and Lemma \ref{Ystationary}, we have 
\begin{eqnarray*}\label{Thm:eq1}
	P\left( \left|n^{-1}\sum_{t=1}^nX_{i,j}^{t}-\frac{\alpha_{i,j}}{\alpha_{i,j}+\beta_{i,j}}\right|>C\sqrt{\frac{\log p}{n}}\right)\leq \exp\{-C^2C_1\log p\},
\end{eqnarray*}
\begin{eqnarray*}\label{Thm:eq2}
	P\left( \left|n^{-1}\sum_{t=1}^nY_{i,j}^{t}-\frac{\alpha_{i,j} \beta_{i,j} }{\alpha_{i,j}+\beta_{i,j}}\right|>C\sqrt{\frac{\log p}{n}}\right)\leq \exp\{-C^2C_2\log p\}.
\end{eqnarray*}

Consequently, with probability greater than $1-\exp\{-C^2C_1\log p\}-\exp\{-C^2C_2\log p\}$, 
\begin{eqnarray*}
	\frac{\frac{\alpha_{i,j} \beta_{i,j} }{\alpha_{i,j}+\beta_{i,j}}-C\sqrt{\frac{\log p}{n}}}{\frac{\beta_{i,j}}{\alpha_{i,j}+\beta_{i,j}}+\frac{1}{n}+C\sqrt{\frac{\log p}{n}}}\leq \wh{\alpha}_{i,j}
	\leq 
	\frac{\frac{\alpha_{i,j} \beta_{i,j} }{\alpha_{i,j}+\beta_{i,j}}+C\sqrt{\frac{\log p}{n}}}{\frac{\beta_{i,j}}{\alpha_{i,j}+\beta_{i,j}}-\frac{1}{n}-C\sqrt{\frac{\log p}{n}}}.
\end{eqnarray*}

Note that when $n$ and $\frac{n}{\log p}$ are large enough such that, $\frac{1}{n}\leq C  \sqrt{\frac{\log p}{n}}\leq l/4$, we have 
\[
\alpha_{i,j}-\frac{\frac{\alpha_{i,j} \beta_{i,j} }{\alpha_{i,j}+\beta_{i,j}}-C\sqrt{\frac{\log p}{n}}}{\frac{\beta_{i,j}}{\alpha_{i,j}+\beta_{i,j}}+\frac{1}{n}+C\sqrt{\frac{\log p}{n}}}  \leq \frac{2C\alpha_{i,j} \sqrt{\frac{\log p}{n}}+C\sqrt{\frac{\log p}{n}}}{\frac{\beta_{i,j}}{\alpha_{i,j}+\beta_{i,j}} }\leq 3l^{-1}C \sqrt{\frac{\log p}{n}},
\]
and 
\[
\frac{\frac{\alpha_{i,j} \beta_{i,j} }{\alpha_{i,j}+\beta_{i,j}}+C\sqrt{\frac{\log p}{n}}}{\frac{\beta_{i,j}}{\alpha_{i,j}+\beta_{i,j}}-\frac{1}{n}-C\sqrt{\frac{\log p}{n}}} -\alpha_{i,j}
\leq \frac{2C\alpha_{i,j} \sqrt{\frac{\log p}{n}}+C\sqrt{\frac{\log p}{n}}}{\frac{\beta_{i,j}}{\alpha_{i,j}+\beta_{i,j}} -\frac{l}{2}}\leq 6l^{-1}C \sqrt{\frac{\log p}{n}},
\]
Therefore we conclude that when when $n$ and $\frac{n}{\log p}$ are large enough,
\begin{eqnarray}\label{alphaCon}
P\left(  |\wh{\alpha}_{i,j}-\alpha_{i,j}|\geq 6l^{-1}C\sqrt{\frac{\log p}{n}}\right)\leq   \exp\{-C^2C_1\log p\}+ \exp\{-C^2C_2\log p\}. 
\end{eqnarray}
As a result, we have
\begin{equation*}
P\left( \max_{(i,j)\in {\mathcal J}}|\wh{\alpha}_{i,j}-\alpha_{i,j}|<6l^{-1}C\sqrt{\frac{\log p}{n}}\right)\geq 1-p^2 \exp\{-C^2C_1\log p\}- p^2 \exp\{-C^2C_2\log p\}\rightarrow 1. 
\end{equation*}
Consequently we have $ \max_{(i,j)\in {\mathcal J}}|\wh{\alpha}_{i,j}-\alpha_{i,j}| =O_p\left(\sqrt{\frac{\log p}{n}}\right)$. Convergence of $\wh{\beta}_{i,j} $ can be proved similarly.

\subsection{Proof of Proposition \ref{CLT1}}

Note that the log-likelihood function for $(\alpha_{i,j}, \beta_{i,j})$ is: 
\begin{align*}
l(\alpha_{i,j}, \beta_{i,j})& = \log(\alpha_{i,j}) \sum_{t=1}^n
X_{i,j}^t(1-X_{i,j}^{t-1})
+ \log(1-\alpha_{i,j}) \sum_{t=1}^n (1-X_{i,j}^t)(1-X_{i,j}^{t-1})\\
&
+ \log( \beta_{i,j}) \sum_{t=1}^n (1-X_{i,j}^t) X_{i,j}^{t-1}
+ \log(1-\beta_{i,j}) \sum_{t=1}^n 
X_{i,j}^tX_{i,j}^{t-1}.
\end{align*}
Our first observation is that, owing to the independent edge formation assumption, all the $(\wh{\alpha}_{i,j}, \wh{\beta}_{i,j}), (i,j)\in {\mathcal J}$ pairs are independent. 
For each pair $(\alpha_{i,j}, \beta_{i,j})$, the score equations of the log-likelihood function are: 
\begin{align*}
\frac{\partial  l(\alpha_{i,j}, \beta_{i,j})}{\partial \alpha_{{i,j}}}&=\frac{1}{\alpha_{i,j}}\sum_{t=1}^n
X_{i,j}^t(1-X_{i,j}^{t-1})-\frac{1}{1-\alpha_{i,j}} \sum_{t=1}^n (1-X_{i,j}^t)(1-X_{i,j}^{t-1}),   \\
&= \left( \frac{1}{\alpha_{{i,j}}}+\frac{1}{1-\alpha_{i,j}}\right) \sum_{t=1}^n Y_{i,j}^t-\frac{1}{1-\alpha_{i,j}}\sum_{t=1}^n(1-X_{i,j}^{t})+O(1),\\
\frac{\partial  l(\alpha_{i,j}, \beta_{i,j})}{\partial \beta_{{i,j}}}&=\frac{1}{\beta_{i,j}}\sum_{t=1}^n (1-X_{i,j}^t) X_{i,j}^{t-1}-\frac{1}{1-\beta_{i,j}} \sum_{t=1}^n 
X_{i,j}^tX_{i,j}^{t-1}\\
&=   \frac{1}{\beta_{{i,j}}}\sum_{t=1}^nX_{i,j}^{t-1}+  \left( \frac{1}{\beta_{{i,j}}}+\frac{1}{1-\beta_{i,j}}\right)\sum_{t=1}^n (Y_{i,j}^t-X_{i,j}^t )\\
&= \left( \frac{1}{\beta_{{i,j}}}+\frac{1}{1-\beta_{i,j}}\right)\sum_{t=1}^n Y_{i,j}^t -\frac{1}{1-\beta_{i,j}} \sum_{t=1}^n 
X_{i,j}^t+O(1).
\end{align*}
Clearly, for any $0<  \alpha_{i,j},\beta_{i,j}, \alpha_{i,j}+\beta_{i,j}\leq 1$,  $\left( \frac{1}{\alpha_{{i,j}}}+\frac{1}{1-\alpha_{i,j}}, \frac{1}{1-\alpha_{i,j}}\right) $ and $\left( \frac{1}{\beta_{{i,j}}}+\frac{1}{1-\beta_{i,j}}, \frac{-1}{1-\beta_{i,j}}\right)$ are linearly independent. On the other hand, from Proposition \ref{amixing}, Lemma \ref{concentration} and classical central limit theorems for weakly dependent sequences  \citep{bradley2007introduction, durrett2019probability}, we have $\frac{1}{\sqrt{n}}\sum_{t=1}^n Y_{i,j}^t$ and $\frac{1}{\sqrt{n}}\sum_{t=1}^n X_{i,j}^t$ and any of their nontrivial linear combinations are asymptotically normally distributed. Consequently, 
any nontrivial linear combination of 
$\frac{1}{\sqrt{n}}\frac{\partial  l(\alpha_{i,j}, \beta_{i,j})}{\partial \alpha_{{i,j}}}$, $(i,j)\in J_1$ and $\frac{1}{\sqrt{n}}\frac{\partial  l(\alpha_{i,j}, \beta_{i,j})}{\partial \beta_{{i,j}}}, (i,j)\in J_2$ converges to a normal distribution. By standard arguments for consistency of MLEs, we conclude that $(\sqrt{n}(\wh{\alpha}_{i,j}-\alpha_{i,j}), \sqrt{n}(\wh{\beta}_{i,j}-\beta_{i,j}))'$ converges to the normal distribution with mean ${\bf 0}$ and covariance matrix $I(\alpha_{i,j}, \beta_{i,j})^{-1}$, where 
$I(\alpha_{i,j}, \beta_{i,j})$  is the Fisher information matrix  given as: 
\begin{align*}
I(\alpha_{i,j}, \beta_{i,j})&=  \frac{1}{n}E
\begin{bmatrix}
\frac{\sum_{t=1}^n
	X_{i,j}^t(1-X_{i,j}^{t-1}) }{\alpha^2_{i,j}} +\frac{\sum_{t=1}^n (1-X_{i,j}^t)(1-X_{i,j}^{t-1}) }{(1-\alpha_{i,j})^2}   & 0\\
0 & \frac{\sum_{t=1}^n (1-X_{i,j}^t) X_{i,j}^{t-1}}{\beta_{i,j}^2}    +\frac{\sum_{t=1}^n 
	X_{i,j}^tX_{i,j}^{t-1}}{(1-\beta_{i,j})^2}  
\end{bmatrix} .
\end{align*}
Note that 
\begin{align*}
&\frac{1}{n}E\sum_{t=1}^n
X_{i,j}^t(1-X_{i,j}^{t-1}) =  \frac{1}{n}E\sum_{t=1}^n (1-X_{i,j}^t) X_{i,j}^{t-1} =	\frac{\alpha_{i,j} \beta_{i,j} }{\alpha_{i,j}+\beta_{i,j}},  \\
&\frac{1}{n}E\sum_{t=1}^n(1-X_{i,j}^t)(1-X_{i,j}^{t-1})=\frac{ \beta_{i,j} }{\alpha_{i,j}+\beta_{i,j}}-\frac{\alpha_{i,j} \beta_{i,j} }{\alpha_{i,j}+\beta_{i,j}}=\frac{(1-\alpha_{i,j}) \beta_{i,j} }{\alpha_{i,j}+\beta_{i,j}},\\
&\frac{1}{n}E\sum_{t=1}^nX_{i,j}^tX_{i,j}^{t-1}=\frac{\alpha_{i,j} (1-\beta_{i,j}) }{\alpha_{i,j}+\beta_{i,j}}.
\end{align*}
We thus have 
\begin{align*}
I(\alpha_{i,j}, \beta_{i,j})&=  
\begin{bmatrix}
\frac{\beta_{i,j}}{\alpha_{i,j}(\alpha_{i,j}+\beta_{i,j})}+\frac{\beta_{i,j}}{(\alpha_{i,j}+\beta_{i,j})(1-\alpha_{i,j})} & 0\\
0 & \frac{\alpha_{i,j}}{\beta_{i,j}(\alpha_{i,j}+\beta_{i,j})}  + \frac{\alpha_{i,j}}{(1-\beta_{i,j})(\alpha_{i,j}+\beta_{i,j})}  
\end{bmatrix}  \\
&=\begin{bmatrix}
\frac{\beta_{i,j}}{\alpha_{i,j}(\alpha_{i,j}+\beta_{i,j})(1-\alpha_{i,j})} & 0\\
0 & \frac{\alpha_{i,j}}{\beta_{i,j}(\alpha_{i,j}+\beta_{i,j})(1-\beta_{i,j})}
\end{bmatrix} .
\end{align*}
Consequently, we have 
\begin{align*}
\begin{bmatrix}
\sqrt{n}(\wh{\alpha}_{i,j}-\alpha_{i,j}) \\ 
\sqrt{n}(\wh{\beta}_{i,j}-\beta_{i,j})
\end{bmatrix} 
\rightarrow N\Bigg({\bf 0}, 
\begin{bmatrix}
\frac{\alpha_{i,j}(\alpha_{i,j}+\beta_{i,j})(1-\alpha_{i,j})}{\beta_{i,j}} & 0\\
0 & \frac{\beta_{i,j}(\alpha_{i,j}+\beta_{i,j})(1-\beta_{i,j})}{\alpha_{i,j}}
\end{bmatrix}
\Bigg).
\end{align*}
This together with the independence among the $(\wh{\alpha}_{i,j}, \wh{\beta}_{i,j}), (i,j)\in {\mathcal J}$ pairs proves the proposition.

\subsection{Proof of Proposition \ref{OracleSC}}

Denote $\N= \diag\{\sqrt{s_1},\ldots, \sqrt{s_q}\}$. 
Note that
\begin{eqnarray*}
	\L&=&\D^{-1/2}_1 \bZ\bOmega_1\bZ^\top \D^{-1/2}_1+\D^{-1/2}_2 \bZ\bOmega_2\bZ^\top \D^{-1/2}_2\\
	&=&  \bZ\wt\D^{-1/2}_1\bOmega_1\wt\D^{-1/2}_1\bZ^\top  +  \bZ\wt\D^{-1/2}_2\bOmega_2\wt\D^{-1/2}_2\bZ^\top  \\
	&=& \bZ(\wt\bOmega_1+\wt\bOmega_2) \bZ^\top\\
	&=& (\bZ  \N^{-1}) \N \wt\bOmega\N  (\bZ  \N^{-1})^\top.
\end{eqnarray*}
Note that  the columns of $\bZ\N^{-1}$ are orthonormal, we thus have $rank(\L)=q$. Let $\bQ\bLambda\bQ^\top=\N\wt\bOmega\N$ be the eigen-decomposition of $\N\wt\bOmega\N $,  we immediately have $\L= (\bZ  \N^{-1}) \bQ\bLambda\bQ^\top (\bZ  \N^{-1})^\top$. Again, since the columns of $\bZ\N^{-1}$ are orthonormal, we conclude that $\bGamma_q=\bZ  \N^{-1}\bQ$, and $\bU=\N^{-1}\bQ$. On the other hand, note that $\bU$ is invertible, we conclude that $\bz_{i,\cdot}\bU=\bz_{j,\cdot}\bU$ and $\bz_{i,\cdot}=\bz_{j,\cdot}$ are equivalent.

\subsection{Proof of Theorem \ref{DKthm}}

The key step is to establish an upper bound for the Frobenius norm  $\|\wh\bL\wh\bL-\bL\bL\|_F$, and the theorem can be proved by Weyl's inequality and the Davis-Kahan theorem. 
We first introducing some technical lemmas.

\begin{lemma}\label{cov}
	Under the assumptions of Proposition \ref{thm1}, we have, there exists a constant $C_l>0$ such that   
	\begin{eqnarray*}
		&&Cov\left(\sum_{t=1}^nY_{i,j}^t, \sum_{t=1}^n (1-X_{i,j}^{t-1}) \right) =  -	 Cov\left(\sum_{t=1}^nY_{i,j}^t, \sum_{t=1}^nX_{i,j}^{t-1}\right)  
		\\
		&=&
		\frac{n\alpha_{i,j}\beta_{i,j} [2\alpha_{i,j}(1-\beta_{i,j})+\alpha_{{i,j}}+\beta_{{i,j}}-2\beta_{i,j}^2]}{(\alpha_{i,j}+\beta_{i,j})^3  } 
		+C_{i,j},
	\end{eqnarray*}
	with $|C_{i,j}|\leq C_l$ for any $C_{i,j}, (i,j)\in {\cal J}$.
\end{lemma}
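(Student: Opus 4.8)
The plan is to reduce everything to a single covariance, $\cov\big(\sum_{t=1}^n Y_{i,j}^t,\ \sum_{t=1}^n X_{i,j}^{t-1}\big)$, and to expand it as a double sum of lagged covariances. The first equality in the statement is immediate from $\sum_{t=1}^n(1-X_{i,j}^{t-1})=n-\sum_{t=1}^n X_{i,j}^{t-1}$, so I focus on the second. Dropping the subscript $(i,j)$ and writing $\mu=\alpha/(\alpha+\beta)$, $\rho=1-\alpha-\beta\in[0,1)$, reindex $\sum_t X^{t-1}=\sum_{u=0}^{n-1}X^u$ to get $\cov\big(\sum_{t=1}^n Y^t,\sum_{u=0}^{n-1}X^u\big)=\sum_{t=1}^n\sum_{u=0}^{n-1}\cov(Y^t,X^u)$. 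I will use the facts already established earlier: $EY^t=\alpha\beta/(\alpha+\beta)$ and ${\rm Var}(Y^t)$ from Lemma \ref{Ystationary}, and $E(X^tX^{t+k})=\frac{\alpha}{(\alpha+\beta)^2}\{\beta\rho^k+\alpha\}$ from the proof of Theorem \ref{thm1}, together with $P(X^{t+k}=1\mid X^t=1)=\mu+(1-\mu)\rho^k$.

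The core step is to evaluate $\cov(Y^t,X^u)$ in two regimes. For $u\ge t$: since $Y^t=(1-X^{t-1})I(\ve^t=1)$ is measurable with respect to $\mathcal{F}^t=\sigma(X^0,\ldots,X^t,\ve^1,\ldots,\ve^t)$, and on the event $\{Y^t\neq 0\}$ one has $X^t=1$, the Markov property gives $E(Y^tX^u)=E\big(Y^t\,P(X^u=1\mid X^t)\big)=E(Y^t)\,P(X^u=1\mid X^t=1)=E(Y^t)\big(\mu+(1-\mu)\rho^{u-t}\big)$, whence $\cov(Y^t,X^u)=E(Y^t)(1-\mu)\rho^{u-t}=\frac{\alpha\beta^2}{(\alpha+\beta)^2}\rho^{u-t}$. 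For $u\le t-1$: since $\ve^t$ is independent of $\mathcal{F}^{t-1}$ (hence of $X^u$ and $X^{t-1}$), $E(Y^tX^u)=\alpha\,E[(1-X^{t-1})X^u]=\alpha\big(\mu-E(X^uX^{t-1})\big)=\frac{\alpha^2\beta}{(\alpha+\beta)^2}(1-\rho^{t-1-u})$, so $\cov(Y^t,X^u)=-\frac{\alpha^2\beta}{(\alpha+\beta)^2}\rho^{t-1-u}$. (The endpoint cases $u=t$ and $u=t-1$ are checked directly from $X^t(1-X^{t-1})X^t=Y^t$ and $X^t(1-X^{t-1})X^{t-1}=0$.)

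Summing, and using $1-\rho=\alpha+\beta$ to collapse the finite geometric series,
\begin{align*}
\sum_{t=1}^{n-1}\sum_{u=t}^{n-1}\rho^{u-t}&=\frac{1}{\alpha+\beta}\Big((n-1)-\sum_{m=1}^{n-1}\rho^m\Big),\\
\sum_{t=1}^{n}\sum_{u=0}^{t-1}\rho^{t-1-u}&=\frac{1}{\alpha+\beta}\Big(n-\sum_{m=1}^{n}\rho^m\Big),
\end{align*}
so the total is $\frac{n\alpha\beta^2}{(\alpha+\beta)^3}-\frac{n\alpha^2\beta}{(\alpha+\beta)^3}+R_{i,j}=-\frac{n\alpha\beta(\alpha-\beta)}{(\alpha+\beta)^3}+R_{i,j}$, where $R_{i,j}$ collects the geometric-tail remainders. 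Since $R_{i,j}$ is $\frac{1}{\alpha+\beta}$ times sums of $\rho^m$, $|R_{i,j}|\le \frac{\alpha\beta^2+\alpha^2\beta}{(\alpha+\beta)^3}\cdot\frac{1}{1-\rho}=\frac{\alpha\beta}{(\alpha+\beta)^4}\le 1$; under Condition C1 ($\alpha,\beta\ge l$, so $\alpha+\beta\ge 2l$) this is bounded by a constant depending only on $l$. Flipping sign back to $\sum_t(1-X^{t-1})$ gives $\cov\big(\sum_t Y^t,\sum_t(1-X^{t-1})\big)=\frac{n\alpha\beta(\alpha-\beta)}{(\alpha+\beta)^3}+C_{i,j}$ with $C_{i,j}=-R_{i,j}$, and we set $C_l$ to be the uniform bound just obtained.

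The only genuinely delicate point is the conditioning argument for $u\ge t$: one must observe that although $Y^t$ depends on both $X^{t-1}$ and $\ve^t$, it vanishes unless $X^t=1$, so the tower property reduces $E(Y^tX^u)$ to $E(Y^t)\,P(X^u=1\mid X^t=1)$ via the Markov property; the rest is bookkeeping with finite geometric sums and the moment identities already proved. I would also explicitly note that the uniform bound $|C_{i,j}|\le C_l$ requires Condition C1 rather than merely the hypotheses of Theorem \ref{thm1}, since it needs $\alpha+\beta$ bounded away from $0$.
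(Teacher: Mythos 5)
Your proof is correct, and it reaches the stated expansion by a genuinely different decomposition from the paper's. The paper expands $-\cov(\sum_t Y_{i,j}^t,\sum_s X_{i,j}^{s-1})$ into two- and three-point moments of $X$, i.e.\ it writes $E\{(1-X^{t-1})X^tX^{s-1}\}=E(X^tX^{s-1})-E(X^{t-1}X^tX^{s-1})$ and evaluates the triple moments case by case via the Markov property, which leads to weighted geometric sums of the form $\sum_h(n-h)x^h$. You instead compute the lag-by-lag cross-covariance $\cov(Y^t,X^u)$ in closed form — $\frac{\alpha\beta^2}{(\alpha+\beta)^2}\rho^{u-t}$ for $u\ge t$ (using that $Y^t\neq 0$ forces $X^t=1$, so the tower/Markov argument factorizes) and $-\frac{\alpha^2\beta}{(\alpha+\beta)^2}\rho^{t-1-u}$ for $u\le t-1$ (using $\ve^t\perp\mathcal{F}^{t-1}$) — and then sum plain geometric series. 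This avoids triple moments entirely, makes the origin of the $n$-linear term and the sign structure $\alpha\beta^2-\alpha^2\beta$ transparent, and reduces the remainder to geometric tails; the paper's route has the advantage of reusing only the moment identities for $X$ already derived in Theorem 1. Two cosmetic points: your remainder bound has an algebra slip, $\frac{\alpha\beta^2+\alpha^2\beta}{(\alpha+\beta)^3}\cdot\frac{1}{\alpha+\beta}=\frac{\alpha\beta}{(\alpha+\beta)^3}$ rather than $\frac{\alpha\beta}{(\alpha+\beta)^4}$, and the intermediate claim that this is $\le 1$ fails without a lower bound on $\alpha+\beta$ — but your eventual appeal to Condition C1 for the uniform bound is exactly right (the paper's own proof does the same, despite the lemma's header citing only the hypotheses of Theorem 1); also your first double sum runs over $t\le n-1$, so an extra boundary term of size $\frac{\alpha\beta^2}{(\alpha+\beta)^3}$ must be absorbed into $R_{i,j}$, which is harmless. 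Neither point affects correctness.
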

\begin{proof}
	In the following we shall be using the fact that for any $0\leq x<1 $,
	$\sum_{h=1}^{n-1}x^{h-1}=\frac{1-x^n}{1-x}=\frac{1}{1-x}+o(1)$, and
	$\sum_{h=1}^{n-1}hx^{h-1}=\frac{1-x^n-n(1-x)x^{n-1}}{(1-x)^2} =O(1)$. In particular, when $x=1-\alpha_{i,j}-\beta_{i,j}$, under condition C1, we have  $2l\leq 1-x<1$, the $O(1)$ term in will become bounded uniformly for any $(i,j)\in {\cal J}$.
	In what follows, with some abuse of notation, we shall use  $O_l(1)$  to denote a generic constant term with magnitude bounded by a large enough constant $C_l$ that depends on $l$ only.  
	\begin{eqnarray}\label{cov:eq1}
	&&	  Cov\left(\sum_{t=1}^nY_{i,j}^t, \sum_{t=1}^n (1-X_{i,j}^{t-1}) \right)  
	=   -	 Cov\left(\sum_{t=1}^nY_{i,j}^t, \sum_{t=1}^nX_{i,j}^{t-1}\right)   \nonumber\\
	%	&=& -	 \sum_{t=1}^n\sum_{s=1}^nCov(Y_{i,j}^t, X_{i,j}^{s-1})\\
	&=& -  \sum_{t=1}^n\sum_{s=1}^n  \left[E(1-X_{i,j}^{t-1})X_{i,j}^tX_{i,j}^{s-1}-
	\frac{\alpha_{i,j} \beta_{i,j}}{\alpha_{i,j}+\beta_{i,j}}\cdot\frac{\alpha_{i,j}}{\alpha_{i,j}+\beta_{i,j}}\right] \nonumber\\
	&=& -  \sum_{t=1}^n\sum_{s=1}^n  \bigg\{   \frac{\alpha_{i,j}}{(\alpha_{i,j}+\beta_{i,j})^2}\Big[\beta_{i,j}(1-\alpha_{i,j}-\beta_{i,j})^{|t-s+1|}+\alpha_{i,j}\Big] 
	-\frac{\alpha_{i,j}^2 \beta_{i,j}}{(\alpha_{i,j}+\beta_{i,j})^2} 
	\bigg\} \nonumber\\
	&&+\sum_{t=1}^n\sum_{s=1}^n  E(X_{i,j}^{t-1}X_{i,j}^tX_{i,j}^{s-1}) \nonumber\\
	&=& -  \sum_{t=1}^n\sum_{s=1}^n    \frac{\alpha_{i,j}\beta_{i,j}(1-\alpha_{i,j}-\beta_{i,j})^{|t-s+1|}   }{(\alpha_{i,j}+\beta_{i,j})^2} -\frac{n^2\alpha_{i,j}^2 (1-\beta_{i,j})}{(\alpha_{i,j}+\beta_{i,j})^2} 
	\nonumber \\
	&&+ (2n-1)E(X_{i,j}^{t-1}X_{i,j}^t )+ \sum_{s<t}  E(X_{i,j}^{t-1}X_{i,j}^tX_{i,j}^{s-1})  
	+\sum_{s>t+1}  E(X_{i,j}^{t-1}X_{i,j}^tX_{i,j}^{s-1}).
	\end{eqnarray}
	For the first three terms on the right hand side of  \eqref{cov:eq1}, we have 
	\begin{eqnarray*}
		&& -  \sum_{t=1}^n\sum_{s=1}^n    \frac{\alpha_{i,j}\beta_{i,j}(1-\alpha_{i,j}-\beta_{i,j})^{|t-s+1|}   }{(\alpha_{i,j}+\beta_{i,j})^2} -\frac{n^2\alpha_{i,j}^2 (1-\beta_{i,j})}{(\alpha_{i,j}+\beta_{i,j})^2} 
		+ (2n-1)E(X_{i,j}^{t-1}X_{i,j}^t )  \\
		&=&-\frac{\alpha_{i,j}\beta_{i,j}}{(\alpha_{i,j}+\beta_{i,j})^2  }
		\left[ n+ \frac{2n(1-\alpha_{i,j}-\beta_{i,j})}{\alpha_{i,j}+\beta_{i,j}}\right] 
		-  \frac{n^2\alpha_{i,j}^2 (1-\beta_{i,j})}{(\alpha_{i,j}+\beta_{i,j})^2}  \\
		&&+ \frac{2n\alpha_{i,j}\Big[\beta_{i,j}(1-\alpha_{i,j}-\beta_{i,j})   +\alpha_{i,j}\Big]}{(\alpha_{i,j}+\beta_{i,j})^2}     +O_l(1) \\
		&=& \frac{3n\alpha_{i,j}\beta_{i,j} }{(\alpha_{i,j}+\beta_{i,j})^2  } -   \frac{2n\alpha_{i,j}\beta_{i,j}   }{(\alpha_{i,j}+\beta_{i,j})^3   }  
		-\frac{2n\alpha_{i,j}\beta_{i,j}  }{\alpha_{i,j}+\beta_{i,j}   }  
		+\frac{2n\alpha_{i,j}^2   }{(\alpha_{i,j}+\beta_{i,j})^2   } 
		-  \frac{n^2\alpha_{i,j}^2 (1-\beta_{i,j})}{(\alpha_{i,j}+\beta_{i,j})^2}  +O_l(1).
	\end{eqnarray*}
	For the last two terms on the right hand side of  \eqref{cov:eq1}, we have 
	\begin{eqnarray*}
		&& \sum_{s<t}  E(X_{i,j}^{t-1}X_{i,j}^tX_{i,j}^{s-1})  
		+\sum_{s>t+1}  E(X_{i,j}^{t-1}X_{i,j}^tX_{i,j}^{s-1})\\
		&=&\sum_{s<t} P(X_{i,j}^t=1|X^{t-1}_{i,j}=1)P(X^{t-1}_{i,j}=1, X_{i,j}^{s-1}=1) \\
		&&+\sum_{s>t+1}  P(X_{i,j}^{s-1}=1|X^{t}_{i,j}=1)P(X^{t}_{i,j}=1, X_{i,j}^{t-1}=1)\\
		&=& 	  (1-\beta_{i,j})\sum_{s<t}   E(X_{i,j}^{t-1}X_{i,j}^{s-1})
		+(1-\beta_{i,j})\sum_{s>t+1}  E(X_{i,j}^{s-1}X_{i,j}^{t}) \\ 
		&=&\frac{(1-\beta_{i,j})\alpha_{i,j}}{(\alpha_{i,j}+\beta_{i,j})^2}\sum_{h=1}^{n-1} (n-h)[\beta_{i,j}(1-\alpha_{i,j}-\beta_{i,j})^h+\alpha_{i,j}] \\
		&&+\frac{(1-\beta_{i,j})\alpha_{i,j}}{(\alpha_{i,j}+\beta_{i,j})^2}\sum_{h=2}^{n-1} (n-h)[\beta_{i,j}(1-\alpha_{i,j}-\beta_{i,j})^{h-1}+\alpha_{i,j}]  \\
		&=&\frac{(n-1)^2\alpha_{i,j}^2 (1-\beta_{i,j})}{(\alpha_{i,j}+\beta_{i,j})^2} + 
		\frac{2n(1-\beta_{i,j})\alpha_{i,j}\beta_{i,j} }{(\alpha_{i,j}+\beta_{i,j})^3}+O_l(1).
	\end{eqnarray*}
	Consequently, we have 
	\begin{eqnarray*} 
		&&	  Cov\left(\sum_{t=1}^nY_{i,j}^t, \sum_{t=1}^n (1-X_{i,j}^{t-1}) \right)  
		=   -	 Cov\left(\sum_{t=1}^nY_{i,j}^t, \sum_{t=1}^nX_{i,j}^{t-1}\right)  \\
		&=&\frac{3n\alpha_{i,j}\beta_{i,j} }{(\alpha_{i,j}+\beta_{i,j})^2  } -   \frac{2n\alpha_{i,j}\beta_{i,j}   }{(\alpha_{i,j}+\beta_{i,j})^3   }  
		-\frac{2n\alpha_{i,j}\beta_{i,j}  }{\alpha_{i,j}+\beta_{i,j}   }  
		+\frac{2n\alpha_{i,j}^2   }{(\alpha_{i,j}+\beta_{i,j})^2   } 
		-  \frac{n^2\alpha_{i,j}^2 (1-\beta_{i,j})}{(\alpha_{i,j}+\beta_{i,j})^2} \\
		&&+\frac{(n-1)^2\alpha_{i,j}^2 (1-\beta_{i,j})}{(\alpha_{i,j}+\beta_{i,j})^2} + 
		\frac{2n(1-\beta_{i,j})\alpha_{i,j}\beta_{i,j} }{(\alpha_{i,j}+\beta_{i,j})^3}+O_l(1) \\
		&=& \frac{3n\alpha_{i,j}\beta_{i,j} }{(\alpha_{i,j}+\beta_{i,j})^2  } -   \frac{2n\alpha_{i,j}\beta_{i,j}   }{(\alpha_{i,j}+\beta_{i,j})^3   }  
		-\frac{2n\alpha_{i,j}\beta_{i,j}  }{\alpha_{i,j}+\beta_{i,j}   }  
		+\frac{2n\alpha_{i,j}^2 \beta_{i,j} }{(\alpha_{i,j}+\beta_{i,j})^2   }  \\
		&& +\frac{2n(1-\beta_{i,j})\alpha_{i,j}\beta_{i,j} }{(\alpha_{i,j}+\beta_{i,j})^3}+O_l(1) \\
		&=&\frac{3n\alpha_{i,j}\beta_{i,j} }{(\alpha_{i,j}+\beta_{i,j})^2  } 
		- \frac{2n\alpha_{i,j}\beta_{i,j}^2 (1+\alpha_{i,j}+\beta_{i,j})}{(\alpha_{i,j}+\beta_{i,j})^3  } +O_l(1).
	\end{eqnarray*}
	This proves the lemma. 
\end{proof}

\begin{lemma}\label{bias}
	{ {\rm{(Bias of $\wh{\alpha}_{i,j}$ and $\wh{\beta}_{i,j}$) } } } Ket conditions C1, C2 and the  assumptions of Proposition \ref{thm1} hold. We have 
	\[
	E\wh{\alpha}_{i,j} -\alpha_{i,j} =-\frac{ \alpha_{i,j}  [2\alpha_{i,j}(1-\beta_{i,j})+\alpha_{{i,j}}+\beta_{{i,j}}-2\beta_{i,j}^2]}{n(\alpha_{i,j}+\beta_{i,j}) \beta_{i,j}  }  +\frac{R_{i,j}^{(1)}}{n}+O(n^{-2}), \]
	\[E\wh{\beta}_{i,j} -\beta_{i,j} =
	\frac{\beta_{i,j}    [2\alpha_{i,j}(1-\beta_{i,j})+\alpha_{{i,j}}+\beta_{{i,j}}-2\beta_{i,j}^2]}{n(\alpha_{i,j}+\beta_{i,j}) \alpha_{i,j}  } +\frac{R_{i,j}^{(2)}}{n}+O(n^{-2}),
	\]
	where $R_{i,j}^{(1)}$ and $R_{i,j}^{(2)}$ are constants such that when $n$ is large enough we have $0\leq R_{i,j}^{(1)}, R_{i,j}^{(2)}\leq R_l$ for some constant $R_l$ and all $(i,j)\in {\cal J}$.
\end{lemma}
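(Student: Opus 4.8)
The plan is to treat $\wh\alpha_{i,j}$ as a ratio statistic and expand its expectation by the second-order delta method. Write $\wh\alpha_{i,j}=A_{i,j}/B_{i,j}$ with numerator $A_{i,j}=\sum_{t=1}^{n}Y_{i,j}^{t}$ (in the notation of Lemma \ref{Ystationary}) and denominator $B_{i,j}=\sum_{t=1}^{n}(1-X_{i,j}^{t-1})$, and set $a=E A_{i,j}=n\alpha_{i,j}\beta_{i,j}/(\alpha_{i,j}+\beta_{i,j})$ and $b=E B_{i,j}=n\beta_{i,j}/(\alpha_{i,j}+\beta_{i,j})$, so that $a/b=\alpha_{i,j}$. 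With $U=A_{i,j}-a$ and $V=B_{i,j}-b$ (both mean zero),
\[
\frac{A_{i,j}}{B_{i,j}}-\alpha_{i,j}=\frac{Ub-aV}{b^{2}}\,\Big(1-\frac{V}{b}+\frac{V^{2}}{b^{2}}-\cdots\Big),
\]
and since $EU=EV=0$ the leading ($O(1)$) term drops out, leaving $E\wh\alpha_{i,j}-\alpha_{i,j}=-\cov(A_{i,j},B_{i,j})/b^{2}+a\,\var(B_{i,j})/b^{3}+(\text{remainder})$.

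The two moments on the right are then supplied by results already available. The covariance $\cov(A_{i,j},B_{i,j})=\cov\big(\sum_t Y_{i,j}^t,\sum_t(1-X_{i,j}^{t-1})\big)$ is exactly the quantity evaluated in Lemma \ref{cov}, equal to $n\alpha_{i,j}\beta_{i,j}(\alpha_{i,j}-\beta_{i,j})/(\alpha_{i,j}+\beta_{i,j})^{3}+O_l(1)$. For the variance, $\var(B_{i,j})=\var\big(\sum_{s=0}^{n-1}X_{i,j}^{s}\big)$ is computed directly from the autocovariances of Theorem \ref{thm1}, namely $\var(X_{i,j}^{t})=\alpha_{i,j}\beta_{i,j}/(\alpha_{i,j}+\beta_{i,j})^{2}$ and $\cov(X_{i,j}^{t},X_{i,j}^{t+h})=\var(X_{i,j}^{t})(1-\alpha_{i,j}-\beta_{i,j})^{|h|}$, together with the geometric-sum identity $\sum_{h=1}^{n-1}(n-h)x^{h}=nx/(1-x)+O_l(1)$, which holds uniformly in $(i,j)$ because $1-x=\alpha_{i,j}+\beta_{i,j}\ge 2l$ under C1; this gives $\var(B_{i,j})=n\alpha_{i,j}\beta_{i,j}(2-\alpha_{i,j}-\beta_{i,j})/(\alpha_{i,j}+\beta_{i,j})^{3}+O_l(1)$. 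Substituting $a$ and $b$ and collecting the $O(n^{-1})$ contributions of $-\cov(A_{i,j},B_{i,j})/b^{2}$ and $a\,\var(B_{i,j})/b^{3}$, a short algebraic simplification yields the first displayed identity; since every constant produced along the way is a bounded continuous function of $(\alpha_{i,j},\beta_{i,j})$ on the compact set $\{\alpha,\beta\ge l,\;\alpha+\beta\le 1\}$, the remainder has the stated form $R_{i,j}^{(1)}/n$ with $R_l$ uniform over $(i,j)\in\calJ$.

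The remaining work — and the main obstacle — is to legitimize this expansion, since $B_{i,j}$ can in principle be as small as $0$. I would restrict to the event $G_{i,j}=\{|V|\le b/2\}$, on which $|V/b|\le 1/2$ and the geometric series converges with tail $O((V/b)^{2})$; by the concentration inequality of Lemma \ref{concentration} (itself a consequence of the $\alpha$-mixing bound of Lemma \ref{amixing}), $P(G_{i,j}^{c})\le 2\exp\{-cn\}$ for some $c>0$ depending only on $l$. On $G_{i,j}$ one keeps finitely many terms of the expansion (through cubic order suffices): the cubic term and the truncation residual are each $O_l(n^{-2})$, using the Rosenthal-type bounds $E|U|^{k},E|V|^{k}=O_l(n^{k/2})$ for partial sums of bounded, geometrically $\alpha$-mixing sequences and the $O_l(n)$ size of their third-order joint moments. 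On $G_{i,j}^{c}$ one uses $0\le\wh\alpha_{i,j}\le 1$, which holds because $0\le Y_{i,j}^{t}\le 1-X_{i,j}^{t-1}$ forces $A_{i,j}\le B_{i,j}$ whenever $B_{i,j}>0$, with $\{B_{i,j}=0\}\subset G_{i,j}^{c}$ negligible, so this piece contributes at most $P(G_{i,j}^{c})=o(n^{-2})$. Finally, $\wh\beta_{i,j}$ is handled by the symmetric argument after noting $\sum_{t}(1-X_{i,j}^{t})X_{i,j}^{t-1}=A_{i,j}+X_{i,j}^{0}-X_{i,j}^{n}$ and $\sum_t X_{i,j}^{t-1}=\sum_{s=0}^{n-1}X_{i,j}^s$, so that the same Lemma \ref{cov} (which gives the covariances of $\sum_t Y_{i,j}^t$ against both $\sum_t(1-X_{i,j}^{t-1})$ and $\sum_t X_{i,j}^{t-1}$) and $\var\big(\sum_t X_{i,j}^{t-1}\big)=\var(B_{i,j})$ feed an identical delta-method computation, yielding the second displayed identity.
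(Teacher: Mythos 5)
Your skeleton --- expand the ratio about the mean of its denominator, feed Lemma \ref{cov} into the linear term, and control the remainder on a high-probability event where the denominator stays near its mean --- is essentially the paper's route (the paper expands $\{1-n^{-1}\sum_t X_{i,j}^{t-1}\}^{-1}$ around $(1-\pi_{i,j})^{-1}$ and bounds the $k\ge 2$ tail of the series directly), and your explicit truncation to $G_{i,j}$ is if anything more careful about the legitimacy of the geometric series than the paper is. However, the step where you actually produce the displayed identity does not work as written. With the moments you quote, $-\cov(A_{i,j},B_{i,j})/b^{2}=-\alpha_{i,j}(\alpha_{i,j}-\beta_{i,j})/\{n\beta_{i,j}(\alpha_{i,j}+\beta_{i,j})\}+O(n^{-2})$ while $a\,\var(B_{i,j})/b^{3}=\alpha_{i,j}^{2}(2-\alpha_{i,j}-\beta_{i,j})/\{n\beta_{i,j}(\alpha_{i,j}+\beta_{i,j})\}+O(n^{-2})$; their sum is $\alpha_{i,j}(1-\alpha_{i,j})/(n\beta_{i,j})+O(n^{-2})$, which is \emph{not} of the form $\alpha_{i,j}(\alpha_{i,j}-\beta_{i,j})/\{n(\alpha_{i,j}+\beta_{i,j})\beta_{i,j}\}$, and your remainder is only $O(n^{-2})$, so there is nothing left to absorb the difference. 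The resolution is that the variance term does not belong to the displayed leading term at all: in the lemma's decomposition $R^{(1)}_{i,j}/n$ is by construction the entire $k\ge 2$ tail of the expansion, whose leading part is exactly $a\,\var(B_{i,j})/b^{3}$. You must therefore keep only the covariance term as the explicit part and set $R^{(1)}_{i,j}:=n\{a\,\var(B_{i,j})/b^{3}+\mbox{higher-order remainder}\}$ --- and when you do track the covariance term through Lemma \ref{cov}, check its sign against the display carefully, since $\cov(A_{i,j},B_{i,j})$ there is the covariance with $\sum_t(1-X_{i,j}^{t-1})$, not with $\sum_t X_{i,j}^{t-1}$.

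The second omission is that the lemma asserts $0\le R^{(1)}_{i,j},R^{(2)}_{i,j}\le R_l$ for large $n$, and you only argue boundedness and uniformity, never nonnegativity. Once the decomposition is corrected this is easy: $n\,a\,\var(B_{i,j})/b^{3}=\alpha_{i,j}^{2}(2-\alpha_{i,j}-\beta_{i,j})/\{\beta_{i,j}(\alpha_{i,j}+\beta_{i,j})\}+O(n^{-1})\ge l^{2}+o(1)$ under C1, which dominates your $O(n^{-1})$ remainder uniformly over $(i,j)\in\calJ$ --- but it has to be said. (The paper obtains positivity for free by writing the series tail in Lagrange form as $A_{i,j}\,(x-\pi_{i,j})^{2}/(1-r)^{3}\ge 0$.) Your moment bounds for the remainder --- in particular using the $O_l(n)$ size of third-order joint moments of geometrically mixing bounded sums rather than the cruder Cauchy--Schwarz bound --- are the right tool and do deliver the needed $O(n^{-2})$ rate, and the reduction of $\wh\beta_{i,j}$ to the same computation via $\sum_t(1-X_{i,j}^t)X_{i,j}^{t-1}=A_{i,j}+X_{i,j}^{0}-X_{i,j}^{n}$ matches the paper's treatment.
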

\begin{proof}
	From Lemma \ref{concentration} we have, under Condition C2,  the event $\{ |n^{-1}\sum_{t=1}^n X_{i,j}^{t-1}-\pi_{i,j}|\le (1-\pi_{i,j})/2, 1\le i,j\le p\}$ holds with probability larger than $1-O(n^{-2})$.  Denote ${\cal I}:=I(|n^{-1}\sum_{t=1}^n X_{i,j}^{t-1}-\pi_{i,j}|\le (1-\pi_{i,j})/2, 1\le i,j\le p)$.	
	By expanding $\frac{1}{1-n^{-1}\sum_{t=1}^n X_{i,j}^{t-1}}$ around $\frac{1}{1-\pi_{i,j}}$, 
	we have %, there exists a random scaler $R_{i,j}\in [n^{-1}\sum_{t=1}^n X_{i,j}^{t-1}, \pi_{i,j}]$ such that
	\begin{eqnarray*}
		&&E\wh{\alpha}_{i,j}{\cal I}  \\
		&=& E \frac{ n^{-1}\sum_{t=1}^n X^t_{i,j}(1- X_{i,j}^{t-1})}{
			n^{-1}\sum_{t=1}^n (1- X_{i,j}^{t-1})}  {\cal I}  \\
		&=&  \frac{1}{n}E   \sum_{t=1}^n X^t_{i,j}(1- X_{i,j}^{t-1}) \left[ \frac{1}{1-\pi_{i,j}} 
		+ \frac{( n^{-1}\sum_{t=1}^n X_{i,j}^{t-1}-\pi_{i,j} )}{(1-\pi_{i,j})^2} 
		+ \sum_{k=2}^{\infty}\frac{( n^{-1}\sum_{t=1}^n X_{i,j}^{t-1}-\pi_{i,j} )^k}{(1-\pi_{i,j})^{k+1}}
		\right]  {\cal I} .
	\end{eqnarray*}
	Write $R_{i,j}^{(1)}:=  E\sum_{t=1}^n X^t_{i,j}(1- X_{i,j}^{t-1})\Big(\sum_{k=2}^{\infty}\frac{( n^{-1}\sum_{t=1}^n X_{i,j}^{t-1}-\pi_{i,j} )^k}{(1-\pi_{i,j})^{k+1}}\Big){\cal I}$.  
	By Taylor series with Lagrange remainder we have there exist  random scalars $r^t_{i,j}\in [ n^{-1}\sum_{t=1}^n X_{i,j}^{t-1},\pi_{i,j}]$ such that 
	\[
	R_{i,j}^{(1)} =  E \sum_{t=1}^n X^t_{i,j}(1- X_{i,j}^{t-1})\Bigg(\frac{( n^{-1}\sum_{t=1}^n X_{i,j}^{t-1}-\pi_{i,j} )^2}{(1-r^t_{i,j})^{3}}\Bigg){\cal I}>0.
	\]
	On the other hand,  
	note that  
	%	$|n^{-1}\sum_{t=1}^n X_{i,j}^{t-1}- \pi_{i,j}|< 1$, we have 
	\begin{eqnarray*}
		\sum_{k=2}^{\infty}\frac{ |n^{-1}\sum_{t=1}^n X_{i,j}^{t-1}-\pi_{i,j} |^k}{(1-\pi_{i,j})^{k+1}}  {\cal I} &\leq  & 
		\left( n^{-1}\sum_{t=1}^n X_{i,j}^{t-1}-\pi_{i,j} \right)^2\sum_{k=0}^{\infty}\frac{1}{(1-\pi_{i,j})^{3}2^k}
		\\
		&=& \left( n^{-1}\sum_{t=1}^n X_{i,j}^{t-1}-\pi_{i,j} \right)^2\frac{2}{(1-\pi_{i,j})^{3} }.
	\end{eqnarray*}
	Therefore, 
	\begin{eqnarray*}
		R_{i,j}^{(1)}&\leq& E\sum_{t=1}^n  \Big(\sum_{k=2}^{\infty}\frac{|n^{-1}\sum_{t=1}^n X_{i,j}^{t-1}-\pi_{i,j} |^k}{(1-\pi_{i,j})^{k+1}}\Big){\cal I}
		\\
		&\leq&  Var\left(\frac{1}{\sqrt{n}}\sum_{t=1}^n X_{ij}^{t-1}\right)\frac{2}{(1-\pi_{i,j})^{3} } \\
		&=& \frac{2}{ (1-\pi_{i,j})^{3} } Var(X_{ij}^t)\left[1+\frac{2}{n}\sum_{h=1}^{n-1}(n-h)\rho_{ij}(h) \right]\\
		&=&\frac{2}{ (1-\pi_{i,j})^{3} }\cdot \frac{\alpha_{ij}\beta_{ij} }{(\alpha_{ij}+\beta_{ij})^2}
		\Bigg[1 + \frac{2}{n}\sum_{h=1}^{n-1}(n-h)(1-\alpha_{ij}-\beta_{ij})^h 
		\bigg] \\
		&=&\frac{2}{ (1-\pi_{i,j})^{3} }\cdot \frac{\alpha_{ij}\beta_{ij} }{(\alpha_{ij}+\beta_{ij})^2}
		\left[1 + \frac{2(1-\alpha_{ij}-\beta_{ij})}{\alpha_{ij}+\beta_{ij}} +O(n^{-1})\right] \\
		&=&\frac{2}{ (1-\pi_{i,j})^{4}\pi_{i,j}}\cdot  \frac{   2-\alpha_{ij}-\beta_{ij}  }{ \alpha_{ij}+\beta_{ij}  }+O(n^{-1}).
	\end{eqnarray*}
	Again, since $0<l\leq \alpha_{i,j},\beta_{i,j}, \alpha_{i,j}+\beta_{i,j}\leq 1$ holds for all $(i,j)\in {\mathcal J}$, we conclude that there exists a constant $R_l$ such that 
	$ R_{i,j}^{(1)}\leq R_l$.
	Together with Lemma \ref{cov}, we have
	\begin{eqnarray*}
		E\wh{\alpha}_{i,j}&=& E\wh{\alpha}_{i,j}{\cal I} +E\wh{\alpha}_{i,j}(1-{\cal I}) 
		\\ &=&  E   \frac{1}{n}\sum_{t=1}^nX^t_{i,j}(1- X_{i,j}^{t-1}) \left[ \frac{1}{1-\pi_{i,j}} 
		+ \frac{( n^{-1}\sum_{t=1}^n X_{i,j}^{t-1}-\pi_{i,j} )}{(1-\pi_{i,j})^2} 
		\right]{\cal I}+\frac{R_{i,j}^{(1)}}{n} +E\wh{\alpha}_{i,j}(1-{\cal I}) \\
		&=& \alpha_{i,j}+\frac{Cov(\sum_{t=1}^nY_{i,j}^t, \sum_{t=1}^nX_{i,j}^t)}{n^2(1-\pi_{i,j})^2} +\frac{R_{i,j}^{(1)}}{n} +O(n^{-2})\\
		&=& \alpha_{i,j} -\frac{ \alpha_{i,j}  [2\alpha_{i,j}(1-\beta_{i,j})+\alpha_{{i,j}}+\beta_{{i,j}}-2\beta_{i,j}^2]}{n(\alpha_{i,j}+\beta_{i,j}) \beta_{i,j}  }   +\frac{R_{i,j}^{(1)}}{n}+O(n^{-2}). 
	\end{eqnarray*}
	
	Similarly, write ${  R}_{i,j}^{(2)}:=  E\sum_{t=1}^nX^t_{i,j}(1- X_{i,j}^{t-1}) \Big(\sum_{k=2}^\infty \frac{( n^{-1}\sum_{t=1}^n X_{i,j}^{t-1}-\pi_{i,j} )^k}{(-1)^k\pi_{i,j}^{k+1}}\Big){\cal I }'$ where ${\cal I}':=I\{
	|n^{-1}\sum_{t=1}^n (1- X_{i,j}^t)X_{i,j}^{t-1} | \le \pi_{i,j}/2\}$. We have, 
	\begin{eqnarray*}\label{rij2}
		E\wh{\beta}_{i,j} &=& E \frac{  n^{-1}\sum_{t=1}^n (1- X_{i,j}^t)X_{i,j}^{t-1} }{  n^{-1}\sum_{t=1}^n X_{i,j}^{t-1}} \nonumber\\
		&=&  E  \frac{1}{n}\sum_{t=1}^n (1- X_{i,j}^t)X_{i,j}^{t-1} \Big[ \frac{1}{\pi_{i,j}} 
		- \frac{( n^{-1}\sum_{t=1}^n X_{i,j}^{t-1}-\pi_{i,j} )}{\pi_{i,j}^2} \nonumber\\
		&&
		+ \sum_{k=2}^\infty \frac{( n^{-1}\sum_{t=1}^n X_{i,j}^{t-1}-\pi_{i,j} )^k}{(-1)^k\pi_{i,j}^{k+1}}	\Big]{\cal I}' 
		+  E\wh{\beta}_{i,j} (1-{\cal I}')
		\nonumber\\
		&=&\beta_{i,j}-\frac{Cov(\sum_{t=1}^nY_{i,j}^t, \sum_{t=1}^nX_{i,j}^t-X_{i,j}^n+X_{i,j}^0)}{n^2\pi_{i,j}^2}+\frac{  R_{i,j}^{(2)}}{n} +O(n^{-2})\nonumber\\
		&=& \beta_{i,j}+\frac{\beta_{i,j}    [2\alpha_{i,j}(1-\beta_{i,j})+\alpha_{{i,j}}+\beta_{{i,j}}-2\beta_{i,j}^2]}{n(\alpha_{i,j}+\beta_{i,j}) \alpha_{i,j}  } +\frac{  R_{i,j}^{(2)}}{n}+O(n^{-2}).
	\end{eqnarray*}
	Here in the second last   step we have used the fact that
	$En^{-1}(X_{i,j}^0-X_{i,j}^n)(n^{-1}\sum_{t=1}^n X_{i,j}^{t-1}-\pi_{i,j})=O(n^{-2})$, and in the last step we have used the fact that 
	\begin{eqnarray*}
		&&n^{-2}E\sum_{t=1}^nX_{i,j}^t(1-X_{i,j}^{t-1})( X_{i,j}^n-X_{i,j}^0)\\
		&=& n^{-2}E\left[\sum_{t=1}^n X_{i,j}^{t-1}X_{i,j}^{t}X_{i,j}^{0}-\sum_{t=1}^n X_{i,j}^{t-1}X_{i,j}^{t}X_{i,j}^{n} \right] +n^{-2}[E(X_{i,j}^n)^2- E(X_{i,j}^nX_{i,j}^0)]\\
		&=&n^{-2}\Big[\sum_{t=1}^n P(X_{i,j}^t=1|X_{i,j}^{t-1}=1)P(X_{i,j}^{t-1}=1|X_{i,j}^0=1) P(X_{i,j}^0=1)  \\
		&&-\sum_{t=1}^n P(X_{i,j}^n=1|X_{i,j}^{t}=1)P(X_{i,j}^{t}=1|X_{i,j}^{t-1}=1) P(X_{i,j}^{t-1}=1) 
		\Big] +O(n^{-2}) \\
		&=&O(n^{-2})
	\end{eqnarray*}
	On one hand, similar to $R_{i,j}^{(1)}$, we can show that % $  R_{i,j}^{(2)}\simeq	 O(1)$. 	Here  we use the notation $A\simeq	 O(B)$ to denote the fact that there exist constants $a, b>0$ such that $a\leq |A/B| \leq b$.By writing $R_{i,j}^{(2)}=\wt R_{i,j}^{(2)}+O(n^{-2})$ in \eqref{rij2}, we conclude that 
	when $n$ is large enough, there exists a $R_l$ such that  $R_{i,j}^{(2)}\leq R_l$ for any $(i,j)\in{\cal J}$.

\end{proof}

Lemma \ref{bias} implies that the bias of the MLEs is of order $O(n^{-1})$. % In addition, since $R_{i,j}^{(1)}$ and $R_{i,j}^{(2)}$ are positive, the bias of $\wh{\alpha}_{i,j}$ is always positive with exact order $O(n^{-1})$ when $\alpha_{i,j}>\beta_{i,j}$, we have $2\alpha_{i,j}(1-\beta_{i,j})+\alpha_{{i,j}}+\beta_{{i,j}}-2\beta_{i,j}^2$ and the bias of $\wh{\beta}_{i,j}$ is always positive with exact order $O(n^{-1})$ when $\alpha_{i,j}-\beta_{i,j}>0$. 
The bound $R_l$ here also implies that the $O(n^{-1})$ order of the bias holds uniformly for all $(i,j)\in {\cal J}$.

\begin{lemma}\label{Frob}
	Let conditions (2.5), C1 and C2 hold.  
	For any constant $B>0$, there exists a large enough constant $C>0$ such that %when $n$, $p$ and $\frac{n}{\log p}$ are large enough,  
	\begin{eqnarray}
	\label{lemFrob1}  \\
	P\left\{  \|\wh \L_1\wh \L_1-\L_1\L_1\|_F\geq C\left(\sqrt{\frac{\log (pn)}{np} } +\frac{1}{n}+\frac{1}{p}\right) \right\}\leq 8p\left[ (pn)^{-(1+B)}+  \exp\{
	-B\sqrt{p}\}\right] ,   \nonumber\\
	P\left\{ \|\wh\L_2\wh\L_2-\L_2\L_2\|_F\geq C\left(\sqrt{\frac{\log (pn)}{np} } +\frac{1}{n}+\frac{1}{p}\right) \right\}\leq 8p\left[ (pn)^{-(1+B)}+  \exp\{
	-B\sqrt{p}\}\right], \nonumber\\
	P\left\{ \|\wh\L_1\wh\L_2-\L_1\L_2\|_F\geq C\left(\sqrt{\frac{\log (pn)}{np} } +\frac{1}{n}+\frac{1}{p}\right) \right\}\leq 8p\left[ (pn)^{-(1+B)}+  \exp\{
	-B\sqrt{p}\}\right],\nonumber\\
	P\left\{ \|\wh\L_2\wh\L_1-\L_2\L_1\|_F\geq C\left(\sqrt{\frac{\log (pn)}{np} } +\frac{1}{n}+\frac{1}{p}\right) \right\}\leq 8p\left[ (pn)^{-(1+B)}+  \exp\{
	-B\sqrt{p}\}\right].\nonumber
	\end{eqnarray}
\end{lemma}
\begin{proof}
	We only prove the first inequality in \eqref{lemFrob1} here as the other three inequalities can be proved similarly. 
	Denote 
	\[
	\wt{\L}_1:=\L_1-\diag(\L_1)=\D_1^{-1/2}\left[\W_1-\diag(\W)_1\right]\D_1^{-1/2},
	\]
	and for any $1\leq i, j\leq p$ we denote the $(i,j)$th element of 
	$\wt \L_1\wt \L_1-\L_1\L_1$ as $ \delta_{i,j}$.  Correspondingly, for any $\ell=1, \ldots, p$, we define $\wt d_{\ell,1}:=d_{\ell,1}-\alpha_{\ell,\ell}$. 
	We first evaluate the error introduced by removing the $\diag(\L_1)$ term. With some abuse of notation, let $\wt\alpha_{i,j}=\alpha_{i,j}$ for $1\leq i\neq j\leq p$ and $\wt\alpha_{i,i}=0$ for $i=1,\ldots,p$. We have $\W-\diag(\W)=(\wt\alpha_{i,j})_{1\leq i, j\leq p}$. Therefore, 
	\begin{eqnarray*}
		|\delta_{i,j}|=\left| \sum_{k=1}^p\frac{   \wt{\alpha}_{i,k} \wt{\alpha}_{k,j}}{  d_{k,1}\sqrt{  d_{i,1}  d_{j,1}}}-\sum_{k=1}^p\frac{   {\alpha}_{i,k} {\alpha}_{k,j}}{  d_{k,1}\sqrt{  d_{i,1}  d_{j,1}}} \right| 
		\leq  \frac{   {\alpha}_{i,i} {\alpha}_{i,j}}{  d_{i,1}\sqrt{  d_{i,1}  d_{j,1}}}
		+\frac{   {\alpha}_{i,j} {\alpha}_{j,j}}{  d_{j,1}\sqrt{  d_{i,1}  d_{j,1}}}
		\leq\frac{2}{(p-1)^2l^2}.
	\end{eqnarray*}
	Consequently, we have
	\begin{eqnarray}\label{De_diag}
	\|\wh \L_1\wh \L_1-\L_1\L_1\|_F^2&=& 	\|(\wh \L_1\wh \L_1-\wt\L_1\wt\L_1)+(\wt\L_1\wt\L_1-\L_1\L_1)\|_F^2 \nonumber\\
	&\leq& 2 \left[\|\wh \L_1\wh\L_1-\wt\L_1\wt\L_1\|_F^2+\|\wt\L_1\wt\L_1-\L_1\L_1\|_F^2\right] \nonumber\\
	&=& 2\|\wh \L_1\wh\L_1-\wt\L_1\wt\L_1\|_F^2+2\sum_{1\leq i,j\leq p} \delta_{i,j}^2 \nonumber\\
	&\leq& 2\|\wh \L_1\wh\L_1-\wt\L_1\wt\L_1\|_F^2+\frac{8p^2}{(p-1)^4l^4}.
	\end{eqnarray}
	Next, we derive the asymptotic bound for $\|\wh \L_1\wh\L_1-\wt\L_1\wt\L_1\|_F^2$.

	For any $1\leq i, j\leq p$, we denote the $(i,j)$th element of $\wh\L_1\wh\L_1-\wt\L_1\wt\L_1$ as $\Delta_{i,j}$. 
	By definition we have, 
	\begin{eqnarray*} 
		\Delta_{i,j}= \sum_{\substack{1\leq k\leq p \\ k\neq i,j}}\left( \frac{  \wh{\alpha}_{i,k}\wh{\alpha}_{k,j}}{\wh d_{k,1}\sqrt{\wh d_{i,1}\wh d_{j,1}}}-\frac{  {\alpha}_{i,k} {\alpha}_{k,j}}{d_{k,1}\sqrt{d_{i,1}d_{j,1}}}\right) ,
	\end{eqnarray*}
	where $\wh d_{\ell,1}=\sum_{k=1}^p\wh{\alpha}_{\ell,k}$ and $d_{\ell,1}=\sum_{k=1}^p {\alpha}_{\ell,k}$ for $l=1,\ldots, p$. Note that  $\wh{\alpha}_{i,1},\ldots, \wh{\alpha}_{i,p}$ are independent. Denote $\sigma^2_{i,k}:=Var(\wh{\alpha}_{i,k})$, and $\tau_i^2:=\sum_{k=1}^p \sigma^2_{i,k}$.  Similar to  the proofs of Lemma \ref{cov} we can show that, when $n$ is large enough, their exists a constant $C_\sigma>(2l)^{-1}$ and $c_\sigma:=l(1-l)$ such that $c_\sigma n^{-1}\leq \sigma^2_{i,k}\leq C_\sigma n^{-1}$ for any $(i,j)\in {\cal J}$. Consequently, $\tau_i^2\simeq	 O(n^{-1}p)$.  On the other hand, from Lemma \ref{bias} we know that there exists a large enough constant $C_\alpha>0$ such that $|E\wh\alpha_{i,j}-\alpha_{i,j}|\leq \frac{C_\alpha}{n}$ for all $(i,j)\in {\cal J}$, and consequently, $\frac{|E\wh d_{\ell,1}-  d_{\ell,1}|}{p}\leq \frac{|E\wh d_{\ell,1}-  \wt d_{\ell,1}|}{p} +\frac{1}{p} <\frac{C_\alpha}{n}+\frac{1}{p}$ for any $l=1,\ldots, p$.  We next break our proofs into three steps:
	
	\noindent
	{\bf Step 1.} Concentration of $p^{-1}\wh d_{\ell,1}$.% and $p^{-1}\wh d_{l,2}$.
	
 Note that $|\hat{\alpha}_{\ell,j}|\le 1$. By Bernstein's inequality \citep{bennett1962probability, lin2011probability} we have, for any constant $C_d>0$:
	\begin{eqnarray}\label{Frob1:1}
	&&P\left(  \frac{|\wh d_{\ell,1}- d_{\ell,1}|}{p}\geq  C_d\sqrt{\frac{\log (pn)}{np} } +\frac{C_\alpha }{n} +\frac{1}{p}  \right)  \nonumber \\
	&\leq&P\left(  \frac{|\wh d_{\ell,1}-  E(\wh d_{\ell,1})|}{p}\geq C_d\sqrt{\frac{\log (pn)}{np} } +\frac{C_\alpha }{n}+\frac{1}{p}-\frac{|E(\wh d_{\ell,1})-d_{\ell,1}|}{p} \right)   \nonumber \\
	&\leq&P\left(  \frac{|\wh d_{\ell,1}-  E(\wh d_{\ell,1})|}{p}\geq  {C_d}  \sqrt{\frac{\log (pn)}{np} }   \right)  \nonumber \\
	&\leq& 2\exp\left\{
	-\frac{\sqrt{p}C_d^2n^{-1}\log (pn)}{2 (\sqrt{p}C_\sigma/n+aC_d \sqrt{ \log (pn)/n})} \right\}
	\nonumber \\
	&=& 2\exp\left\{
	-\frac{\sqrt{p}C_d^2n^{-1}\log (pn)}{2 (\sqrt{p}C_\sigma/n+C_d e (6l^{-1}+C_\alpha) \sqrt{ {\log n}/{(C_3n)}}   \sqrt{ \log (pn)/n})} \right\}.
	\end{eqnarray}
	When $\sqrt{p}C_\sigma/n> C_d e (6l^{-1}+C_\alpha) \sqrt{ {\log n}/{(C_3n)}}   \sqrt{ \log (pn)/n})  $, for any constant $B>0$, by choosing $C_d >2\sqrt{(B+1)C_\sigma}$, \eqref{Frob1:1} reduces to
	\begin{eqnarray}\label{Frob1:2} 
	&&P\left(  \frac{|\wh d_{\ell,1}- d_{\ell,1}|}{p}\geq  C_d\sqrt{\frac{\log (pn)}{np} } +\frac{C_\alpha }{n}+\frac{1}{p}  \right)  \nonumber  \\
	&\leq&   2\exp\left\{
	-\frac{\sqrt{p}C_d^2n^{-1}\log (pn)}{4  \sqrt{p}C_\sigma/n } \right\}<2(pn)^{-(B+1)}.
	\end{eqnarray}
	When $\sqrt{p}C_\sigma/n\leq  C_d e (6l^{-1}+C_\alpha) \sqrt{ {\log n}/{(C_3n)}}   \sqrt{ \log (pn)/n} $,  by choosing $C_d=4Be (6l^{-1}+C_\alpha)/\sqrt{C_3}$, \eqref{Frob1:1} reduces to
	\begin{eqnarray}\label{Frob1:3} 
	&&P\left(  \frac{|\wh d_{\ell,1}- d_{\ell,1}|}{p}\geq  C_d\sqrt{\frac{\log (pn)}{np} } +\frac{C_\alpha }{n} +\frac{1}{p} \right) \nonumber  \\
	&\leq&   2\exp\left\{
	-\frac{\sqrt{p}C_d^2n^{-1}\log (pn)}{4 C_d e (6l^{-1}+C_\alpha) \sqrt{ {\log n}/{(C_3n)}}   \sqrt{ \log (pn)/n} } \right\} \nonumber \\
	&\leq&  2\exp\left\{
	-B\sqrt{p}   \right\} .
	\end{eqnarray}
	From \eqref{Frob1:1}, \eqref{Frob1:2} and \eqref{Frob1:3} we conclude that for any $B>0$, by choosing $C_d$ to be large enough, we have, 
	\begin{eqnarray}\label{Frob1} 
	&& P\left(\max_{l=1,\ldots, p}  \frac{|\wh d_{\ell,1}- d_{\ell,1}|}{p}\geq  C_d\sqrt{\frac{\log (pn)}{np} } +\frac{C_\alpha }{n} +\frac{1}{p} \right) \nonumber \\
	&\leq& 2p\left[ (pn)^{-(1+B)}+  \exp\{
	-B\sqrt{p}\}\right].
	\end{eqnarray}
	% Similarly, we can establish the following inequality for $d_{l,2}$:
	% \begin{eqnarray}\label{Frob2} 
	% P\left(\max_{l=1,\ldots, p}  \frac{|\wh d_{l,2}- d_{l,2}|}{p}\geq  C_d\sqrt{\frac{\log (pn)}{np} } +\frac{C_\alpha }{n}  \right)  
	% &\leq&  2p\left[ (pn)^{-(1+B)}+   \exp\{
	% -B\sqrt{p}\}\right].
	% \end{eqnarray}
	
	\noindent
	{\bf Step 2.}  Concentration of $\Delta_{i,j}$.
	
	Using the fact that $\wh{\alpha}_{k,k}=0$ for $k=1,\ldots,p$, we have,
	\begin{eqnarray*} 
		\Delta_{i,j}&=& \sum_{k=1}^p\left(\frac{  \wh{\alpha}_{i,k}\wh{\alpha}_{k,j}}{\wh d_{k,1}\sqrt{\wh d_{i,1}\wh d_{j,1}}}
		-\frac{  \wh{\alpha}_{i,k}\wh{\alpha}_{k,j}}{d_{k,1}\sqrt{d_{i,1}d_{j,1}}}\right)
		+\sum_{\substack{1\leq k\leq p \\ k\neq i,j}} \left(\frac{  \wh{\alpha}_{i,k}\wh{\alpha}_{k,j}}{d_{k,1}\sqrt{d_{i,1}d_{j,1}}}
		-\frac{  {\alpha}_{i,k} {\alpha}_{k,j}}{d_{k,1}\sqrt{d_{i,1}d_{j,1}}}\right). 
	\end{eqnarray*}
	We next bound the two terms on the right hand side of the above inequality. 
	For the first term, denote $e_k:=(\wh d_{k,1} -d_{k,1})/p$. From \eqref{Frob1} we have there exists a large enough constant $C_{B}$ such that    
	\begin{eqnarray*}
		P\left\{\max_{k=1,\ldots, p}|e_k|\leq C_B\left(\sqrt{\frac{\log (pn)}{np}}+\frac{1}{n}+\frac{1}{p}\right)\right\}\geq 1-2p\left[ (pn)^{-(1+B)}+  \exp\{
		-B\sqrt{p}\}\right].
	\end{eqnarray*}
	Denote the event $\left\{\max_{k=1,\ldots, p}|e_k|\leq C_B\left(\sqrt{\frac{\log (pn)}{np}}+\frac{1}{n}+\frac{1}{p}\right)\right\}$ as ${\cal E}_B$. Under ${\cal E}_B$, we have, when $n$ and $p$ are large enough, $\sqrt{p^{-1}d_{k,1}+e_k}=\sqrt{p^{-1}d_{k,1}}+e_k/(2\sqrt{p^{-1}d_{k,1}})+O(e_k^2)$, and hence there exists a large enough constant $C_{l,B}>0$ such that for any $1\leq i,j\leq p$,
	\begin{eqnarray*}
		&&\left|\frac{  \wh{\alpha}_{i,k}\wh{\alpha}_{k,j}}{\wh d_{k,1}\sqrt{\wh d_{i,1}\wh d_{j,1}}}
		-\frac{  \wh{\alpha}_{i,k}\wh{\alpha}_{k,j}}{d_{k,1}\sqrt{d_{i,1}d_{j,1}}}\right| \\
		&\leq&  \frac{\left|p^{-1}d_{k,1}\sqrt{p^{-1}d_{i,1}p^{-1}d_{j,1}}-(p^{-1}d_{k,1}+e_k)\sqrt{(p^{-1}d_{i,1}+e_i)(p^{-1}d_{j,1}+e_j)}\right|}{p^2(p^{-1}d_{k,1}+e_k)\sqrt{(p^{-1}d_{i,1}+e_i)(p^{-1}d_{j,1}+e_j)    p^{-1}d_{k,1}\sqrt{p^{-1}d_{i,1}p^{-1}d_{j,1}} }}\\
		&=& O(p^{-2}(|e_i|+|e_j|+|e_k|) ) \\
		&\leq& \frac{C_{l,B}}{p^{2}}\left(\sqrt{\frac{\log (pn)}{np}}+\frac{1}{n}+\frac{1}{p}\right).
	\end{eqnarray*}
	Consequently, we have, under ${\cal E}_B$, 
	\begin{eqnarray}\label{1stterm}
	\left|\sum_{k=1}^p\left(\frac{  \wh{\alpha}_{i,k}\wh{\alpha}_{k,j}}{\wh d_{k,1}\sqrt{\wh d_{i,1}\wh d_{j,1}}}
	-\frac{  \wh{\alpha}_{i,k}\wh{\alpha}_{k,j}}{d_{k,1}\sqrt{d_{i,1}d_{j,1}}}\right)\right| \leq  \frac{C_{l,B}}{p}\left(\sqrt{\frac{\log (pn)}{np}}+\frac{1}{n}+\frac{1}{p}\right).
	\end{eqnarray}
	
	For the second term,  
note that for any $1\le i, j\le p$ and $k\ne i,j$, 
\begin{eqnarray}\label{midstep}
	&&	|E\wh{\alpha}_{i,k}\wh{\alpha}_{k,j}- {\alpha}_{i,k}{\alpha}_{k,j}| \nonumber \\
		&=& 		|E(\wh{\alpha}_{i,k}- {\alpha}_{i,k})(\wh{\alpha}_{k,j}-{\alpha}_{k,j})+ E(\wh{\alpha}_{i,k}-{\alpha}_{i,k}){\alpha}_{k,j}+E{\alpha}_{i,k}(\wh{\alpha}_{k,j}-{\alpha}_{k,j})|\nonumber	\\
		 &\le& |E(\wh{\alpha}_{i,k}- {\alpha}_{i,k})(\wh{\alpha}_{k,j}-{\alpha}_{k,j})|+ \frac{2C_\alpha}{n}. 
			\end{eqnarray}
		When $i\ne j$, by Lemma \ref{bias} and the fact that $\wh{\alpha}_{i,k}$ and $\wh{\alpha}_{k,j}$ are  independent  (since $k\ne i, j$), we have $|E\wh{\alpha}_{i,k}\wh{\alpha}_{k,j}- {\alpha}_{i,k}{\alpha}_{k,j}|\le C_{l,1}n^{-1}$ for some large enough constant $C_{l,1}>0$. 		
%Similar to Lemma \ref{bias}, we can show that $|E(\wh{\alpha}_{i,k}- {\alpha}_{i,k})(\wh{\alpha}_{k,j}-{\alpha}_{k,j})|\le C_{l}n^{-1}$ for some positive constant $C_{l}$. 
%	note that from Lemma \ref{bias} and \eqref{alphaCon}, we have, for any $i,j$, there exists a large enough constant $C_{l1}>0$, such that 
%	\begin{eqnarray*}
%	&&	|E\wh{\alpha}_{i,k}\wh{\alpha}_{k,j}- {\alpha}_{i,k}{\alpha}_{k,j}| \\
%		&=& 		|E(\wh{\alpha}_{i,k}- {\alpha}_{i,k})(\wh{\alpha}_{k,j}-{\alpha}_{k,j})+ E(\wh{\alpha}_{i,k}-{\alpha}_{i,k}){\alpha}_{k,j}+E{\alpha}_{i,k}(\wh{\alpha}_{k,j}-{\alpha}_{k,j})|\\
% &\le& C_{l1}\left(\frac{\log (pn)}{n}+\frac{1}{n}+\frac{1}{p}\right).
%	\end{eqnarray*}
%	Consequently, we have for any $i,j$, there exists a large enough constant $C_{l2}>0$, such that 
%$\left|\sum_{\substack{1\leq k\leq p \\ k\neq i,j}} \left(\frac{  E \wh{\alpha}_{i,k}\wh{\alpha}_{k,j}}{d_{k,1}\sqrt{d_{i,1}d_{j,1}}}
%-\frac{  {\alpha}_{i,k} {\alpha}_{k,j}}{d_{k,1}\sqrt{d_{i,1}d_{j,1}}}\right)\right|\le C_{l2}\left(\frac{\log (pn)}{np}+\frac{1}{np}+\frac{1}{p^2}\right)$.
 Using the same arguments for obtaining   \eqref{Frob1}, we have, there exists a large enough constant $D_{l,B}>0$ such that when $n$ and $p$ are large enough,
		\begin{eqnarray}
	&& P\left(\max_{1\le i\ne j\le p}  \left|\sum_{\substack{1\leq k\leq p \\ k\neq i,j}} \left(\frac{  \wh{\alpha}_{i,k}\wh{\alpha}_{k,j}}{d_{k,1}\sqrt{d_{i,1}d_{j,1}}}
	-\frac{  {\alpha}_{i,k} {\alpha}_{k,j}}{d_{k,1}\sqrt{d_{i,1}d_{j,1}}}\right)\right| \geq  \frac{D_{l,B}}{p}\Bigg(
	 \sqrt{\frac{\log (pn)}{np} } +\frac{ 1 }{n} +\frac{1}{p} \Bigg)\right) \nonumber \\
	&\leq& 2p\left[ (pn)^{-(1+B)}+  \exp\{
	-B\sqrt{p}\}\right].  \label{2ndterm}
	\end{eqnarray}
	Denote the event $\left\{\underset{{1\le i\ne j\le p}}{\max}  \left|\sum_{\substack{1\leq k\leq p \\ k\neq i,j}} \left(\frac{  \wh{\alpha}_{i,k}\wh{\alpha}_{k,j}}{d_{k,1}\sqrt{d_{i,1}d_{j,1}}}
	-\frac{  {\alpha}_{i,k} {\alpha}_{k,j}}{d_{k,1}\sqrt{d_{i,1}d_{j,1}}}\right)\right| \le \frac{D_{l,B}}{p} \Bigg(
	\sqrt{\frac{\log (pn)}{np} } +\frac{ 1 }{n} +\frac{1}{p} \Bigg) \right\}$
	as ${\cal A}_{B}$.  
	From \eqref{1stterm} and \eqref{2ndterm} we conclude that, when $n$ and $p$ are  large enough, 
	\begin{eqnarray}\label{deltaijCon}
	&&P\left\{ \max_{1\leq i\ne j\leq p}|\Delta_{i,j}| > \frac{C_{l,B}+D_{l,B}}{p}\left(\sqrt{\frac{\log (pn)}{np}}+\frac{1}{n}+\frac{1}{p}\right)\right\} \nonumber \\ 
	&\leq& P({\cal E}_B^c)+P({\cal A}_B^c) \nonumber\\
	&< &4p\left[ (pn)^{-(1+B)}+  \exp\{
	-B\sqrt{p}\}\right].
	\end{eqnarray}
	When $i=j$, 
	by applying Lemma \ref{bias} and \eqref{alphaCon} to \eqref{midstep}, we have, there exists a large enough constant $C_{l,2}>0$, such that 
		\begin{eqnarray*}
	 	|E\wh{\alpha}_{i,k}\wh{\alpha}_{k,i}- {\alpha}_{i,k}{\alpha}_{k,i}|  
	 \le C_{l,2}\left(\frac{\log (pn)}{n}+\frac{1}{n}+\frac{1}{p}\right).
		\end{eqnarray*}
		Consequently, similar to \eqref{deltaijCon}, we have,  there exists a large enough constant $C_{l,3}>0$, such that 
	\begin{eqnarray}\label{deltaijCon2}
	P\left\{ \max_{1\leq i\leq p}|\Delta_{i,i}| > \frac{C_{l,3}}{p}\left(\sqrt{\frac{\log (pn)}{np}}+\frac{\log (pn)}{n}+\frac{1}{p}\right)\right\} 
	< 4p\left[ (pn)^{-(1+B)}+  \exp\{
	-B\sqrt{p}\}\right].
	\end{eqnarray}
	
	{\bf Step 3.} Proof of the first inequality in \eqref{lemFrob1}. 
	
	Note that $\|\wh \L_1\wh \L_1-\wt \L_1\wt \L_1\|_F=\sqrt{\sum_{1\leq i,j\leq p}\Delta_{i,j}^2}\leq  p \max_{1\leq i\neq j\leq p}|\Delta_{i,j}|+\sqrt{p}\max_{1\leq i \leq p}|\Delta_{i,i}|$. 
 From \eqref{De_diag}, \eqref{deltaijCon}, \eqref{deltaijCon2} 
 and the fact that 
 $\frac{1}{\sqrt{p}}\left(\sqrt{\frac{\log (pn)}{np}}+\frac{\log (pn)}{n}+\frac{1}{p}\right)=
 o\left(\sqrt{\frac{\log (pn)}{np} } +\frac{1}{n}+\frac{1}{p}\right)$
 we immediately have that there exists a large enough constant $C>0$ such that when $n$ and $p$ are large enough,      
	\begin{eqnarray*}
		P\left\{  \|\wh \L_1\wh \L_1-\L_1\L_1\|_F\geq C\left(\sqrt{\frac{\log (pn)}{np} } +\frac{1}{n}+\frac{1}{p}\right) \right\}\leq 8p\left[ (pn)^{-(1+B)}+  \exp\{
		-B\sqrt{p}\}\right].
	\end{eqnarray*}
	This proves the first inequality in  \eqref{lemFrob1}. 
	%&=& \Bigg[  \sum_{k=1}^n\left( \frac{\wh{\alpha}_{i,k}}{\sqrt{d_{i,1}}} - \frac{{\alpha}_{i,k}}{\sqrt{D_{i,1}}}\right)
	%\left( \frac{\wh{\alpha}_{j,k}}{\sqrt{d_{j,1}}} - \frac{{\alpha}_{j,k}}{\sqrt{D_{j,1}}}\right)
	%+
	%\sum_{k=1}^n\left( \frac{\wh{\alpha}_{i,k}}{\sqrt{d_{i,1}}} \cdot\frac{{\alpha}_{j,k}}{\sqrt{D_{i,1}}}-\frac{{\alpha}_{i,k}}{\sqrt{D_{j,1}}} \cdot\frac{{\alpha}_{j,k}}{\sqrt{D_{j,1}}}\right)\\
	%&&
	%+
	%\sum_{k=1}^n\left( \frac{ {\alpha}_{i,k}}{\sqrt{D_{i,1}}} \cdot\frac{ \wh{\alpha}_{j,k}}{\sqrt{d_{i,1}}}-\frac{ {\alpha}_{i,k}}{\sqrt{D_{j,1}}} \cdot\frac{{\alpha}_{j,k}}{\sqrt{D_{j,1}}}\right)
	%\Bigg]^2\\
	
\end{proof}

\begin{lemma}\label{FrobF}
	Let conditions (2.5), C1 and C2 hold.  
	For any constant $B>0$, there exists a large enough constant $C>0$ such that 
	\begin{eqnarray}\label{FrobF:eq}
	\\
	P\left\{ \|\wh\L\wh\L-\L\L\|_F\geq 4C\left(\sqrt{\frac{\log (pn)}{np} } +\frac{1}{n}+\frac{1}{p}\right) \right\}\leq 16p\left[ (pn)^{-(1+B)}+  \exp\{
	-B\sqrt{p}\}\right]. \nonumber
	\end{eqnarray}
\end{lemma}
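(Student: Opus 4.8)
The plan is to obtain \eqref{FrobF:eq} directly from the four componentwise bounds in Lemma~\ref{Frob}, using the additivity $\wh\L=\wh\L_1+\wh\L_2$ and $\L=\L_1+\L_2$ together with the triangle inequality and a union bound. First I would expand the products as the purely algebraic identity
\begin{align*}
\wh\L\wh\L-\L\L
&=(\wh\L_1\wh\L_1-\L_1\L_1)+(\wh\L_1\wh\L_2-\L_1\L_2)\\
&\qquad+(\wh\L_2\wh\L_1-\L_2\L_1)+(\wh\L_2\wh\L_2-\L_2\L_2),
\end{align*}
keeping the two cross terms $\wh\L_1\wh\L_2$ and $\wh\L_2\wh\L_1$ separate because $\wh\L_1$ and $\wh\L_2$ need not commute. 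Applying the triangle inequality for the Frobenius norm then gives
\[
\|\wh\L\wh\L-\L\L\|_F\le \|\wh\L_1\wh\L_1-\L_1\L_1\|_F+\|\wh\L_1\wh\L_2-\L_1\L_2\|_F+\|\wh\L_2\wh\L_1-\L_2\L_1\|_F+\|\wh\L_2\wh\L_2-\L_2\L_2\|_F.
\]

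Next, for the fixed constant $B>0$ I would take $C>0$ to be the constant furnished by Lemma~\ref{Frob}, so that each of the four terms on the right-hand side exceeds $C\big(\sqrt{\log(pn)/(np)}+1/n+1/p\big)$ only on an event of probability at most $4p\big[(pn)^{-(1+B)}+\exp\{-B\sqrt{p}\}\big]$, by \eqref{lemFrob1}--\eqref{lemFrob4} respectively. On the intersection of the four complementary events all four terms are below $C\big(\sqrt{\log(pn)/(np)}+1/n+1/p\big)$, and hence their sum, and therefore $\|\wh\L\wh\L-\L\L\|_F$, is below $4C\big(\sqrt{\log(pn)/(np)}+1/n+1/p\big)$. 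A union bound over the four exceptional events shows that \eqref{FrobF:eq} fails with probability at most $16p\big[(pn)^{-(1+B)}+\exp\{-B\sqrt{p}\}\big]$, which is precisely the asserted bound.

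There is essentially no obstacle in this last step: all of the genuine analytic difficulty, namely controlling the bias of the MLEs $\wh\alpha_{i,j},\wh\beta_{i,j}$ (Lemma~\ref{bias}), establishing the degree concentration $p^{-1}\wh d_{\ell,1}\approx p^{-1}d_{\ell,1}$ via Bernstein's inequality, bounding the diagonal-removal error, and controlling the entrywise perturbation $\Delta_{i,j}$, has already been absorbed into Lemma~\ref{Frob}. The only points requiring care here are the bookkeeping of the constants (using the same $C$ for all four bounds, which Lemma~\ref{Frob} permits) and the factor of $4$ that appears both in the bound $4C(\cdots)$ and in the failure probability $16p[\cdots]=4\cdot 4p[\cdots]$.
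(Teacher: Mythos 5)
Your proposal is correct and follows essentially the same route as the paper: the paper also expands $\wh\L\wh\L-\L\L$ into the four difference terms, applies the triangle inequality for the Frobenius norm, and invokes Lemma~\ref{Frob} with a union bound to get the factor $4$ in the threshold and the factor $16p[\cdots]$ in the failure probability. No gaps.
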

\begin{proof}
	
	Note that from the triangle inequality we have
	\begin{eqnarray*} 
		&&\|\wh\L\wh\L-\L\L\|_F  \\
		&=& \|(\wh\L_1+\wh\L_2)(\wh\L_1+\wh\L_2)-(\L_1+\L_2)(\L_1+\L_2)\|_F   \nonumber\\
		&=& \|(\wh\L_1\wh\L_1-\L_1\L_1)+(\wh\L_1\wh\L_2-\L_1\L_2)+(\wh\L_2\wh\L_1-\L_2\L_1)+(\wh\L_2\wh\L_2-\L_2\L_2)\|_F    \\
		&\leq& \|\wh \L_1\wh\L_1-\L_1\L_1\|_F+ \|\wh\L_1\wh\L_2-\L_1\L_2\|_F+ \|\wh\L_2\wh\L_1-\L_2\L_1\|_F+ \|\wh\L_2\wh\L_2-\L_2\L_2\|_F.
	\end{eqnarray*}
	Together with Lemma \ref{Frob}  we immediately conclude that \eqref{FrobF:eq} hold.

\end{proof}
\noindent

\noindent
{\bf Proof of Theorem \ref{DKthm}}

From Weyl’s inequality and Lemma \ref{FrobF}, we have,  
\begin{eqnarray*}
	\max_{i=1,\ldots, p} |\lambda_i^2-\wh\lambda_i^2|\leq  \|\wh\L\wh\L-\L\L\|_2\leq \|\wh\L\wh\L-\L\L\|_F=O_p\left(\sqrt{\frac{\log (pn)}{np} } +\frac{1}{n}+\frac{1}{p}\right) . 
\end{eqnarray*}
\eqref{Con_EVec} is a direct result of the Davis-Kahan theorem \citep{rohe2011spectral, yu2015useful} theorem and Lemma \ref{FrobF}.

\subsection{Proof of Theorem \ref{clusteringCons}}

Recall that $\bGamma_q=\Z\bU$ where $\bU$ is defined as in the proof of Proposition \ref{OracleSC}.  
For any $1\leq i\neq j\leq n$ such that $\z_i\neq \z_j$, we need to show that $\|\z_i\bU\bO_q-\z_j\bU_q\bO_q\|_2=\|\z_i\bU  -\z_j\bU\|_2$ is large enough, so that the perturbed version (i.e. the rows of $\wh\bGamma_q$) is not changing the clustering structure.

Denote the $i$th row of ${\bGamma}_q\bO_q$ and $\wh{\bGamma}_q$ as $\bgamma_i$ and  $\wh\bgamma_i$, respectively, for $i=1, \ldots, p$. 
Notice that from the proof of Proposition \ref{OracleSC}, we have $\bU \bU^\top=  \N^{-1}\bQ \bQ^\top \N^{-1} =\N^{-2}=\diag\{s_1^{-1},\ldots, s_q^{-1}\}$. Consequently, for any $\z_i\neq \z_j$,  we have:
\begin{eqnarray}\label{kmean1}
\|\bgamma_i-\bgamma_j\|_2=\|\z_i\bU\bO_q-\z_j\bU_q\bO_q\|_2=\|\z_i\bU  -\z_j\bU\|_2\geq \sqrt{\frac{2}{s_{\max}}}. 
\end{eqnarray}
% On the other hand, from Theorem \ref{DKthm} we have for any $i=1,\ldots, p$,
% \begin{eqnarray}\label{l2loss}
% \|\wh\bgamma_i-\bgamma_i\|_2\leq \|\wh\bGamma_q -{\bGamma}_q\bO_q\|_F\leq	4\lambda_{q}^{-1}C\left(\sqrt{\frac{\log (pn)}{np} } +\frac{1}{n}+\frac{1}{p}\right). 
% \end{eqnarray}
% 

We first show that  $\z_i\neq \z_j$ implies $\wh\c_i\neq \wh\c_j$. 
Notice that ${\bGamma}_q\bO_q\in  {\cal M}_{p,q}$. Denote $\wh\bC=(\wh \bc_1, \cdots, \wh \bc_p)^\top$. By the definition of $\wh\bC$ we have
\begin{eqnarray}\label{kmean2}
\|{\bGamma}_q\bO_q-\wh\bC\|_F^2  \leq   
\|\wh\bGamma_q -\wh\bC\|_F^2 + \|\wh\bGamma_q -{\bGamma}_q\bO_q\|_F^2  
\leq  
2\|\wh\bGamma_q -{\bGamma}_q\bO_q\|_F^2.
\end{eqnarray}
Suppose there exist $i,j\in \{1,\ldots, p\}$ such that $\z_i\neq \z_j$ but $\wh\c_i=\wh\c_j$. We have
\begin{eqnarray}\label{kmean3}
\|{\bGamma}_q\bO_q-\wh\bC\|_F^2 \geq \|\z_i\bU\bO_q -\wh\bc_i \|_2^2+\|\z_j\bU\bO_q -\wh\bc_j \|_2^2 \geq  \|\z_i\bU\bO_q-\z_j\bU\bO_q\|_2^2.
\end{eqnarray}
Combining \eqref{kmean1},  \eqref{Con_EVec}, \eqref{kmean2} and \eqref{kmean3}, we  have: 
\[
\sqrt{\frac{2}{s_{\max}}}\leq \|{\bGamma}_q\bO_q-\wh\bC\|_F \leq \sqrt{2} \|\wh\bGamma_q -{\bGamma}_q\bO_q\|_F  
\leq 4\sqrt{2} \lambda_{q}^{-2}C\left(\sqrt{\frac{\log (pn)}{np} } +\frac{1}{n}+\frac{1}{p}\right). 
\]
We have reach a contradictory with \eqref{KM_SN}. Therefore we conclude that $\wh\c_i\neq \wh\c_j$.  

Next we show that if $\z_i=\z_j$ we must have $\wh\c_i=\wh\c_j$.  Assume that there exist $1\leq i\neq j\leq p$ such that $\z_i=\z_j$ and $\wh\c_i\neq \wh\c_j$. 
Notice that from the previous conclusion (i.e.,  that different $z_i$ implies different $\wh\c_i$), since there are $q$ distinct rows in $\Z$,   there are correspondingly  $q$ different rows in $\wh\bC$. Consequently for any $\z_i=\z_j$, if $\wh\c_i\neq\wh\c_j$ there must exist a $k\neq i,j$ such that $\z_i=\z_j\neq \z_k$ and $\wh\c_j=\wh\c_k$. Let $\wh\bC^*$ be $\wh\bC$ with the $j$th row   replaced by $ \wh\c_i$. We have 
\begin{eqnarray*}
	&&\|\wh\bGamma_q -\wh\bC^*\|_F^2-\|\wh\bGamma_q -\wh\bC\|_F^2  \\
	&=& \| \wh\bgamma_j-\wh\c_i\|_2^2- \|\wh\bgamma_j -\wh\c_k\|_2^2   \\
	&=& \| \wh\bgamma_j -\bgamma_j+\bgamma_i -\wh\c_i\|_2^2 -\| \wh\bgamma_j -\bgamma_j+\bgamma_i-\bgamma_k+\bgamma_k-\wh\c_k\|_2^2  \\
	&\leq& \| \wh\bgamma_j -\bgamma_j+\bgamma_i -\wh\c_i\|_2^2 +\| \wh\bgamma_j -\bgamma_j+\bgamma_k -\wh\c_k\|_2^2 -\|\bgamma_i-\bgamma_k\|_2^2\\
	&\leq& 2\|\wh\bGamma_q-\bGamma_q\bO_q\|_F^2+\|\bGamma_q\bO_q-\wh\bC\|_F^2  -\frac{2}{s_{\max}} \\
	& \leq& 4\left\{ 4 \lambda_{q}^{-2}C\left(\sqrt{\frac{\log (pn)}{np} } +\frac{1}{n}+\frac{1}{p}\right)   \right\}^2 -\frac{2}{s_{\max}} \\
	&<&0.
\end{eqnarray*}
Again, we reach a contradiction and so we conclude that if $\z_i=\z_j$ we must have $\wh\c_i=\wh\c_j$. 

\subsection{Proof of Theorem \ref{CLT2}}

Note that from Theorem \ref{clusteringCons}, we have the memberships can be recovered with probability tending to 1, i,e, $P(\wh\nu\neq \nu )\rightarrow 0$. On the other hand, given $\wh\nu= \nu$, we have,  the log likelihood
function of $(\theta_{k, \ell}, \eta_{k, \ell})$, $1\le k \le \ell \le q$, is
\begin{eqnarray*}
	l( \{ \theta_{k, \ell}, \eta_{k, \ell}\}; \nu) &=& 
	\sum_{(i,j)\in S_{k,l}} \sum_{t=1}^n
	\Big\{ X_{i,j}^t(1-X_{i,j}^{t-1})  \log \theta_{k,\ell} 	+  (1-X_{i,j}^t)(1-X_{i,j}^{t-1})  \log (1 -\theta_{k,\ell})\\
	&& 
	+  (1-X_{i,j}^t) X_{i,j}^{t-1} \log \eta_{k, \ell} 
	+ X_{i,j}^t X_{i,j}^{t-1} \log(1- \eta_{k, \ell}) \Big\}.
\end{eqnarray*}
Using the same arguments   as in the proof of Proposition \ref{CLT1}, we can conclude that when  $\wh\nu= \nu$,  $\sqrt{n}\N_{J_1,J_2}^{\frac{1}{2}}(\wh{\bPsi}_{{\cal K}_1, {\cal K}_2}-\bPsi_{{\cal K}_1,{\cal K}_2}) \rightarrow N({\bf 0}, \wt\bSigma_{{\cal K}_1, {\cal K}_2})$. 
Let $\bY\sim N({\bf 0}, \wt\bSigma_{{\cal K}_1, {\cal K}_2})$. For any ${\cal Y}\subset {\cal R}^{m_1+m_2}$, let $\bPhi({\cal Y}):=P(\bY \in {\cal Y} )$, we have:
\begin{eqnarray*}
	&&	| P( \sqrt{n} \N_{{\cal K}_1,{\cal K}_2}^{\frac{1}{2}}(\wh{\bPsi}_{{\cal K}_1, {\cal K}_2}-\bPsi_{{\cal K}_1,{\cal K}_2}) \in {\cal Y}) - \bPhi({\cal Y})| \\
	&\leq& P(\wh\nu\neq \nu )+ 	|P(\sqrt{n}  \N_{{\cal K}_1,{\cal K}_2}^{\frac{1}{2}}(\wh{\bPsi}_{{\cal K}_1, {\cal K}_2}-\bPsi_{{\cal K}_1,{\cal K}_2}) \in {\cal Y}| \wh\nu = \nu) - \bPhi({\cal Y})| \\
	&=& o(1). 
\end{eqnarray*}
This proves the theorem. 

\subsection{Proof of Theorem \ref{Thm_CP}}

Without loss of generality, we consider the case where $\tau\in [n_0,\tau_0]$, as the convergence rate  for  $\tau\in [\tau_0,n-n_0]$ can be similarly  derived. 
The idea is to break the time interval $[ n_0,\tau_0]$ into two consecutive parts: $[n_0, \tau_{n,p}]$ and $[\tau_{n,p}, \tau_0]$, where $\tau_{n,p}=\Big\lfloor \tau_0- \kappa n \Delta_F^{-2}   \Big[  {\frac{  \log (np )}{n  } } +\sqrt{\frac{  \log (np )}{np^2 } }   \Big]
\Big\rfloor$ for some large enough $\kappa>0$. Here $\lfloor\cdot \rfloor$ denotes the least integer function.   
We shall show that when $\tau\in[n-n_0, \tau_{n,p}]$, in which $\wh\nu^{\tau+1,n}$ might be inconsistent in estimating $\nu^{\tau_0+1,n}$,  
we have $\sup_{\tau\in[n_0, \tau_{n,p}] }[\mM_n(\tau)-\mM_n(\tau_0)]<0$   in probability.  Hence 
$\argmax_{\tau\in [n_0, \tau_0]} \mM_n(\tau)= \argmax_{\tau\in [\tau_{n,p}, \tau_0]} \mM_n(\tau)$ holds in probability. On the other hand, when $\tau\in [\tau_{n,p}, \tau_0]$,  we shall see that the membership maps can be consistently recovered, and hence the convergence rate can be obtained using classical probabilistic arguments. For simplicity, we consider the case where $\nu^{1,\tau_0}=\nu^{\tau_0+1,n}=\nu$ first, and modification of the proofs for the case where  $\nu^{1,\tau_0}\neq \nu^{\tau_0+1,n}$ will be provided subsequently.

\subsubsection{Change point estimation with $\nu^{1,\tau_0}=\nu^{\tau_0+1,n}=\nu$.}

We first consider the case where the membership structures remain unchanged, while the connectivity matrices before/after the change point are different. Specifically,  we assume that 
% 	$\bZ_1=\bZ_2=(z_{i,j})_{p\times q}$, or equivalently 
$\nu^{1,\tau_0}=\nu^{\tau_0+1,n}=\nu$ for some $\nu$, and $(\theta_{1,k,\ell}, \eta_{1,k,\ell})\neq (\theta_{2,k,\ell}, \eta_{2,k,\ell})$ for some  $1\leq k\leq  l\leq q$.     For brevity, we shall be using the notations $S_{k,l}$, $s_k$, $s_{\min}$ and  $n_{k,\ell}$  defined as in Section \ref{sec3},
and introduce some new notations as follows:

Define
\[
\theta_{2,k,\ell}^\tau= \frac{ \frac{\tau_0-\tau}{n-\tau}\frac{\theta_{1,k,\ell}\eta_{1,k,\ell}}{\theta_{1,k,\ell}+\eta_{1,k,\ell}}+\frac{n-\tau_0 }{n-\tau}\frac{\theta_{2,k,\ell}\eta_{2,k,\ell}}{\theta_{2,k,\ell}+\eta_{2,k,\ell}}  }{
	\frac{\tau_0-\tau}{n-\tau}\frac{ \eta_{1,k,\ell}}{\theta_{1,k,\ell}+\eta_{1,k,\ell}}+\frac{n-\tau_0 }{n-\tau}\frac{ \eta_{2,k,\ell}}{\theta_{2,k,\ell}+\eta_{2,k,\ell}} 	
} 
, \quad \eta_{2,k,\ell}^\tau=\frac{ \frac{\tau_0-\tau}{n-\tau}\frac{\theta_{1,k,\ell}\eta_{1,k,\ell}}{\theta_{1,k,\ell}+\eta_{1,k,\ell}}+\frac{n-\tau_0 }{n-\tau}\frac{\theta_{2,k,\ell}\eta_{2,k,\ell}}{\theta_{2,k,\ell}+\eta_{2,k,\ell}}  }{
	\frac{\tau_0-\tau}{n-\tau}\frac{ \theta_{1,k,\ell}}{\theta_{1,k,\ell}+\eta_{1,k,\ell}}+\frac{n-\tau_0 }{n-\tau}\frac{ \theta_{2,k,\ell}}{\theta_{2,k,\ell}+\eta_{2,k,\ell}} 	
} . 
\]
Clearly when $\tau=\tau_0$  we have $\theta_{2,k,\ell}^{\tau_0} = \theta_{2,k,\ell}$ and $\eta_{2,k,\ell}^{\tau_0} = \eta_{2,k,\ell}$.

Correspondingly, we denote the MLEs  as
\begin{align*}  
\wh \theta_{1, k, \ell}^\tau& ={
	\sum_{(i,j)\in \wh S^\tau_{
			1, k,\ell} }
	\sum_{t=1}^{\tau} X_{i,j}^t(1 - X_{i,j}^{t-1}) \Big/ \hspace{-3mm}
	\sum_{(i,j)\in \wh S^\tau_{1, k,\ell} }
	\sum_{t=1}^\tau (1 - X_{i,j}^{t-1})},\\[1ex]  
\wh \eta^\tau_{1, k, \ell}& =
\sum_{(i,j)\in \wh S^\tau_{1,k,\ell} }
\sum_{t=1}^{\tau}  (1- X_{i,j}^t)X_{i,j}^{t-1} \Big/ \hspace{-3mm}
\sum_{(i,j)\in \wh S^\tau_{1,k,\ell} }
\sum_{t=1}^\tau X_{i,j}^{t-1},  \\  
\wh \theta^\tau_{2, k, \ell}& ={
	\sum_{(i,j)\in \wh S^\tau_{
			2, k,\ell} }
	\sum_{t=\tau+1}^{n} X_{i,j}^t(1 - X_{i,j}^{t-1}) \Big/ \hspace{-3mm}
	\sum_{(i,j)\in \wh S^\tau_{2, k,\ell} }
	\sum_{t=\tau+1}^n (1 - X_{i,j}^{t-1})},\\[1ex]  
\wh \eta^\tau_{2, k, \ell}& =
\sum_{(i,j)\in \wh S^\tau_{2,k,\ell} }
\sum_{t=\tau+1}^{n}  (1- X_{i,j}^t)X_{i,j}^{t-1} \Big/ \hspace{-3mm}
\sum_{(i,j)\in \wh S^\tau_{2,k,\ell} }
\sum_{t=\tau+1}^n X_{i,j}^{t-1},
\end{align*} 
where $\wh S^\tau_{1, k,\ell} $  and $\wh S^\tau_{2, k,\ell} $ are defined in a similar way to $\wh S_{k,\ell}$ (cf. Section \ref{se322}),  based on the estimated memberships $\wh\nu^{1,\tau}$ and $\wh\nu^{\tau+1,n}$, respectively.

Denote 
\begin{align*} 
&\mM_n(\tau):=    l( \{ \wh\theta^\tau_{1, k, \ell}, \wh\eta^\tau_{1, k, \ell}\};\; \wh \nu^{1,\tau})+ l( \{ \wh\theta^\tau_{2, k, \ell}, \wh\eta^\tau_{2, k, \ell}\};\; \wh \nu^{\tau+1,n}), \\ 
& \mM(\tau):= E  l( \{ \theta_{1,k, \ell}, \eta_{1,k, \ell}\};  \nu^{1,\tau})+E  l( \{ \theta_{2,k,\ell}^\tau, \eta_{2,k,\ell}^\tau \};  \nu^{\tau+1,n}).
\end{align*}
%$   l( \{ \wh\theta_{k, \ell}, \wh\eta_{k, \ell}\};\; \wh \nu^{1,n})$
% In order to use Theorem 3.2.5 of \cite{van1996weak}, we need to  evaluate 
% $\mM(\tau)-\mM(\tau_0)$ and 
% $E\sup_{\tau} |\mM_n(\tau)-\mM(\tau) -\mM_n(\tau_0)+\mM(\tau_0)| $.
We first evaluate several terms in (i)-(v), and all these results will be combined to obtain the error bound in (vi). % It would be easier to understand the proofs by reading (vi) first. 
In particular, (vi) states that as a direct result of (v), we can focus on the small neighborhood of $[\tau_{n,p},\tau_0]$ when searching for the estimator $\wh\tau$.  Further, the inequality \eqref{vi_key} transforms the error bound for $\tau_0-\wh\tau$ into the error bounds of the terms that we derived in (i)-(iv).

\noindent
{\bf (i) Evaluating $\mM(\tau)-\mM(\tau_0)$.}  

\noindent
Note that $\tau_0=\arg \max_{n_0 \le \tau \le n - n_0} \mM(\tau)$, and for any $\tau\in [n_0 , \tau_0]$,
\begin{eqnarray*}
	\mM(\tau)-\mM(\tau_0) &= &E  l( \{ \theta_{1,k, \ell}, \eta_{1,k, \ell}\};  \nu^{1,\tau})+E  l( \{ \theta^\tau_{2,k,\ell}, \eta^\tau_{2,k,\ell}\};  \nu^{\tau+1,n})\\
	&&- E  l( \{ \theta_{1,k, \ell}, \eta_{1,k, \ell}\};  \nu^{1,\tau_0})-E  l( \{ \theta_{2,k,\ell}, \eta_{2,k,\ell}\};  \nu^{\tau_0+1,n})    \\
	&=&  E  l( \{ \theta_{2,k,\ell}^\tau, \eta_{2,k,\ell}^\tau \};  \nu^{\tau+1,\tau_0})-E  l( \{ \theta_{1,k,\ell}, \eta_{1,k,\ell}\};  \nu^{\tau+1,\tau_0}) \\
	&&+ E  l( \{ \theta_{2,k,\ell}^\tau, \eta_{2,k,\ell}^\tau \};  \nu^{\tau_0+1,n})-E  l( \{ \theta_{2,k,\ell}, \eta_{2,k,\ell}\};  \nu^{\tau_0+1,n}).
\end{eqnarray*}

Recall that 
\begin{align*}
l( \{ \theta_{k, \ell}, \eta_{k, \ell}\};   \nu) &= 
\sum_{1 \le k \le \ell \le q}
\sum_{(i,j)\in   S_{k,l} } \sum_{t=1}^n
\Big\{ X_{i,j}^t(1-X_{i,j}^{t-1})  \log \theta_{k,\ell} \\
+  (1-X_{i,j}^t)&(1-X_{i,j}^{t-1})  \log (1 -\theta_{k,\ell})
+  (1-X_{i,j}^t) X_{i,j}^{t-1} \log \eta_{k, \ell} 
+ X_{i,j}^t X_{i,j}^{t-1} \log(1- \eta_{k, \ell}) \Big\}.
\end{align*}
By Taylor expansion and the fact that the partial derivative of the expected likelihood evaluated at the true values equals zero we have, there exist $\theta^*_{k,\ell} \in [\theta_{1,k,\ell}, \theta^\tau_{2, k,\ell}],\eta_{k,\ell}^*\in [\eta_{1,k,\ell}, \eta^\tau_{2, k,\ell}], 1\leq k\leq \ell\leq q$, such that
\begin{eqnarray*}
	&&E  l( \{ \theta^\tau_{2,k,\ell}, \eta^\tau_{2,k,\ell}\};  \nu^{\tau+1,\tau_0})-E  l( \{ \theta_{1,k,\ell}, \eta_{1,k,\ell}\};  \nu^{\tau+1,\tau_0})  \\
	&=&-\frac{1}{2} \sum_{1 \le k \le \ell \le q}
	n_{k,\ell}(\tau_0-\tau)  \Bigg\{  
	\frac{\theta_{1,k,\ell}\eta_{1,k,\ell}}{\theta_{1,k,\ell}+\eta_{1,k,\ell}} \Big(\frac{\theta^\tau_{2,k,\ell}-\theta_{1,k,\ell}}{\theta_{k,\ell}^*}\Big)^2
	+ \frac{(1-\theta_{1,k,\ell})\eta_{1,k,\ell}}{\theta_{1,k,\ell}+\eta_{1,k,\ell}} \Big(\frac{\theta^\tau_{2,k,\ell}-\theta_{1,k,\ell}}{1-\theta_{k,\ell}^*}\Big)^2  \\
	&&+  \frac{\theta_{1,k,\ell}\eta_{1,k,\ell}}{\theta_{1,k,\ell}+\eta_{1,k,\ell}} \Big(\frac{\eta^\tau_{2,k,\ell}-\eta_{1,k,\ell}}{\eta_{k,\ell}^*}\Big)^2
	+ \frac{(1-\eta_{1,k,\ell})\theta_{1,k,\ell}}{\theta_{1,k,\ell}+\eta_{1,k,\ell}} \Big(\frac{\eta^\tau_{2,k,\ell}-\eta_{1,k,\ell}}{1-\eta_{k,\ell}^*}\Big)^2 
	\Bigg\} \\
	&\leq& - C_1(\tau_0-\tau)\sum_{1\leq k\leq \ell\leq q } n_{k,\ell} [(\theta_{1,k,\ell}-\theta_{2,k,\ell})^2 +(\eta_{1,k,\ell}-\eta_{2,k,\ell})^2]  \\
	&\leq&-  C_1(\tau_0-\tau)\big[ \| \bW_{1,1}-\bW_{2,1}\|_F^2+\| \bW_{1,2}-\bW_{2,2}\|_F^2\big],
\end{eqnarray*}
for some constant $C_1>0$.
Here in the first step we have used the fact that for any $(i,j)\in S_{k,\ell}$ and $t\leq\tau_0$, $E X_{i,j}^t(1-X_{i,j}^{t-1})=E X_{i,j}^{t-1}(1-X_{i,j}^{t})= \frac{\theta_{1,k,\ell}\eta_{1,k,\ell}}{\theta_{1,k,\ell}+\eta_{1,k,\ell}}$, $E (1-X_{i,j}^t)(1-X_{i,j}^{t-1})= \frac{(1-\theta_{1,k,\ell})\eta_{1,k,\ell}}{\theta_{1,k,\ell}+\eta_{1,k,\ell}}$, and 
$E X_{i,j}^tX_{i,j}^{t-1}=\frac{(1-\eta_{1,k,\ell})\theta_{1,k,\ell}}{\theta_{1,k,\ell}+\eta_{1,k,\ell}}$. Similarly, there exist $\theta^\dagger_{k,\ell} \in [\theta_{2,k,\ell}, \theta^\tau_{2, k,\ell}],\eta_{k,\ell}^\dagger \in [\eta_{2,k,\ell}, \eta^\tau_{2, k,\ell}], 1\leq k\leq \ell\leq q$, such that
\begin{eqnarray*}
	&& E  l( \{ \theta^\tau_{2,k,\ell}, \eta^\tau_{2,k,\ell}\};  \nu^{\tau_0+1,n})-E  l( \{ \theta_{2,k,\ell}, \eta_{2,k,\ell}\};  \nu^{\tau_0+1,n}) \\
	&=&- \frac{1}{2}\sum_{1 \le k \le \ell \le q}
	n_{k,\ell}(n-\tau_0 )  \Bigg\{  
	\frac{\theta_{2,k,\ell}\eta_{2,k,\ell}}{\theta_{2,k,\ell}+\eta_{2,k,\ell}} \Big(\frac{\theta^\tau_{2,k,\ell}-\theta_{2,k,\ell}}{\theta_{k,\ell}^\dagger}\Big)^2
	+ \frac{(1-\theta_{2,k,\ell})\eta_{2,k,\ell}}{\theta_{2,k,\ell}+\eta_{2,k,\ell}} \Big(\frac{\theta^\tau_{2,k,\ell}-\theta_{2,k,\ell}}{1-\theta_{k,\ell}^\dagger}\Big)^2  \\
	&&+  \frac{\theta_{2,k,\ell}\eta_{2,k,\ell}}{\theta_{2,k,\ell}+\eta_{2,k,\ell}} \Big(\frac{\eta^\tau_{2,k,\ell}-\eta_{2,k,\ell}}{\eta_{k,\ell}^\dagger}\Big)^2
	+ \frac{(1-\eta_{2,k,\ell})\theta_{2,k,\ell}}{\theta_{2,k,\ell}+\eta_{2,k,\ell}} \Big(\frac{\eta^\tau_{2,k,\ell}-\eta_{2,k,\ell}}{1-\eta_{k,\ell}^\dagger}\Big)^2 
	\Bigg\} \\
	&\leq& - C_2'(n-\tau_0)\sum_{1\leq k\leq \ell\leq q } \frac{n_{k,\ell}(\tau_0-\tau)^2}{(n-\tau)^2} [(\theta_{1,k,\ell}-\theta_{2,k,\ell})^2 +(\eta_{1,k,\ell}-\eta_{2,k,\ell})^2]  \\
	&\leq& - \frac{ C_2(\tau_0-\tau)^2}{n-\tau} \big[ \| \bW_{1,1}-\bW_{2,1}\|_F^2+\| \bW_{1,2}-\bW_{2,2}\|_F^2\big],
\end{eqnarray*}
for some constants $C_2', C_2>0$.
Consequently, we conclude that there exists a constant $C_3>0$ such that for any $n_0\leq \tau\leq \tau_0$, we have
\begin{eqnarray}\label{Con1}
\mM(\tau)-\mM(\tau_0) \leq -C_3 (\tau_0-\tau)\big[ \| \bW_{1,1}-\bW_{2,1}\|_F^2+\| \bW_{1,2}-\bW_{2,2}\|_F^2\big].
\end{eqnarray}

\noindent
{\bf (ii)  Evaluating  $\sup_{\tau\in [\tau_{n,p}, \tau_0]}P(\wh\nu(\tau)\neq \nu) $. }  

\noindent
Let $\wh\nu(\tau)$ be either $\wh\nu^{1,\tau}$ or $\wh\nu^{\tau+1,n}$.  Note that the membership maps of the networks before/after $\tau$ remain to be $\nu$. 
From Theorems \ref{DKthm} and  \ref{clusteringCons}, we have, under conditions C2-C4,  for any constant $B>0$,  there exists a large enough constant $C_B$ such that
\begin{eqnarray*}
	\sup_{\tau\in [\tau_{n,p}, \tau_0]}P(\wh\nu(\tau)\neq \nu) \leq  C_B(\tau_0-\tau_{n,p}) p[(pn)^{-(B+1)}+\exp\{-B\sqrt{p}\}] .
\end{eqnarray*}
Note that by choosing  $B$ to be large enough, we have  $p(\tau_0-\tau_{n,p})(pn)^{-(B+1)}=o\left( \sqrt{ \frac{(\tau_0-\tau_{n,p})\log (np)}{n^2s_{\min}^2}} \right)$. On the other hand,    the assumption that $\frac{\log (np)}{\sqrt{p}}\rightarrow 0$ in condition C4 implies $pn\sqrt{\frac{(\tau_0-\tau_{n,p})s_{\min}^2}{\log (np)} }=o( \exp\{B\sqrt{p}\})$ for some large enough constant $B$. Consequently, we have $(\tau_0-
\tau_{n,p})p\exp\{-B\sqrt{p}\} =o\left( \sqrt{ \frac{(\tau_0-\tau_{n,p})\log (np)}{n^2s_{\min}^2}} \right)$, and hence we conclude that $\sup_{\tau\in [\tau_{n,p}, \tau_0]}P(\wh\nu(\tau)\neq \nu)= o\left( \sqrt{ \frac{(\tau_0-\tau_{n,p})\log (np)}{n^2s_{\min}^2}} \right)$.

\noindent
{\bf (iii) Evaluating $\sup_{\tau\in [\tau_{n,p}, \tau_0]} [\mM_n(\tau)-\mM(\tau) ]$ when $\wh\nu(\tau)=\nu$.}  

\noindent
From (ii) we have with probability greater than $1-o\left( \sqrt{ \frac{(\tau_0-\tau_{n,p})\log (np)}{n^2s_{\min}^2}} \right)$, $\wh\nu(\tau)=\nu$ for all $\tau\in [\tau_{n,p}, \tau_0]$. 
For simplicity, in this part we assume that  $\wh S^\tau_{1,k,\ell}= \wh S^\tau_{2,k,\ell}=S_{k,l}$  (or equivalently $\wh{\nu}^{1,\tau}=\wh{\nu}^{\tau+1,n}=\nu$) holds for all $1\leq k\leq \ell\leq q$ and $\tau_{n,p}\leq \tau\leq \tau_0$  without indicating that this holds in probability.

Denote
\begin{align*}
g_{1, i,j }(\theta, \eta;\tau)&= 
\sum_{t=1}^\tau
\Big\{ X_{i,j}^t(1-X_{i,j}^{t-1})  \log  \theta  \\
+  (1-X_{i,j}^t)&(1-X_{i,j}^{t-1})  \log (1 -\theta)
+  (1-X_{i,j}^t) X_{i,j}^{t-1} \log \eta 
+ X_{i,j}^t X_{i,j}^{t-1} \log(1- \eta ) \Big\}, 
\end{align*}
and 
\begin{align*}
g_{2, i,j }(\theta, \eta;\tau)&= 
\sum_{t=\tau+1}^n
\Big\{ X_{i,j}^t(1-X_{i,j}^{t-1})  \log  \theta  \\
+  (1-X_{i,j}^t)&(1-X_{i,j}^{t-1})  \log (1 -\theta)
+  (1-X_{i,j}^t) X_{i,j}^{t-1} \log \eta 
+ X_{i,j}^t X_{i,j}^{t-1} \log(1- \eta ) \Big\}.
\end{align*}
When $\wh \nu=\nu$, we have,
\begin{eqnarray}\label{Mn}
\\
&&	\mM_n(\tau)-\mM(\tau)  \nonumber \\
&=&  \sum_{1\leq k\leq \ell\leq q} \sum_{(i,j)\in S_{k,\ell}} g_{1,i,j} (\wh \theta^\tau_{1, k, \ell}, \wh \eta^\tau_{1, k, \ell}; \tau)  +\sum_{1\leq k\leq \ell\leq q} \sum_{(i,j)\in S_{k,\ell}}  g_{2,i,j} (\wh \theta^\tau_{2, k, \ell}, \wh \eta^\tau_{2, k, \ell}; \tau)\nonumber\\
&&-E \sum_{1\leq k\leq \ell\leq q} \sum_{(i,j)\in S_{k,\ell}} g_{1,i,j} (  \theta_{1, k, \ell},   \eta_{1, k, \ell}; \tau)  
-E\sum_{1\leq k\leq \ell\leq q} \sum_{(i,j)\in S_{k,\ell}}  g_{2,i,j} (  \theta^\tau_{2, k, \ell},   \eta^\tau_{2, k, \ell}; \tau)  \nonumber\\
&=&\sum_{1\leq k\leq \ell\leq q} \sum_{(i,j)\in S_{k,\ell}} g_{1,i,j} (\wh \theta^\tau_{1, k, \ell}, \wh \eta^\tau_{1, k, \ell}; \tau)  +\sum_{1\leq k\leq \ell\leq q} \sum_{(i,j)\in S_{k,\ell}}  g_{2,i,j} (\wh \theta^\tau_{2, k, \ell}, \wh \eta^\tau_{2, k, \ell}; \tau)\nonumber\\
&&-  \sum_{1\leq k\leq \ell\leq q} \sum_{(i,j)\in S_{k,\ell}} g_{1,i,j} (  \theta_{1, k, \ell},   \eta_{1, k, \ell}; \tau)  
- \sum_{1\leq k\leq \ell\leq q} \sum_{(i,j)\in S_{k,\ell}}  g_{2,i,j} (  \theta^\tau_{2, k, \ell},   \eta^\tau_{2, k, \ell}; \tau) \nonumber\\
&&+ \sum_{1\leq k\leq \ell\leq q} \sum_{(i,j)\in S_{k,\ell}} g_{1,i,j} (  \theta_{1, k, \ell},   \eta_{1, k, \ell}; \tau)  
+\sum_{1\leq k\leq \ell\leq q} \sum_{(i,j)\in S_{k,\ell}}  g_{2,i,j} (  \theta^\tau_{2, k, \ell},   \eta^\tau_{2, k, \ell}; \tau)    \nonumber\\
&&	-E \sum_{1\leq k\leq \ell\leq q} \sum_{(i,j)\in S_{k,\ell}} g_{1,i,j} (  \theta_{1, k, \ell},   \eta_{1, k, \ell}; \tau)  
-E\sum_{1\leq k\leq \ell\leq q} \sum_{(i,j)\in S_{k,\ell}}  g_{2,i,j} (  \theta^\tau_{2, k, \ell},   \eta^\tau_{2, k, \ell}; \tau)  \nonumber
\end{eqnarray}
Note that $\{\wh \theta^\tau_{1, k,\ell}, \wh \eta^\tau_{1, k,\ell}\}$ is the maximizer of $ \sum_{1\leq k\leq \ell\leq q} \sum_{(i,j)\in S_{k,\ell}} g_{1,i,j} (\theta_{ k, \ell},   \eta_{ k, \ell}; \tau) $. Applying Taylor's expansion we have, there exist random scalars $\theta^-_{k,\ell} \in [\wh\theta^\tau_{1,k,\ell}, \theta_{1,k,\ell}], \eta^-_{k,\ell}\in [\wh\eta^\tau_{1,k,\ell} , \eta_{1,k,\ell}]$ such that 
\begin{eqnarray*} 
	&&
	\sum_{1\leq k\leq \ell\leq q} \sum_{(i,j)\in S_{k,\ell}} g_{1,i,j} (\wh \theta^\tau_{1, k, \ell}, \wh \eta^\tau_{1, k, \ell}; \tau) 
	-  \sum_{1\leq k\leq \ell\leq q} \sum_{(i,j)\in S_{k,\ell}} g_{1,i,j} (  \theta_{1, k, \ell},   \eta_{1, k, \ell}; \tau)  \\
	&\leq& \frac{1}{2} \sum_{1 \le k \le \ell \le q}
	n_{k,\ell} \tau  \Bigg\{  
	\Big(\frac{\theta_{1,k,\ell}-\wh\theta^\tau_{1,k,\ell}}{\theta_{k,\ell}^-}\Big)^2+
	\Big(\frac{\theta_{1,k,\ell}-\wh\theta^\tau_{1,k,\ell}}{1-\theta_{k,\ell}^-}\Big)^2+  
	\Big(\frac{\eta_{1,k,\ell}-\wh\eta^\tau_{1,k,\ell}}{\eta_{k,\ell}^-}\Big)^2
	+  
	\Big(\frac{\eta_{1,k,\ell}-\wh\eta^\tau_{1,k,\ell}}{1-\eta_{k,\ell}^-}\Big)^2
	\Bigg\} .
\end{eqnarray*}
On the other hand, when $\wh \nu=\nu$, similar to Proposition \ref{uniformCon}  and Theorem \ref{uniformCon2}, we can show that for any $B>0$, there exists a large enough constant $C^{-}$ such that $\max_{1\leq k\leq \ell\leq q, \tau\in[\tau_{n,p},\tau_0]} |\wh \theta^\tau_{1, k, \ell} -\theta_{1, k, \ell}|\leq C^-\sqrt{\frac{\log (np )}{ns_{\min}^2}}$, and 
$\max_{1\leq k\leq \ell\leq q, \tau\in[\tau_{n,p},\tau_0]} |\wh \eta^\tau_{1, k, \ell} -\eta_{1, k, \ell}|=C^-\sqrt{\frac{\log (np)}{ns_{\min}^2}}$ hold with probability greater than $1-O((np)^{-B})$. Consequently, we have, when $\wh\nu=\nu$, there exits a large enough constant $C_4>0$ such that
\begin{eqnarray}\label{Mn1} 
&&
\sum_{1\leq k\leq \ell\leq q} \sum_{(i,j)\in S_{k,\ell}} g_{1,i,j} (\wh \theta^\tau_{1, k, \ell}, \wh \eta^\tau_{1, k, \ell}; \tau) 
-  \sum_{1\leq k\leq \ell\leq q} \sum_{(i,j)\in S_{k,\ell}} g_{1,i,j} (  \theta_{1, k, \ell},   \eta_{1, k, \ell}; \tau) \nonumber \\
&&\leq C_4 \tau\sum_{1 \le k \le \ell \le q}
n_{k,\ell} \frac{\log (np )}{ns_{\min}^2}  
\nonumber\\
&&\leq \frac{C_4\tau p^2  \log (np )}{ns_{\min}^2}.
\end{eqnarray}
Similarly, we have there exists a large enough constant $C_5>0$ such that with probability greater than $1-O((np)^{-B})$, 
\begin{eqnarray}\label{Mn2} 
&&
\sum_{1\leq k\leq \ell\leq q} \sum_{(i,j)\in S_{k,\ell}} g_{2,i,j} (\wh \theta^\tau_{2, k, \ell}, \wh \eta^\tau_{2, k, \ell}; \tau) 
-  \sum_{1\leq k\leq \ell\leq q} \sum_{(i,j)\in S_{k,\ell}} g_{2,i,j} (  \theta^\tau_{2, k, \ell},   \eta^\tau_{2, k, \ell}; \tau)  
\nonumber\\
&&\leq  \frac{C_5(n-\tau) p^2  \log (np )}{ns_{\min}^2}.
\end{eqnarray}
On the other hand, similar to Lemma \ref{concentration}, there exists a constant $C_6>0$ such that   with probability greater than $1-O((np)^{-B})$, 
\begin{eqnarray} \label{Mn3}
\\ 
&&
\left| \sum_{1\leq k\leq \ell\leq q} \sum_{(i,j)\in S_{k,\ell}} g_{1,i,j} (  \theta_{1, k, \ell},   \eta_{1, k, \ell}; \tau) 
-E \sum_{1\leq k\leq \ell\leq q} \sum_{(i,j)\in S_{k,\ell}} g_{1,i,j} (  \theta_{1, k, \ell},   \eta_{1, k, \ell}; \tau)   \right|  \nonumber\\
&\leq & C_6\tau p^2   \sqrt{\frac{  \log (np )}{\tau p^2 } }, \nonumber
\end{eqnarray}
and 
\begin{eqnarray} \label{Mn4}
\\ 
&&
\left| \sum_{1\leq k\leq \ell\leq q} \sum_{(i,j)\in S_{k,\ell}} g_{2,i,j} (  \theta^\tau_{2, k, \ell},   \eta^\tau_{2, k, \ell}; \tau) 
-E \sum_{1\leq k\leq \ell\leq q} \sum_{(i,j)\in S_{k,\ell}} g_{2,i,j} (  \theta^\tau_{2, k, \ell},   \eta^\tau_{2, k, \ell}; \tau)   \right|  \nonumber\\
&\leq & C_6(n-\tau)p^2   \sqrt{\frac{  \log (np)}{(n-\tau)p^2 } } .\nonumber
\end{eqnarray}
Combining \eqref{Mn}, \eqref{Mn1}, \eqref{Mn2}, \eqref{Mn3} and \eqref{Mn4} we conclude that when $\wh\nu=\nu$, there exists a large enough constant $C_0>0$ such that with probability greater than $1-O((np)^{-B})$, 
\begin{eqnarray}\label{Mn5} 
\\
\sup_{\tau\in [\tau_{n,p}, \tau_0]}|\mM_n(\tau)-\mM(\tau) |\leq C_0np^2 \left\{    {\frac{  \log (np )}{ns_{\min}^2 } } +\sqrt{\frac{  \log (np)}{np^2 } } \right\} = O\Bigg(np^2\sqrt{\frac{\log(np)}{ns_{\min}^2}}\Bigg). \nonumber
\end{eqnarray}

\noindent
{\bf (iv) Evaluating $E\sup_{\tau\in [\tau_{n,p},\tau_0]} |\mM_n(\tau)-\mM(\tau) -\mM_n(\tau_0)+\mM(\tau_0)| $.}  

\noindent
Notice that when $\wh\nu=\nu$,
\begin{eqnarray*} 
	&&	\mM_n(\tau)-\mM(\tau) -\mM_n(\tau_0)+\mM(\tau_0) \nonumber\\
	&=&  \sum_{1\leq k\leq \ell\leq q} \sum_{(i,j)\in S_{k,\ell}} g_{1,i,j} (\wh \theta^\tau_{1, k, \ell}, \wh \eta^\tau_{1, k, \ell}; \tau)  +\sum_{1\leq k\leq \ell\leq q} \sum_{(i,j)\in S_{k,\ell}}  g_{2,i,j} (\wh \theta^\tau_{2, k, \ell}, \wh \eta^\tau_{1, k, \ell}; \tau)\nonumber\\
	&&-E \sum_{1\leq k\leq \ell\leq q} \sum_{(i,j)\in S_{k,\ell}} g_{1,i,j} (  \theta_{1, k, \ell},   \eta_{1, k, \ell}; \tau)  
	-E\sum_{1\leq k\leq \ell\leq q} \sum_{(i,j)\in S_{k,\ell}}  g_{2,i,j} (  \theta^\tau_{2, k, \ell},   \eta^\tau_{2, k, \ell}; \tau)   \\
	&& -   \sum_{1\leq k\leq \ell\leq q} \sum_{(i,j)\in S_{k,\ell}} g_{1,i,j} (\wh \theta^{\tau_0}_{1, k, \ell}, \wh \eta^{\tau_0}_{1, k, \ell}; \tau_0)  -\sum_{1\leq k\leq \ell\leq q} \sum_{(i,j)\in S_{k,\ell}}  g_{2,i,j} (\wh \theta^{\tau_0}_{2, k, \ell}, \wh \eta^{\tau_0}_{2, k, \ell}; \tau_0)\nonumber\\
	&&+E \sum_{1\leq k\leq \ell\leq q} \sum_{(i,j)\in S_{k,\ell}} g_{1,i,j} (  \theta_{1, k, \ell},   \eta_{1, k, \ell}; \tau_0)  
	+E\sum_{1\leq k\leq \ell\leq q} \sum_{(i,j)\in S_{k,\ell}}  g_{2,i,j} (  \theta_{2, k, \ell},   \eta_{2, k, \ell}; \tau_0)
\end{eqnarray*}
Note that
\begin{eqnarray*}
	&&g_{1,i,j} (\wh \theta^\tau_{1, k, \ell}, \wh \eta^\tau_{1, k, \ell}; \tau) - g_{1,i,j} (\wh \theta^{\tau_0}_{1, k, \ell}, \wh \eta^{\tau_0}_{1, k, \ell}; \tau_0) -E[g_{1,i,j} (  \theta_{1, k, \ell},   \eta_{1, k, \ell}; \tau)  -g_{1,i,j} (  \theta_{1, k, \ell},   \eta_{1, k, \ell}; \tau_0)  ] \\
	&=&  \sum_{t=1}^\tau
	\Bigg\{ X_{i,j}^t(1-X_{i,j}^{t-1})  \log\frac{  \wh \theta^\tau_{1, k, \ell} }{ \wh \theta^{\tau_0}_{1, k, \ell}}  
	+  (1-X_{i,j}^t)(1-X_{i,j}^{t-1})  \log\frac {1 -\wh \theta^\tau_{1, k, \ell}}{1-\wh \theta^{\tau_0}_{1, k, \ell}}\\
	&&+  (1-X_{i,j}^t) X_{i,j}^{t-1}  \log\frac{  \wh \eta^\tau_{1, k, \ell} }{ \wh \eta^{\tau_0}_{1, k, \ell}}  
	+ X_{i,j}^t X_{i,j}^{t-1}\log\frac {1 -\wh \eta^{\tau}_{1, k, \ell}}{1-\wh \eta^{\tau_0}_{1, k, \ell}  }
	\Bigg\}- 
	\sum_{t=\tau+1}^{\tau_0}
	\Bigg\{ X_{i,j}^t(1-X_{i,j}^{t-1})  \log  \wh \theta^{\tau_0}_{1, k, \ell} \\
	&&+  (1-X_{i,j}^t)(1-X_{i,j}^{t-1})  \log  (1 -\wh \theta^{\tau_0}_{1, k, \ell}) 
	+  (1-X_{i,j}^t) X_{i,j}^{t-1}  \log  \wh \eta^{\tau_0}_{1, k, \ell}
	+ X_{i,j}^t X_{i,j}^{t-1}\log (1-\wh \eta^{\tau_0}_{1, k, \ell}  )  \Bigg\} \\
	&&+ E	\sum_{t=\tau+1}^{\tau_0}
	\Big\{ X_{i,j}^t(1-X_{i,j}^{t-1})  \log  \theta_{1,k,\ell}  
	+  (1-X_{i,j}^t)(1-X_{i,j}^{t-1})  \log (1 -\theta_{1,k,\ell}  ) \\
	&&	+  (1-X_{i,j}^t) X_{i,j}^{t-1} \log \eta_{1,k,\ell}   
	+ X_{i,j}^t X_{i,j}^{t-1} \log(1- \eta_{1,k,\ell}   ) \Big\}. 
\end{eqnarray*}
When sum over all   $(i,j)\in S_{k,\ell}$ and $1\leq k\leq\ell\leq q$, the last two terms in the above inequality can be bounded similar to \eqref{Mn1} and \eqref{Mn3}, with $\tau$ replaced by $\tau_0-\tau$. For the first term, with some calculations we have there exists a constant $c_1>0$ such that with probability larger than $1-O(np)^{-B})$, 
\begin{align}\label{appendA}
&\sup_{1\leq k\leq \ell\leq q}\left| \wh \theta^\tau_{1, k, \ell}- \wh \theta^{\tau_0}_{1, k, \ell} \right|
\leq c_1 \sqrt{\frac{  \tau_0-\tau }{\tau_0} }\sqrt{\frac{\log (np)}{ns_{\min}^2}}, \\
&
\sup_{1\leq k\leq \ell\leq q}	\left| \wh \eta^\tau_{1, k, \ell}- \wh \eta^{\tau_0}_{1, k, \ell} \right|
\leq c_1 \sqrt{\frac{ \tau_0-\tau}{\tau_0} }\sqrt{\frac{\log (np)}{ns_{\min}^2}}.\nonumber
\end{align}
Brief derivations of \eqref{appendA}  are provided in Section \ref{append}.  Consequently, similar to \eqref{Mn5}, we have there exists a large enough constant $c_2>0$ such that
\begin{eqnarray}\label{Mn_M1}
&&\Bigg|	 \sum_{1\leq k\leq \ell\leq q}\sum_{(i,j)\in S_{k,\ell}}\Big[g_{1,i,j} (\wh \theta^\tau_{1, k, \ell}, \wh \eta^\tau_{1, k, \ell}; \tau)
- g_{1,i,j} (\wh \theta^{\tau_0}_{1, k, \ell}, \wh \eta^{\tau_0}_{1, k, \ell}; \tau_0)\Big]  \nonumber\\
&&-E\sum_{1\leq k\leq \ell\leq q}\sum_{(i,j)\in S_{k,\ell}}\Big[g_{1,i,j} (  \theta_{1, k, \ell},   \eta_{1, k, \ell}; \tau)  -g_{1,i,j} (  \theta_{1, k, \ell},   \eta_{1, k, \ell}; \tau_0)  \Big]   \Bigg|   \nonumber\\
&\leq &  c_2    p^2 \sqrt{\frac{(\tau_0-\tau)\log (np)}{s_{\min}^2}} .
\end{eqnarray}
Here in the last step we have used the fact that $\tau_0\simeq O(n)$, $\sqrt{\frac{\log (np)}{p^2}}\leq \sqrt{\frac{\log (np)}{s_{\min}^2}} $, and $\frac{(\tau_0-\tau)\log (np)}{ns_{\min}^2}=o\Big( \sqrt{\frac{(\tau_0-\tau)\log (np)}{s_{\min}^2}}  \Big)$.
Similarly, note that,
\begin{eqnarray}\label{g2Decom}
\\
&&g_{2,i,j} (\wh \theta^\tau_{2, k, \ell}, \wh \eta^\tau_{2, k, \ell}; \tau) - g_{2,i,j} (\wh \theta^{\tau_0}_{2, k, \ell}, \wh \eta^{\tau_0}_{2, k, \ell}; \tau_0) -E[g_{2,i,j} (  \theta^\tau_{2, k, \ell},   \eta^\tau_{2, k, \ell}; \tau)  -g_{2,i,j} (  \theta_{2, k, \ell},   \eta_{2, k, \ell}; \tau_0)  ] 
\nonumber	\\
&=&  \sum_{t=\tau_0+1}^n
\Bigg\{ X_{i,j}^t(1-X_{i,j}^{t-1})  \Bigg[ \log\frac{  \wh \theta^\tau_{2, k, \ell} }{ \wh \theta^{\tau_0}_{2, k, \ell}}  
-\log\frac{    \theta^\tau_{2, k, \ell} }{   \theta_{2, k, \ell}} \Bigg]	+  (1-X_{i,j}^t)(1-X_{i,j}^{t-1}) \cdot 
\Bigg[ \log\frac{  1-\wh \theta^\tau_{2, k, \ell} }{1- \wh \theta^{\tau_0}_{2, k, \ell}}  
\nonumber	\\
&&	-\log\frac{ 1-   \theta^\tau_{2, k, \ell} }{ 1-  \theta_{2, k, \ell}} \Bigg]
+   X_{i,j}^t(1-X_{i,j}^{t-1})  \Bigg[ \log\frac{  \wh \eta^\tau_{2, k, \ell} }{ \wh \eta^{\tau_0}_{2, k, \ell}}  
-\log\frac{    \eta^\tau_{2, k, \ell} }{   \eta_{2, k, \ell}} \Bigg]
+ X_{i,j}^t X_{i,j}^{t-1}\Bigg[ \log\frac{  1-\wh \eta^\tau_{2, k, \ell} }{1- \wh \eta^{\tau_0}_{2, k, \ell}}  
\nonumber	\\
&& 
-\log\frac{ 1-   \eta^\tau_{2, k, \ell} }{ 1-  \eta_{2, k, \ell}} \Bigg]
\Bigg\} +[g_{2,i,j} (  \theta^\tau_{2, k, \ell},   \eta^\tau_{2, k, \ell}; \tau_0)  -g_{2,i,j} (  \theta_{2, k, \ell},   \eta_{2, k, \ell}; \tau_0)  ] 
\nonumber	\\
&&-E[g_{2,i,j} (  \theta^\tau_{2, k, \ell},   \eta^\tau_{2, k, \ell}; \tau_0)  -g_{2,i,j} (  \theta_{2, k, \ell},   \eta_{2, k, \ell}; \tau_0)  ] 
+ \sum_{t=\tau+1}^{\tau_0}\Bigg\{ X_{i,j}^t(1-X_{i,j}^{t-1})  \log  \wh \theta^\tau_{2, k, \ell} \nonumber\\
&&+(1-X_{i,j}^t)(1-X_{i,j}^{t-1})  \log  (1 -\wh \theta^\tau_{2, k, \ell}) 
+  (1-X_{i,j}^t) X_{i,j}^{t-1}  \log  \wh \eta^\tau_{2, k, \ell}
+ X_{i,j}^t X_{i,j}^{t-1}\log (1-\wh \eta^\tau_{2, k, \ell}  )  \Bigg\} 
\nonumber\\
&&- E	\sum_{t=\tau+1}^{\tau_0}
\Big\{ X_{i,j}^t(1-X_{i,j}^{t-1})  \log  \theta^\tau_{2,k,\ell}  
+  (1-X_{i,j}^t)(1-X_{i,j}^{t-1})  \log (1 -\theta^\tau_{2,k,\ell}  ) \nonumber\\
&&	+  (1-X_{i,j}^t) X_{i,j}^{t-1} \log \eta^\tau_{2,k,\ell}   
+ X_{i,j}^t X_{i,j}^{t-1} \log(1- \eta^\tau_{2,k,\ell}   ) \Big\} \nonumber\\
&:=&I +II - III +IV -V. \nonumber
\end{eqnarray}
For $II-III$, from Lemma \ref{concentration} and the fact that    $\left|   \theta^\tau_{2, k, \ell}-   \theta_{2, k, \ell} \right|
\leq	 \frac{c_3(\tau_0-\tau)}{n-\tau}  , 
\quad
{\rm and}
\quad
\left|   \eta^\tau_{2, k, \ell}-   \eta_{2, k, \ell} \right|
\leq \frac{c_3(\tau_0-\tau)}{n-\tau}$  for some large enough constant $c_3$, we have there exists a large enough constant $c_4>0$ such that with probability greater than
$1-O((np)^{-B})$,
\begin{eqnarray}\label{II-III}
\\
\left| \sum_{1\leq k\leq \ell\leq q}\sum_{(i,j)\in S_{k,\ell}} (II-III) \right|
\leq c_4 p^2 \frac{\tau_0-\tau}{n-\tau} \sqrt{\frac{\log(np)}{\tau_0p^2}}=
o\Bigg( p^2\sqrt{ \frac{(\tau_0-\tau)\log (np)}{s_{\min}^2}} \Bigg).  \nonumber
\end{eqnarray}
When sum over all   $(i,j)\in S_{k,\ell}$ and $1\leq k\leq\ell\leq q$,  the  $IV-V$ term can be bounded similar to \eqref{Mn1} and \eqref{Mn3}, with $\tau$ replaced by $\tau_0-\tau$, i.e., there exist a constant $c_5>0$ such that with probability greater than $1-O((np)^{-B})$,
\begin{eqnarray}\label{IV-V}
\left| \sum_{1\leq k\leq \ell\leq q}\sum_{(i,j)\in S_{k,\ell}} (IV-V) \right|
&\leq& c_5 p^2\Bigg[\frac{ (\tau_0-\tau)  \log (np )}{ns_{\min}^2}+ \sqrt{\tau_0-\tau }     \sqrt{\frac{  \log (np )}{ p^2 } } \Bigg] \nonumber\\
&=&O\Bigg( p^2\sqrt{ \frac{(\tau_0-\tau)\log (np)}{s_{\min}^2}} \Bigg). 
\end{eqnarray}
Lastly, similar to \eqref{appendA}, we can show that   there exists a constant $c_6>0$ such that  with probability larger than $1-O(np)^{-B})$, 
\begin{eqnarray}\label{appendA2} 
\sup_{1\leq k\leq \ell\leq q}\left| \log \frac{\wh \theta^\tau_{2, k, \ell}}{\theta^\tau_{2, k, \ell}}- \log \frac{ \wh \theta^{\tau_0}_{2, k, \ell} }{\theta_{2, k, \ell} }\right|
\leq	c_6 \sqrt{\frac{ \tau_0-\tau}{n} }\sqrt{\frac{\log (np)}{ns_{\min}^2}},
\\
\sup_{1\leq k\leq \ell\leq q}\left| \log \frac{\wh \eta^{\tau}_{2, k, \ell}}{\eta^\tau_{2, k, \ell}}- \log \frac{ \wh \eta^{\tau_0}_{2, k, \ell} }{\eta_{2, k, \ell} }\right|
\leq c_6 \sqrt{\frac{ \tau_0-\tau}{n} }\sqrt{\frac{\log (np)}{ns_{\min}^2}}. \nonumber
\end{eqnarray}
A brief proof of \eqref{appendA2} is provided in Section \ref{append}. 
Consequently,  we can show that there exists a constant $c_7>0$ such that  with probability larger than $1-O(np)^{-B})$, 
\begin{eqnarray}\label{Mn_M2}
&&\Bigg|	 \sum_{1\leq k\leq \ell\leq q}\sum_{(i,j)\in S_{k,\ell}}\Big[g_{2,i,j} (\wh \theta^\tau_{2, k, \ell}, \wh \eta^\tau_{2, k, \ell}; \tau)
- g_{2,i,j} (\wh \theta^{\tau_0}_{2, k, \ell}, \wh \eta^{\tau_0}_{2, k, \ell}; \tau_0)\Big]  \nonumber\\ 
&&-E\sum_{1\leq k\leq \ell\leq q}\sum_{(i,j)\in S_{k,\ell}}\Big[g_{2,i,j} (  \theta^\tau_{2, k, \ell},   \eta^\tau_{2, k, \ell}; \tau)  -g_{2,i,j} (  \theta_{2, k, \ell},   \eta_{2, k, \ell}; \tau_0)  \Big]   \Bigg|  \nonumber \\
&\leq &  c_7 p^2    \sqrt{ \frac{(\tau_0-\tau)\log (np)}{s_{\min}^2}}  .
\end{eqnarray}

Now combining \eqref{Mn_M1} and \eqref{Mn_M2} and  the  probability for $\wh\nu\neq\nu$  in (ii), we conclude that there exists a constant $C_0>0$ such that 
\begin{eqnarray}\label{EMn_M} 
&& E \sup_{\tau\in [\tau_{n,p}, \tau_0]} |\mM_n(\tau)-\mM(\tau) -\mM_n(\tau_0)+\mM(\tau_0)| \nonumber \\
&\leq& C_0np^2 \left\{   \sqrt{ \frac{(\tau_0-\tau_{n,p})\log (np)}{n^2 s_{\min}^2}}  +o\left( \sqrt{ \frac{(\tau_0-\tau_{n,p})\log (np)}{ n^2s_{\min}^2}} \right)\right\} \nonumber \\
&\leq& 2 C_0p^2 \sqrt{ \frac{(\tau_0-\tau_{n,p})\log (np)}{s_{\min}^2}} . 
\end{eqnarray}

\noindent
{\bf (v)  Evaluating  $\sup_{\tau\in[n_0, \tau_{n,p}] }[\mM_n(\tau)-\mM_n(\tau_0)]$. }  

\noindent
In this part we consider  the case when $\tau\in[n-n_0, \tau_{n,p}]$. 
We shall see    that $\sup_{\tau\in[n_0, \tau_{n,p}] }[\mM_n(\tau)-\mM_n(\tau_0)]<0$    in probability   and hence $\argmax_{\tau\in [n_0, \tau_0]} \mM_n(\tau)= \argmax_{\tau\in [\tau_{n,p}, \tau_0]} \mM_n(\tau)$ holds in probability. 
Note that for any $\tau\in[n-n_0, \tau_{n,p}]$, 
\begin{eqnarray}\label{MnTau0}
\mM_n(\tau)- \mM_n(\tau_0) =\mM_n(\tau) - \mM(\tau) - \mM_n(\tau_0)+ \mM(\tau_0) - [\mM(\tau_0) -\mM(\tau)].
\end{eqnarray}

Given $\wh \nu^{1,\tau}$ and $\wh \nu^{\tau+1,n}$, we define an intermediate term
\begin{align*} 
\mM_n^*(\tau):=      
l( \{  \theta^-_{\tau, k, \ell},  \eta^-_{\tau, k, \ell}\};\;   \wh \nu^{1,\tau})+ l( \{  \theta^*_{\tau, k, \ell},  \eta^*_{\tau, k, \ell}\};\;   \wh \nu^{\tau+1,n}). 
\end{align*}
where
\begin{align*}   
\theta^-_{\tau, k, \ell}={
	\frac{\sum_{(i,j)\in \wh S^\tau_{1, k,\ell} }    \frac{ \theta_{1,\nu(i),\nu(j)}\eta_{1,\nu(i),\nu(j)}}{\theta_{1,\nu(i),\nu(j)}+\eta_{1,\nu(i),\nu(j)}} 
	}
	{	\sum_{(i,j)\in \wh S^\tau_{1, k,\ell} }    \frac{ \eta_{1,\nu(i),\nu(j)}}{\theta_{1,\nu(i),\nu(j)}+\eta_{1,\nu(i),\nu(j)}}   
	}  
},\quad
\eta^-_{\tau, k, \ell} =
\frac{\sum_{(i,j)\in \wh S^\tau_{1, k,\ell} }   \frac{ \theta_{1,\nu(i),\nu(j)}\eta_{1,\nu(i),\nu(j)}}{\theta_{1,\nu(i),\nu(j)}+\eta_{1,\nu(i),\nu(j)}} 
}
{	\sum_{(i,j)\in \wh S^\tau_{1, k,\ell} }    \frac{  \theta_{1,\nu(i),\nu(j)}}{\theta_{1,\nu(i),\nu(j)}+\eta_{1,\nu(i),\nu(j)}}  
}  ,
\end{align*} 
and
\begin{align*}   
\theta^*_{\tau, k, \ell}& ={
	\frac{\sum_{(i,j)\in \wh S^\tau_{2, k,\ell} }  \left[  \frac{(\tau_0-\tau)\theta_{1,\nu(i),\nu(j)}\eta_{1,\nu(i),\nu(j)}}{\theta_{1,\nu(i),\nu(j)}+\eta_{1,\nu(i),\nu(j)}} +\frac{(n-\tau_0)\theta_{2,\nu(i),\nu(j)}\eta_{2,\nu(i),\nu(j)}}{\theta_{2,\nu(i),\nu(j)}+\eta_{2,\nu(i),\nu(j)}}   \right]
	}
	{	\sum_{(i,j)\in \wh S^\tau_{2, k,\ell} }  \left[  \frac{(\tau_0-\tau) \eta_{1,\nu(i),\nu(j)}}{\theta_{1,\nu(i),\nu(j)}+\eta_{1,\nu(i),\nu(j)}} +\frac{(n-\tau_0) \eta_{2,\nu(i),\nu(j)}}{\theta_{2,\nu(i),\nu(j)}+\eta_{2,\nu(i),\nu(j)}}   \right]
	}  
},\\[1ex]  
\eta^*_{\tau, k, \ell}& =
\frac{\sum_{(i,j)\in \wh S^\tau_{2, k,\ell} }  \left[  \frac{(\tau_0-\tau)\theta_{1,\nu(i),\nu(j)}\eta_{1,\nu(i),\nu(j)}}{\theta_{1,\nu(i),\nu(j)}+\eta_{1,\nu(i),\nu(j)}} +\frac{(n-\tau_0)\theta_{2,\nu(i),\nu(j)}\eta_{2,\nu(i),\nu(j)}}{\theta_{2,\nu(i),\nu(j)}+\eta_{2,\nu(i),\nu(j)}}   \right]
}
{	\sum_{(i,j)\in \wh S^\tau_{2, k,\ell}}  \left[  \frac{(\tau_0-\tau) \theta_{1,\nu(i),\nu(j)}}{\theta_{1,\nu(i),\nu(j)}+\eta_{1,\nu(i),\nu(j)}} +\frac{(n-\tau_0) \theta_{2,\nu(i),\nu(j)}}{\theta_{2,\nu(i),\nu(j)}+\eta_{2,\nu(i),\nu(j)}}   \right]
}  .
\end{align*} 
We have
\begin{eqnarray*} 
	\mM_n(\tau) - \mM(\tau) = \mM_n(\tau) %- E\mM_n(\tau) + E\mM_n(\tau) 
	-E \mM^*_n(\tau)+E\mM^*_n(\tau)- \mM(\tau)  .
\end{eqnarray*}
Note that the expected log-likelihood $E\sum_{1\leq i\leq j\leq p}g_{1,i,j}(\alpha_{1,i,j}, \beta_{1,i,j},\tau)$ is maximized at  $\alpha_{1,i,j}=\theta_{1,\nu(i),\nu(j)}, \beta_{1,i,j}=\eta_{1,\nu(i),\nu(j)}$, and $E\sum_{1\leq i\leq j\leq p}g_{2,i,j}(\alpha_{2,i,j}, \beta_{2,i,j},\tau)$ is maximized at  $\alpha_{2,i,j}=\theta_{\tau,\nu(i),\nu(j)}$, $\beta_{2,i,j}=\eta_{\tau,\nu(i),\nu(j)}$, we have  
\begin{eqnarray*} 
	E\mM^*_n(\tau)- \mM(\tau)   \leq 0. 
\end{eqnarray*}
On the other hand, notice that given $\wh\nu$,  $\{  \theta^-_{\tau, k, \ell},  \eta^-_{\tau, k, \ell}\}$ is the maximizer of $El( \{  \theta_{ k, \ell},  \eta_{ k, \ell}\};\;   \wh \nu^{1,\tau}) $ and $ \{  \theta^*_{\tau, k, \ell},  \eta^*_{\tau, k, \ell}\}$ is the maximizer of  $El( \{  \theta_{ k, \ell},  \eta_{ k, \ell}\};\;   \wh \nu^{\tau+1,n})$. Similar to \eqref{Mn5}, there exists a large enough constant $C_7>0$ such that with probability greater than $1-O((np)^{-B})$, 
\begin{eqnarray*}  
	\sup_{\tau\in [n_0,\tau_{n,p}]}|\mM_n(\tau) -E \mM^*_n(\tau) | \leq 
	C_7 np^2 \left\{    {\frac{  \log (np )}{n  } } +\sqrt{\frac{  \log (np )}{np^2 } }   \right\}.
\end{eqnarray*}
Consequently we have, with probability greater than $1-O((np)^{-B})$, 
\begin{eqnarray}\label{MnTau1}
\sup_{\tau\in [n_0,\tau_{n,p}]} \big[\mM_n(\tau) - \mM(\tau) \big] \leq C_7 np^2 \left\{    {\frac{  \log (np )}{n  } } +\sqrt{\frac{  \log (np )}{np^2 } }   \right\}.
\end{eqnarray}
We remark that since the membership structure $\wh\nu^{\tau+1, n}$ can be very different from the original $\nu$, the $s_{\min}$ in \eqref{Mn5} is simply replaced by the lower bound $1$, and hence the upper bound in \eqref{MnTau1} is independent of $\wh\nu^{1;\tau}$ and $\wh\nu^{\tau+1, n}$.

Combining \eqref{MnTau0}, \eqref{MnTau1}, \eqref{Con1},   \eqref{Mn5} (with $\tau=\tau_0$), and choosing   $\kappa>0$ to be large enough,  we have %under Condition C5  , 
with probability greater than $1-O((np)^{-B})$, 
\begin{eqnarray*} 
	&& \sup_{\tau\in [n_0,\tau_{n,p}]} \big[ \mM_n(\tau)- \mM_n(\tau_0) \big] \\ &\leq& 
	C_7 np^2 \left\{    {\frac{  \log (np )}{n  } } +\sqrt{\frac{  \log (np )}{np^2 } }   \right\}   
	+C_0np^2 \left\{    {\frac{  \log (np )}{ns_{\min}^2 } } +\sqrt{\frac{  \log (np )}{np^2 } } \right\}  \nonumber \\
	&& -C_3 (\tau_0-\tau_{n,p})\big[ \| \bW_{1,1}-\bW_{2,1}\|_F^2+\| \bW_{1,2}-\bW_{2,2}\|_F^2\big] \nonumber \\
	&<&0.
\end{eqnarray*}
Consequently we have, 
\begin{eqnarray}\label{MnTau}
\\
P\left(  \argmax_{\tau\in [n_0,\tau_0]} \big[ \mM_n(\tau)- \mM_n(\tau_0) \big] = \argmax_{\tau\in [\tau_{n,p},\tau_0]} \big[ \mM_n(\tau)- \mM_n(\tau_0) \big]  \right)\geq 1-O((np)^{-B}).\nonumber
\end{eqnarray}

\noindent
{\bf (vi)  Error bound for $\tau_0-\wh\tau$. }  

\noindent

One of the key steps in the proof of (v) is to compare $M_n(\tau)$, the estimated log-likelihood evaluated under the MLEs at a searching time point $\tau$, with $M(\tau)$, the maximized expected log-likelihood at time $\tau$. The error between $M_n(\tau)$ and $M(\tau)$, which is of order $O\Bigg(np^2\Big( {\frac{  \log (np )}{n  } } +\sqrt{\frac{  \log (np )}{np^2 } } \Big)\Bigg) $  reflects the noise level. On the other hand, the signal is captured by $M(\tau_0)-M(\tau)=O(|\tau_0-\tau|p^2\Delta_F^2)$,  i.e., the difference between the maximized expected log-likelihood evaluated at the true change point $\tau_0$ and the maximized expected log-likelihood evaluated at the searching time point $\tau$.  Consequently, when   $|\tau_0-\tau|p^2\Delta_F^2>\kappa \Bigg[np^2\Big( {\frac{  \log (np )}{n  } } +\sqrt{\frac{  \log (np )}{np^2 } } \Big)\Bigg] $ for some large enough constant $\kappa>0$, we are able to claim that $|\tau_0-\wh\tau|\leq |\tau_0-\tau|=O_p\Big( n \Delta_F^{-2} \Big[  {\frac{  \log (np )}{n  } } +\sqrt{\frac{  \log (np )}{np^2 } }   \Big] \Big)$. By further deriving the estimation errors for any $\tau$ in the  neighborhood of $\tau_0$ with radius $O\Big(  \Delta_F^{-2} \Big[  {\frac{  \log (np )}{n  } } +\sqrt{\frac{  \log (np )}{np^2 } }   \Big] \Big)$, we obtained a better bound based on Markov's inequality (see \eqref{vi_key} below).

From \eqref{MnTau} we have for any $0<\epsilon\leq \tau_0-\tau_{n,p}$, 
\begin{eqnarray*}
	P(\tau_0-\wh\tau> \epsilon) \leq P\bigg(  \sup_{\tau\in [\tau_{n,p},\tau_{0}-\epsilon  ]} \mM_n(\tau)- \mM_n(\tau_0) \geq 0 \bigg) + O((np)^{-B}).
\end{eqnarray*}
Note that from (i) and (iv) we have
\begin{eqnarray}\label{vi_key}
&&P\bigg(  \sup_{\tau\in [\tau_{n,p},\tau_{0}-\epsilon  ]} \mM_n(\tau)- \mM_n(\tau_0) \geq  0 \bigg)   \\
%	&=&P\bigg(  \sup_{\tau\in [\tau_{n,p},\tau_0-\epsilon ]} [ \mM_n(\tau)- \mM(\tau)  -\mM_n(\tau_0) +\mM(\tau_0)   ]	 > \mM(\tau_0)  -\mM(\tau)  \bigg)  \\
&\leq& P\bigg(  \sup_{\tau\in [\tau_{n,p},\tau_{0}-\epsilon  ]} \big[ ( \mM_n(\tau)- \mM(\tau)  -\mM_n(\tau_0) +\mM(\tau_0)  )  -(\mM(\tau_0)-\mM(\tau))\big] \geq 0
\bigg) \nonumber\\
&\leq& P\bigg(  \sup_{\tau\in [\tau_{n,p},\tau_{0}-\epsilon  ]}  \left| \mM_n(\tau)- \mM(\tau)  -\mM_n(\tau_0) +\mM(\tau_0)   \right|  
\geq  C_3\epsilon p^2\Delta_F^2
\bigg) \nonumber \\
&\leq& \frac{E  \sup_{\tau\in [\tau_{n,p},\tau_{0}-\epsilon  ]}  \big| \mM_n(\tau)- \mM(\tau)  -\mM_n(\tau_0) +\mM(\tau_0)     \big| }{C_3 \epsilon p^2\Delta_F^2} \nonumber\\
&\leq& \frac{ 2C_0p^2\sqrt{\frac{(\tau_0-\tau_{n,p})\log(np)}{s_{\min}^2}} }{C_3 \epsilon p^2\Delta_F^2} . \nonumber
\end{eqnarray}
We thus conclude that $\tau_0-\wh\tau = O_p\left(  \Delta_F^{-2} \sqrt{\frac{(\tau_0-\tau_{n,p})\log(np)}{s_{\min}^2}} \right)$. By the definition of $\tau_{n,p}$ and  
condition C5 we have, 
\begin{eqnarray*} 
	\Delta_F^{-2} \sqrt{\frac{(\tau_0-\tau_{n,p})\log(np)}{s_{\min}^2}}& =&O\Bigg( \frac{\tau_0-\tau_{n,p}}{\Delta_F}  \sqrt{\frac{ \log(np)} { ns_{\min}^2}  } \Bigg[\frac{\log (np)}{n}+\sqrt{\frac{\log (np)}{np^2}}\Bigg]^{-1/2}  \Bigg).
	%\\
	%	&=&O\Bigg( \frac{\tau_0-\tau_{n,p}}{\Delta_F s_{\min}}  \min\Big\{  \sqrt{n},  (np^2\log(np))^{\frac{1}{4}}  \Big\} \Bigg) 
\end{eqnarray*}
Consequently, we conclude that
\[
\tau_0-\wh\tau =O_p\left(  (\tau_0-\tau_{n,p})  \min \Bigg\{1, \frac{\min\Big\{  1,  (n^{-1}p^2\log(np))^{\frac{1}{4}}\Big\} }{\Delta_F s_{\min}} \Bigg\}\right).
\]

\subsubsection{Change point estimation with $\nu^{1,\tau_0}\neq\nu^{\tau_0+1,n}$.}\label{unequal}
We  modify steps (i)-(v) to the case where $\nu^{1,\tau_0}\neq\nu^{\tau_0+1,n}$.  

With some abuse of notations,  
we put $\bW_{1,1}=(\alpha_{1,i,j})_{p\times p}$ with
$\alpha_{1,i,j}=\theta_{1,\nu^{1,\tau_0}(i),\nu^{1,\tau_0}(j)}$,
$\bW_{1,2}=(1-\beta_{1,i,j})_{p\times p}$ with
$\beta_{1,i,j}=\eta_{1,\nu^{1,\tau_0}(i),\nu^{1,\tau_0}(j)}$, 
$\bW_{2,1}=(\alpha_{2,i,j})_{p\times p}$ with
$\alpha_{2,i,j}=\theta_{2,\nu^{\tau_0+1,n}(i),\nu^{\tau_0+1,n}(j)}$, and
$\bW_{2,2}=(1-\beta_{2,i,j})_{p\times p}$ with
$\beta_{2,i,j}=\eta_{2,\nu^{\tau_0+1,n}(i),\nu^{\tau_0+1,n}(j)}$. 
Similar to previous proofs we define 
\begin{align*} 
&\mM_n(\tau):=  \sum_{1\leq i \leq j \leq p} g_{1,i,j}(\wh \alpha^\tau_{1,i,j},\wh\beta^\tau_{1,i,j},\tau)+ \sum_{1\leq i \leq j \leq p} g_{2,i,j}(\wh\alpha^\tau_{2,i,j},\wh\beta^\tau_{2,i,j},\tau) , \\ 
& \mM(\tau):= E  \sum_{1\leq i \leq j \leq p} g_{1,i,j}(\alpha_{1,i,j},\beta_{1,i,j},\tau)+E \sum_{1\leq i \leq j \leq p} g_{2,i,j}(\alpha^\tau_{2,i,j},\beta^\tau_{2,i,j},\tau) ,
\end{align*}
where 
\begin{eqnarray*}
	\alpha^\tau_{2,i,j}=\frac{\frac{\tau_0-\tau}{n-\tau}\frac{\alpha_{1,i,j}\beta_{1,i,j}}{\alpha_{1,i,j}+\beta_{1,i,j}}+\frac{n-\tau_0 }{n-\tau}\frac{\alpha_{2,i,j}\beta_{2,i,j}}{\alpha_{2,i,j}+\beta_{2,i,j}}}{
		\frac{\tau_0-\tau}{n-\tau}\frac{ \beta_{1,i,j}}{\alpha_{1,i,j}+\beta_{1,i,j}}+\frac{n-\tau_0 }{n-\tau}\frac{ \beta_{2,i,j}}{\alpha_{2,i,j}+\beta_{2,i,j}}},
	\\ \beta^\tau_{2,i,j}=\frac{\frac{\tau_0-\tau}{n-\tau}\frac{\alpha_{1,i,j}\beta_{1,i,j}}{\alpha_{1,i,j}+\beta_{1,i,j}}+\frac{n-\tau_0 }{n-\tau}\frac{\alpha_{2,i,j}\beta_{2,i,j}}{\alpha_{2,i,j}+\beta_{2,i,j}}}{
		\frac{\tau_0-\tau}{n-\tau}\frac{ \alpha_{1,i,j}}{\alpha_{1,i,j}+\beta_{1,i,j}}+\frac{n-\tau_0 }{n-\tau}\frac{ \alpha_{2,i,j}}{\alpha_{2,i,j}+\beta_{2,i,j}}}, 
\end{eqnarray*}
and
\begin{eqnarray*}
	&& \wh \alpha^\tau_{1,i,j}= \wh \theta^\tau_{1, \wh\nu^{1,\tau}(i),\wh\nu^{1,\tau}(j) } , \quad
	\wh \beta^\tau_{1,i,j}= \wh \eta^\tau_{1, \wh\nu^{1,\tau}(i),\wh\nu^{1,\tau}(j) }, \\
	&& \wh \alpha^\tau_{2,i,j}= \wh \theta^\tau_{2, \wh\nu^{\tau+1,n}(i),\wh\nu^{\tau+1,n}(j) } , \quad
	\wh \beta^\tau_{2,i,j}= \wh \eta^2_{\tau, \wh\nu^{\tau+1,n}(i),\wh\nu^{\tau+1,n}(j) }.
\end{eqnarray*}

Note that the definition of $M(\tau)$ here is now slightly different from the previous definition in that the $\alpha_{2,i,j}^\tau$ and $\beta_{2,i,j}^\tau$ will generally be different from $\theta_{2,\nu^{\tau_0+1,n}(i),\nu^{\tau_0+1,n}(j)}^\tau$ and $\eta_{2,\nu^{\tau_0+1,n}(i),\nu^{\tau_0+1,n}(j)}^\tau$, unless
$\nu^{1,\tau_0}=\nu^{\tau_0+1,n}$.
We first of all point out the main difference we are facing in the case where $\nu^{1,\tau_0}\neq\nu^{\tau_0+1,n}$.
Consider a detection time $\tau\in [\tau_{n,p},\tau_0]$. 
In the case where $\wh\nu^{1,\tau}=\wh\nu^{\tau+1,n}=\nu$, we have $ \alpha^\tau_{2,i,j} = \theta^\tau_{2, k,\ell}$ for all $(i,j)\in S_{k,\ell}$, and we have $|\wh\theta^\tau_{2, k,\ell}-\theta^\tau_{2, k, \ell}|=O_p\Big( \sqrt{\frac{\log(np)}{ns_{\min}^2} }\Big)$ for all $1\leq k\leq \ell \leq q$, or equivalently, $|\wh\alpha^\tau_{2, i,j}-\theta^\tau_{2, \nu(i), \nu(j)}|=O_p\Big( \sqrt{\frac{\log(np)}{ns_{\min}^2} }\Big)$ for all $1\leq i\leq j \leq p$. However, when $\wh\nu^{1,\tau}=\nu^{1,\tau_0}$ $\wh\nu^{\tau+1,n}= \nu^{\tau_0+1,n}$ but $ \nu^{1,\tau_0} \neq \nu^{\tau_0+1,n}$, the order of the estimation error becomes $O_p\Big( \sqrt{\frac{\log(np)}{ns_{\min}^2} } +\frac{\tau_0-\tau}{n}\Big)$. Here $\frac{\tau_0-\tau}{n}$ is a bias terms brought by the fact that $\wh\nu^{1,\tau} \neq \wh\nu^{\tau+1,n} $. The main issue is that the  the following terms from the definition of $\wh \theta^\tau_{2, k, \ell}$:
\[
{
	\sum_{(i,j)\in \wh S^\tau_{
			2, k,\ell} }
	\sum_{t=\tau+1}^{\tau_0} X_{i,j}^t(1 - X_{i,j}^{t-1}), \quad
	\sum_{(i,j)\in \wh S^\tau_{2, k,\ell} }
	\sum_{t=\tau+1}^{\tau_0} (1 - X_{i,j}^{t-1})},
\]
are no longer unbiased estimators (subject to a normalization) of the following corresponding terms in the definition of  $\theta^\tau_{2, k, \ell}$:
\[
\frac{\theta_{1,k,\ell}\eta_{1,k,\ell}}{\theta_{1,k,\ell}+\eta_{1,k,\ell}},\quad \frac{\theta_{1,k,\ell},  }{\theta_{1,k,\ell}+\eta_{1,k,\ell}}.
\]
%owing to the fact that $\wh S^\tau_{2, k,\ell} $ is not 

The proof of (i) does not involve any parameter estimators and hence can be established similarly.

For (ii), note that $|\wh\alpha^\tau_{2, i, j}-\alpha_{2,i,j}|\leq |\wh\alpha^\tau_{2, i, j}-\alpha^\tau_{2,i,j}|+O\big(\frac{\tau_0-\tau}{n}\big)$  holds for all $1\leq i< j\leq p$, where the $O\big(\frac{\tau_0-\tau}{n}\big)$ is independent of $i,j$. This implies that  when estimating the $\alpha_{2,i,j}$, we have introduced a bias term $O\big(\frac{\tau_0-\tau}{n}\big)$  by including the $\tau_0-\tau$ samples before the change point. From the proofs of  Lemma \ref{Frob}, and condition C4, we conclude that (ii) hold for $\wh\nu^{\tau+1, n}$.

For (iii), replacing the order of the error bound for $\wh\theta^+_{\tau,k,\ell}$  and $\wh\theta^+_{\tau,k,\ell}$ from $\sqrt{\frac{\log(np)}{ns_{\min}^2} }$ to  $\sqrt{\frac{\log(np)}{ns_{\min}^2} }+\frac{\tau_0-\tau}{n}$, %and use the fact that $\frac{(\tau_0-\tau)^2}{n^2} = O\Big(\frac{\log (np)}{np}+ \frac{1}{n^2}+\frac{1}{p^2} \Big)$, 
we have there exists a large enough constant $C_0>0$ such that
\begin{eqnarray*} 
	\sup_{\tau\in [\tau_{n,p}, \tau_0]}|\mM_n(\tau)-\mM(\tau) |&\leq &
	C_0np^2 \left\{    {\frac{  \log (np )}{ns_{\min}^2 } } +\sqrt{\frac{  \log (np)}{np^2 } } +\frac{(\tau_0-\tau_{n,p})^2}{n^2}\right\}\\
	&=&O\Bigg(np^2\left\{\sqrt{\frac{\log(np)}{ns_{\min}^2}}+\frac{(\tau_0-\tau_{n,p})^2}{n^2}\right\}\Bigg). 
\end{eqnarray*}

For (iv), the error bounds related to $g_{1,i,j}(\cdot, \cdot;\cdot)$ remain unchanged. 
Note that the decomposition \eqref{g2Decom} still holds  with $\theta_{2,k,\ell}^\tau, \eta_{2,k,\ell}^\tau$ replaced be $\alpha^\tau_{2,i,j}, \beta_{2,i,j}^\tau$ and $\wh\theta_{2,k,\ell}^\tau, \wh\eta_{2,k,\ell}^\tau$ replaced be $\wh\alpha^\tau_{2,i,j}, \wh\beta_{2,i,j}^\tau$. The bound for \eqref{II-III}  still holds owing to the fact that $|\alpha^\tau_{2,i,j}-\alpha_{2,i,j} |= O\Big(\frac{\tau_0-\tau}{n}\Big)$ and $|\beta^\tau_{2,i,j}-\beta_{2,i,j} |= O\Big(\frac{\tau_0-\tau}{n}\Big)$. The bound for \eqref{IV-V} would become $O\Bigg( p^2\sqrt{ \frac{(\tau_0-\tau)\log (np)}{s_{\min}^2}} +\frac{(\tau_0-\tau_{n,p})^2}{n^2} \Bigg)$. Notice that similar to \eqref{appendA2}, we have with probability larger than $1-O((np)^{-B})$,

\begin{eqnarray*} 
	\sup_{1\leq i\leq j\leq p}\left| \log \frac{\wh \alpha^\tau_{2, i, j}}{\alpha^\tau_{2, i, j}}- \log \frac{ \wh \alpha^{\tau_0}_{2, k, \ell} }{\alpha_{2, k, \ell} }\right|
	=  O\left( \sqrt{\frac{ \tau_0-\tau}{n} }\sqrt{\frac{\log (np)}{ns_{\min}^2}} + \frac{\tau_0-\tau}{n}\right),
	\\
	\sup_{1\leq i\leq j\leq p}\left| \log \frac{\wh \beta^\tau_{2, i, j}}{\beta^\tau_{2, i, j}}- \log \frac{ \wh \beta^{\tau_0}_{2, k, \ell} }{\beta{2, k, \ell} }\right|
	= O\left( \sqrt{\frac{ \tau_0-\tau}{n} }\sqrt{\frac{\log (np)}{ns_{\min}^2}} + \frac{\tau_0-\tau}{n} \right).
\end{eqnarray*}
Consequently,
we have 
\begin{eqnarray*} 
	E \sup_{\tau\in [\tau_{n,p}, \tau_0]} |\mM_n(\tau)-\mM(\tau) -\mM_n(\tau_0)+\mM(\tau_0)|  
	\leq  C_0 p^2 \left\{ \sqrt{ \frac{(\tau_0-\tau_{n,p}) \log (np)}{ s_{\min}^2}} + (\tau_0-\tau_{n,p})\right\}. 
\end{eqnarray*}

By noticing that $\{\alpha_{1,i,j} ,\beta_{1,i,j}, \alpha_{2,i,j}^{\tau}, \beta_{2,i,j}^{\tau}\}$ is the maximizer of $\mM(\tau)$, we conclude that (v) also holds.  
%and consequently, we have 
%with probability greater than $1-O((np)^{-B})$,  there exists a constant $C_1>0$ such that by choosing $\kappa$ to be large enough and under Condition 6,
%\begin{eqnarray*} 
%	\sup_{\tau\in [n_0,\tau_{n,p}]} \big[ \mM_n(\tau)- \mM_n(\tau_0) \big]  &\leq&
%	C_1 np^2 \left\{    {\frac{  \log (np )}{n  } } +\sqrt{\frac{  \log (np )}{np^2 } } +\frac{(\tau_0-\tau )^2}{n^2} \right\}   
%	\\
%	&& -C_3 (\tau_0-\tau)\big[ \| \bW_{1,1}-\bW_{2,1}\|_F^2+\| \bW_{1,2}-\bW_{2,2}\|_F^2\big] \nonumber \\
%	&<&0.
%\end{eqnarray*}
Consequently, for (vi), we have
\begin{eqnarray*}
	P\bigg(  \sup_{\tau\in [\tau_{n,p},\tau_{0}-\epsilon  ]} \mM_n(\tau)- \mM_n(\tau_0) \geq  0 \bigg)   \leq
	\frac{ C_0p^2\sqrt{\frac{(\tau_0-\tau_{n,p})\log(np)}{s_{\min}^2}} +C_0 p^2(\tau_0-\tau_{n,p})}{C_3 \epsilon p^2\Delta_F^2} . 
	% \\
	%&=& C_\epsilon\sqrt{\frac{(\tau-\tau_{n,p})\log(np)}{p^{-2}\delta^2} }\cdot \frac{1}{\epsilon s_{\min}p^{-1}\delta }.
\end{eqnarray*}

Consequently,   we conclude that 
\[
\tau_0-\wh\tau =O_p\left(  (\tau_0-\tau_{n,p})  \min \Bigg\{1, \frac{\min\Big\{  1,  (n^{-1}p^2\log(np))^{\frac{1}{4}}\Big\} }{\Delta_F s_{\min}}   + \frac{1}{\Delta_F^2}\Bigg\}\right).
\]

\subsubsection{Proofs of \eqref{appendA} and \eqref{appendA2} when $\wh\nu=\nu$ }\label{append} 

For   \eqref{appendA}, note that 
\begin{eqnarray}\label{A_1}
\\
\left| \wh \theta^\tau_{1, k, \ell}- \wh \theta^{\tau_0}_{1, k, \ell} \right|
= \left| 
\frac{	\sum_{(i,j)\in   S_{
			k,\ell} }
	\sum_{t=1}^{\tau} X_{i,j}^t(1 - X_{i,j}^{t-1}) }{
	\sum_{(i,j)\in   S_{k,\ell} }
	\sum_{t=1}^\tau (1 - X_{i,j}^{t-1})} -
\frac{	\sum_{(i,j)\in   S_{
			k,\ell} }
	\sum_{t=1}^{\tau_0} X_{i,j}^t(1 - X_{i,j}^{t-1}) }{
	\sum_{(i,j)\in   S_{ k,\ell} }
	\sum_{t=1}^{\tau_0} (1 - X_{i,j}^{t-1})}   \right|  . \nonumber
\end{eqnarray}
Similar to Lemma \ref{concentration},  we can show that for any constant $B>0$,  there exists a large enough constant $B_1$ such that with probability larger than $1-O((np)^{-(B+2)})$,
\begin{eqnarray*}
	&&  \left|\frac{1}{\tau n_{k,\ell}}\sum_{(i,j)\in   S_{k,\ell} }
	\sum_{t=1}^\tau (1 - X_{i,j}^{t-1}) - \frac{\eta_{1,k,\ell}}{\theta_{1,k,\ell}+\eta_{1,k,\ell}}\right| \leq B_1\sqrt{\frac{\log (np)}{\tau n_{k,\ell}}}, \\
	&& \left|\frac{1}{\tau n_{k,\ell}}\sum_{(i,j)\in   S_{k,\ell} }
	\sum_{t=1}^\tau X_{i,j}^t(1 - X_{i,j}^{t-1}) - \frac{\eta_{1,k,\ell}}{\theta_{1,k,\ell}+\eta_{1,k,\ell}}\right| \leq B_1\sqrt{\frac{\log (np)}{\tau n_{k,\ell}}},
\end{eqnarray*}
and
\begin{eqnarray*}
	&& \frac{1}{\tau (\tau_0-\tau) n^2_{k,\ell}}  \Bigg| 
	\bigg[\sum_{(i,j)\in   S_{k,\ell} }	\sum_{t=1}^\tau X_{i,j}^t(1 - X_{i,j}^{t-1}) \bigg] \bigg[	\sum_{(i,j)\in   S_{k,\ell} }	\sum_{t=\tau+1}^{\tau_0}  (1 - X_{i,j}^{t-1}) \bigg] \\ 
	&&-\bigg[\sum_{(i,j)\in   S_{k,\ell} }	\sum_{t=\tau+1}^{\tau_0} X_{i,j}^t(1 - X_{i,j}^{t-1}) \bigg] \bigg[	\sum_{(i,j)\in   S_{k,\ell} }	\sum_{t= 1}^{\tau }  (1 - X_{i,j}^{t-1}) \bigg] 
	\Bigg| \leq B_1\sqrt{\frac{\log (np)}{(\tau_0-\tau) n_{k,\ell}}}.
\end{eqnarray*}
Plug these into \eqref{A_1} we have with probability larger than $1-O((np)^{-(B+2)})$, 
\begin{eqnarray*} 
	\left| \wh \theta^\tau_{1, k, \ell}- \wh \theta^{\tau_0}_{1, k, \ell} \right|
	\leq\frac{c_0\tau(\tau_0-\tau)n_{k,\ell}^2}{\tau_0\tau n_{k,\ell}^2} \sqrt{\frac{\log (np)}{(\tau_0-\tau)n_{k,\ell}}}
	\leq  \frac{c_0 \sqrt{\tau_0-\tau}  }{\tau_0 } \sqrt{\frac{\log (np)}{ n_{k,\ell}}},
\end{eqnarray*}
for some constant $c_0>0$. Since $\tau_0\simeq O(n)$, and $n_{k,\ell}\geq s_{\min}^2$,  we conclude that there exists a constant $c_1>0$ such that   
with probability larger than $1-O(np)^{-B})$, 
\begin{align*} 
&\sup_{1\leq k\leq \ell\leq q}\left| \wh \theta^\tau_{1, k, \ell}- \wh \theta^{\tau_0}_{1, k, \ell} \right|
\leq c_1 \sqrt{\frac{ \tau_0-\tau}{\tau_0} }\sqrt{\frac{\log (np)}{ns_{\min}^2}}. 
\end{align*}

For \eqref{appendA2}, note that 
\begin{eqnarray*} 
	&&\log \frac{\wh \theta^\tau_{2, k, \ell}}{\theta^\tau_{2, k, \ell}}- \log \frac{ \wh \theta^{\tau_0}_{2, k, \ell} }{\theta_{2, k, \ell} } \\
	& =& \log  \frac{ \frac{1} {n_{k,\ell}(n-\tau)} \sum_{(i,j)\in   S_{ k,\ell} }\sum_{t=\tau+1}^{n} X_{i,j}^t(1 - X_{i,j}^{t-1})   
	}{ 	\frac{\tau_0-\tau}{n-\tau}\frac{\theta_{1,k,\ell} \eta_{1,k,\ell}}{\theta_{1,k,\ell}+\eta_{1,k,\ell}}+\frac{n-\tau_0 }{n-\tau}\frac{ \eta_{2,k,\ell}\eta_{2,k,\ell}}{\theta_{2,k,\ell}+\eta_{2,k,\ell}} 	 }   -    \log  \frac{  \sum_{(i,j)\in   S_{ k,\ell} }\sum_{t=\tau_0+1}^{n} X_{i,j}^t(1 - X_{i,j}^{t-1})   
	}{ 	n_{k,\ell}(n-\tau_0)\cdot  \frac{ \eta_{2,k,\ell}\eta_{2,k,\ell}}{\theta_{2,k,\ell}+\eta_{2,k,\ell}} 	 }  \\
	&&
	-   
	\log  \frac{ \frac{1} {n_{k,\ell}(n-\tau)} \sum_{(i,j)\in   S_{ k,\ell} }\sum_{t=\tau+1}^{n}  (1 - X_{i,j}^{t-1})   
	}{ 	\frac{\tau_0-\tau}{n-\tau}\frac{ \eta_{1,k,\ell}}{\theta_{1,k,\ell}+\eta_{1,k,\ell}}+\frac{n-\tau_0 }{n-\tau}\frac{ \eta_{2,k,\ell}}{\theta_{2,k,\ell}+\eta_{2,k,\ell}} 	 } 
	+
	\log  \frac{  \sum_{(i,j)\in   S_{ k,\ell} }\sum_{t=\tau_0+1}^{n}  (1 - X_{i,j}^{t-1})   
	}{ 	 n_{k,\ell}(n-\tau_0)\cdot \frac{ \eta_{2,k,\ell} }{\theta_{2,k,\ell}+\eta_{2,k,\ell}} 	 } .
\end{eqnarray*}
It suffices to establish a bound for 
\[
\Delta_{\tau_0,\tau}:=  \frac{ \frac{1} {n_{k,\ell}(n-\tau)} \sum_{(i,j)\in   S_{ k,\ell} }\sum_{t=\tau+1}^{n} X_{i,j}^t(1 - X_{i,j}^{t-1})   
}{ 	\frac{\tau_0-\tau}{n-\tau}\frac{\theta_{1,k,\ell} \eta_{1,k,\ell}}{\theta_{1,k,\ell}+\eta_{1,k,\ell}}+\frac{n-\tau_0 }{n-\tau}\frac{ \eta_{2,k,\ell}\eta_{2,k,\ell}}{\theta_{2,k,\ell}+\eta_{2,k,\ell}} 	 }   -      \frac{  \sum_{(i,j)\in   S_{ k,\ell} }\sum_{t=\tau_0+1}^{n} X_{i,j}^t(1 - X_{i,j}^{t-1})   
}{ 	n_{k,\ell}(n-\tau_0)\cdot  \frac{ \eta_{2,k,\ell}\eta_{2,k,\ell}}{\theta_{2,k,\ell}+\eta_{2,k,\ell}} 	 } . 
\]
Note that for any $B>0$, there exists a large enough constant $B_2$ such that with probability greater than $1-O((np)^{-(B+2)})$, 
\begin{eqnarray*}
	|	\Delta_{\tau_0,\tau} | &\leq & \Bigg| \frac{ \frac{1} {n_{k,\ell}(n-\tau)} \sum_{(i,j)\in   S_{ k,\ell} }\sum_{t=\tau+1}^{\tau_0} \Big[ X_{i,j}^t(1 - X_{i,j}^{t-1})   -  \frac{\theta_{1,k,\ell} \eta_{1,k,\ell}}{\theta_{1,k,\ell}+\eta_{1,k,\ell}}\Big]
	}{ 	\frac{\tau_0-\tau}{n-\tau}\frac{\theta_{1,k,\ell} \eta_{1,k,\ell}}{\theta_{1,k,\ell}+\eta_{1,k,\ell}}+\frac{n-\tau_0 }{n-\tau}\frac{ \eta_{2,k,\ell}\eta_{2,k,\ell}}{\theta_{2,k,\ell}+\eta_{2,k,\ell}} 	 }  \Bigg|  \\
	&&+\Bigg| \left( \frac{ \frac{1} {n_{k,\ell}(n-\tau)}  
	}{ 	\frac{\tau_0-\tau}{n-\tau}\frac{\theta_{1,k,\ell} \eta_{1,k,\ell}}{\theta_{1,k,\ell}+\eta_{1,k,\ell}}+\frac{n-\tau_0 }{n-\tau}\frac{ \eta_{2,k,\ell}\eta_{2,k,\ell}}{\theta_{2,k,\ell}+\eta_{2,k,\ell}} 	 }  -  
	\frac{ \frac{1} {n_{k,\ell}(n-\tau_0)}  
	}{ 	 \frac{ \eta_{2,k,\ell}\eta_{2,k,\ell}}{\theta_{2,k,\ell}+\eta_{2,k,\ell}} 	 } 
	\right)
	\sum_{(i,j)\in   S_{ k,\ell}}  \sum_{t=\tau_0+1}^{n} \Big[X_{i,j}^t(1 - X_{i,j}^{t-1})    \\
	&&- \frac{\theta_{2,k,\ell} \eta_{2,k,\ell}}{\theta_{2,k,\ell}+\eta_{2,k,\ell}} \Big]  \Bigg| \\
	&\leq& B_2 \frac{\tau_0-\tau}{n-\tau}\sqrt{\frac{\log (np)}{ (\tau_0-\tau)n_{k,\ell}}} + B_2\frac{\tau_0-\tau}{n-\tau} \sqrt{\frac{\log (np)}{ (n-\tau_0 )n_{k,\ell}}}.
\end{eqnarray*}
\eqref{appendA2} then follows by noticing that $\frac{\tau_0-\tau}{n-\tau} \sqrt{\frac{\log (np)}{ (n-\tau_0 )n_{k,\ell}}}=o\Big(\frac{\tau_0-\tau}{n-\tau}\sqrt{\frac{\log (np)}{ (\tau_0-\tau)n_{k,\ell}}} \Big)$.

\section*{Appendix B: Real data analysis}

\setcounter{equation}{0}
\renewcommand{\theequation}{B.\arabic{equation}}

\setcounter{subsection}{0}
\renewcommand{\thesubsection}{B.\arabic{subsection}}

\subsection{French high school contact data (cont.)}

The more details of the data analysis in Section \ref{sec52} are
presented below.

We now compare the dynamic stochastic block model method in \cite{mm17} which is implemented in an {\tt R} package {\tt dynsbm}. We note that the model in \cite{mm17} allows the membership probabilities and the transition parameters to vary over time.

We use the {\tt dynsbm} package to analyze the same French high school contact data as we reported in the main text. The function {\tt selection.dynsbm} can automatically select the number of clusters by maximizing the so-called integrated classification likelihood (ICL) criterion. Figure \ref{ICL} shows that the optimal cluster number is selected to be $9$ using ICL for this data set. This agrees with our findings reported in the main text when using the BIC selection criterion.
\begin{figure}[htbp]
	\centering
	\includegraphics[scale=0.99]{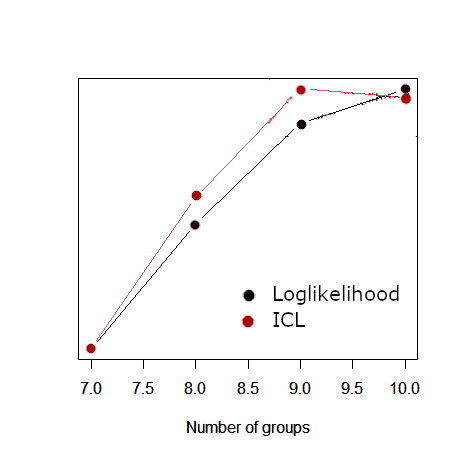}
	\caption{Integrated classification likelihood (ICL) and log-likelihood corresponding to different cluster numbers for French high school data. }
	\label{ICL}
\end{figure}
We then compare the detected clusters from {\tt dynsbm} with the actual class types in Table \ref{table:frenchdyn} at all the five time points.
 	\begin{singlespace}
\begin{table}[htbp]
	{\scriptsize 
		\caption{Clusters for French high school data by using the model in \cite{mm17}.}
		\label{table:frenchdyn}
		\begin{center}
			\begin{tabular}{|ccccccccccc|}
				\hline
				& \multicolumn{10}{c|}{Detected clusters}\\
				\hline
				Class types &  0&  1&  2&  3&  4&  5&  6&  7&  8&  9\\
				\hline
				$t=1$&&&&&&&&&&\\
				BIO1 & 1&  0& 35&  0&  0 & 0 & 0 & 0 & 1&  0\\
				BIO2 & 1&  0 & 0&  0&  0 & 1 & 0 & 0 &31 & 0\\
				BIO3 & 0 & 0&  0&  0&  0&  0& 40&  0&  0&  0\\
				MP1   &  3& 30&  0&  0&  0&  0&  0&  0&  0&  0\\
				MP2 &  3&  0&  0&  0&  0& 26&  0&  0&  0&  0\\
				MP3 &  2&  0&  0& 31&  0&  5&  0&  0&  0&  0\\
				PC1  &  4&  0&  0&  0&  0&  5&  0&  0&  0& 35\\
				PC2  &  0&  0&  0&  0&  0&  1&  0& 38&  0&  0\\
				EGI &  1&  0&  0&  0& 32&  1&  0&  0&  0&  0\\
				\hline
				$t=2$&&&&&&&&&&\\	
				BIO1 & 0&  0& 36&  0&  0&  0&  0&  0&  1&  0\\
				BIO2 & 4&  0&  0&  0&  0&  0&  0 & 0& 29&  0\\
				BIO3 &  1&  0&  0&  0&  0&  0& 39 &  0 & 0&  0\\
				MP1  &   0& 31&  0&  0&  0&  2&  0&  0&  0&  0\\
				MP2&   2&  0&  0&  0&  0& 27&  0&  0&  0&  0\\
				MP3&   1&  0&  0& 34&  0&  3&  0&  0&  0&  0\\
				PC1&     3&  0&  0&  0&  0&  2&  0&  0&  0& 39\\
				PC2&    3&  0&  0&  0&  0&  1&  0& 35&  0&  0\\
				EGI&   3&  0&  0&  0& 31&  0&  0&  0&  0&  0\\
				\hline
				$t=3$&&&&&&&&&&\\	
				BIO1&  5&  0& 31&  0&  0&  0&  0&  0&  1&  0\\
				BIO2&  2&  0&  0&  0&  0&  0&  0&  0& 31&  0\\
				BIO3&  3&  0&  0&  0&  0&  0& 37&  0&  0&  0\\
				MP1&     0& 27&  0&  0&  0&  6&  0&  0&  0&  0\\
				MP2&   1&  0&  0&  0&  0& 28&  0&  0&  0&  0\\
				MP3&   0&  0&  0& 36&  0&  2&  0&  0&  0&  0\\
				PC1&     6&  0&  0&  0&  0&  2&  0&  0&  0& 36\\
				PC2&    4&  0&  0&  0&  0&  0&  0& 35&  0&  0\\
				EGI&   3&  0&  0&  0& 30&  1&  0&  0&  0&  0\\
				\hline
				$t=4$&&&&&&&&&&\\	
				BIO1&  2&  0& 35&  0&  0&  0&  0&  0&  0&  0\\
				BIO2&  3&  0&  0&  0&  0&  0&  0&  0& 30&  0\\
				BIO3&  7&  0&  0&  0&  0&  0& 33&  0&  0&  0\\
				MP1&     1& 28&  0&  0&  0&  4&  0&  0&  0&  0\\
				MP2&   1&  1&  0&  0&  0& 27&  0&  0&  0&  0\\
				MP3&   4&  0&  0& 34&  0&  0&  0&  0&  0&  0\\
				PC1 &     4&   0&  0&  0&  0&  3&  0&  0&  0& 37\\
				PC2&    4&  0&  0&  0&  0&  0&  0& 35&  0&  0\\
				EGI&   6&  0&  0&  0& 26&  2&  0&  0&  0&  0\\
				\hline
				$t=5$&&&&&&&&&&\\	
				BIO1&  3&  0& 33&  0&  0&  0&  0&  0&  1&  0\\
				BIO2&  2&  0&  0&  0&  0&  0&  0&  0& 31&  0\\
				BIO3&  5&  0&  0&  0&  0&  0& 35&  0&  0&  0\\
				MP1&     1& 26&  0&  0&  0&  6&  0&  0&  0&  0\\
				MP2&   4&  0&  0&  0&  0& 25&  0&  0&  0&  0\\
				MP3&   3&  0&  0& 33&  0&  2&  0&  0&  0&  0\\
				PC1&     4&  0&  0&  0&  0&  2&  0&  0&  0& 38\\
				PC2&    3&  0&  0&  0&  0&  0&  0& 36&  0&  0\\
				EGI&   3&  0&  0&  0& 30&  1&  0&  0&  0&  0\\
				\hline	
			\end{tabular}
		\end{center}
	}
\end{table}
 	\end{singlespace}

We may notice that {\tt dynsbm} method reserves one group as ``0" for subjects with no edges (the absence nodes). Our algorithm, in comparison, is not affected by those subjects.

 	\begin{singlespace}
\begin{table}[htbp]
	{\scriptsize 
		\caption{Detected clusters for the French high school data by using our method.}
		\label{table:french22}
		\begin{center}
			\begin{tabular}{|cccccccccc|}
				\hline
				& \multicolumn{9}{c|}{Detected clusters}\\
				\hline
				Class types &    1&  2&  3&  4&  5&  6&  7&  8&  9\\
				\hline
				BIO1 & 0&  0&  1&  0&  0&  0&  0&  0& 36\\
				BIO2&  0&  1& 32&  0&  0&  0&  0&  0&  0\\
				BIO3&  0&  1&  0&  0& 39&  0&  0&  0&  0\\
				MP1&    33&  0&  0&  0&  0&  0&  0&  0&  0\\
				MP2 &  0&  1&  0& 28&  0&  0&  0&  0&  0\\
				MP3 &  0&  0&  0&  0&  0& 38&  0&  0&  0\\
				PC1 &    0& 44&  0&  0&  0&  0&  0&  0&  0\\
				PC2&    0&  0&  0&  0&  0&  0&  0& 39&  0\\
				EGI &  0&  0&  0&  0&  0&  0& 34&  0&  0\\
				
				\hline
			\end{tabular}
		\end{center}
	}
\end{table}		
 	\end{singlespace}		
The grouping results from {\tt dynsbm} method is quite stable over the five time points. Such results lend support to our method which assumes the constant cluster structure over time. Furthermore, our clustering results, shown in Table \ref{table:french22}, appear to be more accurate and agree more closely to the true grouping (class types) for this data analysis.

Another practical advantage of our method is its relatively short computing time. Using a computer with Intel(R) Core(TM) i7-10875H CPU and 32.0 GB RAM, we need to spend 0.36 and 252.39 seconds to obtain the community detection results with our method and {\tt dynsbm} method, respectively.
\subsection{Enron email data}
The Enron email dataset contains approximately 500,000 emails generated by employees of the Enron Corporation. It was obtained by the Federal Energy Regulatory Commission during its investigation of Enron's collapse. The data file was published at \url{https://www.cs.cmu.edu/~./enron/}.

\cite{rastelli2017} developed a latent stochastic block model for directed dynamic network data. They applied their methods for the Enron email data. To compare with their results, we used data for all the emails exchanged from January 2000 to March 2002 ($n=27$) between the Enron members. The number of nodes in our analysis is $184$ (which is different from \cite{rastelli2017} where they have kept $148$ subjects).

Using our spectral clustering algorithm for the whole network data and using the BIC, we obtain the best number of cluster is $13$. \cite{rastelli2017} used the exact ICL criterion and found $17$ clusters but $4$ groups seem to be extremely small or just contain inactive nodes. Their algorithm is for directed graph and also assume time-varying membership. It seems our results are still quite close to \cite{rastelli2017}. Similar to their analysis, we can see that there is a high degree of heterogeneity with $13$ different clusters for this data set. When applying {\tt dynsbm} package on this data the optimal number of cluster is only $6$. 

Next we considered the change point analysis. Using our binary segmentation methods, we detect one change point at October 2001 which is exactly the month corresponding to the disclosure of Enron bankruptcy (see the log-likelihood functions plotted in Figure \ref{ENRON}). This again agrees with the empirical findings in \cite{rastelli2017}.

\begin{figure}[htbp]
	\centering
	\includegraphics[scale=0.99]{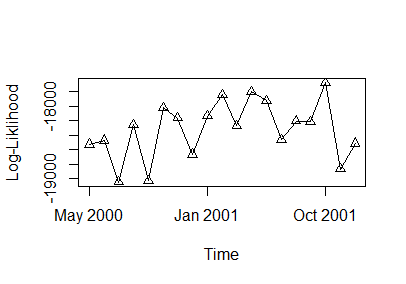}
	\caption{Log-likelihood functions corresponding to different change points in time for Enron email data. }
	\label{ENRON}
\end{figure}

% \begin{singlespace}
% \bibliographystyle{apalike}
% \bibliography{reference}
% \end{singlespace}

\end{document}